\newcommand{\Aplus}{{\fontfamily{phv}\fontseries{b}\selectfont A}${}_{\boldsymbol{+}}$}
\newcommand{\Apm}{{\fontfamily{phv}\fontseries{b}\selectfont A}${}_{\boldsymbol{\pm}}$}
\newcommand{\Azero}{{\fontfamily{phv}\fontseries{b}\selectfont A}${}_{\boldsymbol{0}}$}
\newcommand{\Azeroplus}{{\fontfamily{phv}\fontseries{b}\selectfont A}${}_{\boldsymbol{0}\boldsymbol{+}}$}
\newcommand{\Azerominus}{{\fontfamily{phv}\fontseries{b}\selectfont A}${}_{\boldsymbol{0}\boldsymbol{-}}$}
\newcommand{\Azeropm}{{\fontfamily{phv}\fontseries{b}\selectfont A}${}_{\boldsymbol{0}\boldsymbol{\pm}}$}
\newcommand{\Bplus}{{\fontfamily{phv}\fontseries{b}\selectfont B}${}_{\boldsymbol{+}}$}
\newcommand{\Cplus}{{\fontfamily{phv}\fontseries{b}\selectfont C}${}_{\boldsymbol{+}}$}
\newcommand{\Dplus}{{\fontfamily{phv}\fontseries{b}\selectfont D}${}_{\boldsymbol{+}}$}
\newcommand{\Aminus}{{\fontfamily{phv}\fontseries{b}\selectfont A}${}_{\boldsymbol{-}}$}
\newcommand{\Bminus}{{\fontfamily{phv}\fontseries{b}\selectfont B}${}_{\boldsymbol{-}}$}
\newcommand{\Cminus}{{\fontfamily{phv}\fontseries{b}\selectfont C}${}_{\boldsymbol{-}}$}
\newcommand{\Dminus}{{\fontfamily{phv}\fontseries{b}\selectfont D}${}_{\boldsymbol{-}}$}
\newcommand{\AminusP}{\mbox{{\fontfamily{phv}\fontseries{b}\selectfont A}${}_{\boldsymbol{-}}$\protect\hspace{-0.8em}\protect\raisebox{1ex}{\tiny IX}}}
\newcommand{\A}{{\fontfamily{phv}\fontseries{b}\selectfont A}}
\newcommand{\B}{{\fontfamily{phv}\fontseries{b}\selectfont B}}
\newcommand{\C}{{\fontfamily{phv}\fontseries{b}\selectfont C}}
\newcommand{\hx}{\hat{\xi}}
\newcommand{\heta}{\hat{\eta}}
\newcommand{\hn}{\hat{n}}
\newcommand{\textfrac}[2]{{\textstyle \frac{#1}{#2}}}
\newcommand{\R}{\mathbb{R}}
\newcommand{\weg}{\:\,}
\DeclareMathOperator{\tr}{tr}
\DeclareMathOperator{\diag}{diag}
\newtheorem{Theorem}{Theorem}
\newtheorem{Lemma}{Lemma}
\newtheorem{Assumption}{Assumption}
\theoremstyle{remark}\newtheorem*{Remark}{Remark}
\title{Bianchi Cosmologies with Anisotropic Matter: \\ Locally Rotationally Symmetric Models}
\author{Simone Calogero\footnote{E-Mail: calogero@ugr.es}\\[0.2cm]
Departamento de Matem\'atica Aplicada\\ 
Facultad de ciencias, Universidad de Granada\\
18071 Granada, Spain\\[0.5cm]
J.~Mark Heinzle\thanks{E-Mail: Mark.Heinzle@univie.ac.at}\\[0.2cm]
Gravitational Physics\\ Faculty of Physics, University of Vienna\\
1090 Vienna, Austria}
\date { }
\begin{document}

\maketitle

\begin{abstract}
The dynamics of cosmological models with isotropic matter 
sources (perfect fluids) is extensively studied in the literature; in comparison, the dynamics
of cosmological models with anisotropic matter sources is not.
In this paper we consider spatially homogeneous
locally rotationally symmetric solutions
of the Einstein equations with a large class of 
anisotropic matter models including collisionless matter (Vlasov),
elastic matter, and magnetic fields.
The dynamics of models of Bianchi types~I,~II, and~IX
are completely described; the two most striking results are the following:
(i) There exist matter models, compatible with the standard energy conditions, 
such that solutions of Bianchi type~IX (closed cosmologies) 
need not necessarily recollapse; 
there is an open set of forever expanding solutions. 
(ii) Generic type~IX solutions 
associated with a matter model like 
Vlasov matter 
exhibit oscillatory behavior toward the initial singularity.
This 
behavior differs significantly from that of 
vacuum/perfect fluid cosmologies; hence ``matter matters''.
Finally, we indicate that our methods can probably be extended to
treat a number of open problems, in particular, the dynamics
of Bianchi type~VIII and Kantowski-Sachs solutions.
\end{abstract}

\section{Introduction}\label{intro}

Spatially homogeneous (SH) cosmologies are one of the main cornerstones 
of General Relativity. 
On the one hand, they provide a rich supply
of case studies for the effects that solutions of the Einstein equations of General Relativity 
can (and will) produce; 
in particular, the behavior of SH cosmologies 
gives the necessary input to model the behavior of 
our actual universe. 
On the other hand, there are convincing arguments that
the dynamics of generic solutions of the Einstein equations
close to spacelike singularities are built on the dynamics
of SH cosmological models~\cite{BKL}.

An SH cosmological model is a 
solution of the Einstein equations that
is independent of the spatial variables in a suitable frame  
(see Sec.~\ref{SH} for a precise definition). 
Therefore, for these models, the Einstein equations reduce to an autonomous system of 
non-linear ordinary differential equations.
As first recognized by Collins~\cite{Col},
provided the system possesses the appropriate regularity,
the equations can be analyzed using the potent
methods of dynamical systems theory.  
Since then, our 
knowledge of the (qualitative) dynamics of SH cosmological 
models has increased substantially, see~\cite{Coley, HHW, WE} for reviews. 

Until recent years, the analysis of SH cosmologies has been largely restricted to isotropic
matter sources like perfect fluids (including dust) and scalar fields. 
However, the results available for anisotropic matter sources, 
such as electromagnetic fields~\cite{L}, collisionless (Vlasov) 
matter~\cite{HU2}, elastic materials~\cite{CH}, and viscous fluids~\cite{SR}, 
suggest that the dynamics of 
SH cosmological models are in fact sensitive to the choice of matter model,
not merely quantitatively, but also qualitatively.
A first step towards a systematic analysis of the influence of the matter model 
on the (asymptotic) dynamics of solutions 
was taken in~\cite{CH2}: 
By analyzing Bianchi type~I solutions 
(see Sec.~\ref{SH} for the definition) of the Einstein-matter equations
for a large class of (anisotropic) matter 
models that includes perfect fluids as a special case,
it was shown that the qualitative dynamics of solutions
(and the asymptotics towards the initial singularity, in particular) 
strongly depends on specific properties of the matter source.
Differences between perfect fluid solutions and anisotropic
matter solutions even exist when the matter model resembles a perfect fluid very closely~\cite{CH_GRG},
which suggests that, at least 
within the family of matter models considered,
the perfect fluid model is `structurally unstable'.

The purpose of this paper is to extend the analysis of~\cite{CH2} to the 
higher Bianchi types of class~A.
SH cosmologies of these types, especially the type~IX solutions,
are of particular interest. The important question
of whether a cosmological model will expand forever or will recontract (and eventually recollapse)
arises within this class of models.
Moreover, the understanding of
type~IX solutions with their oscillatory 
behavior (Mixmaster behavior) toward the initial singularity
is regarded as one of the keys towards a better understanding
of singularities in General Relativity in general.

In this work we restrict to SH cosmologies that are locally rotationally symmetric (LRS); 
in particular, the isometry group of the spacetime is 
four-dimensional, see Appendix~\ref{lrsexplained}.
The main result of the present paper can be summarized as follows:
The folklore statement ``matter does not matter'' has clear limitations,
which can be specified rigorously.
The qualitative dynamics of (SH LRS) cosmological models 
is in fact matter-dependent;
cosmological models with anisotropic matter  satisfying 
specific properties exhibit a completely different qualitative
behavior than the corresponding perfect fluid solutions.
The two most striking facts that we derive are the following.
\begin{itemize}
\item[(i)] There exist physically viable matter models such that 
the behavior of generic solutions
toward the initial singularity is oscillatory;
this is in stark contrast to the behavior of vacuum and perfect fluid solutions.
These models are LRS Bianchi type~II and~IX models.
\item[(ii)] There exist physically viable matter models such that
``closed-universe-recollapse''~\cite{Barrow/Galloway/Tipler:1986,Lin/Wald:1989} does not hold;
LRS Bianchi type~IX solutions (closed cosmologies) 
need not necessarily recollapse; there exists an open set of initial data
such that the associated solutions expand forever.
\end{itemize}

By `physically viable' matter models we mean matter models that 
satisfy the standard energy conditions, the weak energy condition,
the dominant energy condition, and the strong energy condition in particular. 
Note that perfect fluid solutions of Bianchi type~IX necessarily recollapse
provided the strong energy condition holds.

It is important to note that there exist important explicit examples of matter models that are 
directly covered by our analysis, e.g., elastic matter and collisionless (Vlasov) matter 
with massless particles, and matter models to which our techniques extend in a straightforward 
way, e.g., magnetic fields and Vlasov matter with massive particles; we refer to Sec.~\ref{mattermodelssec}.

This paper is organized as follows. In Sec.~\ref{SH} we 
define the class of spatially homogeneous spacetimes and the various Bianchi types;
the Einstein equations for these models are given.
In Sec.~\ref{Sec:matter} we specify the class of matter models we want to consider
by mild and physically motivated assumptions.
The family of matter models thus defined is sufficiently general to include 
important explicit examples and provides a non-trivial generalization of the perfect fluid 
matter model.
(In Sec.~\ref{discussion} we elaborate on possible generalizations of our assumptions on the matter model.) 
In Sec.~\ref{LRSsols} we impose the condition of locally rotational symmetry (LRS)
and discuss the Einstein equations and the matter variables for this case.
In Sec.~\ref{perfectfluid} we review the results on the dynamics of vacuum and perfect 
fluid models; the behavior of solutions with anisotropic matter we derive in the subsequent
sections can be gauged against these results.
In Sec.~\ref{reducedsystem} 
we introduce a new set of dynamical variables, which recast the Einstein equations into an autonomous 
dynamical system---the {\it reduced dynamical system}---over a relatively compact state space; we 
remark that our variables are different from the standard Hubble normalized variables~\cite{WE}. 
The following sections are devoted to the qualitative analysis
of the reduced dynamical system for the different Bianchi types employing methods and concepts 
from the theory of dynamical systems. The lower Bianchi types, i.e.,~I and~II, 
are studied in Secs.~\ref{bianchi1section} and~\ref{sec:II}, respectively.
The analysis of the Bianchi type~IX case is
divided into three sections: 
In Sec.~\ref{B89sec} we slightly reformulate the equations for later purposes; 
Sec.~\ref{B9sec} contains the analysis of the system, where we demonstrate
the need for an additional variable transformation; 
finally, in Sec.~\ref{B9res} we collect the results and prove
the main theorems on the dynamics of LRS Bianchi type~IX solutions
with anisotropic matter; in addition we put the theorems into
a broader context.
In Sec.~\ref{mattermodelssec} we show that our results directly apply 
to a specific matter model, namely collisionless (Vlasov) matter for massless 
particles; we thereby extend the results of~\cite{RT}. 
Moreover, we show that our analysis extends 
straightforwardly to the magnetic field matter model. 
Finally, in Sec.~\ref{discussion} we discuss 
possible generalizations of our results and give a list of interesting open problems. 
The paper contains three appendixes. 
In Appendix~\ref{exact} we present the exact solutions of 
the Einstein-matter equations that have been discovered
in this work, whose role is that of `attractors' of typical solutions.
Appendix~\ref{lrsexplained} contains a thorough discussion of local rotational symmetry.
Finally, Appendix~\ref{dynsysapp} reviews a few facts 
about dynamical system theory.


\section{Spatially Homogeneous spacetimes}
\label{SH}

A spacetime 
$(M, {}^4\mathbf{g})$
is \textit{spatially homogeneous} if it admits an isometry 
group
whose orbits are spacelike 
hypersurfaces that foliate $M$. 
The spatially homogeneous spacetimes divide into
the Kantowski-Sachs models, see Appendix~\ref{KSmodels}, and the Bianchi models.

A \textit{Bianchi model} is defined to be a spatially homogeneous spacetime whose
isometry group possesses a three-dimensional subgroup $\mathcal{G}$, $\dim \mathcal{G}=3$, 
that acts simply transitively on the spacelike orbits.

\begin{Remark}
There are three subcases.
First, the Friedmann-Lema\^{\i}tre(-Robertson-Walker) models, see, e.g.,~\cite[Chapter~2]{WE},
which admit a six-dimensional isometry group. 
Second, the locally rotationally symmetric (LRS) Bianchi models, see Appendix~\ref{LRSclassA}, whose 
isometry group is four-dimensional; by definition, the group orbits coincide with the spacelike orbits of
the three-dimensional subgroup $\mathcal{G}$.
Finally, there are the Bianchi models that do not admit additional continuous 
symmetries beyond the three-parameter symmetry group $\mathcal{G}$.
\end{Remark}

By definition, a Bianchi spacetime $(M, {}^4\mathbf{g})$ 
has the manifold structure $M=I\times \mathcal{G}$, 
where $I\subset \R$ is an interval. Let $\mathcal{G}_t$, $t\in I$, 
denote the Lie group $\mathcal{G}_t=\{t\}\times \mathcal{G}$,
and let $\xi_i = \xi_i(t)$
be a left-invariant frame on $\mathcal{G}_t$; 
Latin indices run from $1$ to $3$. 
Expressed in terms of the coframe $\omega^i = \omega^i(t)$ that is
dual to $\xi_i$, the metric on $\mathcal{G}_t$ (the `spatial metric') takes 
the form $g_{i j}(t) \,\omega^i \omega^j$.
We temporarily adopt the Einstein summation convention, i.e.,
summation over repeated indices is understood.
In the following we assume that the spacetime coordinate $t$ 
is the standard `cosmological time' (`Gaussian time'), i.e., 
the proper time along the geodesic congruence orthogonal to the group orbits.
Then the metric on the Bianchi spacetime $(M, {}^4\mathbf{g})$ takes the form
\begin{equation}\label{metric}
{}^4\mathbf{g} = -dt^2+g_{ij}(t)\: \omega^i\omega^j\:.
\end{equation}

The Einstein equations (in units such that $8\pi G=c=1$) split into the constraint 
equations
\begin{subequations}\label{einsteinsystem}
\begin{equation}\label{constraint}
R-k^{ij} k_{ij}+(g^{ij}k_{ij})^2 = 2T_{00}\:,\qquad\nabla_ik^i_{\weg j}-\nabla_j(g^{lm} k_{lm})=-T_{0j}\:,
\end{equation}
which are called the Hamiltonian and the momentum constraint, respectively,
and the evolution equations
\begin{equation}\label{evolution}
\partial_t g_{i j} = -2 k_{i j} \:,\quad
\partial_t k^i_{\weg j} = R^i_{\weg j} + \left(g^{lm}k_{lm}\right) k^i_{\weg j} - T^i_{\weg j} +
\textfrac{1}{2} \delta^i_{\weg j} (g^{lm}T_{lm} -T_{00}) \:.
\end{equation}
Here, $g_{i j} = g_{i j}(t)$ is the (Riemannian) spatial metric, $g^{i j}$ its inverse; 
$R_{i j}$ is the associated Ricci tensor; $R=g^{lm}R_{lm}$ is the Ricci scalar;
$k_{i j} = k_{i j}(t)$ is the second fundamental form of the hypersurface
of spatial homogeneity $\mathcal{G}_t$; 
finally, $T_{00}$, $T_{0i}$, and $T_{ij}$
are the energy density, momentum density, and stress tensor, which are
derived from the energy-momentum tensor $T_{\mu\nu}$ of the matter; Greek indices run from $0$ to $3$.
Above and in the following, (Latin) indices are raised and lowered with the spatial metric $g_{ij}$ and its inverse $g^{i j}$.

We denote by
$C^k_{\weg ij} = C^k_{\weg ij}(t)$ 
the structure constants of the Lie algebra associated with $\mathcal{G}_t$.
In this paper we consider \textit{Bianchi class~A models}, 
which are characterized by $C^k_{\weg j k}=0$.
(However, we also briefly discuss LRS Bianchi type~III models which
are of class~B, see Appendix~\ref{typeIIIsubsec}.) Then
the Ricci tensor $R_{i j}$ is given
by
\begin{equation}\label{ricci}
R_{ij}=-\textfrac{1}{2}C^l_{\weg ki}\left(C^k_{\weg lj}+g_{lm}g^{kn}C^m_{\weg nj}\right)-
\textfrac{1}{4}C^m_{\weg nk}C^p_{\weg ql}g_{jm}g_{ip}g^{kq}g^{ln}\:.
\end{equation}
\end{subequations}

Let $\varepsilon_{ijk}$ denote the standard permutation symbol
and define, 
as usual, $n^{ij}=\textfrac{1}{2}\varepsilon^{ikl}C^j_{\weg kl}$.
Since \mbox{$\varepsilon_{i j l} \, n^{k l} = C^k_{\weg i j} + 2 \delta^k_{\weg (i} C^m_{\weg j)m}$}, 
it follows that, for class~A models, 
\begin{equation}\label{Cinn}
C^k_{\weg ij}=\varepsilon_{ijl} \,n^{kl}\:,
\end{equation}
where $n^{k l}$ is symmetric because $\varepsilon_{m i j} n^{[i j]} = C^l_{\weg m l} = 0$.
Using~\eqref{Cinn}, the momentum constraint in~\eqref{constraint} can be rewritten as
\begin{equation}\label{commute}
\varepsilon_{kij}n^{kl}k^i_{\weg l}=-T_{0j}\:.
\end{equation}

The above equations are valid in a general left-invariant frame $\xi_i$
and its dual $\omega^i$. 
The freedom of redefining such a frame by a \mbox{$t$-dependent} linear transformation 
can be used to simplify the equations, e.g., by 
introducing an orthonormal group-invariant frame (which is the approach 
used, e.g., in~\cite{WE}). For our purposes it is preferable to 
exploit this freedom to make the frame time-independent. 
If the frame $\xi_i(t) \equiv \hat{\xi}_i$ (and $\omega^i(t) \equiv \hat{\omega}^i$) 
is time-independent, so are the structure constants $C^k_{\ ij}(t) \equiv \hat{C}^k_{\ ij}$ 
and the matrix $n^{ij}(t)\equiv\hat{n}^{ij}$;
we use the convention that hatted quantities are time-independent.
Moreover, we are still allowed 
to use a time-independent (unimodular) linear 
transformation 
$\hat{n}^{i j} \mapsto \hat{L}^i_{\weg k} \hat{L}^j_{\weg l} \hat{n}^{k l}$
to diagonalize the constant symmetric matrix $\hat{n}^{ij}$, i.e., without loss
of generality,
\begin{equation}\label{L}
\hat{n}^{i j} = \diag(\hat{n}_1,\hat{n}_2,\hat{n}_3)\:.
\end{equation}
After a further rescaling of the frame 
and up to a permutation or an overall change of sign, 
the (`structure') constants $\hat{n}_1$, $\hat{n}_2$, $\hat{n}_3$ 
are given as in Table~\ref{tab1}. 
The values of these constants define the different \textit{Bianchi types} of Bianchi class A.

\begin{Remark}
Using $[\hat{\xi}_i,\hat{\xi}_j] =\hat{C}^k_{\weg i j} \hat{\xi}_k$ it is straightforward to see that the one-forms $\hat{\omega}^i$ 
satisfy 
\begin{equation*}
d\hat{\omega}^1=-\hat{n}_1\:\hat{\omega}^2\wedge\hat{\omega}^3\:,\quad 
d\hat{\omega}^2=-\hat{n}_2\:\hat{\omega}^3\wedge\hat{\omega}^1\:,\quad 
d\hat{\omega}^3=-\hat{n}_3\:\hat{\omega}^1\wedge\hat{\omega}^2\:.
\end{equation*} 
For each Bianchi type, the one-forms $\hat{\omega}^i$ admit different expressions
in terms of local coordinates; we refer to~\cite{Mac}.
\end{Remark}

\begin{table}
\begin{center}
\begin{tabular}{|c|rrr|}
\hline  & & & \\[-2.5ex]
Bianchi  & \multirow{2}{*}{$\hat{n}_1$} & \multirow{2}{*}{$\hat{n}_2$} & \multirow{2}{*}{$\hat{n}_3$}  \\[-0.5ex]
type & & & \\
\hline & & & \\[-2ex] 
I & 0 & 0 & 0 \\
II & 1 & 0 & 0 \\
VI$_0$ &  0 & 1 & ${-1}$ \\
VII$_0$ &  0 & 1 & 1 \\
VIII &  ${-1}$ & 1 & 1 \\
IX &  1
& 1 & 1  \\ 
\hline 
\end{tabular}
\caption{This table gives the classification of Bianchi class A models into the different \textit{Bianchi types}. 
Each Bianchi type is defined by the values of the triple $(\hat{n}_1,\hat{n}_2,\hat{n}_3)$; this is up to permutations 
and overall changes of sign of the constants (which correspond to discrete symmetries of the frame).}
\label{tab1}
\end{center}
\end{table}

We conclude this section by noting that the system of equations~\eqref{einsteinsystem} implies a number 
of basic identities.
Of particular relevance is the relation
\begin{equation}\label{equationR}
\partial_t R=2R^{ij}k_{ij}\:.
\end{equation}
Since
\begin{equation*}
\partial_tR=2k^{ij}R_{ij}+g^{ij}\partial_t R_{ij}\:,
\end{equation*}
to prove~\eqref{equationR} we need to prove that $g^{ij}\partial_t R_{ij}=0$.
Because we use a time-independent frame, the structure constants are time-independent; hence,
differentiation of~\eqref{ricci}, where we use the antisymmetry $\hat{C}^i_{jk}=-\hat{C}^i_{kj}$,
leads to 
\begin{align*}
g^{i j} \partial_tR_{ij} & = g^{ij} \, \hat{C}^l_{\ ki}\hat{C}^m_{\ nj}\, (k_{lm}g^{kn}-k^{kn}g_{lm}) \\
 & \quad - \textfrac{1}{2}\, g^{i j} \,\hat{C}^m_{\ nk}\hat{C}^p_{\ ql}\,\big(-k_{jm}g_{ip}g^{kq}g^{ln}-k_{ip}g_{jm}g^{kq}g^{ln}
+k^{kq}g_{jm}g_{ip}g^{ln}+k^{ln}g_{jm}g_{ip}g^{kq}\big) \\[0.5ex]
& = \hat{C}^l_{\ ki}\hat{C}^m_{\ nj}k_{lm} g^{ij} g^{kn} -
\hat{C}^l_{\ ki}\hat{C}^m_{\ nj} k^{kn} g^{ij} g_{lm} + \hat{C}^m_{\ nk}\hat{C}^p_{\ ql} k_{pm} g^{kq}g^{ln}
-\hat{C}^m_{\ nk}\hat{C}^p_{\ ql} k^{kq}g_{mp}g^{ln} \\[0.5ex]
& = 0 \:,
\end{align*}
which establishes the claim.

\section{Anisotropic matter models}
\label{Sec:matter}

The system of equations~\eqref{einsteinsystem} does not form a complete system in general 
unless some information on the energy-momentum tensor is given. For instance, one may specify
the dependence of the energy-momentum tensor on some matter fields and add to the system~\eqref{einsteinsystem} 
the evolution equations for these matter fields.
These evolution equations are required to be compatible with the Bianchi identities $\nabla_\mu T^{\mu\nu}=0$;
for certain matter models, such as perfect fluids, the Bianchi identities already contain the entire 
information on the dynamics of the matter fields. 

In this paper we do not explicitly specify matter fields or an energy-momentum 
tensor. 
Rather, the latter will be restricted by a set of mild and physically motivated assumptions 
which are known to hold for important examples of matter models. 
A particular consequence of our assumptions will be  
that the dynamics of the matter fields is contained in the Einstein equations~\eqref{einsteinsystem} 
through the Bianchi identities.

\subsection{Basic assumptions on the matter model}
\label{assT}

The most common matter model in cosmology is a (non-tilted) perfect fluid. 
The energy-momentum tensor is
\begin{equation}\label{Tfluid}
T_{\mu\nu}=\rho \,( dt\otimes dt)_{\mu\nu} +p\big[{}^4\mathbf{g} + dt\otimes dt\big]_{\mu\nu}\:,
\end{equation}
where the energy density $\rho$ and the pressure $p$ are usually required to 
obey a linear equation of state, i.e.,
\begin{equation*}
p=w\rho\,, \qquad w= \mathrm{const}\:. 
\end{equation*}
The energy-momentum 
tensor~\eqref{Tfluid} is isotropic in the sense that
the eigenvalues of the spatial stress tensor $T^i_{\weg j}$ 
are all equal to the pressure $p$, since  $T^i_{\weg j} = p \,\delta^i_{\weg j}$.
Moreover, the Bianchi identities $\nabla_\mu T^{\mu\nu}=0$ 
reduce to an ordinary differential equation for the energy density $\rho$, which can be 
solved to express $\rho$, and therefore the entire tensor 
$T_{\mu\nu}$, in terms of the spatial metric:
\begin{equation}\label{rhofluid}
\rho=\rho_0 n^{1+w}\:,\qquad\rho_0=\mathrm{constant}\:,\quad n=(\det g)^{-1/2}\:.
\end{equation}

In this paper we consider \textit{anisotropic matter sources} that naturally generalize perfect fluids.
The energy-momentum tensor is not specified explicitly; instead, we introduce a set of 
assumptions on the matter models that lead to a class of \textit{anisotropic energy-momentum 
tensors} naturally encompassing~\eqref{Tfluid}. 
We note that there exist standard matter models that satisfy our assumptions; among
these are collisionless matter (Vlasov matter) for massless particles and
elastic materials (for a wide variety of constitutive equations); 
we refer to the subsequent remarks and to~\cite{CH2, letter}.

Our first (and main) assumption is
that, in analogy with the perfect fluid case, 
the energy-momentum tensor is given as a function of the 
spatial metric; 
this might be possible only under certain restrictions on the initial data.

\begin{Assumption}\label{assumptionT}
There exist initial data for the matter such that 
the components of the energy-momentum tensor 
are represented by smooth (at least $\mathcal{C}^1$)
functions of the metric $g_{ij}$. 
We assume that $\rho=\rho(g_{ij})$ is positive 
(as long as $g_{ij}$ is non-degenerate).
\end{Assumption}

\begin{Remark} 
By Assumption~\ref{assumptionT}, 
the components of the energy-momentum tensor 
are represented by functions of the spatial metric;
note that the choice of frame may affect the form of these functions.
In addition, these functions might include a number of external parameters or external
functions (which describe the properties of the matter) and the initial data
of the matter fields. 
Assumption~\ref{assumptionT} implies that the matter does not 
add any dynamical degrees of freedom to the problem.
We refer to~\cite{UJR} for a discussion of the Hamiltonian structure of the 
matter models satisfying Assumption~\ref{assumptionT}.
\end{Remark}

In the Bianchi type~I case we have shown in~\cite{CH2} that the 
spatial components of an energy-momentum tensor 
satisfying Assumption~\ref{assumptionT} are completely determined by the energy density 
\[
T_{00}=\rho=\rho(g_{ij})
\] 
through the formula 
\begin{equation}\label{Tij}
T^i_{\ j}=-2\frac{\partial\rho}{\partial g_{il}}\,g_{jl}-\delta^i_{\ j}\,\rho\:.
\end{equation}
We claim that eq.~\eqref{Tij} holds for every Bianchi class~A model.

The key to prove eq.~\eqref{Tij} is to use relation~\eqref{equationR}.
Employing~\eqref{equationR} in the derivative (w.r.t.\ time) of the first equation of~\eqref{constraint}
and using~\eqref{evolution}, we easily obtain
\begin{equation}\label{dtrho}
\partial_t\rho=k_i^{\ j}\, (T^i_{\ j}+\delta^i_{\ j}\rho)\:.
\end{equation}
On the other hand, since $\rho=\rho(g_{ij})$ 
we may use the first equation in~\eqref{evolution} to get
\begin{equation}\label{dtrho2}
\partial_t\rho=-2k_i^{\ j}g_{jl}\frac{\partial\rho}{\partial g_{jl}}\:.
\tag{\ref{dtrho}${}^\prime$}
\end{equation}
Equating the r.h.\ sides of~\eqref{dtrho} and~\eqref{dtrho2} gives a polynomial of $k^i_{\ j}$ 
whose coefficients depend only on $g_{ij}$. Therefore, the coefficients must be equal, 
which establishes eq.~\eqref{Tij}. 

The second basic property of the energy momentum tensor~\eqref{Tfluid} 
that we want to require is the absence of heat flux and viscosity terms.

\begin{Assumption}\label{t0j=0} 
We assume that there exist initial data for the matter such that $T_{0j}=0$\,.
\end{Assumption}

\begin{Remark}
In the Bianchi type~I case this is not an assumption but a consequence of the field equations, 
cf.~\eqref{commute} with $n^{i j} \equiv 0$.
\end{Remark}

\subsection{Diagonal models}
\label{diagT}

By the previous assumptions, the set of equations~\eqref{einsteinsystem} forms 
a complete system of ordinary differential equations for the functions $g_{ij}(t)$ and $k^i_{\ j}(t)$.
Initial data at 
$t_0>0$ are given by
$g_{ij}(t_0)$ and $k^i_{\weg j}(t_0)$ and the initial values of the matter fields
(which determine the function $\rho(g_{ij})$ and thus the energy-momentum tensor).
It follows from Assumption~\ref{t0j=0} and eq.~\eqref{commute} 
that the matrices $\hat{n}^{ij}$
and $k^i_{\ j}$ commute; it is thus possible
to make a change of frame $\hat{\xi}_i \mapsto \hat{L}^k_{\weg i} \hat{\xi}_k$
to simultaneously diagonalize the structure constants matrix $\hat{n}^{ij}$ and the second 
fundamental form  $k^i_{\ j}(t_0)$; this is simply achieved by choosing 
eigenvectors of $k^i_{\ j}(t_0)$ as the frame; 
since these eigenvectors are---or can be chosen to be---orthogonal,  
we can assume that $g_{ij}(t_0)$ is diagonal as well. 
The so constructed (time-independent, left-invariant) frame will be called
the initially orthogonal frame.

To ensure that $(g_{ij}, k^i_{\weg j})$ remain diagonal for all times, the energy-momentum tensor
must be diagonal in the initially orthogonal frame.

\begin{Assumption}\label{assumptiondiagonal}
We assume that there exist initial data for the matter such that 
$T^i_{\weg j}$ is diagonal in the initially orthogonal frame.
\end{Assumption}

Assumption~\ref{assumptiondiagonal} ensures that the off-diagonal components of the metric satisfy a homogeneous ODE; 
uniqueness of solutions of the evolution equations then implies 
that $(g_{ij}, k^i_{\weg j})$ remain
diagonal for all times; hence, the initially orthogonal frame is in fact an orthogonal frame for all times. 
We refer to solutions of this type as~\textit{diagonal models} and restrict our analysis to these models.

For diagonal models, the Ricci tensor $R_{i j}$ of the spatial metric $g_{i j}$, cf.~\eqref{ricci},
is diagonal as well.
To neatly express $R_{i j}$ in terms of $g_{i j}$ we define
\begin{equation*}
m_i := \sqrt{\frac{g_{ii}}{{g_{jj}}{g_{kk}}}} = \sqrt{\frac{{g^{jj}}{g^{kk}}}{g^{ii}}} \;.
\end{equation*}
Here and 
henceforth we denote by $(ijk)$ 
a cyclic permutation of $(123)$ (i.e., $\varepsilon_{ijk}=1$). 
Moreover, the Einstein summation convention is suspended, i.e.,
there is no summation over repeated indices,
unless stated otherwise.

Based on~\eqref{ricci} we find that the components of the Ricci tensor 
are given by
\begin{subequations}
\begin{equation}\label{riccidiagonal}
R^i_{\ i}=\textfrac{1}{2}\left[\hat{n}_i^2m_i^2-\left(\hat{n}_jm_j-\hat{n}_km_k\right)^2\right]\:.
\end{equation}
The spatial curvature scalar is
\begin{equation}\label{curvscal}
R = \hat{n}_im_i\left(\hat{n}_jm_j+\hat{n}_km_k-\textfrac{1}{2}\hat{n}_im_i\right)
-\textfrac{1}{2}\left(\hat{n}_jm_j-\hat{n}_km_k\right)^2 \:.
\end{equation}
\end{subequations}

For diagonal models the Einstein equations~\eqref{einsteinsystem} simplify.
The Hamiltonian constraint, cf.~\eqref{constraint}, becomes
\begin{subequations}\label{einsteinsystemdiagonal}
\begin{equation}\label{constraintdiagonal}
(\tr k)^2 - \sum\nolimits_l \left(k^l_{\ l}\right)^2+ R = 2 \rho\:,
\end{equation}
where $\tr k = \sum_l k^l_{\weg l}$ and $R$ is given by~\eqref{curvscal}.
The first 
evolution equation, cf.~\eqref{evolution}, reads
\begin{equation}\label{evol1diag}
\partial_t g_{ii} = -2 \,g_{ii}\,k^i_{\ i}\:, \quad\text{or, equivalently,}\quad
\partial_t m_i = m_i \,\big(\tr k - 2 k^i_{\weg i} \big)\:.
\end{equation}
The second evolution equation, cf.~\eqref{evolution}, takes the form
\begin{equation}\label{evolutiondiagonal}
\partial_t k^i_{\ i}=R^i_{\ i}+ (\tr k)\, k^i_{\ i}-p_i+
\textfrac{1}{2}\Big(\sum\nolimits_l p_l-\rho\Big)\:,
\end{equation}
\end{subequations}
where $R^i_{\weg i}$ is given by~\eqref{riccidiagonal} and where we 
have set
\[
T^i_{\weg j}=\mathrm{diag}(p_1,p_2,p_3)\:.
\]
The quantities $p_i$ are called the \textit{principal pressures}. 
(Note that when $p_1=p_2=p_3=p$
we recover the orthogonal frame components of the energy-momentum tensor of a perfect fluid, see~\eqref{Tfluid}.)

\subsection{Properties of anisotropic matter models}
\label{assT2}
The principal pressures $p_i$ give rise to
the average (or isotropic) pressure 
\[
p=\textfrac{1}{3}(p_1+p_2+p_3)\:.
\]
We make the following simplifying assumption:

\begin{Assumption}\label{assumptionwi}
We suppose that the isotropic pressure and the density are proportional, 
i.e., we assume $p=w\rho$, $w = \mathrm{const}$, where
\begin{equation}\label{eqasswi}
w \in (-\textfrac{1}{3},1)\:.
\end{equation}
\end{Assumption}

\begin{Remark}
The cases $w= 1$ and $w=-\textfrac{1}{3}$ lead to different dynamics, which we refrain from discussing here. 
We remark, however, that our results for Bianchi type I solutions, see Sec.~\ref{bianchi1section}, 
are valid for $w\in (-1,1)$. In Sec.~\ref{discussion} we outline how to treat the more general case when $w\neq\mathrm{const}$.
\end{Remark}

According to Assumption~\ref{assumptionwi}, the energy density and the isotropic pressure behave like
those of a perfect fluid with a linear equation of state satisfying the strong and the
dominant energy condition.
It is natural to focus the attention to anisotropic matter sources of this kind, because
they generalize the 
class of perfect fluid models commonly used in cosmology, see~\cite{WE, HHW} and the references therein.

Recall that the energy density of perfect fluid matter is described in terms of the spatial metric by~\eqref{rhofluid}. 
This relation possesses a natural generalization for anisotropic matter models satisfying
Assumption~\ref{assumptionT}--\ref{assumptionwi}.
Let again $n = (\det g)^{-1/2} = \sqrt{g^{11}g^{22}g^{33}}$ and define
\begin{equation}\label{svariables}
s_i := (g^{11}+g^{22}+g^{33})^{-1}g^{ii}\:.
\end{equation}
Note that $s_1+s_2+s_3=1$, thus only two 
of the variables $(s_1,s_2,s_3)$ are independent.
The domain of these variables is the interior of the triangle
\[
\mathscr{T}=\{0<s_i<1:s_1+s_2+s_3=1\}\:.
\]
By Assumption~\ref{assumptionwi} and~\eqref{Tij} we obtain that $\rho=\rho(g^{11},g^{22},g^{33})$ must satisfy the equation
\begin{equation}\label{gllrhogll}
\sum_l g^{ll}\frac{\partial\rho}{\partial g^{ll}}=\textfrac{3}{2}(1+w)\rho\:.
\end{equation}
In the variables $(n,s_1,s_2,s_3)$ 
(which are related to $(g^{11},g^{22},g^{33})$ by a one-to-one transformation) 
eq.~\eqref{gllrhogll} reads $n\,\partial_n\rho=(1+w)\rho$. Integration yields
\begin{equation}\label{rhons}
\rho=n^{1+w}\psi(s_1,s_2,s_3)\:,
\end{equation}
where $\psi$ is a function on $\mathscr{T}$; 
we require $\psi$ to be smooth (at least $\mathcal{C}^1$).

This relation generalizes~\eqref{rhofluid} (because it reduces to~\eqref{rhofluid} when $\psi=\mathrm{const}$).
We see that the energy density of an anisotropic matter model satisfying
Assumption~\ref{assumptionT}--\ref{assumptionwi} depends on the spatial metric in
a particular manner. Since $\psi$ is unspecified,~\eqref{rhons} covers 
a large variety of matter models. 

\begin{Remark}
Note that $\psi$ can (and in general will) depend on the initial
data of the matter fields. This is true even in the simplest of cases,
the case of a perfect fluid, where $\psi \equiv \rho_0$, cf.~\eqref{rhofluid},
where $\rho_0 = \mathrm{const}$.
\end{Remark}

We define the \textit{rescaled principal pressures} $w_i$ to be 
\begin{equation*}
w_i = \frac{p_i}{\rho} \:;
\end{equation*}
evidently, $w_1 + w_2 + w_3 = 3 w$. 
In general, the rescaled (anisotropic) pressures are not constant.
However, as a consequence of Assumption~\ref{assumptionwi}, 
we observe independence of $n$.
More specifically, by using~\eqref{rhons} in relation~\eqref{Tij}, 
we find that
\begin{equation}\label{wis}
w_i = w_i(s_1,s_2,s_3) = w + 2 \left( \frac{\partial\log \psi}{\partial\log s_i} - 
s_i\,\sum\nolimits_l \frac{\partial\log \psi}{\partial\log s_l} \right) \:.
\end{equation}
Like the function $\psi$, the rescaled pressures $w_i$ are defined on the domain $\mathscr{T}$.
For our purposes it is essential, however, that the quantities $w_i$ be
well-behaved in the limit $(s_1,s_2,s_3) \rightarrow \partial\mathscr{T}$,
see Sec.~\ref{assT3}.

The rescaled pressures permit 
a straightforward representation of the \textit{energy conditions}~\cite{HE}.
As stated in Assumption~\ref{assumptionT}, our basic assumption is $\rho > 0$.
Then the weak energy condition 
is expressed in terms of the rescaled principal pressures as
$w_i \geq {-1}$ and the dominant energy condition as
$|w_i(s_1,s_2,s_3)|\leq 1$, $\forall (s_1,s_2,s_3) \in \mathscr{T}$ and $\forall\,i=1,2,3$.
The strong energy condition is satisfied if the weak energy condition holds and if, in addition, 
$w\geq -1/3$. 
We are mainly interested in matter models that satisfy the energy conditions; however,
we shall also discuss the more general case.

\subsection{Asymptotic properties of anisotropic matter models}
\label{assT3}

The matter models we want to consider are those that exhibit suitable 
asymptotic properties of the
rescaled pressures $w_i$.

\begin{Assumption}\label{asswi}
We assume that the functions $w_i(s_1,s_2,s_3)$ representing the rescaled principal pressures 
possess a regular (sufficiently smooth) extension from $\mathscr{T}$ to $\overline{\mathscr{T}}$.
Furthermore, we assume that there exists a constant $v_-$ such that
\begin{equation}
w_i(s_1,s_2,s_3) = v_- \quad \text{when } s_i = 0 \:,
\end{equation}
for arbitrary~$i$.
\end{Assumption} 

\begin{Remark}
The assumption 
implies that $ w_i(s_1,s_2,s_3) = v_+ := 3 w - 2 v_-$ when $s_i = 1$, for arbitrary~$i$.
\end{Remark}

\begin{Remark}
Some comments on Assumption~\ref{asswi} are in order.
The assumption that the quantities $w_i(s_1,s_2,s_3)$ take limits when $(s_1,s_2,s_3)$ converges
to a point on $\partial\mathscr{T}$ corresponds to the assumption
that the rescaled principal pressures take finite values under extreme conditions,
i.e., when one or more components of the spatial metric $g_{i j}$ go to zero or infinity
(which occurs close to a singularity).
The second part of Assumption~\ref{asswi} means that the rescaled 
principal pressure in the direction $i$
becomes independent of the values of $g_{jj}$ and $g_{kk}$ 
in the limit $g_{ii} \rightarrow \infty$
provided that $g_{jj}$, $g_{kk}$ remain bounded.
(The complementary statement is implicit in 
the assumption.
The fact that $w_i$ is well-defined for $s_i = 1$ 
means that the normalized principal pressure in a direction $i$
converges to a limit as $g_{ii} \rightarrow 0$, when
$g_{jj}$ and $g_{kk}$ remain bounded from below.)
\end{Remark}

\begin{Remark}
Note that a linear equation of state connecting the principal pressures and
the density is necessary for self-similar solutions to exist~\cite{MMT};
Assumption~\ref{asswi} is formulated to allow for asymptotic self-similarity. A natural generalization of 
Assumption~\ref{asswi} is discussed in Sec.~\ref{discussion}.
\end{Remark}

A quantity that will turn out to be ubiquitous in our analysis is the `\textit{anisotropy parameter}' $\beta$.
As in~\cite{CH2} we define
\begin{equation}\label{betadef1}
\beta := 2\,\frac{w-v_-}{1-w}\:.
\end{equation}
The anisotropy parameter $\beta$ is a measure of the deviation of the 
matter model from isotropy under extreme conditions (this is because $v_-$
denotes the rescaled pressure $w_i$ when $g_{ii} \rightarrow \infty$, see Assumption~\ref{asswi}).

The anisotropy parameter $\beta$ gives rise to a classification of the 
anisotropic matter models into different types, see Table~\ref{tab2}
and Fig.~\ref{mattermodelsfig}.
It is shown in~\cite{CH2} 
that, in the Bianchi type~I case, this \textit{anisotropy classification} 
is intimately connected with the qualitative asymptotic dynamics of the associated
cosmological models. In this paper we will see that this connection
carries over to higher Bianchi types to a large extent (although we shall need to slightly refine the anisotropy 
classification).

We make a final simplifying assumption on the anisotropic matter model:

\begin{Assumption}\label{asspsi}
We assume that either there exists a unique isotropic state of the matter
or the matter model is a perfect fluid.
\end{Assumption}

Assumption~\ref{asspsi} can be formalized straightforwardly.
Uniqueness of an isotropic state means that 
$\psi$ has only one critical point in the interior of $\mathscr{T}$, which 
is either a maximum or a minimum. 
By~\eqref{wis} this entails the existence of a unique triple
$(\bar{s}_1,\bar{s}_2,\bar{s}_3) \in \mathscr{T}$ such that
$w_1 = w_2 = w_3 = w$ at $(\bar{s}_1,\bar{s}_2,\bar{s}_3)$.
This state is then a state of either maximal or
minimal energy (where, physically, the more important case 
is that of a minimum of the energy).
Although it is physically reasonable, Assumption~\ref{asspsi} 
can easily be relaxed to admit the existence
of more than one isotropic state; for simplicity we refrain from doing so.

\begin{table}
\begin{center}
\begin{tabular}{|c|c|c|}
\hline  & &  \\[-2ex]
Type  & $\beta$ values& Energy conditions require  \\ \hline  & &   \\[-2ex] 
\Dminus & $\beta\leq -2$ & $\beta=-2, w\geq \textfrac{1}{3}$ \\
\Cminus & $\beta\in (-2,-1)$ & $w>0, \beta\geq\max\{-2,-\textfrac{1+w}{1-w}\}$\\
\Bminus &  $\beta=-1$ & $w\geq 0 $ \\
\Aminus &  $\beta\in (-1,0)$ &  $w\geq -\textfrac{1}{3},\beta\geq\max\{-1,-\textfrac{1+w}{1-w}\}$  \\
\!\A$_{\,0}$ &  $\beta=0$ & $w\geq -1/3$ \\
\Aplus &  $\beta\in (0,1)$ & $w\geq -1/3$ \\ 
\Bplus & $\beta=1$ & $w\geq -1/3$ \\
\Cplus &  $\beta\in (1,2)$ & violated\\
\Dplus &  $\beta\geq 2$ & violated \\
\hline 
\end{tabular}
\caption{Anisotropy classification of matter sources according to the value of
the anisotropy parameter $\beta$. 
The value $\beta=0$ is exceptional---it corresponds to a matter source which behaves like 
a perfect fluid `at the singularity'. 
In the second column we give the range of $\beta$ and $w$ that guarantees
the validity of the strong and dominant energy condition, see~\cite{CH2} for more details.} 
\label{tab2}
\end{center}
\end{table}

\begin{figure}[Ht]
\begin{center}
\psfrag{A0}[cc][cc][1][0]{\Azero}
\psfrag{Ap}[cc][cc][1][0]{\Aplus}
\psfrag{Am}[cc][cc][1][0]{\Aminus}
\psfrag{Bm}[cc][cc][1][0]{\Bminus}
\psfrag{Cm}[cc][cc][1][0]{\Cminus}
\psfrag{Dm}[cc][cc][1][0]{\Dminus}
\psfrag{Bp}[cc][cc][1][0]{\Bplus}
\psfrag{Cp}[cc][cc][1][0]{\Cplus}
\psfrag{Dp}[cc][cc][1][0]{\Dplus}
\psfrag{c}[cc][cc][0.6][0]{\AminusP}
\psfrag{b0}[cc][cc][1.1][0]{$\beta = 0$}
\psfrag{b1}[cc][cc][0.9][0]{$\beta = 1$}
\psfrag{b2}[cc][cc][0.9][0]{$\beta = 2$}
\psfrag{bm1}[cc][cc][0.9][0]{$\beta = {-1}$}
\psfrag{bm2}[cc][cc][0.9][0]{$\beta = {-2}$}
\includegraphics[width=0.98\textwidth]{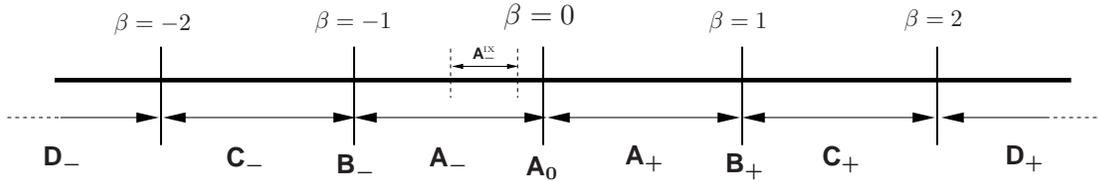}
\end{center}
\caption{Anisotropy classification according to the value of
the anisotropy parameter $\beta$. (The special case \AminusP\ is discussed
in Sec.~\ref{B9sec}; its range depends on the value of $w$.)}
\label{mattermodelsfig}
\end{figure}

To conclude this section 
we remark that there exist important examples of matter models that satisfy 
our assumptions in the Bianchi type~I case: Collisionless (Vlasov) 
matter for massless particles and elastic materials (for a wide variety of constitutive equations); we refer to~\cite{CH2} 
for details. These matter models fulfill our assumptions for the remaining
Bianchi types as well, at least within the
symmetry class that will be studied in this paper (locally rotationally 
symmetric models, see Sec.~\ref{LRSsols}). In particular, elastic matter falls into one of the $\boldsymbol{+}$ types, 
depending on the constitutive equation of the material, whereas collisionless matter (for particles with 
zero mass) is of type \Bplus, see Sec.~\ref{mattermodelssec}. If the assumption of local rotational symmetry is removed, 
it is not known how to resolve the field (Vlasov) equation for collisionless matter, 
except in the Bianchi type~I case (see~\cite{MM}). This fact precludes to prove 
that collisionless matter satisfy our assumptions in all spatially homogeneous 
models (see the remark after Assumption~\ref{assumptionT}).  
Finally there exist important matter models that do not satisfy
our assumptions  directly  but to which our techniques can be easily extended. 
Examples of such models are Vlasov matter for massive particles, see~\cite{letter}, and magnetic fields, 
see Sec~\ref{mattermodelssec}.

\section{LRS models with anisotropic matter}
\label{LRSsols}

In this paper we restrict our attention to locally rotationally symmetric (LRS) models. 
As discussed in detail in Appendix~\ref{lrsexplained}, there exists a one-dimensional isotropy 
group in LRS models that defines a plane of rotational symmetry. By using an adapted frame one can
choose two of the components of the diagonal metric to be 
equal (these are the components of the metric induced on the plane of rotational symmetry). W.l.o.g.\ we make a choice of
LRS frame such that
\begin{equation}
g_{22} = g_{33} \:.
\end{equation}
To analyze LRS models with anisotropic matter we require that the matter model
be compatible with the assumption of LRS symmetry;
this assumption replaces Assumption~\ref{assumptiondiagonal}.

\textbf{Assumption \ref{assumptiondiagonal}${}^\prime$.}
\textit{We assume that there exist initial data for the matter such that 
$T^i_{\weg j}$ is LRS, i.e., $T^i_{\weg j}$ is diagonal and $T^2_{\weg 2} = T^3_{\weg 3}$, in the LRS frame.}

The LRS models with anisotropic matter satisfying Assumption~\ref{assumptiondiagonal}${}^\prime$ are
a subclass of the diagonal models. These models are generated by the prescription of 
LRS initial data, i.e., $g_{22}(t_0) = g_{33}(t_0)$,
$k^2_{\weg 2}(t_0) = k^3_{\weg 3}(t_0)$, and Assumption~\ref{assumptiondiagonal}${}^\prime$.

\begin{Remark}
Assumption~\ref{assumptiondiagonal}${}^\prime$ is satisfied if
there exists initial data for the matter such that the 
function
$\psi(s_1,s_2,s_3)$ is 
symmetric under the exchange of the second and the third variable.
\end{Remark}

The Ricci tensor of the spatial metric of a Bianchi class~A LRS model 
is given by~\eqref{lrsricciA}, i.e.,
\begin{equation}\label{ricc}
R^1_{\weg 1} = \textfrac{1}{2}\: \hat{n}_1^2 \,m_1^2\:, \qquad
R^2_{\weg 2} = \hat{n}_1 \hat{n}_2 \, m_1 m_2 - \textfrac{1}{2}\: \hat{n}_1^2 \,m_1^2\:.
\end{equation}

The Einstein equations~\eqref{einsteinsystemdiagonal} simplify.
For convenience we introduce variables that are commonly used
in cosmology. We define the Hubble scalar $H$ and the shear variable $\sigma_+$
to be
\begin{equation}\label{Hubble}
H := {-\textfrac{1}{3}}\, \tr k = {-\textfrac{1}{3}}\,\big(k^1_{\weg 1} + 2 k^2_{\weg 2}\big)\:,
\qquad
\sigma_+ = \textfrac{1}{3}\,\tr k - k^2_{\weg 2} = \textfrac{1}{3} \big( k^1_{\weg 1} - k^2_{\weg 2} \big)\:.
\end{equation}
Then the Hamiltonian constraint~\eqref{constraintdiagonal} becomes
\begin{subequations}\label{EinsteinmatterLRS}
\begin{equation}\label{constraintLRS}
H^2+\textfrac{1}{3}\,\hat{n}_i\hat{n}_j m_i m_j=\textfrac{1}{3}\rho+\textfrac{1}{12}\hat{n}_i^2m_i^2+\sigma_+^2\:,
\end{equation}
where we have inserted $R = R^1_{\weg 1} + 2 R^2_{\weg 2}$ and~\eqref{ricc}.
The evolution equations $\partial_t g_{ii} = {-2} g_{ii} k^i_{\weg i}$, cf.~\eqref{evol1diag}, 
turn into
\begin{equation}\label{metricevol}
\partial_t m_1 = m_1 \,\big({-H} - 4 \sigma_+ \big)\:,\qquad
\partial_t m_2 = m_2 \,\big({-H} + 2 \sigma_+ \big)\:.
\end{equation}
Recall that $g^{11} = m_2^2$, $g^{22} = m_1 m_2$, cf.~\eqref{m1m2def}. 
Finally, the evolution equations~\eqref{evolutiondiagonal} take the form
\begin{equation}\label{evolutionLRS}
\begin{split}
\partial_t k^1_{\ 1} & = \textfrac{1}{2}\,\hat{n}_i^2{m_i}^2
+ (\tr k) k^1_{\ 1} - p_1 +\textfrac{1}{2}\left(p_1+2p_2-\rho\right) \:,\\[0.5ex]
\partial_t k^2_{\ 2} & = \textfrac{1}{2}
\big[2\hat{n}_i\hat{n}_jm_im_j-\hat{n}_i^2m_i^2\big]+ (\tr k) k^2_{\ 2}-p_2+\textfrac{1}{2}\left(p_1+2p_2-\rho\right)\:, 
\end{split}
\end{equation}
or, equivalently, 
using the constraint and expressing these equations
in terms of $H$ and $\sigma_+$,
\begin{equation}\label{evolutionLRS2}
\begin{split}
\partial_t H & = -H^2 - 2 \sigma_+^2 - \textfrac{1}{6}\, (1+ 3 w) \rho \:,\\[0.5ex]
\partial_t \sigma_+  & = -3 H \sigma_+ + \textfrac{1}{3} \,\big( \hat{n}_1^2 m_1^2 - \hat{n}_1 \hat{n}_2 m_1 m_2 \big) 
+ (w_2 - w) \rho \:.
\end{split}
\tag{\ref{evolutionLRS}${}^\prime$}
\end{equation}
\end{subequations}

The system of equations~\eqref{EinsteinmatterLRS} represents the Einstein-matter equations
for LRS models.
Due to local rotational symmetry, the functions describing the properties
of the anisotropic matter model under consideration, 
see Subsecs.~\ref{assT2} and~\ref{assT3}, simplify significantly.

Since $g_{22} = g_{33}$ implies $s_2 = s_3$, only one of the variables $(s_1,s_2,s_3)$ of~\eqref{svariables}
is independent. We set
\begin{equation}
s_2 = s_3 = s \:, \quad s_1 = 1-2s\:,
\end{equation}
which implies that $0<s<\textfrac{1}{2}$;
note that $s = g^{22}/(g^{11} + 2 g^{22}) = (2 + m_2/m_1)^{-1}$.
With slight abuse of notation, we define
\begin{equation}\label{rhoLRS}
\psi(s):=\psi(1-2s,s,s)\:.
\end{equation}
Employing~\eqref{wis} and~\eqref{rhoLRS} to compute the rescaled anisotropic
pressures we obtain
\begin{equation}\label{wisLRS}
w_1(1-2s,s,s) = w - 2 ( 1-2s) \frac{d \log \psi(s)}{d \log s}\,,\quad
w_2(1-2s,s,s) = w_3(1-2s,s,s) = w + (1-2 s)  \frac{d \log \psi(s)}{d \log s}\,,
\end{equation}
where we have used that $(\partial\psi/\partial s_2)(1-2s,s,s) = 
(\partial\psi/\partial s_3)(1-2s,s,s)$, which 
follows from
the symmetry of $\psi(s_1,s_2,s_3)$ in $(s_2,s_3)$.

Let us define the \textit{anisotropy function} $u(s)$ 
through
\begin{equation}\label{udef}
w_2(1-2s, s, s) = w_3(1-2s,s,s) = u(s)\:,\qquad w_1(1-2s,s,s) = 3 w - 2 u(s) \:.
\end{equation}
The anisotropy function $u(s)$ encodes the properties of the
rescaled anisotropic pressures; by~\eqref{wisLRS} we have
\begin{equation*}
u(s) =  w + (1-2 s)  \frac{d \log \psi(s)}{d \log s} \:.
\end{equation*}
It is this function that enters the Einstein-matter equations~\eqref{EinsteinmatterLRS};
its main properties will turn out to 
determine the dynamics of solutions of~\eqref{EinsteinmatterLRS}.

Since $0 < s < \textfrac{1}{2}$, a priori, $u(s)$ is defined on the interval
$\big(0,\textfrac{1}{2}\big)$. However, by Assumption~\ref{asswi} it admits
a regular extension to the closed interval $\big[  0,\textfrac{1}{2}\big]$.
By~\eqref{udef} we have 
\begin{alignat*}{4}
s = 0 &: \quad  &  & u(0) = v_-\,,\quad&  & w_1 = v_+ = 3 w - 2 v_-\,,\quad & & w_2 = w_3 = v_- \,, \\
s = \textfrac{1}{2} &: \quad & & u(\textfrac{1}{2}) = \textfrac{3 w - v_-}{2} = \textfrac{v_- + v_+}{2} \,,\quad
& & w_1 = v_-\,, \quad & & w_2 = w_3 = \textfrac{3 w - v_-}{2} = \textfrac{v_- + v_+}{2}\,.
\end{alignat*}

\begin{Remark}
Let us comment on the physical interpretation of $u(s)$ and $v_\pm$.
By~\eqref{udef}, $u(s)$ is the rescaled principal pressure in the directions tangential to 
the plane of local rotational symmetry;
therefore, $v_-$ is the rescaled principal pressure tangential to 
the plane of local rotational symmetry---and $v_+$ is the rescaled principal pressure
in the orthogonal direction---under the extreme conditions represented by $s \rightarrow 0$
(i.e., $g_{22} = g_{33} \rightarrow \infty$ while $g_{11}$ remains bounded).
Analogously, under the extreme conditions represented by $s \rightarrow \textfrac{1}{2}$
(which corresponds to $g_{11} \rightarrow \infty$ and $g_{22} = g_{33}$ remaining bounded)
the rescaled principal pressure tangential to 
the plane of local rotational symmetry is the mean of $v_\pm$
and the rescaled principal pressure
in the orthogonal direction is $v_-$.
\end{Remark}

The anisotropy parameter $\beta$, see~\eqref{betadef1}, takes several
equivalent forms
\begin{equation}\label{betadef}
\beta = 2 \,\frac{w-v_-}{1-w}=2\, \frac{w-u(0)}{1-w} =4\, \frac{u(\textfrac{1}{2})-w}{1-w}\:.
\end{equation}

By Assumption~\ref{asspsi}, 
$\psi(s)$ has a unique critical point $\bar{s} \in (0,\textfrac{1}{2})$; this is either a maximum or a minimum,
i.e., $\psi'(\bar{s})=0$ and $\psi''(\bar{s})\lessgtr 0$.
The point $\bar{s}$ corresponds to a unique isotropic state: $u(\bar{s}) = w$ and thus
$w_1 = w_2 = w_3 = w$ for $s = \bar{s}$.
At $s = \bar{s}$ we obtain
\begin{equation*}
\frac{d u}{d s} \Big|_{s = \bar{s}} = 
\bar{s} (1 - 2 \bar{s}) \psi^{-1}(\bar{s}) \:\frac{d^2\psi}{d s^2} \Big|_{s = \bar{s}} \gtrless 0 \:.
\end{equation*}
Therefore, we find that the anisotropic matter models we consider fall into
two main classes:
\begin{subequations}\label{inequalities}
\begin{alignat}{3}
& \boldsymbol{+} \text{ cases }  (\beta > 0) & \;:\quad &
u(0)< w < u(\textfrac{1}{2})\,, &\quad & u(\bar{s}) = w \:, \quad u^\prime(\bar{s}) > 0  \:,
\tag{\ref{inequalities}$+$}
\\
& \boldsymbol{-} \text{ cases } (\beta < 0) & \;:\quad &
u(0) > w > u(\textfrac{1}{2})\,, &\quad & u(\bar{s}) = w \:, \quad u^\prime(\bar{s}) < 0  \:.
\tag{\ref{inequalities}$-$}
\end{alignat}
\end{subequations}
Clearly, the $\boldsymbol{+}$ cases comprise the cases \Aplus, \Bplus, \Cplus, and \Dplus,
see Table~\ref{tab2} and Fig.~\ref{mattermodelsfig}; the $\boldsymbol{-}$ cases comprise the cases \Aminus, \Bminus, \Cminus, \Dminus.
For $\beta=0$, i.e., in the exceptional case \Azero, we have $u(0) = w = u(\textfrac{1}{2})$; we 
distinguish two subcases:
\[
\textnormal{\Azeroplus}\,:\quad  \beta = 0\,,\; u'(\bar{s})>0\:; 
\qquad 
\textnormal{\Azerominus}\,:\quad \beta=0\,,\; u'(\bar{s})<0\:.
\]

\section{Vacuum and perfect fluid dynamics}
\label{perfectfluid}

The system of equations~\eqref{EinsteinmatterLRS} constitutes the
Einstein-matter equations for LRS models with the class
of anisotropic matter sources under consideration.
Setting the energy density (and the principal pressures) 
to zero, i.e., $\rho = 0$ and $p_i = 0$ $\forall i$,
the equations reduce to the (SH LRS) Einstein vacuum equations.
Setting $p_1 = p_2 = p_3 = p = w \rho$ (or $w_1 = w_2 = w_3 = w$)
yields the (SH LRS) Einstein-perfect fluid equations. In this section 
we review the results on the asymptotic behavior of solutions of 
the system~\eqref{EinsteinmatterLRS} for
vacuum and for the perfect fluid matter models.  For a proof of these results we refer to~\cite{WE}.

There are a few known explicit solutions 
of the Einstein vacuum and perfect fluid systems that will be relevant for our analysis.
Let $a$, $b$ denote positive constants.
The following solutions are vacuum solutions, i.e., $\rho = 0$ and $p_i = 0$ $\forall i$,
and of Bianchi type~I.
\begin{itemize}
\item The flat LRS Kasner solution (\textit{Taub solution}) $\mathrm{T}$ is represented by the spatial metric
\begin{equation}\label{taub}
g_{11}=a\, t^{2}\:,\qquad g_{22}= g_{33}=b\:.
\end{equation}
\item The \textit{non-flat LRS Kasner solution} $\mathrm{Q}$ is given by
\begin{equation}\label{solQ}
g_{11}=a\, t^{-2/3}\:,\qquad g_{22} = g_{33} = b \,t^{4/3}\:.
\end{equation}
\end{itemize}
In contrast, the following solutions are perfect fluid solutions 
(where the fluid obeys a linear equation of state), i.e., 
$p_1 = p_2 = p_3 = p = w \rho$.
\begin{itemize}
\item The isotropic perfect fluid solution (Friedmann-Lema\^itre-Robertson-Walker solution) $\mathrm{F}$
is of Bianchi type~I (i.e., $\hat{n}_i = 0$ $\forall i$) and reads
\begin{equation}\label{FRWsol}
g_{11} = g_{22} =g_{33} = a\, t^{\frac{4}{3(1+w)}}\:,
\end{equation}
where $w \in (-1,1]$ is permitted. The Hamiltonian constraint implies
$\rho = \textfrac{4}{3(1+w)^2}\, t^{-2}$, which is 
in accordance with~\eqref{rhofluid}.
\item The \textit{Collins-Stewart} perfect fluid solution $\mathrm{C}$ 
is of Bianchi type~II, i.e., $\hat{n}_1=1$, $\hat{n}_2=\hat{n}_3=0$; we have
\begin{equation}\label{CSsol}
g_{11}=a \,t^{\textfrac{1-w}{1+w}}\:, \qquad g_{22} = g_{33} =b\, t^{\textfrac{3+w}{2(1+w)}}
\qquad\text{with}\quad \sqrt{a} =\textfrac{\sqrt{(1-w)(1+3w)}}{2(1+w)}\: b\:,
\end{equation}
where  $w \in (-\textfrac{1}{3},1]$ is permitted. The energy density is
$\rho=\textfrac{5-w}{4(w+1)^2}t^{-2}$.
\end{itemize}

\begin{Remark}
All solutions listed above are self-similar, see~\cite{WE}. Moreover the are all singular at time $t=0$.
\end{Remark}

It is well known that for all Bianchi types except~IX, vacuum solutions and 
perfect fluid solutions of~\eqref{EinsteinmatterLRS} with initial data at $t=t_0>0$ 
become singular in finite time in the past (w.l.o.g. we assume the singularity to be at $t=0$) 
and are defined for all times $t>t_0$ in the future.  
In the Bianchi type~IX case, solutions of~\eqref{EinsteinmatterLRS} are singular both 
in the past and in the future. After these preliminary remarks, let us review the asymptotics 
of solutions of the system~\eqref{EinsteinmatterLRS} in the vacuum and perfect fluid case. 

{\it LRS Bianchi type~I and~VII$_0$}. In the vacuum case a solution of~\eqref{EinsteinmatterLRS} is either a Taub 
solution $\mathrm{T}$, see~\eqref{taub}, 
or a non-flat LRS Kasner solution $\mathrm{Q}$, see~\eqref{solQ}. In the perfect fluid case every solution 
except the isotropic solution $\mathrm{F}$, see~\eqref{FRWsol}, is asymptotic to a vacuum solution toward the singularity (i.e., as $t\to 0$), 
while toward the future (i.e., as $t\to\infty$) every solution is asymptotic to the FLRW solution $\mathrm{F}$
and thus isotropizes.

{\it LRS Bianchi type~II}. In the vacuum case, every solution is asymptotic to $\mathrm{T}$ toward 
the singularity and to $\mathrm{Q}$ toward the future. In the perfect fluid case, every solution 
is asymptotic to the Collins-Stewart solution $\mathrm{C}$, see~\eqref{CSsol}, as $t\rightarrow \infty$, 
while toward the singularity generic solutions are asymptotic to $\mathrm{T}$; 
there is also a family of non-generic solutions that correspond to a set of zero measure of initial data, 
which behave like $\mathrm{F}$ toward the singularity and thus exhibit an isotropic singularity.

{\it LRS Bianchi type~VIII}. In the vacuum case, all solutions are asymptotic to $\mathrm{T}$ toward the singularity;
while toward the future they are asymptotic to a `plane wave' solution. (The plane wave
solution is another (SH LRS) vacuum 
solution of the Einstein equations, see
eq.~(9.7) in~\cite{WE}; it is of type~III.)
In the perfect fluid case, the asymptotics of generic
solutions are the same.

{\it LRS Bianchi type~IX}. Toward both the initial and the final singularity, vacuum and generic perfect fluid solutions 
are asymptotic to $\mathrm{T}$.
There exists also non-generic perfect fluid solutions which behave like $\mathrm{C}$ or $\mathrm{F}$ toward both singularities. 

\begin{Remark}
An important feature of the above results is that the asymptotics toward the singularity of generic Bianchi LRS perfect 
fluid solutions is the same as for vacuum solutions. This fact is referred to by saying that perfect fluid matter 
``does not matter" when the singularity of a Bianchi LRS cosmological model is approached. 
\end{Remark}

The main purpose of this paper is to extend the results on vacuum/perfect fluid solutions 
to the class of (anisotropic) matter models specified by the assumptions in Sec.~\ref{Sec:matter}. 
We shall see that matter models of type  \Azero~(which includes perfect fluids) have the same (generic) 
asymptotic behavior as perfect fluids. 
(Matter of type \Azero\ ``does not matter" when a singularity is approached.) However the other types of 
matter models may exhibit asymptotic behavior that differs considerably from that of perfect fluid 
(and vacuum) models (even for arbitrarily small (negative) values of the anisotropy parameter $\beta$.)

\section{The reduced dynamical system}
\label{reducedsystem}

In this section we formulate the evolution equations and
the constraint in terms of \textit{normalized variables}.
The benefit: In these variables 
some equations decouple 
and thus
the essential dynamics are described by a system
of equations that is of lower dimension (the {\it reduced dynamical system}).
Moreover, the self-similar solutions of the system~\eqref{EinsteinmatterLRS} (which govern the asymptotic 
behavior of general solutions) are represented by fixed points of the reduced dynamical system.
The local and global stability properties of the fixed points of the reduced 
dynamical system are therefore intimately connected with the asymptotic behavior of the cosmological models under study.

We define the `dominant variable' $D$ by
\begin{equation}\label{Ddef}
D = \sqrt{H^2 + \frac{\hat{n}_1 \hat{n}_2 \,m_1 m_2}{3}}\:.
\end{equation}
The constraint guarantees that the square root is real, 
cf.~\eqref{constraintLRS}.
We employ $D$ to introduce normalized variables according to
\begin{subequations}\label{domvars}
\begin{equation}\label{normvars}
H_D = \frac{H}{D}\:,\qquad
\Sigma_+ = \frac{\sigma_+}{D} \:,\qquad
M_{1} = \frac{m_1}{D} > 0 \:,\qquad
M_{2} = \frac{m_2}{D} > 0\:.
\end{equation}
In addition to~\eqref{normvars} we define a normalized energy density $\Omega$ by
\begin{equation}\label{Omegarho}
\Omega = \frac{\rho}{3 D^2} \geq 0\:,
\end{equation}
and we replace the cosmological time $t$ by a rescaled time variable $\tau$ via
\begin{equation}\label{newtime}
\frac{d}{d\tau} = \frac{1}{D} \,\frac{d}{d t}\:.
\end{equation}
\end{subequations}
Henceforth, a prime denotes differentiation w.r.t.\ $\tau$.
Expressed in these variables the Einstein evolution equations~\eqref{metricevol} and~\eqref{evolutionLRS2} 
split into a decoupled
equation for $D$,
\begin{equation}\label{Ddecoupled}
D^\prime = -D \Big( H_D (1+ q) + \Sigma_+ (1 - H_D^2)\Big)\:,
\end{equation}
and a system of coupled equations for the normalized variables~\eqref{normvars},
\begin{subequations}\label{domsys}
\begin{align}
\label{HDEq}
H_D^\prime & = -(1-H_D^2) (q - H_D \Sigma_+) \:,\\[0.5ex]
\label{Sigma+Eq}
\Sigma_+^\prime & = -(2- q) H_D\Sigma_+ + (1-H_D^2) \Sigma_+^2 - \textfrac{1}{3} 
\big({-}\hat{n}_1^2\,M_{1}^2 + \hat{n}_1 \hat{n}_2 \, M_{1} M_{2}\big) + 3\Omega \big(u(s) -w\big) \:,\\[0.5ex]
\label{M1Equ}
M_{1}^\prime & = M_{1} \big( q H_D - 4 \Sigma_+ + (1-H_D^2) \Sigma_+ \big)\:, \\[0.5ex]
\label{M2Equ}
M_{2}^\prime & = M_{2} \big( q H_D + 2 \Sigma_+ + (1-H_D^2) \Sigma_+ \big)\:, 
\intertext{The evolution equation for $s$ is}
\label{sEq}
s^\prime & = -6 s ( 1 - 2s ) \Sigma_+\:;
\end{align}
\end{subequations}
we are free to add this equation to the system~\eqref{HDEq}--\eqref{M2Equ}.
The quantity $q$ (`deceleration parameter') is given by
\begin{equation}\label{qD}
q = 2 \Sigma_+^2 + \textfrac{1}{2} ( 1 + 3 w) \Omega\:,
\end{equation}
and the normalized energy density, $\Omega$, can be expressed in terms
of the variables~\eqref{normvars} by the Hamiltonian constraint, cf.~\eqref{constraintLRS}, which reads
\begin{equation}\label{gausscon}
\Sigma_+^2 + \textfrac{1}{12} \hat{n}_1^2 \,M_{1}^2 - 
\textfrac{1}{3} \hat{n}_1\hat{n}_2\, M_{1} M_{2} + \Omega + (1 -H_D^2) = 1 \:.
\end{equation}
It is important to emphasize that the variables in~\eqref{domsys} 
are not independent; there exist two constraints: 
The variable $s$, which appears as the argument of the function $u(s)$,
is determined by
\begin{subequations}\label{bothcons}
\begin{equation}\label{2ndcon}
s = \Big(2+ \frac{M_{2}}{M_{1}} \Big)^{-1} \:;
\end{equation}
recall that $0< s < \textfrac{1}{2}$. The second constraint is
\begin{equation}\label{1stcon}
1 - H_D^2 = (\hat{n}_1 \hat{n}_2) \frac{M_{1} M_{2}}{3}\:,
\end{equation}
\end{subequations}
which follows directly from~\eqref{Ddef}. 
Using~\eqref{1stcon} the Hamiltonian constraint~\eqref{gausscon} simplifies to
\begin{equation}\label{gaussconsimple}
\Sigma_+^2 + \textfrac{1}{12} \hat{n}_1^2 \,M_{1}^2 + \Omega  = 1 \:.
\tag{\ref{gausscon}${}^\prime$}
\end{equation}

\begin{Remark}
\textnormal{A normalization that is more common than~\eqref{domvars} is
the Hubble-normalization. Hubble-normalized variables are defined
as in~\eqref{normvars}, where $D$ is replaced by $H$.
Accordingly, $H_D^2 \equiv 1$ and the system of equations is given by~\eqref{Ddecoupled} 
and~\eqref{Sigma+Eq}--\eqref{sEq} with~\eqref{gausscon}, where $H_D^2$ is set to $1$.}
\end{Remark}

\begin{Remark}
It is sometimes useful to consider the evolution equation for $\Omega$ as
an auxiliary equation,
\begin{equation}\label{OmegaD}
\Omega^\prime = \Omega \Big( 2 H_D q + 2 \Sigma_+ (1-  H_D^2) - (1+3 w) H_D - 6 \Sigma_+ \big(u(s)-w\big) \Big).
\end{equation}
\end{Remark}

The system of equations~\eqref{domsys} describes the dynamics of 
LRS Bianchi class~A models. The equations for 
Kantowski-Sachs models and Bianchi type~III LRS models 
derive from this system as well; for a brief discussion we refer to Sec.~\ref{KSIII}.

The system~\eqref{domsys} contains redundant information; 
we will see that the essential dynamics of every model can be 
described by a subset of equations of~\eqref{domsys}, 
which differs according to the Bianchi type and
which we call the \textit{reduced dynamical system}. 

In Bianchi types~I and~II, eq.~\eqref{1stcon} enforces $H_D^2 \equiv 1$ since $\hat{n}_1 \hat{n}_2 = 0$;
therefore, the variables are Hubble-normalized. 
We restrict our attention
to expanding type~I and type~II models, which are characterized by $H_D \equiv 1$. 
(The case $H_D \equiv {-1}$ corresponds to models that are forever contracting. 
The dynamics of these models is obtained by replacing $\tau$ with $-\tau$, 
i.e., by inverting the flow in the dynamics of the expanding models.)
Bianchi type~I is characterized by $\hat{n}_1 = 0$ and $\hat{n}_2 = 0$.
Accordingly, the dependence of the r.h.s.\ of~\eqref{Sigma+Eq} on $M_1$ and $M_2$ disappears
and the equations~\eqref{M1Equ} and~\eqref{M2Equ} decouple. 
Therefore, in the Bianchi type~I case, 
the dynamical system~\eqref{domsys} reduces to a two-dimensional system
consisting of Eqs.~\eqref{Sigma+Eq} and~\eqref{sEq}.
Bianchi type~II is characterized by $\hat{n}_1^2 = 1$ and $\hat{n}_2 = 0$. Accordingly,
the equation for $M_2$ decouples and we obtain a three-dimensional
system consisting of Eqs.~\eqref{Sigma+Eq},~\eqref{M1Equ}, and~\eqref{sEq}.
Bianchi type~$\mathrm{VI}_0$ is incompatible with the assumption
of local rotational symmetry (see Appendix~\ref{lrsexplained}); LRS Bianchi type~$\mathrm{VII}_0$ reduces
to LRS type~I.
Bianchi type~VIII is characterized by $\hat{n}_1^2 =1$ and $\hat{n}_1 \hat{n}_2 = {-1}$.
From~\eqref{1stcon} and the positivity of $M_1$ and $M_2$ it follows that
$H_D > 1$. (We restrict our attention to expanding cosmological models
and thus discard the case $H_D < -1$ which corresponds to forever contracting models.)
Bianchi type~IX, on the other hand, is characterized by $\hat{n}_1^2 =1$ and $\hat{n}_1 \hat{n}_2 ={+1}$,
and~\eqref{1stcon} implies that $H_D \in (-1,1)$. Therefore, 
models of Bianchi type~IX 
that are expanding initially need not expand forever; 
an eventual recollapse is to be expected; we refer to Sec.~\ref{B9sec}.

Through the present approach, 
the analysis of the (asymptotic) dynamics of solutions of the Einstein-matter equations
has been reduced to the problem of finding the $\alpha$- and $\omega$-limit sets 
of orbits of the reduced dynamical system (for each Bianchi type).
Naturally, we are particularly interested in the asymptotic dynamics 
of typical orbits.

We say that an orbit of a dynamical system is \textit{typical} if its $\alpha$-limit (resp.\ $\omega$-limit) set
lies in the past (resp.\ future) attractor.
(Solutions of the Einstein equations that are associated with typical orbits will also be called typical.)
Atypical solutions are non-generic; the $\alpha$/$\omega$-limit set lies outside of the past/future attractor. 

The subsequent sections are devoted to an in-depth analysis of the dynamics
of LRS Bianchi class~A models.
We begin with a study of the dynamics of models of the lower Bianchi types.
The results on the lower Bianchi types 
are results in their own right; in addition, this
analysis is essential for the study of the dynamics of Bianchi types~VIII and~IX.

\section{Bianchi type I}
\label{bianchi1section}

In this section we study the dynamics of LRS models of Bianchi type~I. 
(Recall that LRS Bianchi type~VII$_0$ is the same as LRS Bianchi type~I, see Appendix~\ref{lrsexplained}.)
According to Table~\ref{tab1}, Bianchi type~I corresponds to setting 
$\hat{n}_1=0$ and $\hat{n}_2 = \hat{n}_3 = 0$ in~\eqref{Ddef} and in the 
reduced dynamical system~\eqref{domsys}.
We see that $H_D\equiv 1$, which reduces the 
normalization~\eqref{domvars} to the standard Hubble-normalization,
and the equations for $M_1$ and $M_2$ 
decouple from~\eqref{domsys}. Therefore, 
the essential dynamics are described by the equations for the pair 
$(\Sigma_+,s)$, which are given by 
\begin{subequations}\label{dynsysbianchiI}
\begin{align}
&\Sigma_+'=-3(1-\Sigma_+^2)\left[\textfrac{1}{2}(1-w)\Sigma_+-\left(u(s)-w\right)\right]\:,\\[0.5ex]
&s'=-6s(1-2s)\Sigma_+\:.
\end{align}
\end{subequations}
The Hamiltonian constraint~\eqref{gaussconsimple} becomes
\begin{equation}\label{Icon}
\Sigma_+^2 + \Omega =1 \:.
\end{equation}
Therefore, the reduced dynamical system~\eqref{dynsysbianchiI} is defined on the state space 
\begin{equation}\label{statespaceI}
\mathcal{X}_{\,\mathrm{I}}=\big\{(\Sigma_+,s)\in(-1,1)\times (0,\textfrac{1}{2})\big\}\:,
\end{equation}
which is a rectangle. 

The system~\eqref{dynsysbianchiI} admits a regular extension to the boundary of the state space. 
We distinguish the vacuum boundary, which is given by $\Omega = 0$ (i.e., $\rho = 0$),
or, more specifically, by $\mathcal{V}_{\,\mathrm{I}}= \mathcal{T} \cup \mathcal{Q}$, where
\[
\mathcal{T} = \big\{\Sigma_+ = {-1}\big\}\times (0,\textfrac{1}{2})\qquad\text{and}\qquad
\mathcal{Q} = \big\{\Sigma_+ =1\big\}\times (0,\textfrac{1}{2})\:,
\]
and the matter boundary, which is the boundary subset satisfying $\Omega \neq 0$;
it consists of
\begin{equation}\label{IflatIsharp}
\mathcal{I}_\flat = ({-1},1)\times \{s = 0\}  \qquad\text{and}\qquad
\mathcal{I}_\sharp = ({-1},1)\times \big\{s = \textfrac{1}{2}\big\}\:.
\end{equation}
The number of fixed points on the boundary of $\mathcal{X}_{\,\mathrm{I}}$ 
varies between four and six, depending on the anisotropy parameter $\beta$,
see~\eqref{betadef}.

\begin{list}{point}{\leftmargin1.15cm\labelwidth4cm\labelsep0cm}
\item[$\mathrm{T}_\flat$/$\mathrm{T}_\sharp$] \quad There exist two `Taub points': $\mathrm{T}_\flat: (\Sigma_+,s)=(-1,0)$ 
and $\mathrm{T}_\sharp: (\Sigma_+,s)=(-1,\textfrac{1}{2})$.
\item[$\mathrm{Q}_\flat$/$\mathrm{Q}_\sharp$] \quad There exist two `non-flat LRS points': $\mathrm{Q}_\flat: (\Sigma_+,s)=(1,0)$ 
and $\mathrm{Q}_\sharp: (\Sigma_+,s)=(1,\textfrac{1}{2})$.
\item[$\mathrm{R}_\flat$] \quad The fixed point $\mathrm{R}_\flat: (\Sigma_+,s)=(-\beta,0)$ 
exists when $\beta\in (-1,1)$.
\item[$\mathrm{R}_\sharp$] \quad The fixed point 
$\mathrm{R}_\sharp: (\Sigma_+,s)=(\textfrac{\beta}{2},\textfrac{1}{2})$ exists when $\beta\in (-2,2)$.
\end{list}
The $\mathrm{T}$-points are associated with the flat Taub solution~\eqref{taub}, 
whereas the $\mathrm{Q}$-points are associated with the non-flat LRS solution~\eqref{solQ}. 
The fixed points $\mathrm{R}_\sharp$, $\mathrm{R}_\flat$, when they exist, are associated 
with non-vacuum solutions (since $\Omega>0$ at these points), see Appendix~\ref{exact}.

\begin{Remark}
We adhere to the convention that fixed points and invariant subsets with $s = 0$ are
denoted by a subscript ${}_\flat$, while the subscript ${}_\sharp$ indicates that
$s = \textfrac{1}{2}$.
\end{Remark}

Consider, first, the vacuum boundary $\mathcal{V}_{\,\mathrm{I}}$.
The orbit $\mathcal{T}$ with $\Sigma_+ = {-1}$ connects the fixed point
$\mathrm{T}_\flat$ with the point $\mathrm{T}_\sharp$.
Like the fixed points themselves, the orbit $\mathcal{T}$ is a representation
of a Taub solution~\eqref{taub} in terms of the normalized variables.
Likewise, the orbit $\mathcal{Q}$ with $\Sigma_+ = {+1}$ connects the fixed point
$\mathrm{Q}_\sharp$ with the point $\mathrm{Q}_\flat$ and 
represents a non-flat LRS solution~\eqref{solQ}.

Second, consider the boundary components $\mathcal{I}_\flat$ and $\mathcal{I}_\sharp$.
Using~\eqref{betadef} 
we have
\begin{subequations}\label{eqcalI}
\begin{align}
\label{eqcalIflat}
& \mathcal{I}_\flat\,:\quad {\Sigma_+'}_{\,|s=0}=-\textfrac{3}{2}(1-\Sigma_+^2)(1-w)(\Sigma_++\beta)\:,
\tag{\ref{eqcalI}${}_\flat$}
\\
\label{eqcalIsharp}
& \mathcal{I}_\sharp\,:\quad {\Sigma_+'}_{\,|s=1/2}=-\textfrac{3}{2}(1-\Sigma_+^2)(1-w)(\Sigma_+-\textfrac{\beta}{2})\:;
\tag{\ref{eqcalI}${}_\sharp$}
\end{align}
\end{subequations}
in particular, $\mathrm{R}_\flat$ and $\mathrm{R}_\sharp$, when present, 
attract solutions on $\mathcal{I}_\flat$ and $\mathcal{I}_\sharp$, cf.~Fig.~\ref{BianchiIfig}.
To determine the role of the points $\mathrm{R}_\flat$ and $\mathrm{R}_\sharp$ for the flow on $\mathcal{X}_{\,\mathrm{I}}$, i.e.,
for~\eqref{dynsysbianchiI}, we use that 
\[
\left[\textfrac{d}{d\tau}\log s\right]_{\,|\mathrm{R}_\flat}=6\beta\:,\quad
\left[\textfrac{d}{d\tau}\log (1-2s)\right]_{\,|\mathrm{R}_\sharp}=3\beta\:.
\]
We infer that, in the $\boldsymbol{-}$ cases, i.e., if $\beta < 0$, these fixed points 
attract orbits from the interior of $\mathcal{X}_{\,\mathrm{I}}$ 
and thus represent sinks in $\mathcal{X}_{\,\mathrm{I}}$, 
while they are saddles in the $\boldsymbol{+}$ cases, i.e, if $\beta > 0$.
(In the cases \Azeropm\ the fixed points are non-hyperbolic 
sinks and saddles, respectively, 
which follows from a center manifold analysis.) 

In the case \Dplus, the boundary of $\mathcal{X}_{\,\mathrm{I}}$ forms a heteroclinic cycle 
connecting the fixed points at the vertexes of the rectangle, see Fig.~\ref{BianchiIfig}. 
We schematically represent this heteroclinic cycle as
\begin{equation}\label{heteroI}
\begin{CD}
\mathrm{T}_{\sharp} @>>> \mathrm{Q}_{\sharp} \\
@AAA @VVV \\
\mathrm{T}_\flat @<<< \mathrm{Q}_{\flat} 
\end{CD}
\end{equation}
We shall see that heteroclinic cycles of the form~\eqref{heteroI} appear on the boundary of the state space
in all Bianchi types.

By Assumption~\ref{asspsi} there is one and only one fixed point in the interior of $\mathcal{X}_{\,\mathrm{I}}$.  
\begin{itemize}
\item[$\mathrm{F}$] \quad  \begin{minipage}[t]{13cm} 
The fixed point F is given by $(\Sigma_+,s)=(0,\bar{s})$, where $\bar{s}\in (0,\textfrac{1}{2})$ 
is the unique solution of $u(\bar{s})=w$ ($\Leftrightarrow\psi'(\bar{s})=0$).
\end{minipage}
\end{itemize}
The fixed point $\mathrm{F}$ is associated with the isotropic perfect fluid solution~\eqref{FRWsol}.
The eigenvalues of the linearization of the dynamical system~\eqref{dynsysbianchiI} at $\mathrm{F}$ are given by  
\[
\lambda_\pm=-\textfrac{3}{4}\left(1-w\pm\sqrt{(1-w)^2-32\bar{s}(1-2\bar{s})u'(\bar{s})}\right)\:.
\]
Since $u'(\bar{s})\neq 0$ ($\Leftrightarrow\psi''(\bar{s})\neq 0$) by~\eqref{inequalities}, 
the fixed point $\mathrm{F}$ is hyperbolic. From~\eqref{inequalities} 
it follows that $\mathrm{F}$ is a sink for $\beta>0$ (i.e., in the $\boldsymbol{+}$ cases) 
and a saddle for $\beta<0$ (i.e., in the $\boldsymbol{-}$ cases).
Moreover, in the former case, if $u'(\bar{s})$ is large, $\lambda_\pm$ is complex and 
the approach of solutions to $\mathrm{F}$ is oscillatory.
(The cases \Azeroplus\ and \Azerominus\ behave like the $\boldsymbol{+}$ and
$\boldsymbol{-}$ cases, respectively.)

\begin{Theorem}\label{BianchiItheo}
The phase portraits of the dynamical system~\eqref{dynsysbianchiI} 
for the various anisotropy cases are correctly depicted
in Fig.~\ref{BianchiIfig}. The past and future attractors in the various cases are listed in Table~\ref{attractorsI}. 
\end{Theorem}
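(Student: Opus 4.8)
The plan is to treat \eqref{dynsysbianchiI} as a planar flow on the closed rectangle $\overline{\mathcal{X}}_{\,\mathrm{I}}$ and to determine the $\alpha$- and $\omega$-limit set of every orbit; assembling these gives both the phase portraits and the attractors. First I would record that all four edges are invariant: $\Sigma_+'=0$ on $\{\Sigma_+=\pm1\}$ and $s'=0$ on $\{s=0\}$, $\{s=\textfrac{1}{2}\}$, so $\overline{\mathcal{X}}_{\,\mathrm{I}}$ is invariant and the boundary dynamics is exactly the one-dimensional flow \eqref{eqcalI} together with the vacuum orbits $\mathcal{T}$, $\mathcal{Q}$. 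The location of the fixed points and their local character (the eigenvalues at $\mathrm{F}$, the transversal rates at $\mathrm{R}_\flat,\mathrm{R}_\sharp$, the edge flows) are already available; what remains is the global structure, for which the main tool is the Poincar\'e--Bendixson theorem (Appendix~\ref{dynsysapp}) once recurrence has been excluded.

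\textbf{The monotone function.}
The crucial ingredient is a monotone function. I would introduce
\[
Z := \log\frac{1-\Sigma_+^2}{\psi(s)}
\]
on the open rectangle and compute, using $s'=-6s(1-2s)\Sigma_+$ and the identity $s(1-2s)\,(\log\psi)'(s)=u(s)-w$ that follows directly from the definition \eqref{udef} of $u$,
\[
Z' = \frac{-2\Sigma_+\,\Sigma_+'}{1-\Sigma_+^2}-(\log\psi)'(s)\,s' = 3(1-w)\,\Sigma_+^2 \;\ge\; 0 ,
\]
with equality only on $\{\Sigma_+=0\}$ (recall $w<1$). Since $\{\Sigma_+=0\}$ carries no orbit other than the fixed point $\mathrm{F}$, the line integral of $Z'$ over any hypothetical periodic orbit, homoclinic loop, or interior heteroclinic cycle would be strictly positive---impossible---so no such recurrence exists in the interior. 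By Poincar\'e--Bendixson every interior orbit therefore has $\alpha$- and $\omega$-limit sets consisting of fixed points, possibly located on $\partial\mathcal{X}_{\,\mathrm{I}}$.

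\textbf{Boundary behavior of $Z$ and identification of the attractors.}
To decide which fixed points are reached I would exploit the behavior of $Z$ on $\partial\mathcal{X}_{\,\mathrm{I}}$. The same identity shows that $\psi$ is not smooth but power-like at the matter edges, $\psi(s)\sim s^{\,v_- - w}$ as $s\to0$ and $\psi(s)\sim(\textfrac{1}{2}-s)^{-\beta(1-w)/4}$ as $s\to\textfrac{1}{2}$, whereas $\log(1-\Sigma_+^2)\to-\infty$ as $\Sigma_+\to\pm1$. Hence $Z\to-\infty$ on the vacuum edges in every case, while on the matter edges $Z\to-\infty$ in the $\boldsymbol{+}$ cases ($\beta>0$) and $Z\to+\infty$ in the $\boldsymbol{-}$ cases ($\beta<0$). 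In the $\boldsymbol{+}$ cases $Z$ is thus proper on the open rectangle with a single interior critical point at $\mathrm{F}$ (there $\nabla Z=0$ since $\Sigma_+=0$ and $\psi'(\bar s)=0$), so its superlevel sets are compact and LaSalle's principle forces every interior orbit to converge to $\mathrm{F}$ in the future, while in the past orbits emanate from the vacuum boundary. In the $\boldsymbol{-}$ cases $Z$ increases without bound toward the matter edges, driving every orbit (except the stable manifold of the saddle $\mathrm{F}$) to $s\to0$ or $s\to\textfrac{1}{2}$, where it is captured by the attracting fixed point of the edge flow \eqref{eqcalI}---namely $\mathrm{R}_\flat$ or $\mathrm{R}_\sharp$ when these exist and the appropriate corner point (at which $Z$ is indeterminate) otherwise. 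Matching these limit sets with the already-known local stability types yields the complete set of connecting orbits, reproducing Fig.~\ref{BianchiIfig} and the entries of Table~\ref{attractorsI}.

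\textbf{Main obstacle and remaining cases.}
The main obstacle is precisely this boundary analysis: because $\psi$ degenerates on $\partial\mathscr{T}$ the function $Z$ blows up there, so one must apply the monotonicity principle carefully, distinguishing the edges on which $Z\to+\infty$ from those on which $Z\to-\infty$, and verify that orbits cannot accumulate on the boundary except at fixed points (using that $Z$ is indeterminate only at the corners). The second delicate point is the exceptional cases \Azeroplus\ and \Azerominus\ ($\beta=0$): there $\mathrm{R}_\flat=(0,0)$ and $\mathrm{R}_\sharp=(0,\textfrac{1}{2})$ lie on $\{\Sigma_+=0\}$ and are non-hyperbolic, so their classification as (non-hyperbolic) sinks or saddles requires a center-manifold reduction, with the sign of $u'(\bar s)$ distinguishing the two subcases. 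Everything else reduces to a routine linearization at the Taub and non-flat LRS points $\mathrm{T}_{\flat/\sharp}$, $\mathrm{Q}_{\flat/\sharp}$.
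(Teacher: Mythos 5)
Your monotone quantity is, up to sign and a logarithm, exactly the paper's function $Z_1=(1-\Sigma_+^2)^{-1}\psi(s)$ from~\eqref{Z1def} (you use $Z=-\log Z_1$), and your computation $Z'=3(1-w)\Sigma_+^2$ is correct, so the overall strategy coincides with the paper's. The genuine gap is the step where you pass from excluding \emph{interior} recurrence to the claim that ``every interior orbit has $\alpha$- and $\omega$-limit sets consisting of fixed points.'' The Poincar\'e--Bendixson theorem also allows heteroclinic cycles as limit sets, and your line-integral argument only rules these out in the interior, where $Z$ is finite; it says nothing about cycles on $\partial\mathcal{X}_{\,\mathrm{I}}$, where $Z$ blows up. This is not a technicality: in the case \Dplus\ the four edges of the rectangle form the heteroclinic cycle~\eqref{heteroI} (from~\eqref{eqcalI} and the vacuum equations the boundary flow runs $\mathrm{T}_\flat\rightarrow\mathrm{T}_\sharp\rightarrow\mathrm{Q}_\sharp\rightarrow\mathrm{Q}_\flat\rightarrow\mathrm{T}_\flat$), and according to Table~\ref{attractorsI} this cycle \emph{is} the past attractor. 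All four corners are then saddles of the cycle, so no boundary fixed point can serve as the $\alpha$-limit of a generic interior orbit, and your ``matching with local stability types'' would yield a contradiction rather than the \Dplus\ panel of Fig.~\ref{BianchiIfig}. The repair is to weaken the limit-set conclusion to what the monotonicity principle actually delivers---$\alpha(\gamma),\omega(\gamma)\subset\partial\mathcal{X}_{\,\mathrm{I}}\cup\{\mathrm{F}\}$, as in the paper---and then admit the boundary cycle as a possible (in \Dplus, the actual) $\alpha$-limit set; this is consistent with $Z\to-\infty$ along the cycle and $Z$ decreasing in backward time.

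A second, smaller soft spot is your boundary dichotomy for $Z$. Under Assumption~\ref{asswi} one only has $\log\psi=(v_--w+o(1))\log s$ as $s\to0$ (and similarly at $s=\textfrac{1}{2}$), which determines the sign of the divergence only when $v_-\neq w$, i.e.\ $\beta\neq0$. In the cases \Azeropm\ the integral $\int(u(s)-w)\,ds/\big(s(1-2s)\big)$ may converge or diverge with either sign, so the claimed limits $Z\to\pm\infty$ on the matter edges are not determined by your hypotheses and the global part of your argument does not close there; your center-manifold remark settles the local classification of $\mathrm{R}_\flat$, $\mathrm{R}_\sharp$ but not this global step. The paper's formulation sidesteps both issues at once: strict monotonicity of $Z_1$ on $\mathcal{X}_{\,\mathrm{I}}\setminus\{\mathrm{F}\}$ (using the third-derivative evaluation on $\{\Sigma_+=0\}$) plus the monotonicity principle confines all limit sets to $\partial\mathcal{X}_{\,\mathrm{I}}\cup\{\mathrm{F}\}$ uniformly in $\beta$, after which only the boundary flow and the local analysis at $\mathrm{F}$ are needed.
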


\begin{proof}
Consider the function
\begin{equation}\label{Z1def}
Z_1=\left(1-\Sigma_+^2\right)^{-1}\psi(s)\:.
\end{equation}
A straightforward calculation shows that
\[
Z_1'=-3(1-w)\Sigma_+^2 Z_1\:,\quad {Z_1'''}_{\,|\Sigma_+=0}=-54(1-w)(u(s)-w)^2\psi(s)\:.
\]
It follows that the function $Z_1$ is strictly monotonically 
decreasing along orbits in $\mathcal{X}_{\,\mathrm{I}}\backslash\{\mathrm{F}\}$. 
Applying the monotonicity principle, see Appendix~\ref{dynsysapp}, 
we infer that the $\alpha$- and $\omega$-limit sets of every orbit of the dynamical system~\eqref{dynsysbianchiI}
must be located on $\partial\mathcal{X}_{\,\mathrm{I}}\cup\{\mathrm{F}\}$. Using the local analysis of the fixed point $\mathrm{F}$ 
and the straightforward dynamics on $\partial\mathcal{X}_{\,\mathrm{I}}$ yields the claim.
\end{proof}

\textit{Interpretation of Theorem~\ref{BianchiItheo}}. 
In the $\boldsymbol{+}$ cases, the orbits of the dynamical system
converge to the fixed point $\mathrm{F}$. This means that each solution of the Einstein equations isotropizes toward the future, 
i.e., the metric is asymptotic to the isotropic perfect fluid solution~\eqref{FRWsol} 
as $t\to\infty$. This is also the future asymptotic behavior of Bianchi 
type~I cosmological models with perfect fluid matter, see Sec.~\ref{perfectfluid}. 
(Note, however, that isotropization rates may differ considerably~\cite{CH_GRG}.)
In the $\boldsymbol{-}$ cases, in general, cosmological models do not isotropize toward 
the future (which is irrespective of whether the energy conditions are satisfied or not). 
In particular, in the cases \Bminus, \Cminus, \Dminus, there exist typical solutions of the Einstein equations,
see Sec.~\ref{reducedsystem}, 
that are future asymptotic to vacuum solutions. Other typical solutions
approach the models corresponding to $\mathrm{R}_{\flat}$ and $\mathrm{R}_{\sharp}$; 
we refer to Appendix~\ref{exact} where we explicitly give the metric associated with 
these models.

Furthermore, the theorem shows that
the past asymptotic behavior of models is
intimately connected with the behavior of 
the vacuum solutions~\eqref{taub} and~\eqref{solQ},
which is analogous to the case of perfect fluid cosmologies.
However, there is an interesting exception: In 
the case of a high degree of energy condition 
violation, i.e., case \Dplus,
the approach to the singularity is oscillatory. The solutions
oscillate between
the Taub family~\eqref{taub} and the non-flat LRS family~\eqref{solQ} of solutions.
Since $\Omega\not\rightarrow 0$ for these solutions, this is an example of asymptotic
behavior where ``matter matters'', which means that the asymptotics are 
dissimilar from
the asymptotics of vacuum models.
For more details and for a generalization of the result of this section 
to the non-LRS case we refer to~\cite{CH2}.

\begin{figure}[Ht!]
\begin{center}
\psfrag{T}[cc][cc][0.7][0]{$\mathrm{T}_\flat$}
\psfrag{T*}[cc][cc][0.7][0]{$\mathrm{T}_\sharp$}
\psfrag{Q}[cc][cc][0.7][0]{$\mathrm{Q}_\flat$}
\psfrag{Q*}[cc][cc][0.7][0]{$\mathrm{Q}_\sharp$}
\psfrag{F}[cc][cc][0.7][0]{$\mathrm{F}$}
\subfigure[\Dminus]{\label{biDmin}\includegraphics[width=0.3\textwidth]{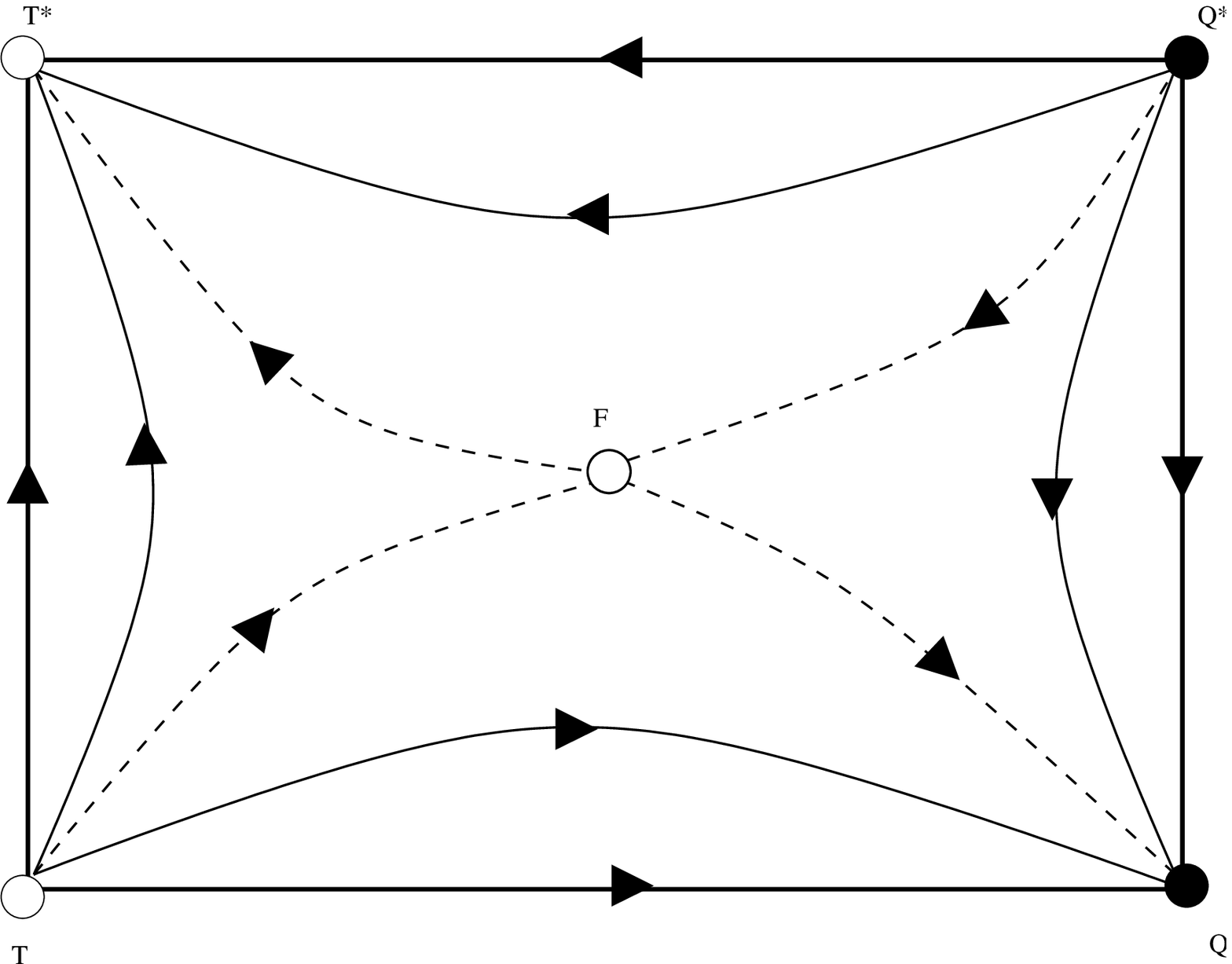}}\quad
\psfrag{T}[cc][cc][0.7][0]{$\mathrm{T}_\flat$}
\psfrag{T*}[cc][cc][0.7][0]{$\mathrm{T}_\sharp$}
\psfrag{Q}[cc][cc][0.7][0]{$\mathrm{Q}_\flat$}
\psfrag{Q*}[cc][cc][0.7][0]{$\mathrm{Q}_\sharp$}
\psfrag{D*}[cc][cr][0.7][0]{$\mathrm{R}_\sharp$}
\psfrag{F}[cc][cc][0.7][0]{$\mathrm{F}$}
\subfigure[\Bminus, \Cminus]{\includegraphics[width=0.3\textwidth]{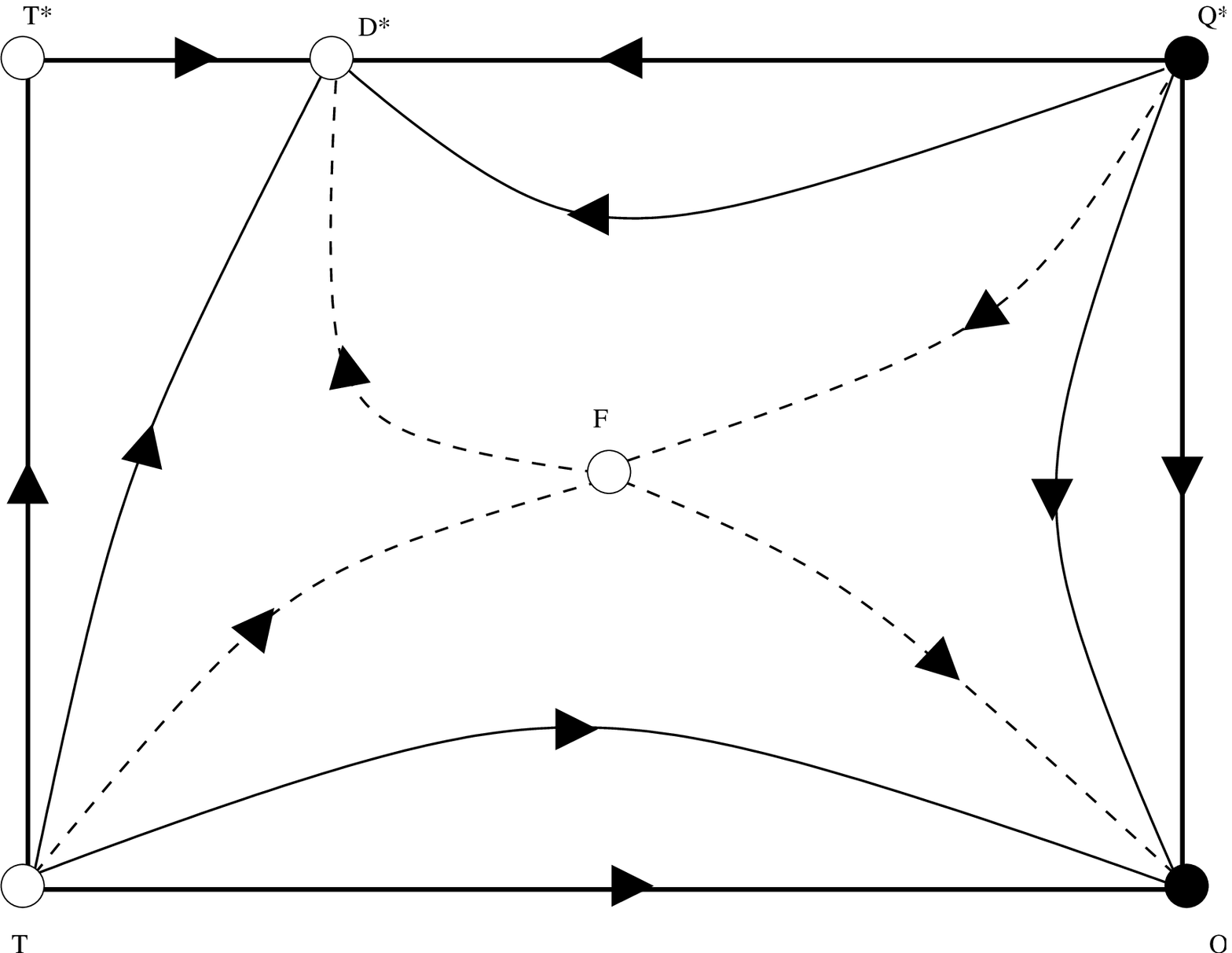}}\quad
\psfrag{T}[cc][cc][0.7][0]{$\mathrm{T}_\flat$}
\psfrag{T*}[cc][cc][0.7][0]{$\mathrm{T}_\sharp$}
\psfrag{Q}[cc][cc][0.7][0]{$\mathrm{Q}_\flat$}
\psfrag{Q*}[cc][cc][0.7][0]{$\mathrm{Q}_\sharp$}
\psfrag{R}[rl][cc][0.7][0]{$\mathrm{R}_\flat$}
\psfrag{D*}[cc][cr][0.7][0]{$\mathrm{R}_\sharp$}
\psfrag{F}[cc][cc][0.7][0]{$\mathrm{F}$}
\subfigure[\Aminus~$(-1<\beta\leq\beta_\flat)$]{\includegraphics[width=0.3\textwidth]{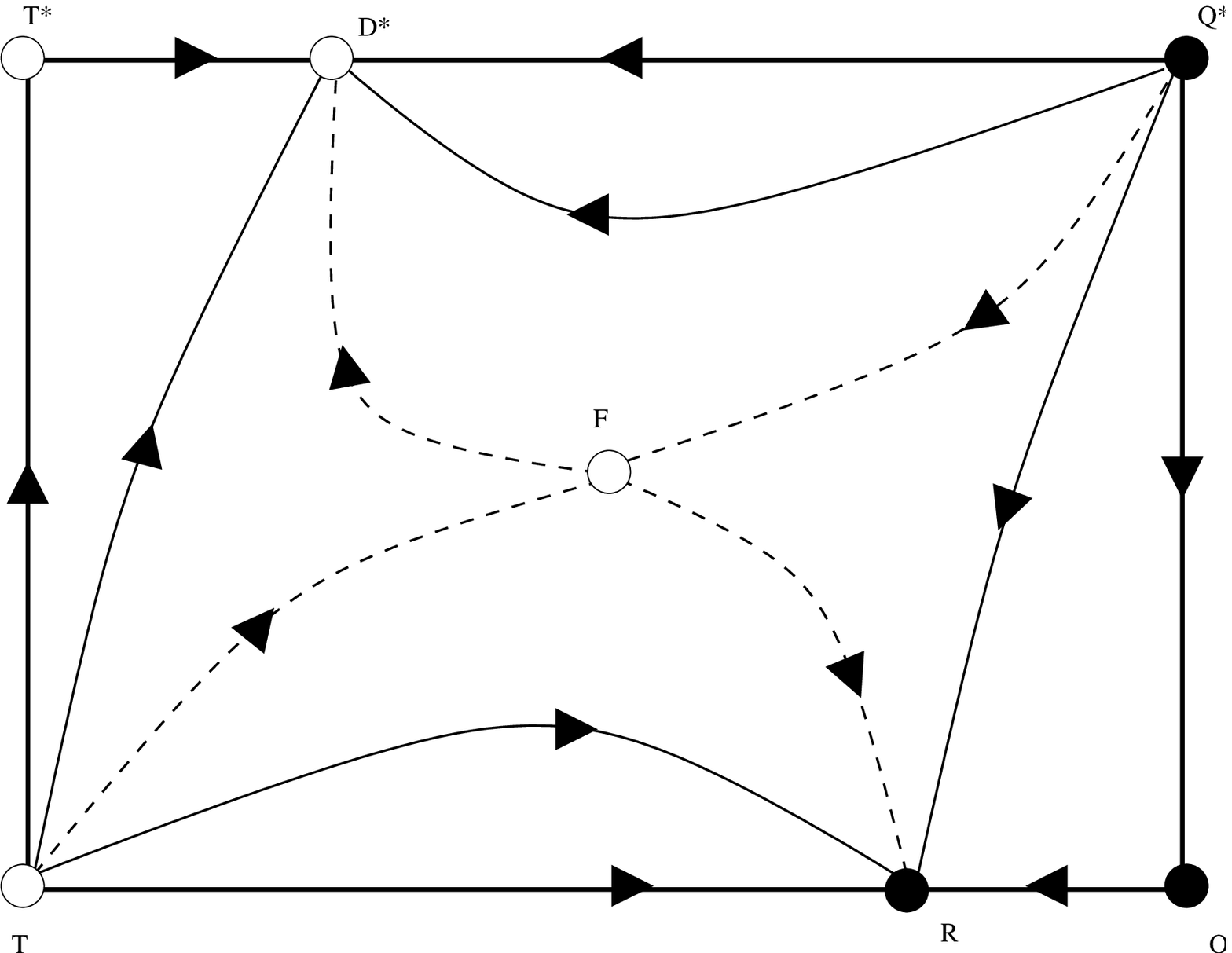}}\\
\psfrag{T}[cc][cc][0.7][0]{$\mathrm{T}_\flat$}
\psfrag{T*}[cc][cc][0.7][0]{$\mathrm{T}_\sharp$}
\psfrag{Q}[cc][cc][0.7][0]{$\mathrm{Q}_\flat$}
\psfrag{Q*}[cc][cc][0.7][0]{$\mathrm{Q}_\sharp$}
\psfrag{R}[rl][cc][0.7][0]{$\mathrm{R}_\flat$}
\psfrag{D*}[cc][cr][0.7][0]{$\mathrm{R}_\sharp$}
\psfrag{F}[cc][cc][0.7][0]{$\mathrm{F}$}
\subfigure[\Azerominus, \Aminus~$(\beta_\flat<\beta<0)$]{\includegraphics[width=0.3\textwidth]{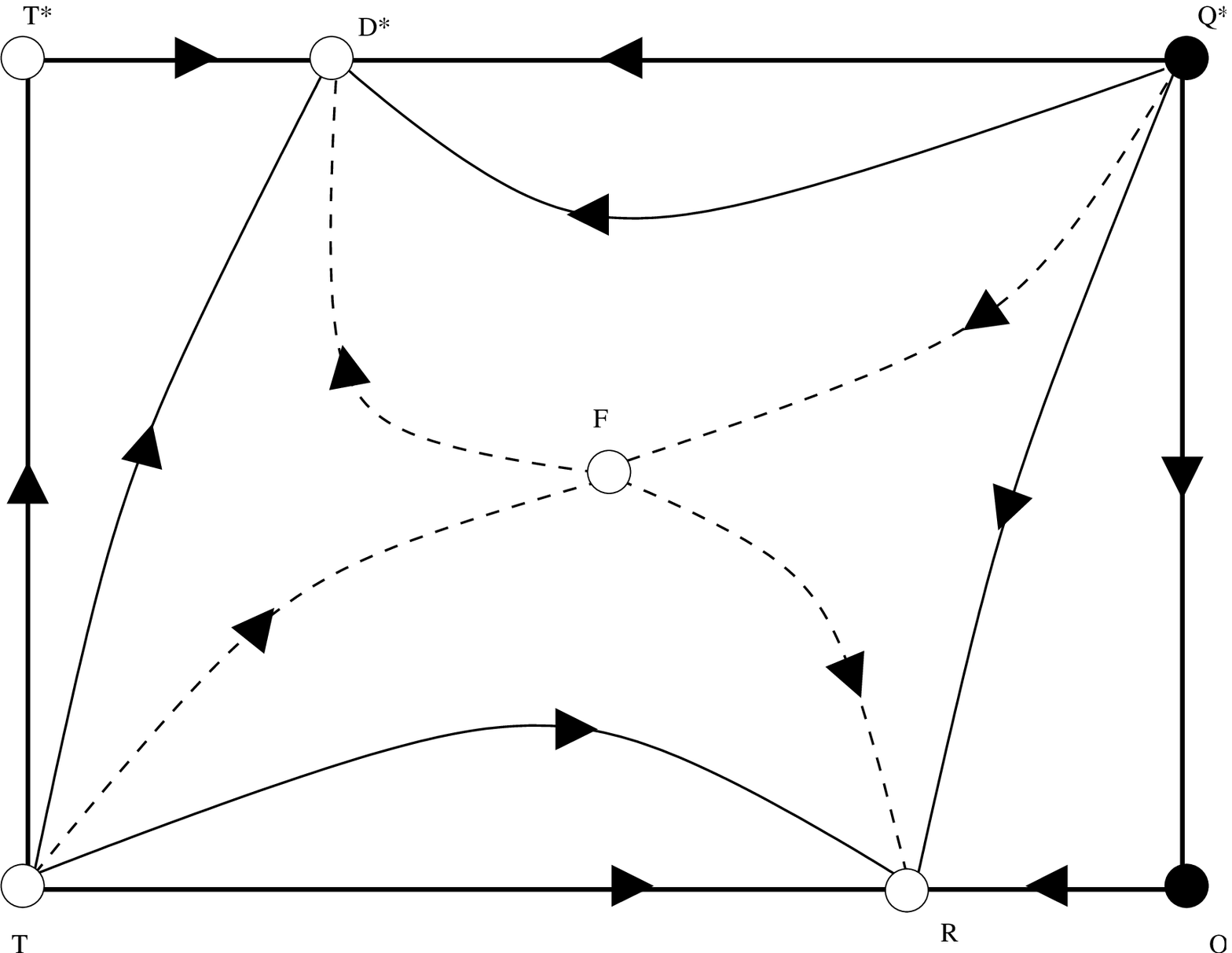}}\quad
\psfrag{T}[cc][cc][0.7][0]{$\mathrm{T}_\flat$}
\psfrag{T*}[cc][cc][0.7][0]{$\mathrm{T}_\sharp$}
\psfrag{Q}[cc][cc][0.7][0]{$\mathrm{Q}_\flat$}
\psfrag{Q*}[cc][cc][0.7][0]{$\mathrm{Q}_\sharp$}
\psfrag{R}[rl][cc][0.7][0]{$\mathrm{R}_\flat$}
\psfrag{D*}[cc][cc][0.7][0]{$\mathrm{R}_\sharp$}
\psfrag{F}[cc][cc][0.7][0]{$\mathrm{F}$}
\subfigure[\Azeroplus, \Aplus~$(0<\beta<\beta_\sharp)$]{\includegraphics[width=0.3\textwidth]{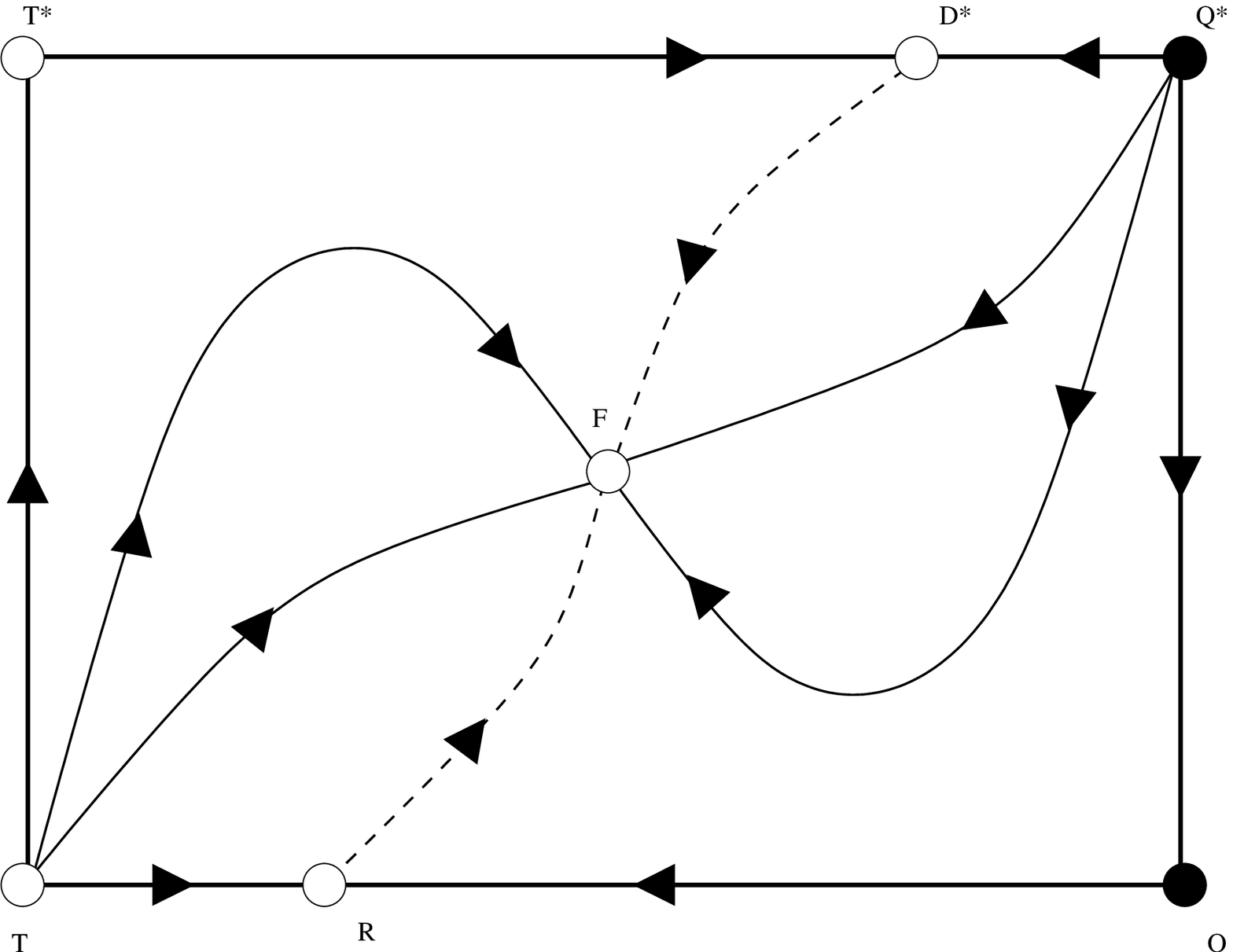}}\quad
\psfrag{T}[cc][cc][0.7][0]{$\mathrm{T}_\flat$}
\psfrag{T*}[cc][cc][0.7][0]{$\mathrm{T}_\sharp$}
\psfrag{Q}[cc][cc][0.7][0]{$\mathrm{Q}_\flat$}
\psfrag{Q*}[cc][cc][0.7][0]{$\mathrm{Q}_\sharp$}
\psfrag{R}[rl][cc][0.7][0]{$\mathrm{R}_\flat$}
\psfrag{D*}[cc][cr][0.7][0]{$\mathrm{R}_\sharp$}
\psfrag{F}[cc][cc][0.7][0]{$\mathrm{F}$}
\subfigure[\Aplus~$(\beta_\sharp\leq\beta<1)$]{\label{biAplu}\includegraphics[width=0.3\textwidth]{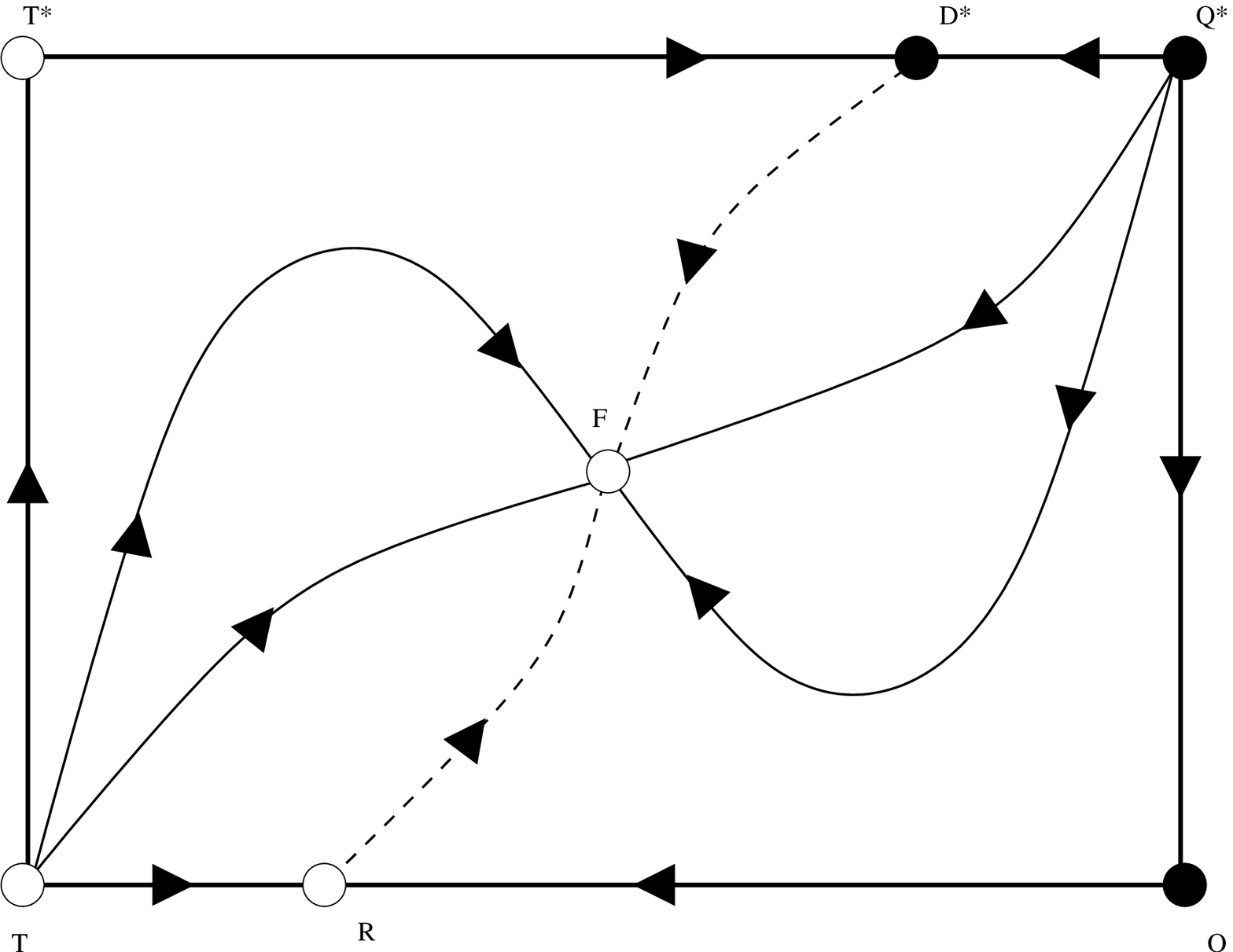}}\\
\psfrag{T}[cc][cc][0.7][0]{$\mathrm{T}_\flat$}
\psfrag{T*}[cc][cc][0.7][0]{$\mathrm{T}_\sharp$}
\psfrag{Q}[cc][cc][0.7][0]{$\mathrm{Q}_\flat$}
\psfrag{Q*}[cc][cc][0.7][0]{$\mathrm{Q}_\sharp$}
\psfrag{D*}[cc][cr][0.7][0]{$\mathrm{R}_\sharp$}
\psfrag{F}[cc][cc][0.7][0]{$\mathrm{F}$}
\subfigure[\Bplus, \Cplus]{\includegraphics[width=0.3\textwidth]{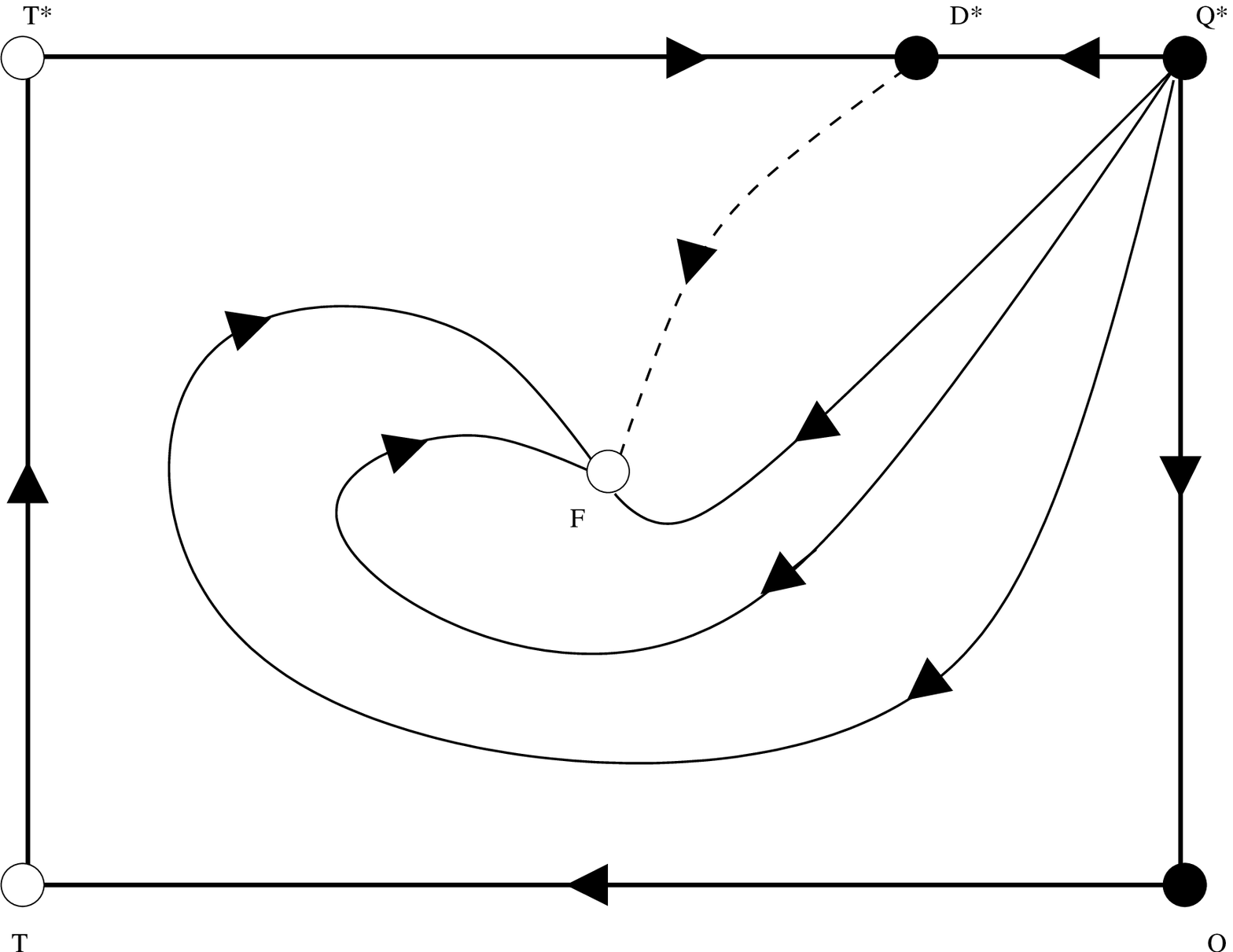}\label{VlasovI}}\quad
\psfrag{T}[cc][cc][0.7][0]{$\mathrm{T}_\flat$}
\psfrag{T*}[cc][cc][0.7][0]{$\mathrm{T}_\sharp$}
\psfrag{Q}[cc][cc][0.7][0]{$\mathrm{Q}_\flat$}
\psfrag{Q*}[cc][cc][0.7][0]{$\mathrm{Q}_\sharp$}
\psfrag{D*}[cc][cr][0.7][0]{$\mathrm{R}_\sharp$}
\psfrag{F}[cc][cc][0.7][0]{$\mathrm{F}$}
\subfigure[\Dplus]{\includegraphics[width=0.3\textwidth]{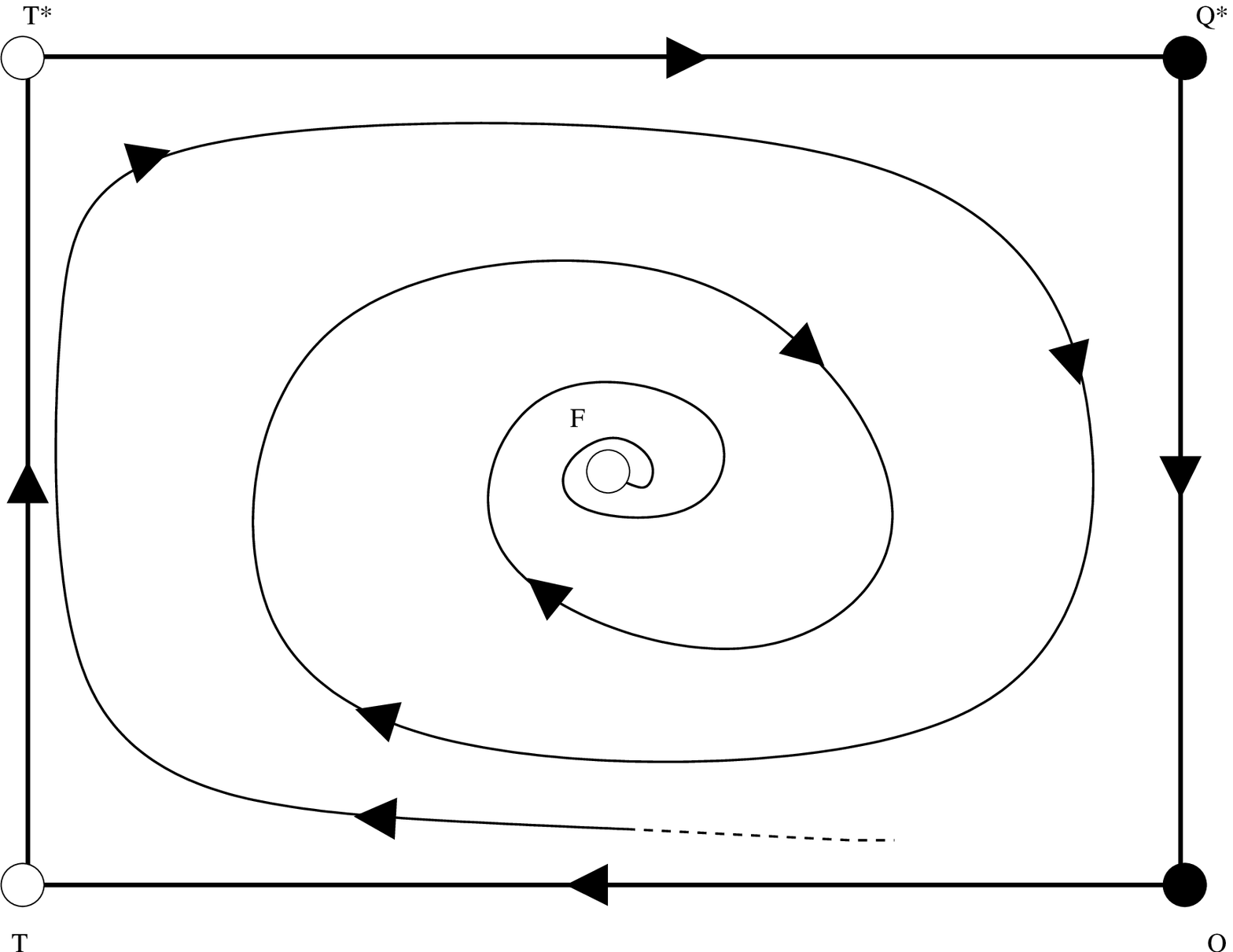}}\quad
\psfrag{s}[cc][cc][0.8][0]{$s$}
\psfrag{sigma}[cc][cc][0.8][0]{$\Sigma_+$}
\psfrag{t}[cc][cc][0.7][0]{$\vartheta$}
\psfrag{0}[cc][cc][0.7][0]{$0$}
\psfrag{1/2}[cc][cc][0.7][0]{$\textfrac{1}{2}$}
\psfrag{-1}[cc][cc][0.7][0]{$-1$}
\psfrag{+1}[cc][cc][0.7][0]{$+1$}
\psfrag{pi2}[cc][cc][0.7][0]{$\textfrac{\pi}{2}$}
\psfrag{X}[cc][cc][1.2][0]{$\mathcal{X}_{\,\mathrm{I}}$}
\subfigure[Coordinates]{\label{Icoo}\includegraphics[width=0.3\textwidth]{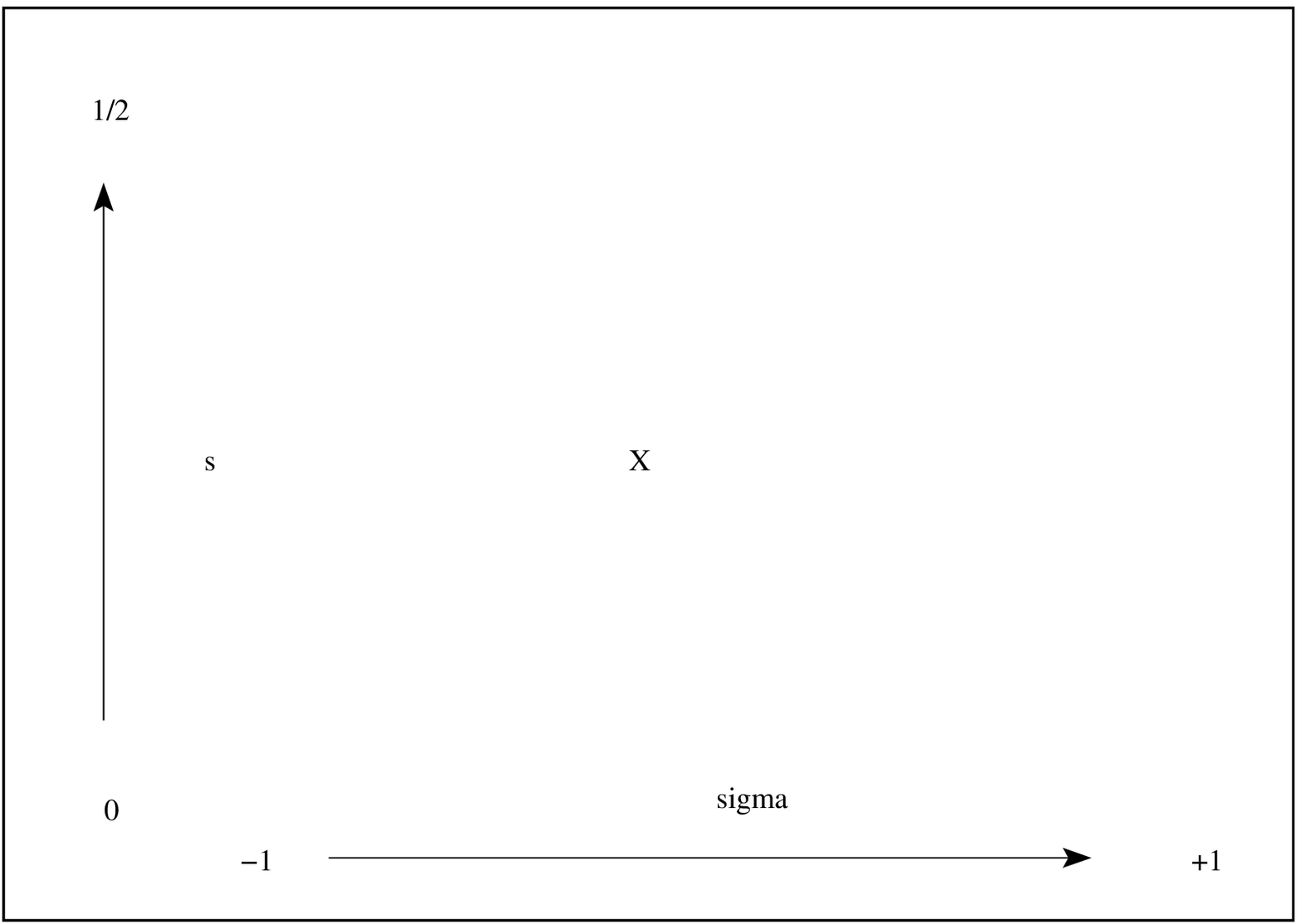}}
\end{center}
\caption{Phase portraits of orbits representing LRS Bianchi type~I models;
the state space is $\mathcal{X}_{\,\mathrm{I}}$ and the dynamical system 
is~\eqref{dynsysbianchiI}. 
Continuous lines represent typical orbits---each of these orbits is a member of a one-parameter set of orbits 
with the same asymptotic behavior. 
Dashed lines represent orbits whose future and/or past asymptotics are non-generic.
Certain features in these diagrams are included for future purposes:
First, the fixed points are color-coded. When the Bianchi type~I state space $\mathcal{X}_{\,\mathrm{I}}$ appears
as the boundary subset of the type~II, there exists an
orthogonal direction. A black fixed point is an attractor, a white fixed point is a repellor in the orthogonal direction. 
Second, the cases \Aminus\ and \Aplus\ are split into subcases, which become relevant for Bianchi type~II;
note that the difference is merely in the color-coding (which is irrelevant for type~I itself).}
\label{BianchiIfig}
\end{figure}

\begin{table}
\begin{center}
\begin{tabular}{|c|c|c|}
\hline  & &   \\[-2ex]
Type  & Past Attractor & Future Attractor \\ \hline  & &  \\[-2ex] 
\Dminus & $\mathrm{T}_\flat$, $\mathrm{Q}_\sharp$ & $\mathrm{T}_\sharp$, $\mathrm{Q}_\flat$\\
\Cminus, \Bminus & $\mathrm{T}_\flat$, $\mathrm{Q}_\sharp$ & $\mathrm{R}_\sharp$, $\mathrm{Q}_\flat$\\
\Aminus, \Azerominus &  $\mathrm{T}_\flat$, $\mathrm{Q}_\sharp$ & $\mathrm{R}_\sharp$, $\mathrm{R}_\flat$\\
\Aplus, \Azeroplus & $\mathrm{T}_\flat$, $\mathrm{Q}_\sharp$ & $\mathrm{F}$\\
\Bplus, \Cplus &  $\mathrm{Q}_\sharp$ & $\mathrm{F}$\\
\Dplus & Heteroclinic cycle~\eqref{heteroI} & $\mathrm{F}$\\
\hline 
\end{tabular}
\caption{The past and the future attractor for Bianchi type~I orbits encode
the asymptotic behavior of typical solutions.
The possible $\alpha$/$\omega$-limits of non-generic orbits can be read off from Fig.~\ref{BianchiIfig}.}
\label{attractorsI}
\end{center}
\end{table}

\begin{Remark}
The LRS subcase of Bianchi type~$\mathrm{VII}_0$ coincides with LRS Bianchi type~I.
Setting $\hat{n}_1 = 0$ in~\eqref{domsys} yields the type~I system~\eqref{dynsysbianchiI} irrespective
of the value of $\hat{n}_2 = \hat{n}_3$.
\end{Remark}

\section{Bianchi type II}
\label{sec:II}

According to Table~\ref{tab1}, Bianchi type~II corresponds to setting 
$\hat{n}_1=1$ and $\hat{n}_2 = \hat{n}_3 = 0$ in~\eqref{Ddef} and in the 
reduced dynamical system~\eqref{domsys}.
We infer that $H_D\equiv 1$, which reduces the 
normalization~\eqref{domvars} to the standard Hubble-normalization,
and the equation for $M_2$ 
decouples from~\eqref{domsys}. Therefore, 
the essential dynamics are described by the equations for the variables 
$(\Sigma_+, M_1, s)$, which are given by 
\begin{subequations}\label{dynsysbianchiII}
\begin{align}
\label{IIsig+}
&\Sigma_+'=\textfrac{1}{6}M_1^2\left(2-\Sigma_+\right)-3\Omega\left[\textfrac{1}{2}(1-w)\Sigma_+-\left(u(s)-w\right)\right]\:,\\[0.5ex]
\label{IIs} 
&s'=-6s(1-2s)\Sigma_+\:, \\[0.5ex]
\label{IIM1}
&M_1'=M_1\left[2\left(1-2\Sigma_+\right)-\textfrac{1}{6}M_1^2-\textfrac{3}{2}(1-w)\Omega\right]\:.
\end{align}
\end{subequations}
The Hamiltonian constraint~\eqref{gaussconsimple} becomes
\begin{equation}\label{IIcon}
\Sigma_+^2 + \textfrac{1}{12} \,M_1^2 + \Omega =1 \:.
\end{equation}
Therefore, the dynamical system~\eqref{dynsysbianchiI} is defined on the state space 
\begin{equation}\label{statespaceII}
\mathcal{X}_{\,\mathrm{II}}=\big\{(\Sigma_+,s,M_1)\,\big|\, {-1}< \Sigma_+ < 1,\: 
0<s<\textfrac{1}{2},\: 0 < M_1 < \sqrt{12(1-\Sigma_+^2)}\: \big\}\:,
\end{equation}
which resembles a tent; see Fig.~\ref{b2state}.

\begin{figure}[Ht]
\begin{center}
\psfrag{-1}[cc][cc][0.8][0]{$-1$}
\psfrag{1}[cc][cc][0.8][0]{$1$}
\psfrag{0}[cc][cc][0.8][0]{$0$}
\psfrag{12}[cc][cc][0.8][0]{$\textfrac{1}{2}$}
\psfrag{sig}[cc][cc][0.8][0]{$\Sigma_+$}
\psfrag{m1}[cc][cc][0.8][0]{$M_1$}
\psfrag{sb}[cc][cc][1][0]{$\mathcal{S}_\flat$}
\psfrag{ss}[cc][cc][1][0]{$\mathcal{S}_\sharp$}
\psfrag{v}[cc][cc][1][0]{$\mathcal{V}_\mathrm{II}$}
\psfrag{xi}[cc][rr][1][25]{$\mathcal{X}_{\,\mathrm{I}}$}
\psfrag{s}[cc][cc][0.8][0]{$s$}
\includegraphics[width=0.5\textwidth]{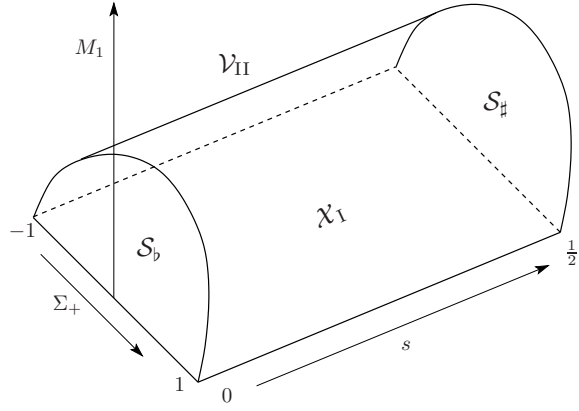}
\end{center}
\caption{The state space $\mathcal{X}_{\,\mathrm{II}}$. The base of the `tent' is the type~I
state space $\mathcal{X}_{\,\mathrm{I}}$; the `roof' $\mathcal{V}_\mathrm{II}$ 
is the vacuum boundary; the sides are denoted by $\mathcal{S}_\flat$ and $\mathcal{S}_\sharp$,
respectively. Recall that the subscript~$\flat$ corresponds to $s$ taking the value $s = 0$,
while $\sharp$ corresponds to $s = 1/2$.}
\label{b2state}
\end{figure}

The dynamical system~\eqref{dynsysbianchiII} admits a regular extension to the boundary of $\mathcal{X}_{\,\mathrm{II}}$,
which consists of four components,
\begin{equation}\label{IIboundarycomps}
\partial\mathcal{X}_{\,\mathrm{II}}=\overline{\mathcal{X}}_{\,\mathrm{I}}
\cup\overline{\mathcal{S}}_\flat\cup\overline{\mathcal{S}}_\sharp\cup\overline{\mathcal{V}}_\mathrm{II}\:,
\end{equation}
which are defined as  
\begin{align*}
&\mathcal{X}_{\,\mathrm{I}} = \big\{ M_1 = 0\,;\: {-1}< \Sigma_+ < 1, \:0<s<\textfrac{1}{2}\:\big\} \\[0.5ex]
&\mathcal{S}_\flat= \Big\{s=0\,;\: {-1}< \Sigma_+ < 1, \: 0< M_1 < \sqrt{12(1-\Sigma_+^2)}\:\Big\}\:,\\[0.5ex]
&\mathcal{S}_\sharp=\Big\{s=\textfrac{1}{2}\,;\: {-1}< \Sigma_+ < 1, \: 0< M_1 < \sqrt{12(1-\Sigma_+^2)}\:  \Big\}\:,\\[0.5ex]
&\mathcal{V}_\mathrm{II}=\Big\{\Omega = 0\,; \:{-1}< \Sigma_+ < 1, \:0<s<\textfrac{1}{2},\: M_1 = \sqrt{12(1-\Sigma_+^2)} \:\Big\}\:.
\end{align*}
The symbol $\mathcal{X}_{\,\mathrm{I}}$ is used with slight abuse of notation; 
the type~I state space $\mathcal{X}_{\,\mathrm{I}}$ is embedded into $\overline{\mathcal{X}}_{\mathrm{II}}$ 
as the boundary subset $\{M_1 = 0\}$, see the discussion below.
The component $\mathcal{V}_\mathrm{II}$ is the vacuum boundary;
the components $\mathcal{S}_\flat$ and $\mathcal{S}_\sharp$ we call the `sides'; see Fig.~\ref{b2state}.
Let us study the dynamics of the flow induced on each of the above boundary subsets.

\textbf{The Bianchi type~I boundary} $\bm{\mathcal{X}_{\,\mathrm{I}}}$. 
The dynamical system induced by~\eqref{dynsysbianchiII} on the invariant subset $M_1=0$ 
is identical to the type~I system~\eqref{dynsysbianchiI};
its dynamics are described by Theorem~\ref{BianchiItheo}, i.e., Fig.~\ref{BianchiIfig}. 
We thus refer to $\mathcal{X}_{\,\mathrm{I}}$ as the Bianchi type~I boundary of $\mathcal{X}_{\,\mathrm{II}}$. 
Let us replicate the list of fixed points:
\begin{list}{point}{\leftmargin1.15cm\labelwidth4cm\labelsep0cm}
\item[$\mathrm{T}_\flat$/$\mathrm{T}_\sharp$] \quad The `Taub points': $\mathrm{T}_\flat: (\Sigma_+,s,M_1)=(-1,0,0)$ and 
$\mathrm{T}_\sharp: (\Sigma_+,s,M_1)=(-1,\textfrac{1}{2},0)$.
\item[$\mathrm{Q}_\flat$/$\mathrm{Q}_\sharp$] \quad The `non-flat LRS points': $\mathrm{Q}_\flat: (\Sigma_+,s,M_1)=(1,0,0)$ and $\mathrm{Q}_\sharp: (\Sigma_+,s,M_1)=(1,\textfrac{1}{2},0)$.
\item[$\mathrm{R}_\flat$] \quad For $\beta \in (-1,1)$ we have $\mathrm{R}_\flat: (\Sigma_+,s,M_1)=(-\beta,0,0)$.
\item[$\mathrm{R}_\sharp$] \quad For $\beta \in (-2,2)$ we have $\mathrm{R}_\sharp: (\Sigma_+,s,M_1)=(\textfrac{\beta}{2},\textfrac{1}{2},0)$.
\item[$\mathrm{F}$] \quad  The Friedmann point F is given by $(\Sigma_+,s,M_1)=(0,\bar{s},0)$, where $\bar{s}$ solves $u(\bar{s})=w$.
\end{list}
It is straightforward to study the role of the fixed points w.r.t.\ the direction orthogonal
to $\mathcal{X}_{\,\mathrm{I}}$; from~\eqref{dynsysbianchiII} we have
\[
\left[\textfrac{d}{d\tau}(\log M_1)\right]_{\,|\mathcal{X}_{\mathrm{I}}}=2(1 -2 \Sigma_+) -\textfrac{3}{2} (1-w) (1 -\Sigma_+^2)
= \textfrac{1}{2} (1 +3 w) + \Sigma_+ \big( \textfrac{3}{2} (1-w) \Sigma_+ -4 \big)\:.
\]
As an example, consider the point $\mathrm{F}$: Evaluated at F, this derivative is positive, 
so that there exists (at least) one orbit
that emanates from F into the interior of $\mathcal{X}_{\,\mathrm{II}}$.
In other words, F is a repellor in the orthogonal direction (i.e., the $M_1$ direction);
this behavior is color-coded into Fig.~\ref{BianchiIfig} --- the fixed point F is
represented by a white circle. (Points that act as attractors in the
orthogonal direction are depicted in black in Fig.~\ref{BianchiIfig}.)

\textbf{The vacuum boundary} $\bm{\mathcal{V}_{\,\mathrm{II}}}$. 
The dynamical system induced on $\mathcal{V}_{\,\mathrm{II}}$ 
is obtained by setting $\Omega=0$ in~\eqref{dynsysbianchiII}; since $\Omega=0$ entails $\rho=0$, we refer to $\mathcal{V}_{\,\mathrm{II}}$ 
as the vacuum boundary. The essential dynamics are described by the system
\begin{equation*}
\Sigma_+'=2(1-\Sigma_+^2)(2-\Sigma_+)\:,\qquad
s'=-6s(1-2s)\Sigma_+\:,
\end{equation*}
since $\Omega=0$ implies $M_1^2=12 (1-\Sigma_+^2)$. 
The closure of $\mathcal{V}_{\,\mathrm{II}}$  is the rectangle $(\Sigma_+,s)\in [-1,1]\times [0,\textfrac{1}{2}]$; 
the vertices are the only fixed points: $\mathrm{T}_\flat$, $\mathrm{T}_\sharp$, $\mathrm{Q}_\flat$, $\mathrm{Q}_\sharp$. 
The boundary of the rectangle is characterized by a simple flow, see Fig.~\ref{vacuumIIflow}.

The orbits $\mathrm{T}_\flat\rightarrow \mathrm{T}_\sharp$ and $\mathrm{Q}_\flat \leftarrow \mathrm{Q}_\sharp$ coincide
with the type~I vacuum orbits and thus represent the Taub solution~\eqref{taub} 
and the non-flat LRS solution~\eqref{solQ}, respectively.
Analogously, every orbit in the interior of $\mathcal{V}_{\,\mathrm{II}}$
represents a type~II vacuum solution.
Since $\Sigma_+'>0$ in $\mathcal{V}_{\,\mathrm{II}}$, 
the analysis of the dynamics is straightforward: 
The fixed points $\mathrm{T}_\flat$ and $\mathrm{Q}_\flat$ are the $\alpha$-limit set and the $\omega$-limit set, 
respectively, of every orbit, see Fig.~\ref{vacuumIIflow}.

\begin{figure}[Ht]
\begin{center}
\psfrag{sig}[cc][cc][1][0]{$\Sigma_+$}
\psfrag{tb}[cc][cc][0.8][0]{$\mathrm{T}_\flat$}
\psfrag{ts}[cc][cc][0.8][0]{$\mathrm{T}_\sharp$}
\psfrag{qb}[cc][cc][0.8][0]{$\mathrm{Q}_\flat$}
\psfrag{qs}[cc][cc][0.8][0]{$\mathrm{Q}_\sharp$}
\psfrag{s}[cc][cc][1][0]{$s$}
\includegraphics[width=0.5\textwidth]{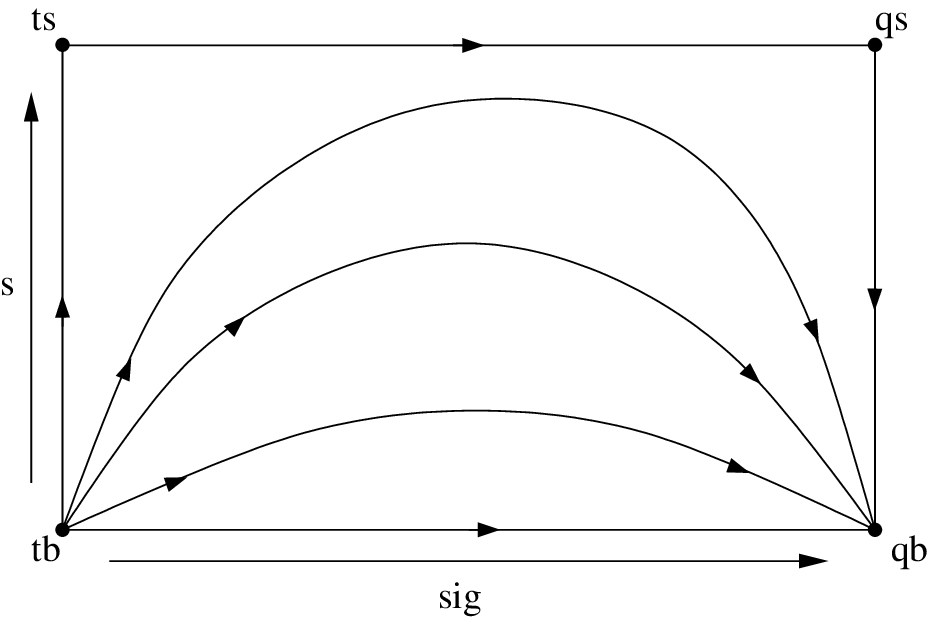}
\end{center}
\caption{The flow induced on $\mathcal{V}_{\,\mathrm{II}}$.}
\label{vacuumIIflow}
\end{figure}

\textbf{The side} $\bm{\mathcal{S}_\sharp}$. The dynamical system 
that is induced by~\eqref{dynsysbianchiII} on $\mathcal{S}_\sharp$ is given by
\begin{subequations}\label{dynsysAsharp}
\begin{align}
\label{dynsysAsharp1}
&\Sigma_+'=\textfrac{1}{6}M_1^2(2-\Sigma_+)-\textfrac{3}{2}\Omega(1-w)(\Sigma_+-\textfrac{\beta}{2})\:,\\[0.5ex]
\label{dynsysAsharp2}
& M_1'=M_1 \big[ \textfrac{3}{2} (1-w) \Sigma_+^2 - 4 \Sigma_+ + 
\textfrac{1}{2}(1 +3 w) \big(1 -\textfrac{1}{12}M_1^2\big)\big]\:,
\end{align}
\end{subequations}
where $\Omega=1-\Sigma_+^2-\textfrac{1}{12}M_1^2$. 
The state space is (the interior of) a closed semi-circle.
The two vertices are the fixed points $\mathrm{T}_\sharp$ and $\mathrm{Q}_\sharp$.

On the vacuum part $\Omega = 0$ of the boundary of $\mathcal{S}_\sharp$ (which is the semi-circle) 
we observe 
\begin{equation}\label{boundIIa}
{\Sigma_+'}_{\,|\Omega=0}=2(1-\Sigma_+^2)(2-\Sigma_+)\,,
\end{equation}
which is positive since $|\Sigma_+| < 1$.
Therefore, the semi-circle corresponds to an orbit
$\mathrm{T}_\sharp \rightarrow \mathrm{Q}_\sharp$; see Fig.~\ref{Asharpfig}. (The same orbit is depicted in Fig.~\ref{vacuumIIflow}
and the color-coding in Fig.~\ref{BianchiIfig} is consistent as well: In Fig.~\ref{BianchiIfig}, $\mathrm{T}_\sharp$ 
is white and $\mathrm{Q}_\sharp$ is black.)

The equation induced on the subset $M_1 = 0$ of the boundary of $\mathcal{S}_\sharp$ (which is the base of the semi-circle) 
is
\[ 
{\Sigma_+'}_{\,|M_1 = 0}=-\textfrac{3}{2}(1-\Sigma_+^2)(1-w)(\Sigma_+-\textfrac{\beta}{2})\:,
\]
which coincides with~\eqref{eqcalIsharp}; 
this is because the base of the semi-circle coincides with 
the $\mathcal{I}_\sharp$ subset of $\partial\mathcal{X}_{\,\mathrm{I}}$.
Accordingly, there is a fixed point, $\mathrm{R}_\sharp$, if $|\beta|<2$ (i.e., in the cases \A, \B, \C),
which acts as an attractor on this boundary component;
in contrast, we obtain an orbit $\mathrm{T}_\sharp \leftarrow \mathrm{Q}_\sharp$ for $\beta \leq {-2}$~(\Dminus)
and an orbit $\mathrm{T}_\sharp \rightarrow \mathrm{Q}_\sharp$ for $\beta \geq 2$~(\Dplus);
see Figs.~\ref{BianchiIfig} and~\ref{Asharpfig}.

To investigate which role the point $\mathrm{R}_\sharp$ plays in the context of the
flow on $\mathcal{S}_\sharp$ we consider 
\begin{equation}\label{dm1}
\left[\textfrac{d}{d\tau}\log M_1\right]_{\,|\mathrm{R}_\sharp}=\textfrac{3}{8}\beta^2(1-w)-2\beta+\textfrac{1}{2}(1+3w)\:.
\end{equation}
Let 
\begin{equation}\label{betaDdef}
\beta_\sharp(w) :=\frac{8-2\sqrt{9w^2-6w+13}}{3(1-w)} = \frac{8-2\sqrt{[3(1-w)]^2 +4(1+3 w)}}{3(1-w)}\,\in\, (0,1)\:.
\end{equation}
The r.h.s. of~\eqref{dm1} is positive for $-2<\beta<\beta_\sharp(w)$; hence, in this case, 
there exists one orbit that emanates from the fixed point $\mathrm{R}_\sharp$ into the interior of $\mathcal{S}_\sharp$, i.e.,
$\mathrm{R}_\sharp$ is a saddle in $\overline{\mathcal{S}}_\sharp$.
If, on the other hand, $\beta_\sharp(w)<\beta<2$, then 
the r.h.s.\ of~\eqref{dm1} is negative, which makes $\mathrm{R}_\sharp$ a sink in $\overline{\mathcal{S}}_\sharp$.
(In the exceptional case $\beta = \beta_\sharp(w)$, the fixed point $\mathrm{R}_\sharp$ is a 
non-hyperbolic sink, 
as will be proved below by using global methods.) 
Note that the distinction $\beta \gtrless \beta_\sharp(w)$ leads to 
two subcases of the case \Aplus, since $\beta_\sharp(w) \in (0,1)$,
cf.~Fig.~\ref{Asharpfig} and the color-coding in Fig.~\ref{BianchiIfig}.

Let us turn to the interior of $\mathcal{S}_\sharp$.
For $\beta<\beta_\sharp(w)$ (i.e., in the $\boldsymbol{-}$ cases and the
respective subcase of \Aplus) 
there exists a fixed point in the interior of $\mathcal{S}_\sharp$, which we call $\mathrm{C}_\sharp$:
\begin{itemize}
\item[$\mathrm{C}_\sharp$] \quad The coordinates of $\mathrm{C}_\sharp$ are 
\,${\Sigma_+}=\textfrac{2(1+3w)}{16-3\beta(1-w)}$, 
${M_1^2}=\textfrac{36(1-w)[3\beta^2(1-w)-16\beta+4(1+3w)]}{[16-3\beta(1-w)]^2}$\,.
\end{itemize}
This fixed point represents a solution that generalizes the Collins-Stewart solution~\eqref{CSsol};
it reduces to~\eqref{CSsol} in the case $\beta=0$.
The exact solution that corresponds to the fixed point $\mathrm{C}_\sharp$ is calculated in Appendix~\ref{exact}.
The condition $\beta<\beta_\sharp(w)$ assures that $(M_1)_{\,|\mathrm{C}_\sharp}>0$; 
the point $\mathrm{C}_\sharp$ converges to $\mathrm{R}_\sharp$ when $\beta \rightarrow \beta_\sharp(w)$. 
The point $\mathrm{C}_\sharp$ is a sink in $\mathcal{S}_\sharp$; this will be proved
below by using global methods.
Finally, to investigate which role $\mathrm{C}_\sharp$ plays in the context
of the system~\eqref{dynsysbianchiII} on the entire 
type~II state space $\overline{\mathcal{X}}_{\mathrm{II}}$, we compute
\[
\left[\textfrac{d}{d\tau}\log (1-2s)\right]_{\,|\mathrm{C}_\sharp}={6\Sigma_+}_{\,|\mathrm{C}_\sharp}>0\;,
\]
which implies that $\mathrm{C}_\sharp$ repels orbits from the interior of $\mathcal{X}_\mathrm{II}$; in other
words, $\mathrm{C}_\sharp$ is a saddle in $\overline{\mathcal{X}}_\mathrm{II}$.
This fact is depicted in Fig.~\ref{Asharpfig} by representing $\mathrm{C}_\sharp$ by a white circle.  
 
To perform a global dynamical systems analysis of the flow on the invariant set $\mathcal{S}_\sharp$, 
we apply the monotonicity principle. 

Consider first the case \Dplus. Since $\beta\geq 2$ in this case, we have $\Sigma_+'>0$ 
in the interior of $\mathcal{S}_\sharp$. Therefore, the $\alpha$- and $\omega$-limit set of every
interior orbit must lie on the boundary $\partial\mathcal{S}_\sharp$. 
From the simple structure of the flow on $\partial\mathcal{S}_\sharp$,
we deduce that the $\alpha$- and $\omega$-limit set of every orbit in the interior of $\mathcal{S}_\sharp$ 
is the fixed point $\mathrm{T}_\sharp$ and $\mathrm{Q}_\sharp$, respectively; see Fig.~\ref{Asharpfig}. 

In the remaining cases, where $\beta<2$, we define
\[
c=\textfrac{2(1+3w)}{16-3\beta(1-w)}\:,\quad\delta=\textfrac{3(1-w)(c-\beta/2)c}{(1+3w)(1-c^2)}\:.
\]
It is elementary to verify that $c\in (0,1)$ and $\delta<1$ for all $\beta<2$; moreover,
\[
\delta<0\ \text{ for } \beta\in (\beta_\sharp(w),2)\:,\quad 
\delta=0\ \text{ for } \beta=\beta_\sharp(w)\:,\quad\text{and}\quad 
\delta>0\ \text{ for }\beta<\beta_\sharp(w)\:.
\]
Consider the function
\[
Z_2=\frac{M_1^{2\delta}\,\Omega^{1-\delta}}{(1-c\Sigma_+)^2}\:.
\]
(For $\beta=0$ the function $Z_2$ reduces to the monotone function for perfect fluid models given in~\cite[eq.~(6.47)]{WE}.)
A straightforward calculation shows that 
\[
Z_2'=\chi Z_2\:,\qquad \text{where }\,
\chi=\textfrac{3(1-w)(1-c\beta/2)}{(1-c\Sigma_+)(1-c^2)}(\Sigma_+-c)^2\geq 0\quad \text{($\forall\,\beta<2$)}
\]
and $\chi=0$ if and only if $\Sigma_+=c$. If $\beta<\beta_\sharp(w)$, the fixed point $\mathrm{C}_\sharp$ is 
located in the interior of $\mathcal{S}_\sharp$. In this case the function $Z_2$ is 
monotonically increasing on $\mathcal{S}_\sharp\backslash\{\mathrm{C}_\sharp\}$. 
The monotonicity principle then implies that the fixed point $\mathrm{C}_\sharp$ is 
the $\omega$-limit set of every orbit in the interior of $\mathcal{S}_\sharp$, 
whereas the $\alpha$-limit set is contained on the boundary 
$\partial\mathcal{S}_\sharp$. The generic $\alpha$-limit set is the point $\mathrm{T}_\sharp$
in the cases \Aplus\ (where $\beta<\beta_\sharp(w)$), \Aminus, \Bminus, \Cminus;
see Fig.~\ref{Asharpfig}.
In the case \Dminus, the entire boundary $\partial\mathcal{S}_\sharp$, which forms a heteroclinic cycle,
is the $\alpha$-limit set.
Finally, for $\beta\in [\beta_\sharp(w),2)$, which comprises the respective subcase
of \Aplus\ and the cases \Bplus, \Cplus,
the function $Z_2$ is monotonically increasing 
in the whole interior and thus the $\alpha$- and the $\omega$-limit sets of interior orbits 
are both contained on the boundary of $\mathcal{S}_\sharp$. The $\alpha$-limit set
is $\mathrm{T}_\sharp$, the $\omega$-limit set is $\mathrm{R}_\sharp$. 
These results are summarized in Fig.~\ref{Asharpfig}.

\begin{figure}[Ht!]
\begin{center}
\psfrag{T}[cc][cr][0.7][0]{$\mathrm{T}_\sharp$}
\psfrag{Q}[cc][cc][0.7][0]{$\mathrm{Q}_\sharp$}
\psfrag{C}[cc][cc][0.7][0]{$\mathrm{C}_\sharp$}
\subfigure[\Dminus]{\includegraphics[width=0.45\textwidth]{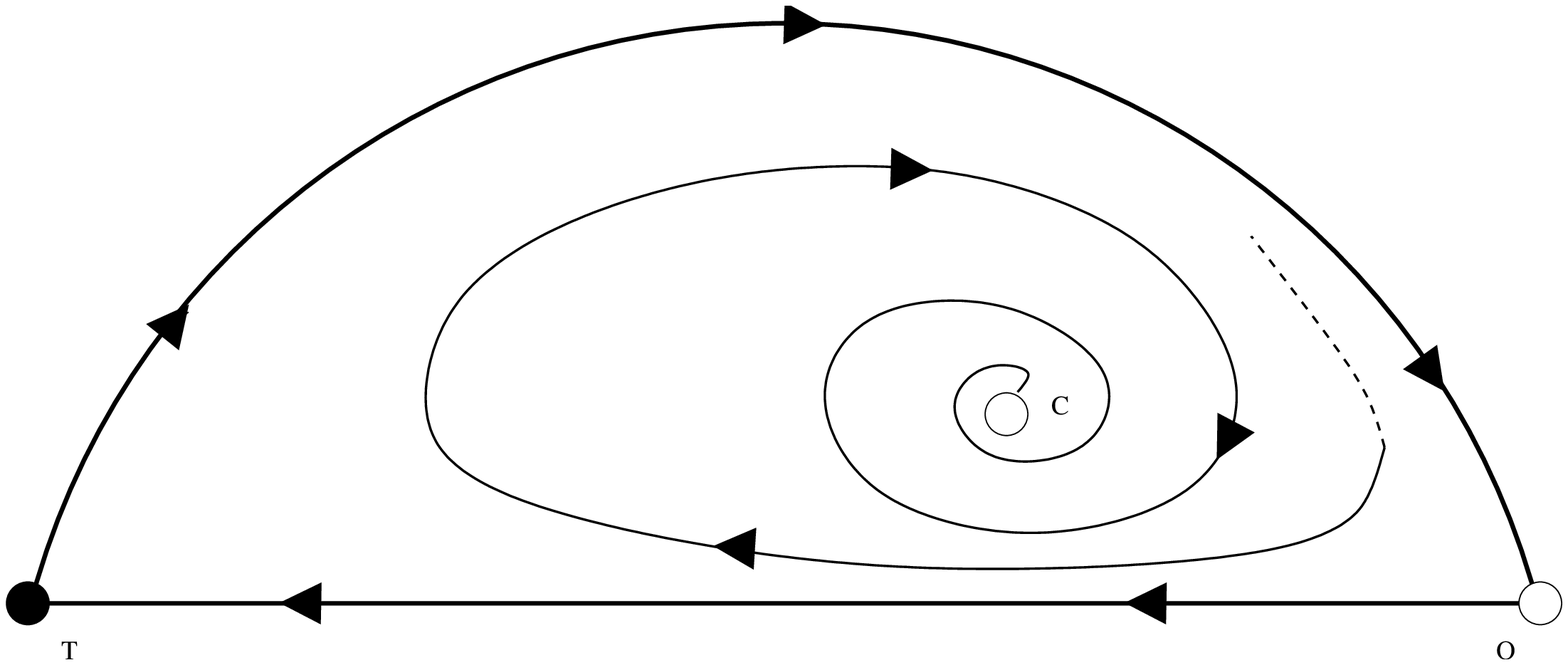}}\qquad
\psfrag{T}[cc][cr][0.7][0]{$\mathrm{T}_\sharp$}
\psfrag{Q}[cc][cc][0.7][0]{$\mathrm{Q}_\sharp$}
\psfrag{D}[cc][cr][0.7][0]{$\mathrm{R}_\sharp$}
\psfrag{C}[cc][cc][0.7][0]{$\mathrm{C}_\sharp$}
\subfigure[\Azerominus, \Aminus, \Bminus, \Cminus]{\includegraphics[width=0.45\textwidth]{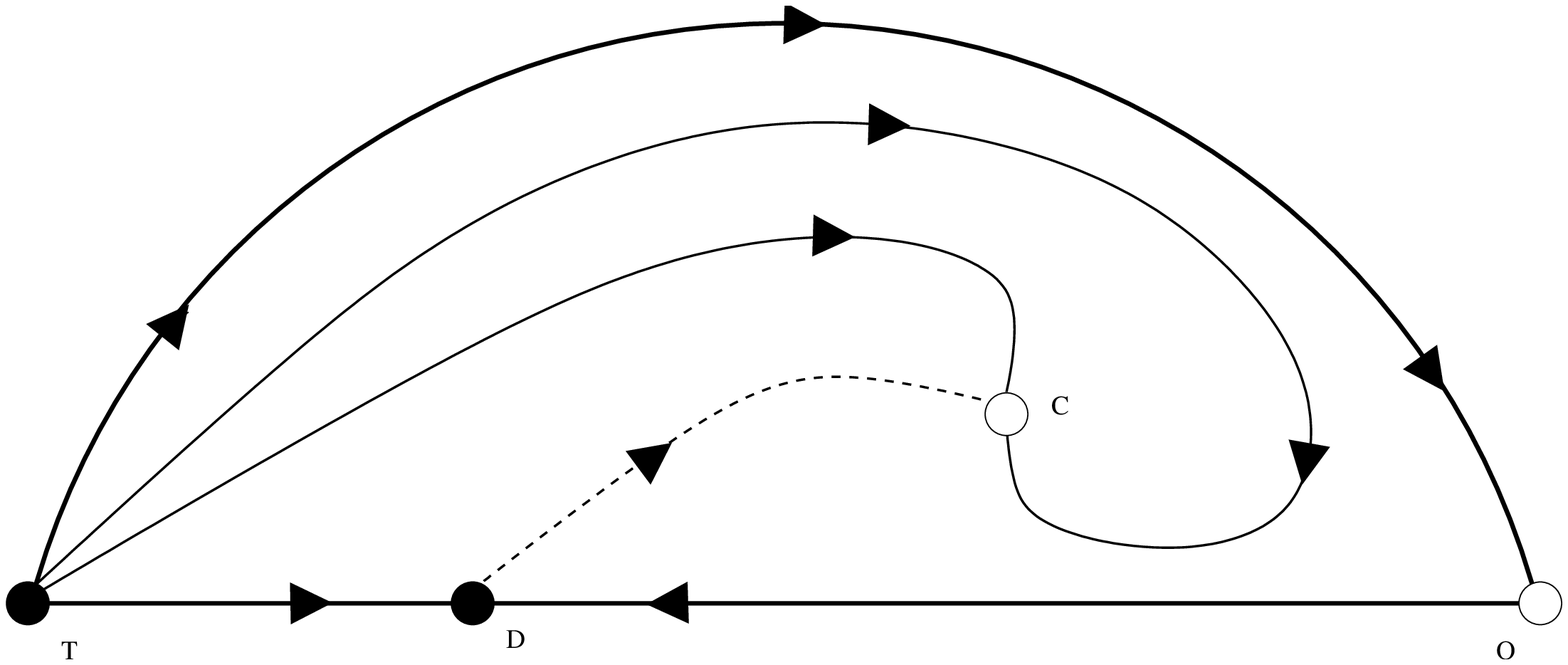}}\\
\psfrag{T}[cc][cr][0.7][0]{$\mathrm{T}_\sharp$}
\psfrag{Q}[cc][cc][0.7][0]{$\mathrm{Q}_\sharp$}
\psfrag{D}[cc][cr][0.7][0]{$\mathrm{R}_\sharp$}
\psfrag{C}[cc][cc][0.7][0]{$\mathrm{C}_\sharp$}
\subfigure[\Azeroplus, \Aplus\ $(0<\beta<\beta_\sharp)$]{\includegraphics[width=0.45\textwidth]{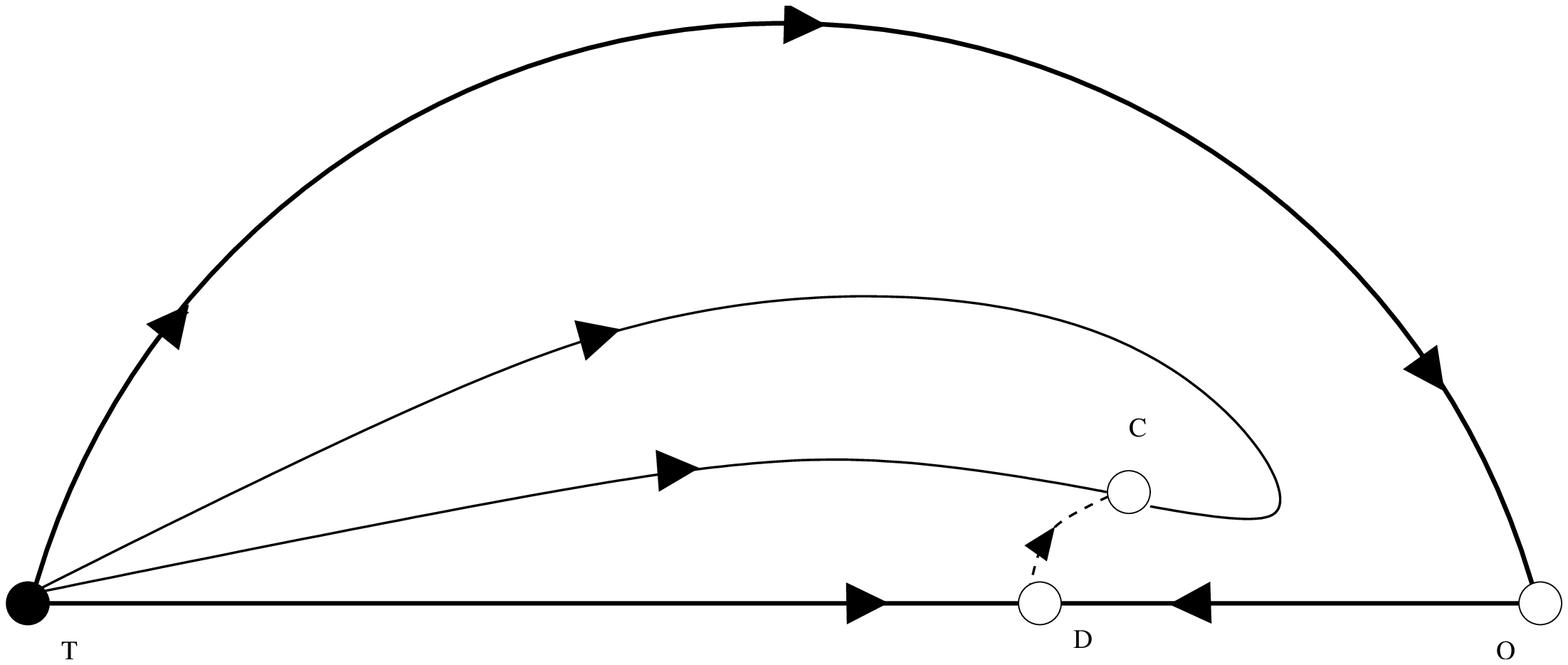}}\qquad
\psfrag{T}[cc][cc][0.7][0]{$\mathrm{T}_\sharp$}
\psfrag{Q}[cc][cc][0.7][0]{$\mathrm{Q}_\sharp$}
\psfrag{D}[cc][cc][0.7][0]{$\mathrm{R}_\sharp$}
\subfigure[\Aplus\ $(\beta_\sharp\leq\beta<1)$, \Bplus, \Cplus]{\includegraphics[width=0.45\textwidth]{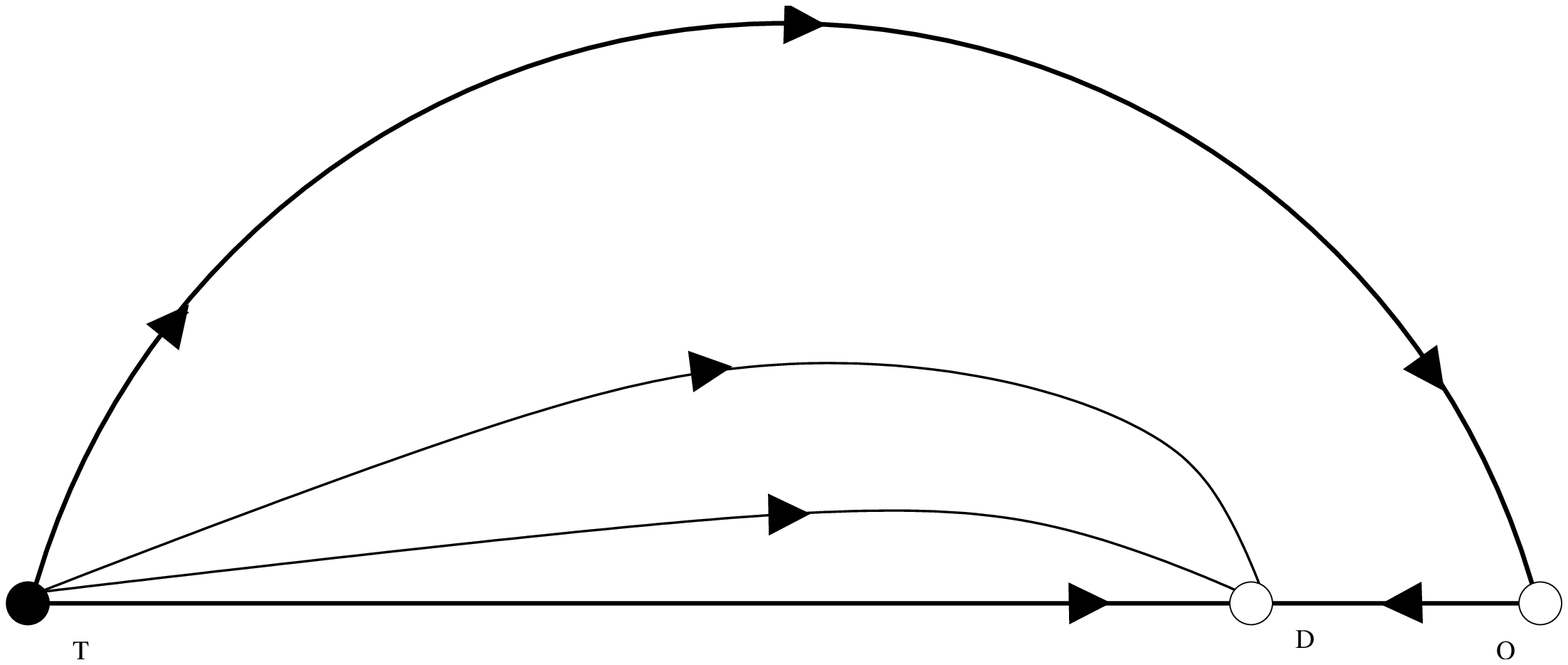}}\\
\psfrag{T}[cc][cc][0.7][0]{$\mathrm{T}_\sharp$}
\psfrag{Q}[cc][cc][0.7][0]{$\mathrm{Q}_\sharp$}
\psfrag{C}[cc][cc][0.7][0]{$\mathrm{C}_\sharp$}
\subfigure[\Dplus]{\label{SsharpD+}\includegraphics[width=0.45\textwidth]{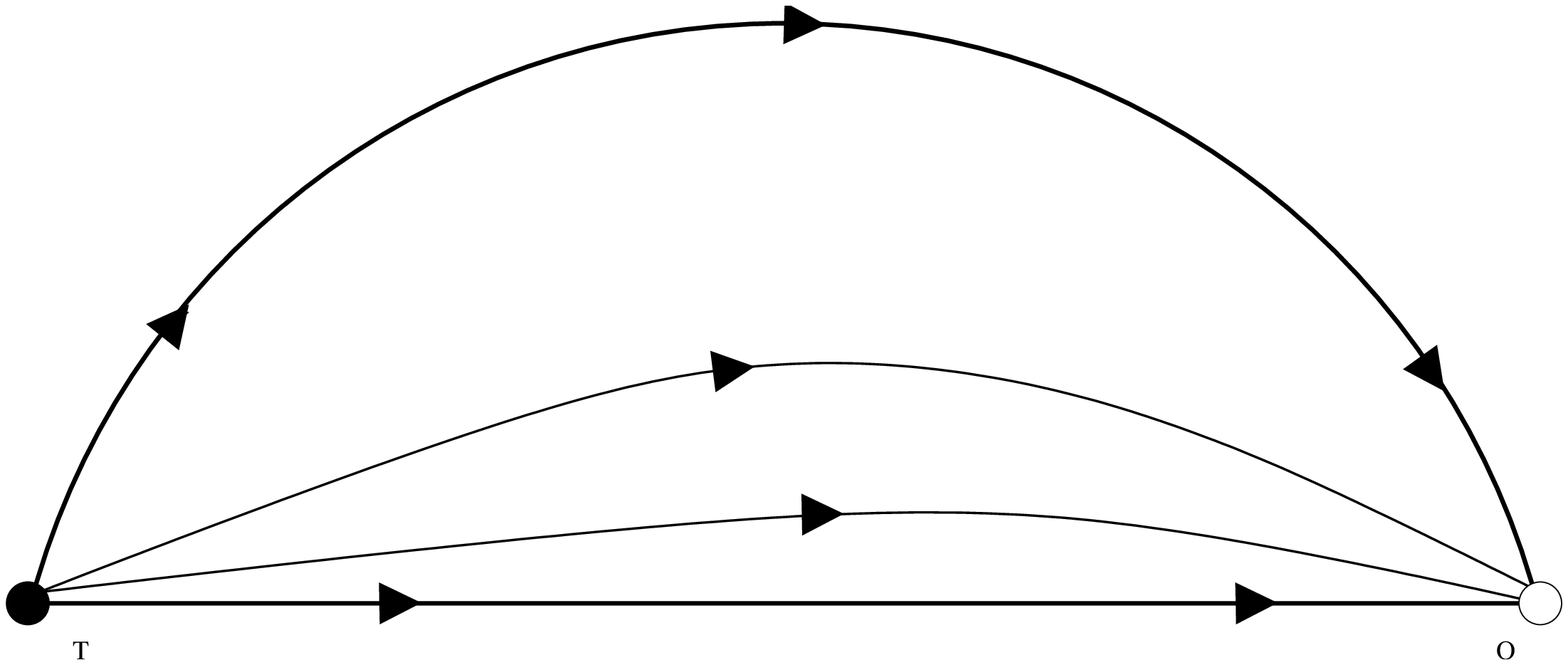}\label{magneticAsharp}}\qquad
\psfrag{-1}[cc][cc][0.7][0]{$-1$}
\psfrag{+1}[cc][cc][0.7][0]{$+1$}
\psfrag{M1}[cc][cc][0.7][0]{$M_1$}
\psfrag{sigma}[cc][cc][0.7][0]{$\Sigma_+$}
\psfrag{Asharp}[cc][cc][1.2][0]{$\mathcal{S}_\sharp$}
\subfigure[Coordinates]{\includegraphics[width=0.45\textwidth]{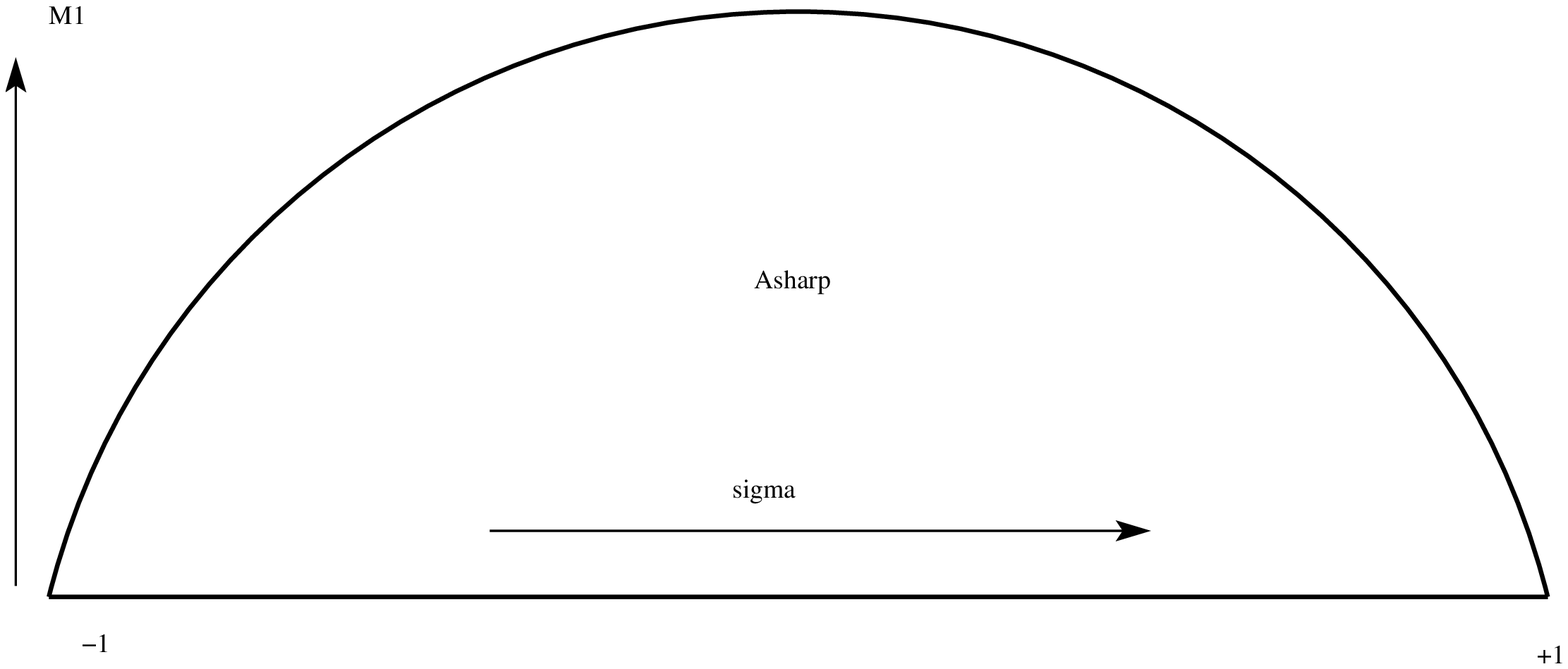}}
\end{center}
\caption{Phase portraits of the flows on $\mathcal{S}_\sharp$ in dependence on the anisotropy parameter $\beta$. 
Continuous lines represent typical orbits---each of these orbits is a member of a one-parameter set of orbits 
(i.e., a two-parameter set of solutions). 
Dashed lines represent orbits whose future and/or past asymptotic behavior is non-generic.
The fixed points are color-coded. When the set $\mathcal{S}_\sharp$ is viewed as the
boundary subset of the type~II state space $\mathcal{X}_{\,\mathrm{II}}$, there exists an
orthogonal direction. A black fixed point is an attractor, a white fixed point is a repellor in the orthogonal direction,
cf.~Fig.~\ref{BianchiIfig}. 
Analogously, the set $\mathcal{S}_\sharp$ appears as a boundary subset of the type~VIII and the type~IX
state spaces; the color-coding then refers to the orthogonal direction of $\mathcal{S}_\sharp$
when viewed as an embedded set in these state spaces.}
\label{Asharpfig}
\end{figure}

\textbf{The side} $\bm{\mathcal{S}_\flat}$. 
The dynamical system induced on $\mathcal{S}_\flat$ is given by
\begin{subequations}\label{dynsysAflat}
\begin{align}
\tag{\ref{dynsysAsharp1}${}^\prime$}
&\Sigma_+'=\textfrac{1}{6}M_1^2(2-\Sigma_+)-\textfrac{3}{2}\Omega(1-w)(\Sigma_++\beta)\:,\\[0.5ex]
\tag{\ref{dynsysAsharp2}${}^\prime$}
& M_1'=M_1 \big[ \textfrac{3}{2} (1-w) \Sigma_+^2 - 4 \Sigma_+ + 
\textfrac{1}{2}(1 +3 w) \big(1 -\textfrac{1}{12}M_1^2\big)\big]\:,
\end{align}
\end{subequations}
where $\Omega=1-\Sigma_+^2-\textfrac{1}{12}M_1^2$. 
The state space $\mathcal{S}_\flat$ is an identical copy of the state space $\mathcal{S}_\sharp$,
where the dynamical system~\eqref{dynsysAflat} on $\mathcal{S}_\flat$ is obtained from the 
system~\eqref{dynsysAsharp} on $\mathcal{S}_\sharp$ upon the formal substitution $\beta\mapsto -2\beta$. 
Accordingly, we simply adapt the results of the analysis of $\mathcal{S}_\sharp$
by replacing $\beta\mapsto -2\beta$.

The flow on the vacuum part $\Omega = 0$ of $\partial\mathcal{S}_\flat$ (which is the semi-circle) 
is determined by an equation identical to~\eqref{boundIIa};
hence, 
the semi-circle is an orbit $\mathrm{T}_\flat \rightarrow \mathrm{Q}_\flat$; see Fig.~\ref{Aflatfig}.
The part $M_1 = 0$ of $\partial\mathcal{S}_\flat$ (which is the base of the semi-circle) 
coincides with 
the $\mathcal{I}_\flat$ subset of $\partial\mathcal{X}_{\,\mathrm{I}}$; the equation is~\eqref{eqcalIflat}.
Accordingly, there is the fixed point $\mathrm{R}_\flat$ if $|\beta|< 1$ (i.e., in the \A\ cases),
which acts as an attractor on this boundary component; if $\beta \leq -1$ (\Bminus, \Cminus, \Dminus),
we obtain an orbit $\mathrm{T}_\flat \rightarrow \mathrm{Q}_\flat$;
if $\beta \geq 1$ (\Bplus, \Cplus, \Dplus),
we obtain an orbit $\mathrm{T}_\flat \leftarrow \mathrm{Q}_\flat$;
see Figs.~\ref{BianchiIfig} and~\ref{Aflatfig}.

To investigate which role the point $\mathrm{R}_\flat$ plays in the context of the
flow on $\mathcal{S}_\flat$ we consider 
\begin{equation}\label{dm2}
\left[\textfrac{d}{d\tau}\log M_1\right]_{\,|\mathrm{R}}=\textfrac{3}{2}\beta^2(1-w)+4\beta+\textfrac{1}{2}(1+3w)\:.
\tag{\ref{dm1}${}^\prime$}
\end{equation}
Let 
\begin{equation}\label{betaRdef}
\beta_\flat(w) :=\frac{-4-\sqrt{9w^2-6w+13}}{3(1-w)} = \frac{-4-\sqrt{[3(1-w)]^2 +4(1+3 w)}}{3(1-w)}\,\in\, 
(-\textfrac{1}{2},0)\:.
\tag{\ref{betaDdef}${}^\prime$}
\end{equation}
The r.h.s. of~\eqref{dm2} is positive for $\beta_\flat(w)<\beta<1$; hence, in this case,
there exists one orbit that emanates from the fixed point $\mathrm{R}_\flat$ into the interior of $\mathcal{S}_\flat$, i.e.,
$\mathrm{R}_\flat$ is a saddle in $\overline{\mathcal{S}}_\flat$.
If, on the other hand, $-1<\beta<\beta_\flat(w)$, then the r.h.s. of~\eqref{dm2} is negative, 
which makes $\mathrm{R}_\flat$ a sink in $\overline{\mathcal{S}}_\flat$. 
(The fixed point $\mathrm{R}_\flat$ attracts interior orbits in the case $\beta=\beta_\flat(w)$ as well, 
as will be proved below by using global methods.) 
Since $\beta_\flat(w)\in (-\textfrac{1}{2},0)$ for $w\in (-\textfrac{1}{3},1)$, 
$\mathrm{R}_\flat$ is always a sink in the case \Aplus, 
whereas we need to distinguish two subcases of the case \Aminus, see Fig.~\ref{BianchiIfig}
and the color-coding in Fig.~\ref{BianchiIfig}.

For $\beta>\beta_\flat(w)$ (i.e., in the $\boldsymbol{+}$ cases and the
respective subcase of \Aminus) 
there exists a fixed point in the interior of $\mathcal{S}_\flat$, which we call $\mathrm{C}_\flat$:
\begin{itemize}
\item[$\mathrm{C}_\flat$] \quad The coordinates of $\mathrm{C}_\flat$ are 
\,${\Sigma_+}=\textfrac{1+3w}{8+3\beta(1-w)}$,
${M_1^2}=\textfrac{36(1-w)[3\beta^2(1-w)+8\beta+(1+3w)]}{[8+3\beta(1-w)]^2}$\,.
\end{itemize}
This fixed point represents a solution that generalizes the Collins-Stewart solution~\eqref{CSsol};
it reduces to~\eqref{CSsol} in the 
case $\beta=0$.
The point $\mathrm{C}_\flat$ is a sink in $\mathcal{S}_\flat$; this will be proved
below by using global methods.
When $\beta$ converges to $\beta_\flat(w)$, the point 
$\mathrm{C}_\flat$ converges to $\mathrm{R}_\flat$.
Finally, the role of $\mathrm{C}_\flat$ in the 
type~II state space $\overline{\mathcal{X}}_{\mathrm{II}}$ is determined by
\[
\left[\textfrac{d}{d\tau}\log s\right]_{\,|\mathrm{C}_\flat}=-{6\Sigma_+}_{|\mathrm{C}_\flat}<0\:,
\]
which implies that the fixed point $\mathrm{C}_\flat$ attracts orbits from the interior 
of $\mathcal{X}_\mathrm{II}$; it is therefore represented by a black dot.

To complete the dynamical systems analysis of the set $\mathcal{S}_\flat$, 
we use the monotonicity principle with the function
\begin{subequations}\label{z3}
\begin{equation}
Z_3=\frac{M_1^{2\delta}\,\Omega^{1-\delta}}{(1-c\Sigma_+)^2}\:,
\end{equation}
where
\begin{equation}
c=\textfrac{1+3w}{8+3\beta(1-w)}\:,\quad\delta=\textfrac{3(1-w)(c+\beta)c}{(1+3w)(1-c^2)}\:.
\end{equation}
\end{subequations}
This function is well-defined when $\beta > -1$ (which is the difficult case);
in fact, $Z_3$ satisfies
\[
Z_3'=\chi Z_3\:,\qquad \text{where }\,
\chi=\textfrac{3(1-w)(1+c\beta)}{(1-c\Sigma_+)(1-c^2)}(\Sigma_+-c)^2\geq 0\quad
 \text{($\forall\,\beta>-1$)}
\]
and $\chi=0$ if and only if $\Sigma_+=c$. 
The implications of the monotonicity principle are analogous
to those for the set $\mathcal{S}_\sharp$.
The results are summarized in 
the phase portraits depicted in Fig.~\ref{Aflatfig}.

\begin{figure}[Ht!]
\begin{center}
\psfrag{T}[cc][cc][0.7][0]{$\mathrm{T}_\flat$}
\psfrag{Q}[cc][cc][0.7][0]{$\mathrm{Q}_\flat$}
\psfrag{C}[cc][cc][0.7][0]{$\mathrm{C}_\flat$}
\subfigure[\Bminus, \Cminus, \Dminus]{\includegraphics[width=0.45\textwidth]{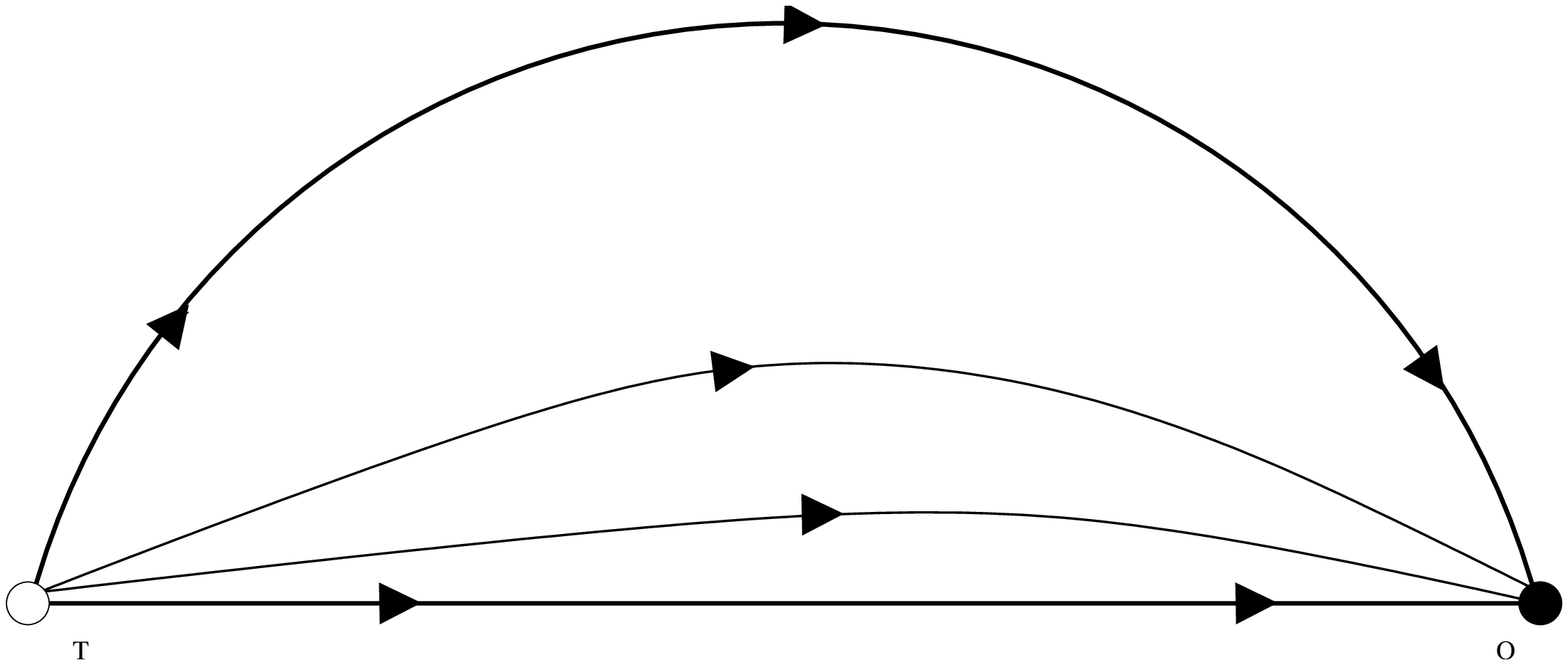}\label{magneticAflat}}\qquad
\psfrag{T}[cc][cc][0.7][0]{$\mathrm{T}_\flat$}
\psfrag{Q}[cc][cc][0.7][0]{$\mathrm{Q}_\flat$}
\psfrag{D}[cc][cc][0.7][0]{$\mathrm{R}_\flat$}
\psfrag{C}[cc][cc][0.7][0]{$\mathrm{C}_\flat$}
\subfigure[\Aminus\ $(-1<\beta\leq\beta_\flat)$]{\includegraphics[width=0.45\textwidth]{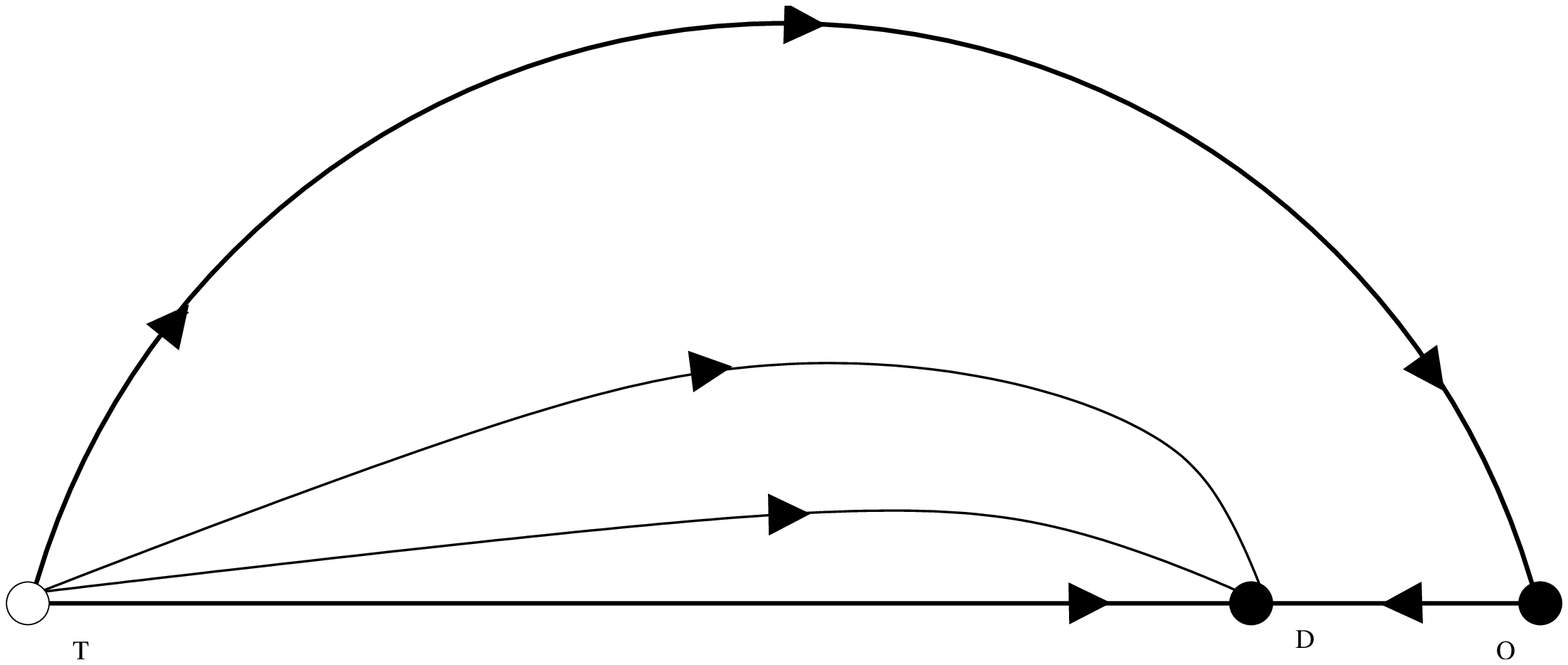}}\\
\psfrag{T}[cc][cc][0.7][0]{$\mathrm{T}_\flat$}
\psfrag{Q}[cc][cc][0.7][0]{$\mathrm{Q}_\flat$}
\psfrag{D}[cc][cr][0.7][0]{$\mathrm{R}_\flat$}
\psfrag{C}[cc][cc][0.7][0]{$\mathrm{C}_\flat$}
\subfigure[\Azerominus, \Aminus\ $(\beta_\flat<\beta<0)$]{\includegraphics[width=0.45\textwidth]{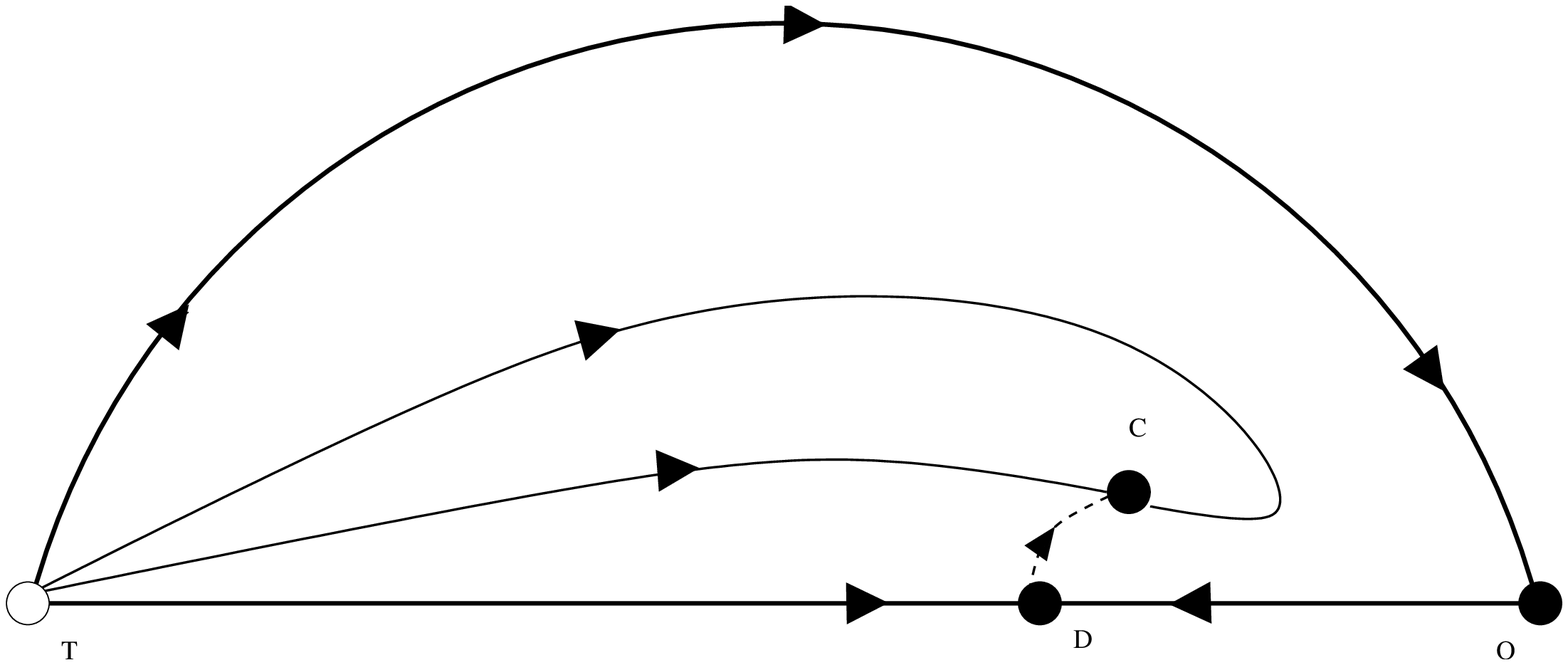}}\qquad
\psfrag{T}[cc][cc][0.7][0]{$\mathrm{T}_\flat$}
\psfrag{Q}[cc][cc][0.7][0]{$\mathrm{Q}_\flat$}
\psfrag{D}[cc][cr][0.7][0]{$\mathrm{R}_\flat$}
\subfigure[\Azeroplus, \Aplus] {\includegraphics[width=0.45\textwidth]{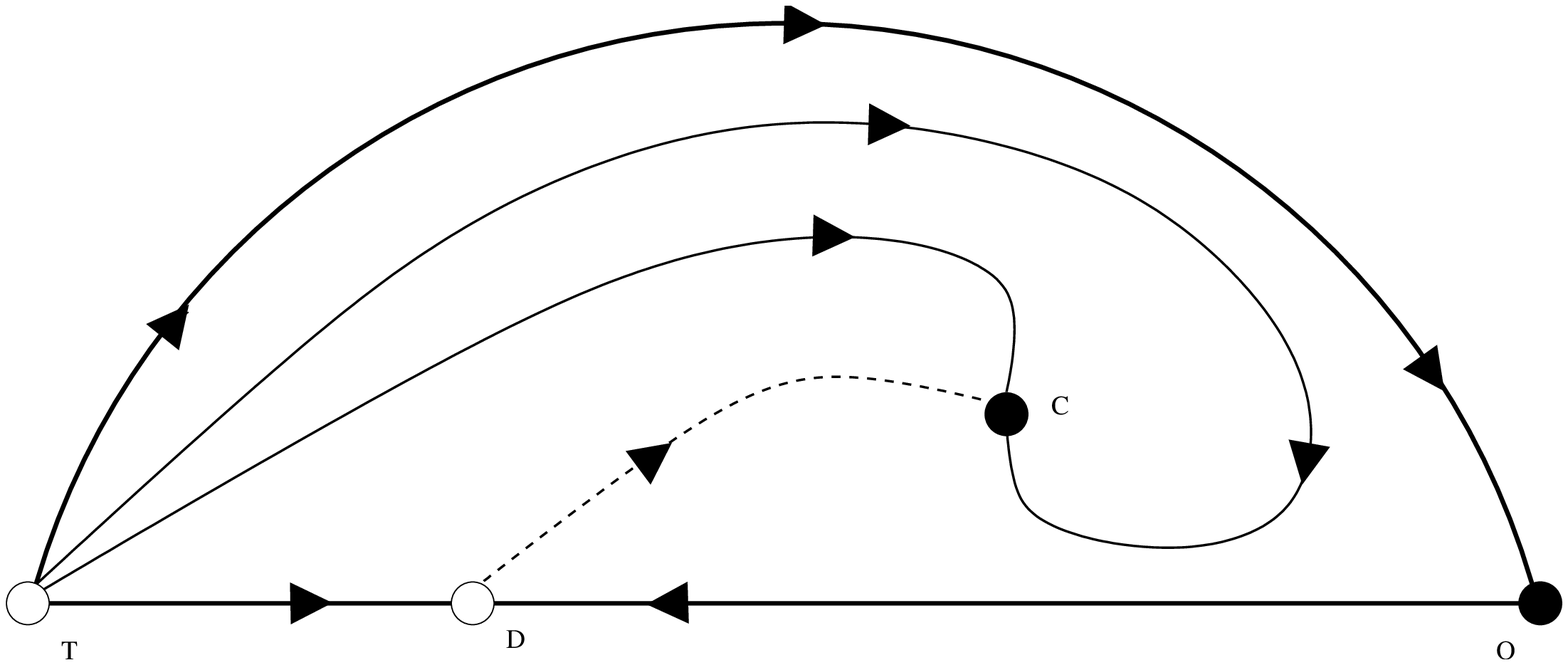}}\\
\psfrag{T}[cc][cc][0.7][0]{$\mathrm{T}_\flat$}
\psfrag{Q}[cc][cc][0.7][0]{$\mathrm{Q}_\flat$}
\psfrag{C}[cc][cc][0.7][0]{$\mathrm{C}_\flat$}
\subfigure[\Bplus, \Cplus, \Dplus]{\includegraphics[width=0.45\textwidth]{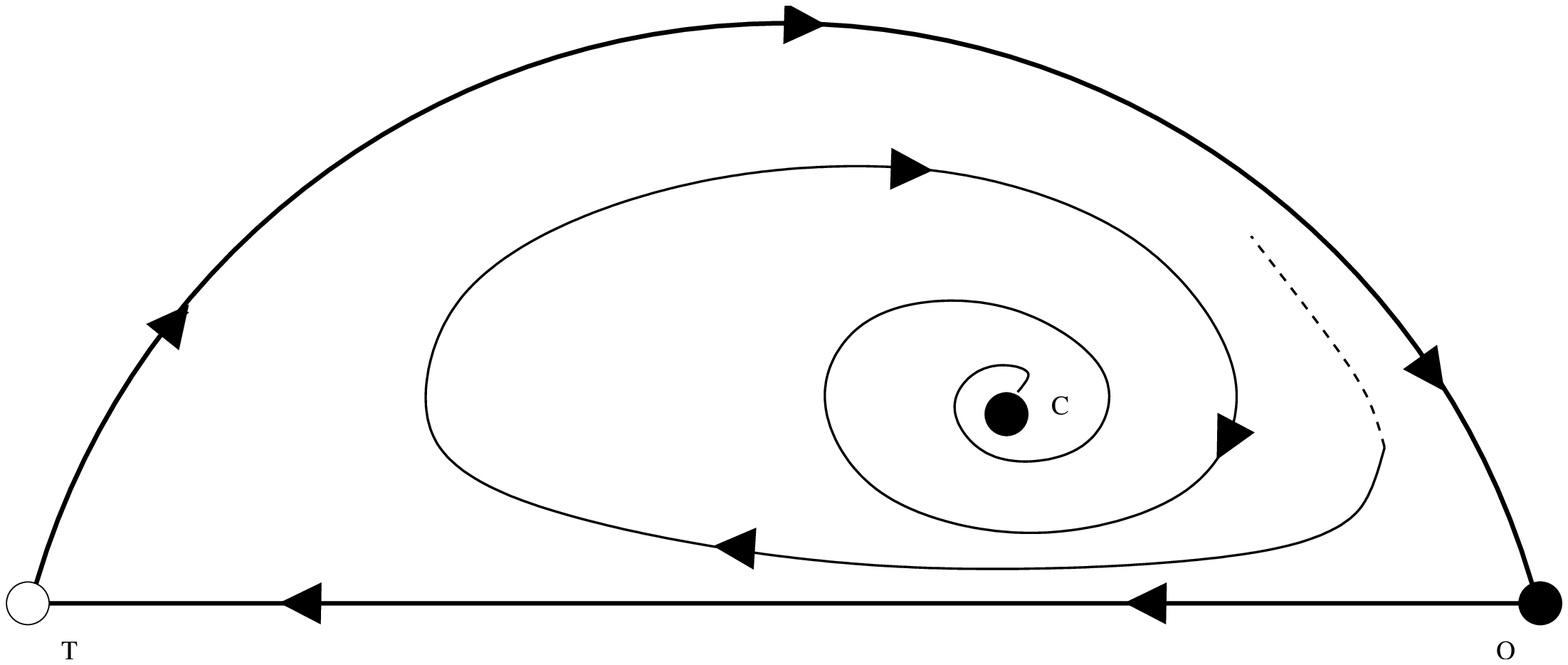}}\qquad
\psfrag{-1}[cc][cc][0.7][0]{$-1$}
\psfrag{+1}[cc][cc][0.7][0]{$+1$}
\psfrag{M1}[cc][cc][0.7][0]{$M_1$}
\psfrag{sigma}[cc][cc][0.7][0]{$\Sigma_+$}
\psfrag{Asharp}[cc][cc][1.2][0]{$\mathcal{S}_\flat$}
\subfigure[Coordinates]{\includegraphics[width=0.45\textwidth]{Asharpcoord.eps}}
\end{center}
\caption{Phase portraits of the flow on $\mathcal{S}_\flat$. The comments of Fig.~\ref{Asharpfig}
apply analogously.}
\label{Aflatfig}
\end{figure}

\textbf{Global dynamics on the type~II state space} $\bm{\mathcal{X}_{\mathrm{II}}}$.
We are now in a position to complete the analysis of Bianchi type~II models:
By collecting the previous results concerning the four boundary
components~\eqref{IIboundarycomps} of the type~II state space $\mathcal{X}_{\mathrm{II}}$ 
and applying the monotonicity principle, 
we get a complete description of the dynamics in the interior of  $\mathcal{X}_\mathrm{II}$
and thus of Bianchi type~II models.

\begin{Theorem}\label{BianchiIItheo}
The qualitative behavior of typical (in fact, generic) orbits of the dynamical 
system~\eqref{dynsysbianchiII}---which represents LRS Bianchi type~II models---in the various anisotropy cases 
is the one sketched in Fig.~\ref{BianchiIIfig}. 
The past and the future attractors for the various cases 
are listed in Table~\ref{attractorsII}. 
There exist non-generic
solutions that are characterized by different asymptotic behavior;
the complete list of possible $\alpha$- and $\omega$-limit sets
of orbits 
is given in Table~\ref{alfaII}.
\end{Theorem}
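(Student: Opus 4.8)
The plan is to promote the boundary analysis of the four faces of $\mathcal{X}_{\mathrm{II}}$ to a complete description of the interior flow by means of the monotonicity principle. The flow on $\partial\mathcal{X}_{\mathrm{II}}$ is already settled: the base $\mathcal{X}_{\,\mathrm{I}}$ is governed by Theorem~\ref{BianchiItheo} and Fig.~\ref{BianchiIfig}, the vacuum roof $\mathcal{V}_{\mathrm{II}}$ by the elementary flow of Fig.~\ref{vacuumIIflow}, and the two sides $\mathcal{S}_\flat$, $\mathcal{S}_\sharp$ by the monotone functions $Z_3$, $Z_2$, giving Figs.~\ref{Aflatfig} and~\ref{Asharpfig}. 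In particular all fixed points, their stability transverse to the face on which they sit, and the two threshold values $\beta_\flat(w)$, $\beta_\sharp(w)$ are in hand, so that the candidate past attractors (on the Taub/vacuum edges) and future attractors ($\mathrm{C}_\flat$, $\mathrm{R}_\flat$, $\mathrm{C}_\sharp$, $\mathrm{R}_\sharp$, $\mathrm{F}$) of Table~\ref{attractorsII} are located.

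First I would rule out fixed points in the open interior. Since $s' = -6 s(1-2s)\Sigma_+$, a fixed point with $0<s<\tfrac{1}{2}$ forces $\Sigma_+ = 0$; inserting $\Sigma_+=0$ into $M_1' = M_1\big[2 - \tfrac{1}{6} M_1^2 - \tfrac{3}{2}(1-w)\Omega\big]=0$ and eliminating $M_1^2$ through the constraint $\Omega = 1 - \tfrac{1}{12}M_1^2$ yields $\Omega\big(2-\tfrac{3}{2}(1-w)\big)=0$, i.e.\ $w=-\tfrac{1}{3}$, which is excluded by Assumption~\ref{assumptionwi}. Hence interior fixed points with $\Omega>0$ do not exist, and a strictly monotone function will therefore be free of interior critical points.

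The technical heart is the construction of such a function $Z$ on $\mathrm{int}\,\mathcal{X}_{\mathrm{II}}$. I would seek $Z$ in the form $M_1^{2\delta}\,\Omega^{1-\delta}\,\psi(s)^{\mu}\,(1-c\,\Sigma_+)^{-2}$, generalizing $Z_2$ and $Z_3$ by the additional anisotropy factor $\psi(s)^{\mu}$; here the useful identity $\tfrac{d}{d\tau}\log\psi(s) = -6\,\Sigma_+\big(u(s)-w\big)$, which follows from $s'=-6s(1-2s)\Sigma_+$ together with the definition of $u(s)$, shows exactly how the $s$-dependence feeds into $\tfrac{d}{d\tau}\log Z$. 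The constants $c,\delta,\mu$ (functions of $\beta,w$) are to be fixed so that $\tfrac{d}{d\tau}\log Z$ collapses to a single signed square in $(\Sigma_+-c)$, whence $Z$ is strictly monotone off its zero set. The monotonicity principle (Appendix~\ref{dynsysapp}) then confines the $\alpha$- and $\omega$-limit sets of every interior orbit to $\partial\mathcal{X}_{\mathrm{II}}$ and simultaneously excludes periodic and recurrent orbits in the interior.

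Finally I would read off the global portrait: because the limit sets lie on the boundary and the boundary flow is known, the $\omega$-limit of a generic orbit is the transversally attracting sink on the relevant side (e.g.\ $\mathrm{C}_\flat$ for $\beta>\beta_\flat(w)$) while the $\alpha$-limit is the repelling Taub point or the heteroclinic cycle on the vacuum/type~I edges, reproducing Fig.~\ref{BianchiIIfig} and Table~\ref{attractorsII}; orbits on stable manifolds of the saddles yield the non-generic limits of Table~\ref{alfaII}. I expect the main obstacle to be the simultaneous tuning of $c,\delta,\mu$ so that $Z$ is monotone across \emph{all} anisotropy types: the values of $c$ forced on $\mathcal{S}_\flat$ and $\mathcal{S}_\sharp$ differ (they are related by the substitution $\beta\mapsto-2\beta$), so the interior function must reconcile both boundary limits at once. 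A secondary difficulty is the non-hyperbolic borderline cases $\beta=\beta_\flat(w)$ and $\beta=\beta_\sharp(w)$, where the sink character of $\mathrm{R}_\flat$, $\mathrm{R}_\sharp$ cannot be seen from linearization and must instead be extracted from the global monotone function.
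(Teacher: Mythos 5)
Your overall strategy---confine interior limit sets to the boundary via the monotonicity principle and then read off the attractors from the boundary portraits---is exactly the paper's, but the monotone function at the ``technical heart'' of your proposal does not exist within your ansatz. Take $Z = M_1^{2\delta}\,\Omega^{1-\delta}\,\psi(s)^{\mu}\,(1-c\Sigma_+)^{-2}$ and compute, using $H_D\equiv 1$: $(\log M_1^2)' = 2(q-4\Sigma_+)$, $(\log\Omega)' = 2q-(1+3w)-6\Sigma_+\big(u(s)-w\big)$, and your (correct) identity $(\log\psi)' = -6\Sigma_+\big(u(s)-w\big)$. The $u$-dependent part of $(\log Z)'$ is then $-6(\mu+1-\delta)\,\Sigma_+\big(u(s)-w\big) + 6c\,\Omega\big(u(s)-w\big)/(1-c\Sigma_+)$, the second term entering through $\Sigma_+'$ in the factor $(1-c\Sigma_+)^{-2}$. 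Since $c\,\Omega/(1-c\Sigma_+)$ is not proportional to $\Sigma_+$, this can vanish identically (as it must, for an unspecified $u$) only if $c=0$ and $\mu=\delta-1$; but then $(\log Z)' = 4\Sigma_+^2+(1+3w)\Omega-8\delta\Sigma_+-(1-\delta)(1+3w)$, which equals $\delta(1+3w)$ at $\mathrm{F}$ and $-(1-\delta)(1+3w)$ at $(\Sigma_+,\Omega)=(0,0)$, hence is indefinite for $0<\delta<1$, and the choices $\delta\leq 0$, $\delta\geq 1$ fail similarly (e.g.\ $\delta=1$ gives $4\Sigma_+^2+(1+3w)\Omega-8\Sigma_+$). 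So no tuning of $(c,\delta,\mu)$ produces a signed derivative, and this is precisely why the paper's function is of a different form: the matter-independent combination $s/(1-2s)$ satisfies $\big(\log\tfrac{s}{1-2s}\big)' = -6\Sigma_+$ exactly, and pairing it with $M_1^{-3/2}$ cancels all terms linear in $\Sigma_+$, yielding $Z_4 = M_1^{-3/2}\,s/(1-2s)$ with $(\log Z_4)' = -\tfrac{3}{2}q \leq 0$. Note also that $Z_4$ delivers more than confinement to $\partial\mathcal{X}_{\,\mathrm{II}}$: since $Z_4\to 0$ as $s\to 0$ and $Z_4\to\infty$ as $s\to\tfrac{1}{2}$ or $M_1\to 0$, it separates past from future, giving $\alpha(\gamma)\subset\overline{\mathcal{S}}_\sharp\cup\overline{\mathcal{X}}_{\mathrm{I}}$ and $\omega(\gamma)\subset\overline{\mathcal{S}}_\flat$---directional information your version would lack even if it were monotone, and without which Table~\ref{attractorsII} cannot be read off.

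The second gap is that ``reading off the global portrait'' is not automatic once limit sets are on the boundary, because in four anisotropy cases heteroclinic cycles lie exactly in the permitted boundary sets and every fixed point there is a saddle. In the case \Dminus, the one-parameter family of cycles through $\mathrm{T}_\sharp$ and $\mathrm{Q}_\sharp$ sits in $\overline{\mathcal{S}}_\sharp\cup\overline{\mathcal{X}}_{\mathrm{I}}$, where the $\alpha$-limits must lie; ruling these cycles out requires the paper's passage-time argument (the intervals $\Delta\tau_n$ spent near the two fixed points grow geometrically, and the alternating-sum estimate on $\log(\tfrac{1}{2}-s)$ contradicts $s\to\tfrac{1}{2}$). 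Conversely, in the cases \Bplus, \Cplus, \Dplus, the cycle $\partial\mathcal{S}_\flat$ lies in $\overline{\mathcal{S}}_\flat$, where the $\omega$-limits must lie; excluding it---so that $\mathrm{C}_\flat$ is indeed the future attractor---is done by extending $Z_3$ off the side into the interior and verifying via~\eqref{dz3} that $Z_3$ would have to increase near $\partial\mathcal{S}_\flat$ while vanishing there (positivity hinging on $\beta_1,\beta_2>2$). Your proposal asserts the outcome (``the $\alpha$-limit is the repelling Taub point or the heteroclinic cycle'') without either argument; these two exclusions are the genuinely technical content of the theorem beyond the phase portraits of Figs.~\ref{BianchiIfig},~\ref{Asharpfig},~\ref{Aflatfig}, and neither follows from monotonicity plus local analysis. (Your handling of the borderline cases $\beta=\beta_\flat(w)$, $\beta=\beta_\sharp(w)$ by the global functions on the sides, and your exclusion of interior fixed points, are by contrast correct and consistent with the paper.)
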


\begin{proof}
Consider the function $Z_4:\mathcal{X}_{\,\mathrm{II}}\cup\mathcal{V}_\mathrm{II}\to (0,\infty)$ given by
\[
Z_4=M_1^{-3/2}\frac{s}{1-2s}\:,
\]
which satisfies
\[
Z_4'=-3\left(\Sigma_+^2+\textfrac{1}{4}(1+3w)\Omega\right)Z_4 \quad\text{and}\quad
Z_4'''=-\textfrac{2}{3}Z_4M_1^4\: \text{ when }\,(\Sigma_+,\Omega)=(0,0)\:.
\]
It follows that the function $Z_4$ is strictly monotonically 
decreasing on $\mathcal{X}_\mathrm{II}\cup\mathcal{V}_\mathrm{II}$. 
By the monotonicity principle, the $\alpha$-limit set $\alpha(\gamma)$ and the $\omega$-limit 
set $\omega(\gamma)$ of every orbit $\gamma$ in the interior of $\mathcal{X}_\mathrm{II}$ must satisfy
\begin{equation}\label{IIwherealom}
\alpha(\gamma)\subset\overline{\mathcal{S}}_\sharp\cup\overline{\mathcal{X}}_{\mathrm{I}}\:,
\qquad\omega(\gamma)\subset\overline{\mathcal{S}}_\flat\:.
\end{equation}
It merely remains to study the flow induced on these boundary subsets to determine 
the possible $\alpha/\omega$-limit sets. 
This is done by inspection of the phase 
portraits of Figs.~\ref{BianchiIfig},~\ref{Asharpfig} and~\ref{Aflatfig}.

The invariant structures on the boundary subsets~\eqref{IIwherealom} 
that are potential 
$\alpha/\omega$-limit sets 
are fixed points, periodic orbits, and heteroclinic cycles/networks.
From Figs.~\ref{BianchiIfig},~\ref{Asharpfig} and~\ref{Aflatfig} we see that
periodic orbits do not occur, so that the
$\alpha/\omega$-limit sets of orbits in $\mathcal{X}_{\,\mathrm{II}}$
must be fixed points or heteroclinic cycles/networks (if present).
We begin by investigating the fixed points.

Consider, for instance, the interior of $\mathcal{X}_{\,\mathrm{I}}$,
see Fig.~\ref{BianchiIfig}. The only possible limit set in the 
interior of $\mathcal{X}_{\,\mathrm{I}}$ is the fixed point $\mathrm{F}$.
In the $\boldsymbol{-}$ cases, $\mathrm{F}$ is a saddle in $\mathcal{X}_{\,\mathrm{I}}$;
moreover, it is represented by a white circle in the subfigures of Fig.~\ref{BianchiIfig}, hence 
it is a repellor in the orthogonal direction (i.e., into the interior of
$\mathcal{X}_{\,\mathrm{II}}$). We infer that $\mathrm{F}$ is
the $\alpha$-limit set for a one-parameter set of orbits in $\mathcal{X}_{\,\mathrm{II}}$,
which makes $\mathrm{F}$  a possible $\alpha$-limit set, but not a generic repellor.
In the $\boldsymbol{+}$ cases, $\mathrm{F}$ is a sink in $\mathcal{X}_{\,\mathrm{I}}$;
since it is a repellor in the orthogonal direction, there exists
exactly one orbit in $\mathcal{X}_{\,\mathrm{II}}$ that possesses $\mathrm{F}$ as 
an $\alpha$-limit set.

Inspection of the fixed points on $\partial\mathcal{X}_{\,\mathrm{I}}$ yields
only one (generic) source of orbits in $\mathcal{X}_{\,\mathrm{II}}$:
the Taub point $\mathrm{T}_\flat$. This is in the $\boldsymbol{-}$ cases
and in the case \Aplus; see Figs.~\ref{biDmin}--\ref{biAplu}, where $\mathrm{T}_\flat$ is
a repellor within $\overline{\mathcal{X}}_{\mathrm{I}}$ and color-coded as
a repellor in the orthogonal direction.
The sinks we find are $\mathrm{Q}_\flat$ in the cases \Bminus, \Cminus, \Dminus\
and 
$\mathrm{R}_\flat$ in the subcase ${-1}<\beta\leq \beta_\flat(w)$ of \Aminus;
in the remaining cases there does not exist any sink on $\overline{\mathcal{X}}_{\mathrm{I}}$.

Turning to $\overline{\mathcal{S}}_\sharp$ we do not find any sources or
sinks, see Fig.~\ref{Asharpfig}. 
On $\overline{\mathcal{S}}_\flat$ there is the Taub point $\mathrm{T}_\flat$, which 
is a source in the $\boldsymbol{-}$ cases and in the case \Aplus, which is consistent
with what we already know.
The sinks are $\mathrm{Q}_\flat$ in the cases \Bminus, \Cminus, \Dminus,
$\mathrm{R}_\flat$ in the subcase ${-1}<\beta\leq \beta_\flat(w)$ of \Aminus,
and $\mathrm{C}_\flat$ in the subcase $\beta_\flat(w)< \beta < 0$ of \Aminus\
and in the $\boldsymbol{+}$ cases.

While in the cases \Apm, \Bminus, and \Cminus, fixed points are 
the only possible $\alpha$- and $\omega$-limit sets of 
interior orbits in 
$\mathcal{X}_{\,\mathrm{II}}$, in the cases \Dminus, \Bplus, \Cplus, and \Dplus,
there exist heteroclinic cycles on the boundary which may 
belong to the past/future attractor. Let us study these cases in more detail.

In the cases \Bplus, \Cplus, \Dplus, the
boundaries of the boundary subsets $\mathcal{S}_\sharp$, $\mathcal{X}_{\,\mathrm{I}}$,
and $\mathcal{S}_\flat$ form a heteroclinic network, which is a set of
entangled heteroclinic cycles, see Figs.~\ref{IIBCplus} and~\ref{IIDplus}.
In particular, in the cases \Bplus\ and \Cplus, 
the boundary of the set $\overline{\mathcal{S}}_\sharp\cup\overline{\mathcal{X}}_{\mathrm{I}}$,
where the $\alpha$-limit set of generic orbits must be located by~\eqref{IIwherealom},
is a heteroclinic cycle, which is of the type~\eqref{heteroI};
see the bold dashed lines in Fig.~\ref{IIBCplus}.
This heteroclinic cycle remains as the only possible $\alpha$-limit
set of (generic) orbits in $\mathcal{X}_{\,\mathrm{II}}$.
In the case \Dplus, 
the set $\overline{\mathcal{S}}_\sharp\cup\partial\mathcal{X}_{\mathrm{I}}$
consists of a heteroclinic network, cf.~Fig.~\ref{IIDplus}; the structure of this network 
is that of~\eqref{heteroI}, where there exists a one-parameter family of possible
paths from $\mathrm{T}_\sharp$ to $\mathrm{Q}_\sharp$
(since every orbit in $\overline{\mathcal{S}}_\sharp$ connects $\mathrm{T}_\sharp$ with
$\mathrm{Q}_\sharp$, see Fig.~\ref{SsharpD+}).
It is not clear which part of the heteroclinic 
network acts as the actual $\alpha$-limit set of orbits in $\mathcal{X}_{\,\mathrm{II}}$.
In any case, it is immediate that the 
approach toward the singularity is oscillatory for generic solutions in $\mathcal{X}_{\,\mathrm{II}}$.

To investigate the
future asymptotics in the cases \Bplus, \Cplus, \Dplus, we note 
that the boundary of $\mathcal{S}_\flat$ forms a heteroclinic cycle 
connecting the fixed points $\mathrm{T}_\flat$ and $\mathrm{Q}_\flat$. 
We claim that there are no interior orbits that have this heteroclinic 
cycle as the $\omega$-limit set, which leaves 
the fixed point $\mathrm{C}_\flat$ as the only possible $\omega$-limit 
for all interior orbits. To prove this claim, 
consider again the function $Z_3$ given by~\eqref{z3}; since $c<1$ for 
$\beta\geq 1$, we may regard $Z_3$ as a function that is defined on 
the entire state space $\mathcal{X}_{\,\mathrm{II}}$ (except at $\Sigma_+ = c$). 
Moreover,  since $\delta\in (0,1)$ 
for $\beta\geq 1$, $Z_3$ vanishes on the boundary of $\mathcal{S}_\flat$. 
Therefore, to prove our claim, it is enough to show that along any interior orbit, 
$Z_3$ would be increasing if the orbit approached some point on the 
boundary of $\mathcal{S}_\flat$. In fact, by a direct computation one can verify that
\begin{equation}\label{dz3}
\left[\textfrac{d}{d\tau}\log Z_3\right]_{\,|s=0}=
\textfrac{4(2+\beta)(1+3w)(\Sigma_+-c)^2}{3c(1-w)(1-\Sigma_+c)(\beta+\beta_1)(\beta+\beta_2)}\:,
\end{equation}
where
\[
\beta_1=\textfrac{7-3w}{3(1-w)}<\beta_2=\textfrac{3+w}{1-w}\:.
\]
Since $\beta_1,\beta_2>2$, for $w\in(-\textfrac{1}{3}, 1)$, the r.h.s. of~\eqref{dz3} is 
positive (unless $\Sigma_+=c$), which establishes the claim.

Finally, 
in the case \Dminus, there exists a family of heteroclinic cycles
containing the fixed points $\mathrm{T}_\sharp$ and $\mathrm{Q}_\sharp$, 
see Fig.~\ref{IIDminus}. 
The orbit that connects $\mathrm{T}_\sharp$ with $\mathrm{Q}_\sharp$ is the semi-circle of $\partial\mathcal{S}_\sharp$, 
whereas the orbits connecting $\mathrm{Q}_\sharp$ with $\mathrm{T}_\sharp$ 
lie on the Bianchi type~I boundary $\mathcal{X}_{\,\mathrm{I}}$, cf.~Fig.~\ref{biDmin}.
To show that there do not exist orbits in $\mathcal{X}_{\,\mathrm{II}}$ 
that have any of these heteroclinic cycles as the $\alpha$-limit set a rather technical analysis
is required; we merely outline the argument here.
Assume that there exists an orbit $\gamma \subset \mathcal{X}_{\,\mathrm{II}}$ 
such that $\alpha(\gamma) = \partial \mathcal{S}_\sharp$ (or any other of the cycles).
By assumption, there exists a sequence of intervals $(\tau_{n+1},\tau_n)$, $n\in \mathbb{N}$, 
such that the solution $\gamma(\tau)$ is in a neighborhood $\Sigma_+ \in (-1,-1+\epsilon)$ 
(and $s > \textfrac{1}{2} - \epsilon^\prime$) of $\mathrm{T}_\sharp$, i.e., $-1<\Sigma_+(\tau)<-1+\epsilon$ for
$\tau_{n+1} < \tau < \tau_n$, $n$ even,
and in a neighborhood $\Sigma_+ \in (1-\epsilon,1)$ of $\mathrm{Q}_\sharp$, 
i.e., $1-\epsilon < \Sigma_+ <1$ for $\tau_{n+1} < \tau < \tau_n$, $n$ odd.
In the limit $\tau\rightarrow -\infty$, eq.~\eqref{IIs} decouples from~\eqref{IIsig+} and~\eqref{IIM1},
and we are able to compute the times $\Delta\tau_n = \tau_n - \tau_{n+1}$
the solution spends in the neighborhood of 
$\mathrm{T}_\sharp$ and $\mathrm{Q}_\sharp$, respectively.
We find that $\Delta\tau_n$ is increasing in such a way that $\Delta\tau_n$ is strictly larger
than the $\Delta\tau_{n-1}$; more specifically, there exists $c > 1$ 
(which is independent of $\epsilon$) such that
$\Delta\tau_n > c \,\Delta\tau_{n-1}$.
Considering~\eqref{IIs} we see that $\textfrac{1}{2} - s$ increases (in the past direction) at least like
$e^{6 (1-\epsilon) \tau}$ for $\tau\in (\tau_{n+1},\tau_n)$, $n$ even,
i.e., while the solution is in a neighborhood of $\mathrm{T}_\sharp$.
Conversely, $\textfrac{1}{2} - s$ decreases (toward the past) at most like
$e^{-6 \tau}$ for $\tau\in (\tau_{n+1},\tau_n)$, $n$ odd,
i.e., while the solution is in a neighborhood of $\mathrm{Q}_\sharp$.
Since, in the limit $\tau\rightarrow -\infty$, 
the time the solution spends away from the fixed points becomes
negligible as compared to the times $\Delta\tau_n$, 
the behavior of $s-\textfrac{1}{2}$ 
is represented by the alternating sequence
\[
\textfrac{1}{2} - s \sim 
\exp\Big[6 \Big( (1-\epsilon) \Delta\tau_0 - \Delta\tau_1 +(1-\epsilon) \Delta\tau_2 - \Delta\tau_3 + 
(1-\epsilon)\Delta\tau_4 +\cdots\Big)\Big] \:.
\]
Since $\Delta\tau_n > (1-\epsilon)^{-1} \Delta\tau_{n-1}$, $n$ even,
this behavior contradicts the assumption that $\alpha(\gamma) = \partial \mathcal{S}_\sharp$,
since $\textfrac{1}{2} -s \not\rightarrow 0$.
This establishes the claims of the theorem.
\end{proof}

\textit{Interpretation of Theorem~\ref{BianchiIItheo}}. 
In the $\boldsymbol{+}$ cases and in the subcase $\beta_\flat(w)<\beta< 0$
of \Aminus, the future asymptotic behavior of LRS Bianchi type~II solutions is
governed by the approach to the solution represented by the fixed point $\mathrm{C}_\flat$.
In Appendix~\ref{exact} we compute this solution explicitly, see~\eqref{cfsol},
and demonstrate that it constitutes a natural generalization of 
the Collins-Stewart perfect fluid solution~\eqref{CSsol}.
(Recall that for type~II perfect fluid cosmologies, the Collins-Stewart solution is
the future attractor.)
In the subcase $-1 < \beta \leq \beta_\flat(w)$ of \Aminus, generic type~II solutions
are asymptotic to the solution represented by $\mathrm{R}_\flat$, which is
given as~\eqref{rfsol} in Appendix~\ref{exact}; note that this is a non-vacuum
solution because $\Omega \neq 0$.
Finally, in the cases \Bminus, \Cminus, \Dminus, every generic solution 
is future asymptotic to the non-flat LRS Kasner solution~\eqref{solQ}, which is a vacuum solution.
The past asymptotic behavior of anisotropic type~II solutions resembles the 
behavior of perfect fluid solutions in the $\boldsymbol{-}$ cases and
in the case \Aplus: Towards the singularity, solutions behave
asymptotically like the Taub solution. ``Matter does not matter'' in these cases, since
the behavior coincides with that of vacuum solutions. 
However, in the cases \Bplus, \Cplus, \Dplus, 
\textit{anisotropic matter matters}. (Note that \Bplus\ is consistent
with the energy conditions.) 
The approach to the singularity is oscillatory in these cases; the solution oscillates between
the Taub family~\eqref{taub} and the non-flat LRS family~\eqref{solQ}.
Since $\Omega\not\rightarrow 0$ for these solutions, this is an example of asymptotic
behavior that differs from the behavior of vacuum solutions.

\begin{Remark}
An example of a type \Bplus~matter model is collisionless (Vlasov)
matter for zero mass particles. In this special case, the existence of 
oscillations toward the past singularity for LRS Bianchi type~II solutions 
was already observed in~\cite{RT}. We refer to~\cite{letter} for an extension of the analysis in~\cite{RT}.   
\end{Remark}

\begin{figure}[Ht!]
\begin{center}
\psfrag{tb}[cc][cc][0.7][0]{$\mathrm{T}_\flat$}
\psfrag{ts}[cc][cl][0.7][0]{$\mathrm{T}_\sharp$}
\psfrag{qb}[cc][cc][0.7][0]{$\mathrm{Q}_\flat$}
\psfrag{qs}[cc][cr][0.7][0]{$\mathrm{Q}_\sharp$}
\psfrag{f}[cc][cr][0.7][0]{$\mathrm{F}$}
\psfrag{cs}[cc][cc][0.7][0]{$\mathrm{C}_\sharp$}
\psfrag{d}[cc][cr][0.7][0]{$\mathrm{R}_\sharp$}
\psfrag{r}[cc][cc][0.7][0]{$\mathrm{R}_\flat$}
\psfrag{cb}[cc][cc][0.7][0]{$\mathrm{C}_\flat$}
\subfigure[\Dminus]{\label{IIDminus}\includegraphics[width=0.35\textwidth]{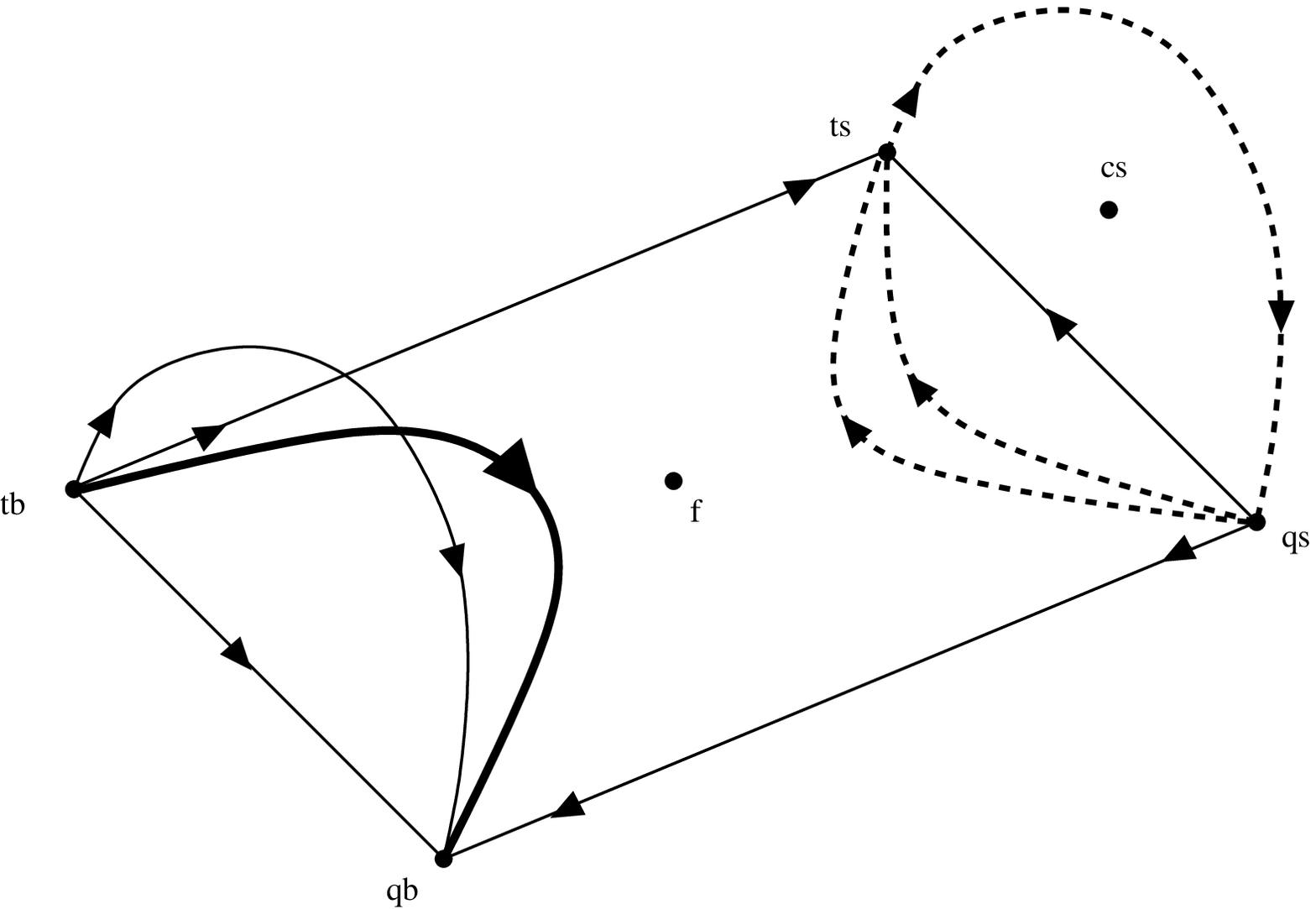}}\qquad\quad
\psfrag{T}[cc][cc][0.7][0]{$\mathrm{T}_\flat$}
\psfrag{T*}[cc][cc][0.7][0]{$\mathrm{T}_\sharp$}
\psfrag{Q}[cc][cc][0.7][0]{$\mathrm{Q}_\flat$}
\psfrag{Q*}[cc][cc][0.7][0]{$\mathrm{Q}_\sharp$}
\psfrag{D*}[cc][cc][0.7][0]{$\mathrm{R}_\sharp$}
\psfrag{F}[cc][cc][0.7][0]{$\mathrm{F}$}
\subfigure[\Bminus, \Cminus]{\includegraphics[width=0.35\textwidth]{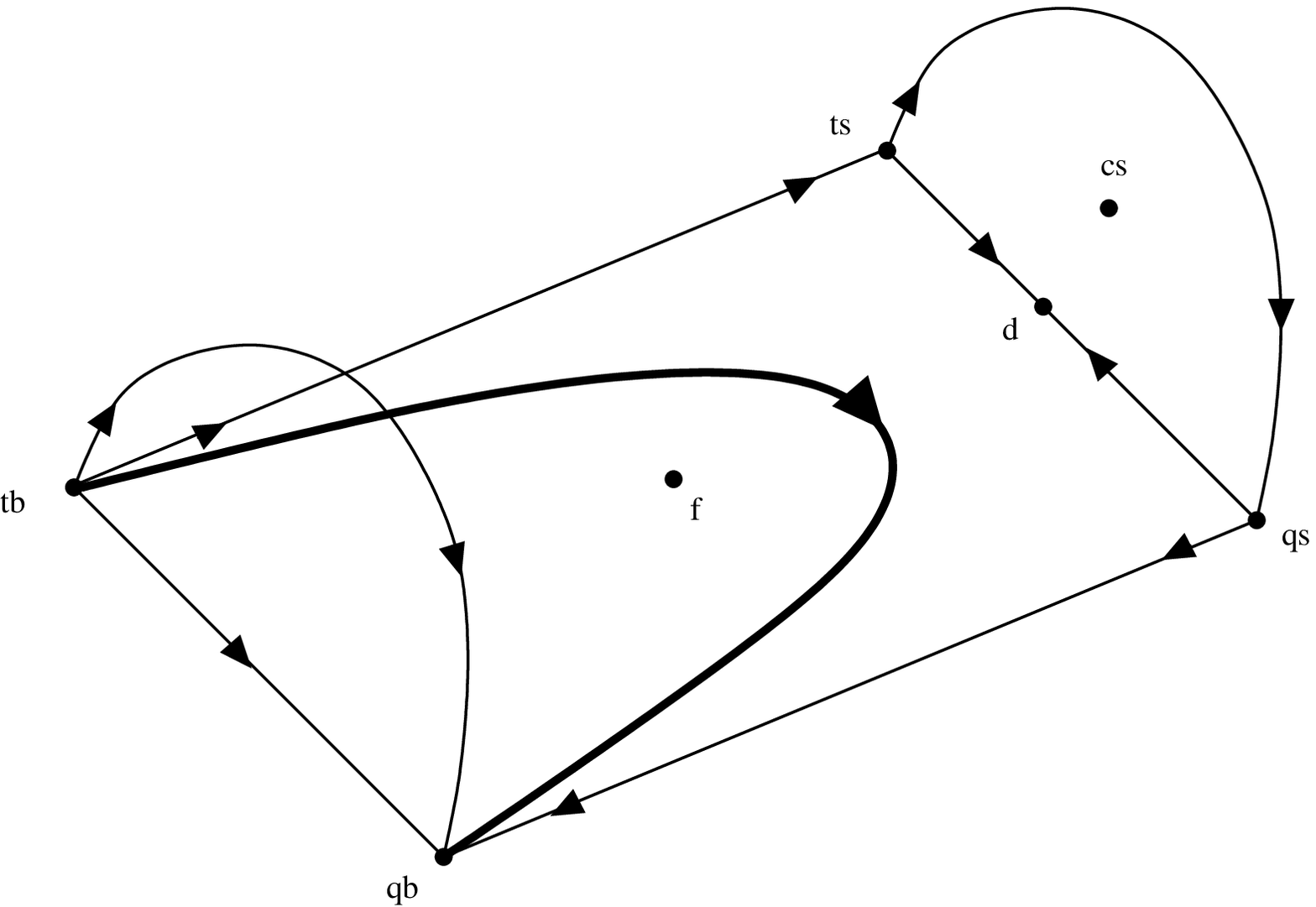}}\\
\psfrag{T}[cc][cc][0.7][0]{$\mathrm{T}_\flat$}
\psfrag{T*}[cc][cc][0.7][0]{$\mathrm{T}_\sharp$}
\psfrag{Q}[cc][cc][0.7][0]{$\mathrm{Q}_\flat$}
\psfrag{Q*}[cc][cc][0.7][0]{$\mathrm{Q}_\sharp$}
\psfrag{R}[cc][cc][0.7][0]{$\mathrm{R}_\flat$}
\psfrag{D*}[cc][cc][0.7][0]{$\mathrm{R}_\sharp$}
\psfrag{F}[cc][cc][0.7][0]{$\mathrm{F}$}
\subfigure[\Aminus~$(-1<\beta\leq\beta_\flat)$]{\includegraphics[width=0.35\textwidth]{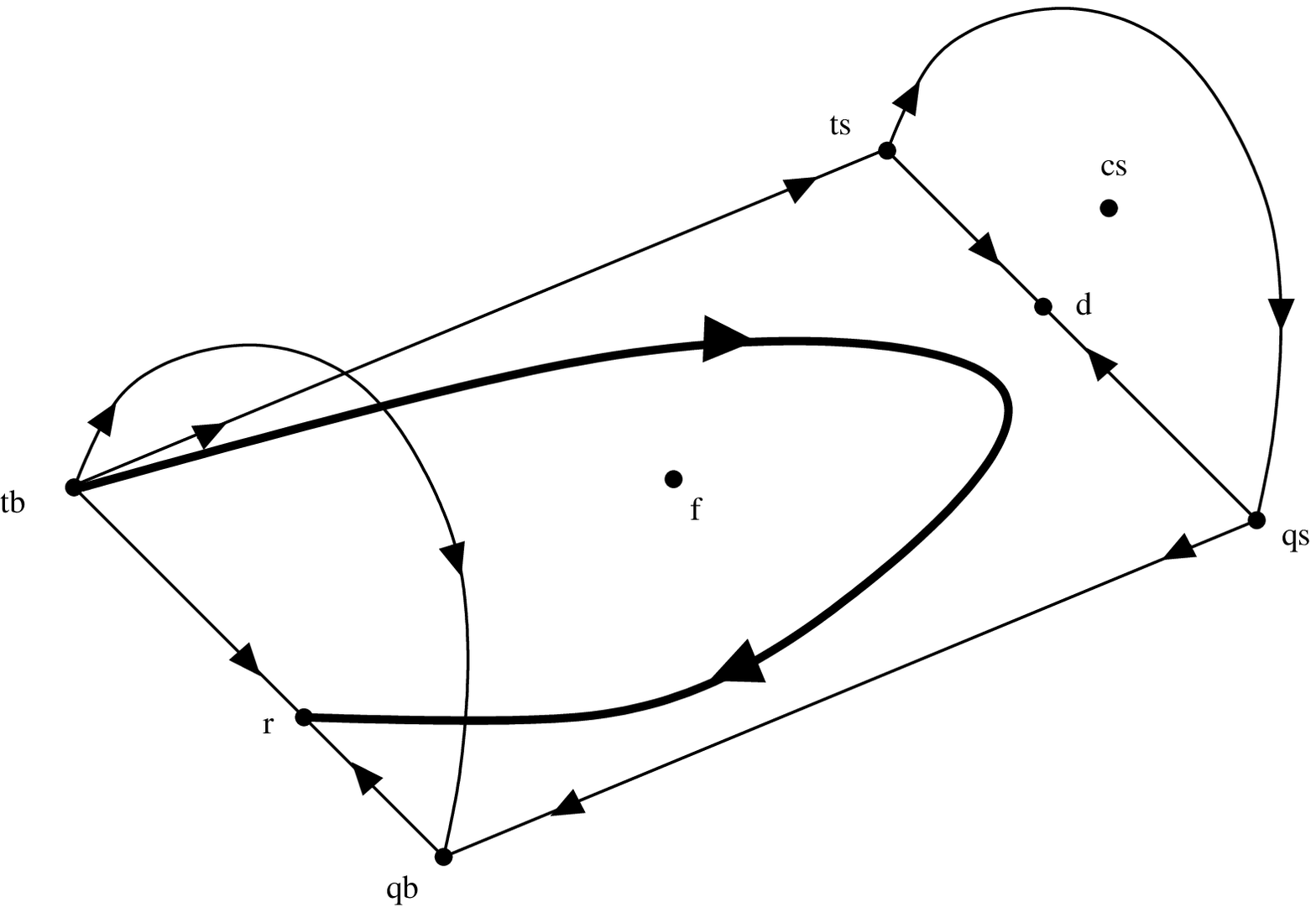}}\qquad\quad
\psfrag{T}[cc][cc][0.7][0]{$\mathrm{T}_\flat$}
\psfrag{T*}[cc][cc][0.7][0]{$\mathrm{T}_\sharp$}
\psfrag{Q}[cc][cc][0.7][0]{$\mathrm{Q}_\flat$}
\psfrag{Q*}[cc][cc][0.7][0]{$\mathrm{Q}_\sharp$}
\psfrag{R}[cc][cc][0.7][0]{$\mathrm{R}_\flat$}
\psfrag{D*}[cc][cc][0.7][0]{$\mathrm{R}_\sharp$}
\psfrag{F}[cc][cc][0.7][0]{$\mathrm{F}$}
\subfigure[\Azerominus, \Aminus $(\beta_\flat<\beta<0)$]{\includegraphics[width=0.35\textwidth]{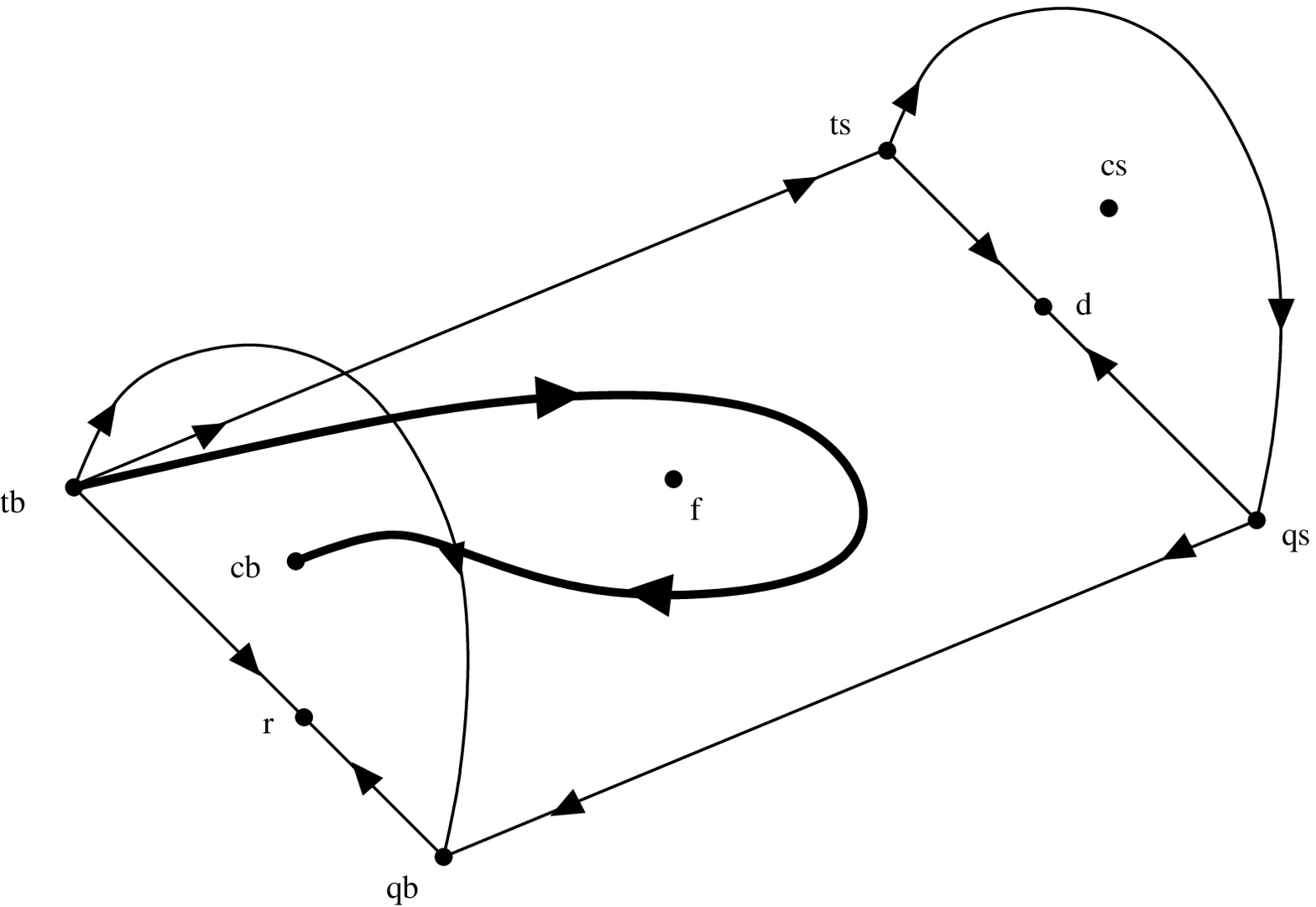}}\\
\psfrag{T}[cc][cc][0.7][0]{$\mathrm{T}_\flat$}
\psfrag{T*}[cc][cc][0.7][0]{$\mathrm{T}_\sharp$}
\psfrag{Q}[cc][cc][0.7][0]{$\mathrm{Q}_\flat$}
\psfrag{Q*}[cc][cc][0.7][0]{$\mathrm{Q}_\sharp$}
\psfrag{R}[cc][cc][0.7][0]{$\mathrm{R}_\flat$}
\psfrag{D*}[cc][cc][0.7][0]{$\mathrm{R}_\sharp$}
\psfrag{F}[cc][cc][0.7][0]{$\mathrm{F}$}
\subfigure[\Azeroplus, \Aplus~$(0<\beta<\beta_\sharp)$]{\includegraphics[width=0.35\textwidth]{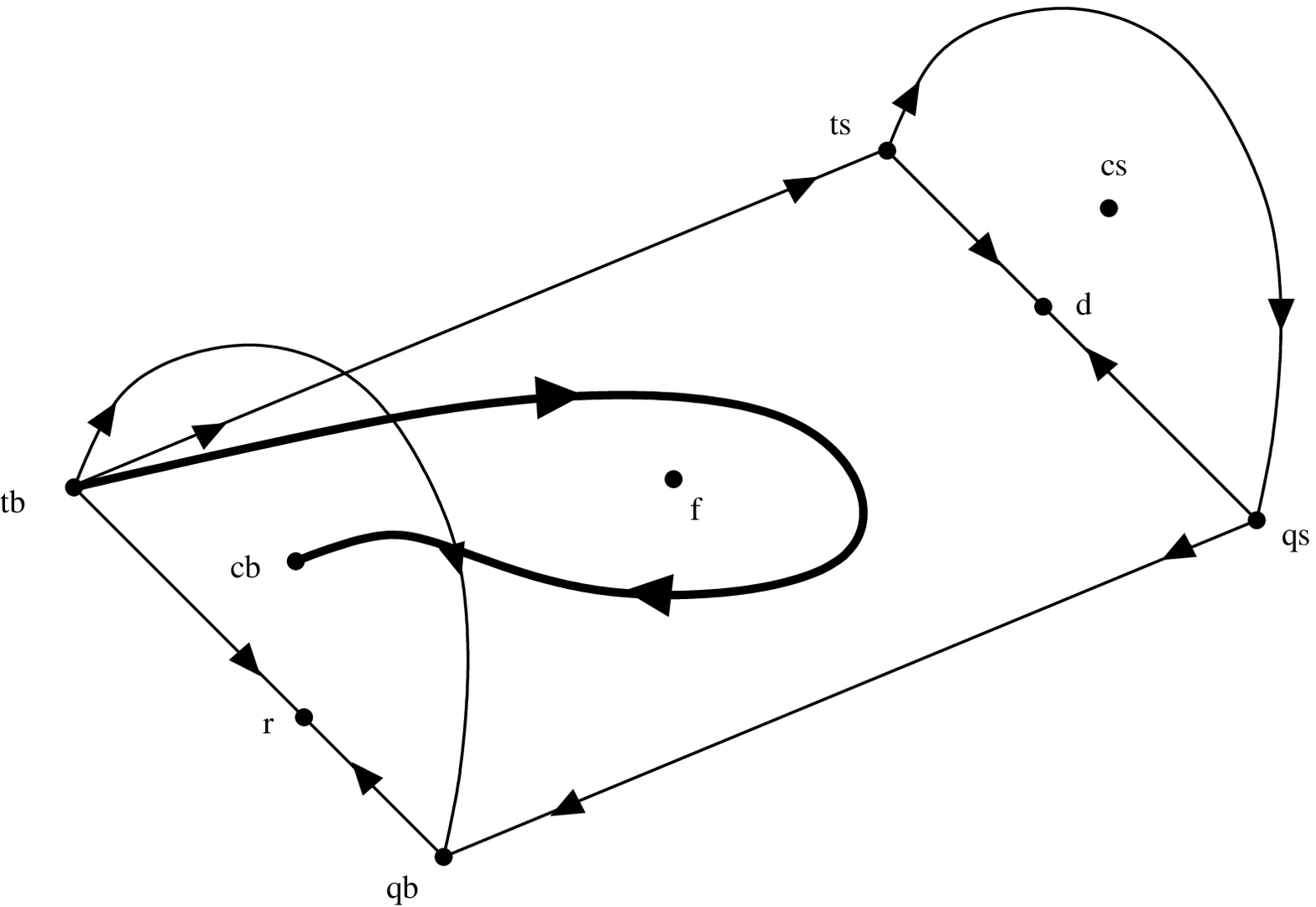}}\qquad\quad
\psfrag{T}[cc][cc][0.7][0]{$\mathrm{T}_\flat$}
\psfrag{T*}[cc][cc][0.7][0]{$\mathrm{T}_\sharp$}
\psfrag{Q}[cc][cc][0.7][0]{$\mathrm{Q}_\flat$}
\psfrag{Q*}[cc][cc][0.7][0]{$\mathrm{Q}_\sharp$}
\psfrag{R}[cc][cc][0.7][0]{$\mathrm{R}_\flat$}
\psfrag{D*}[cc][cc][0.7][0]{$\mathrm{R}_\sharp$}
\psfrag{F}[cc][cc][0.7][0]{$\mathrm{F}$}
\subfigure[\Aplus$~(\beta_\sharp\leq\beta<1)$]{\includegraphics[width=0.35\textwidth]{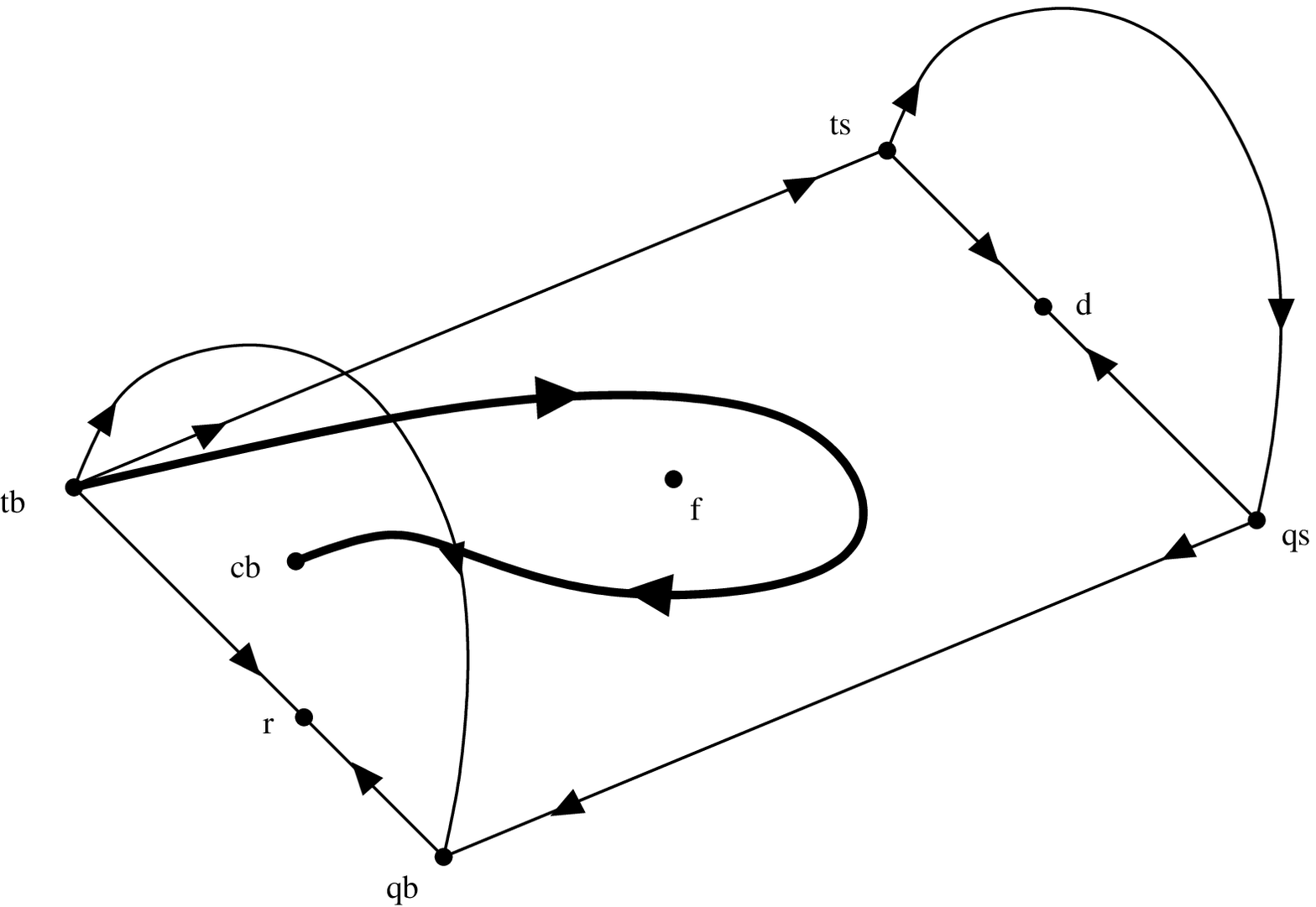}}\\
\psfrag{T}[cc][cc][0.7][0]{$\mathrm{T}_\flat$}
\psfrag{T*}[cc][cc][0.7][0]{$\mathrm{T}_\sharp$}
\psfrag{Q}[cc][cc][0.7][0]{$\mathrm{Q}_\flat$}
\psfrag{Q*}[cc][cc][0.7][0]{$\mathrm{Q}_\sharp$}
\psfrag{D*}[cc][cc][0.7][0]{$\mathrm{R}_\sharp$}
\psfrag{F}[cc][cc][0.7][0]{$\mathrm{F}$}
\subfigure[\Bplus, \Cplus]{\label{IIBCplus}\includegraphics[width=0.35\textwidth]{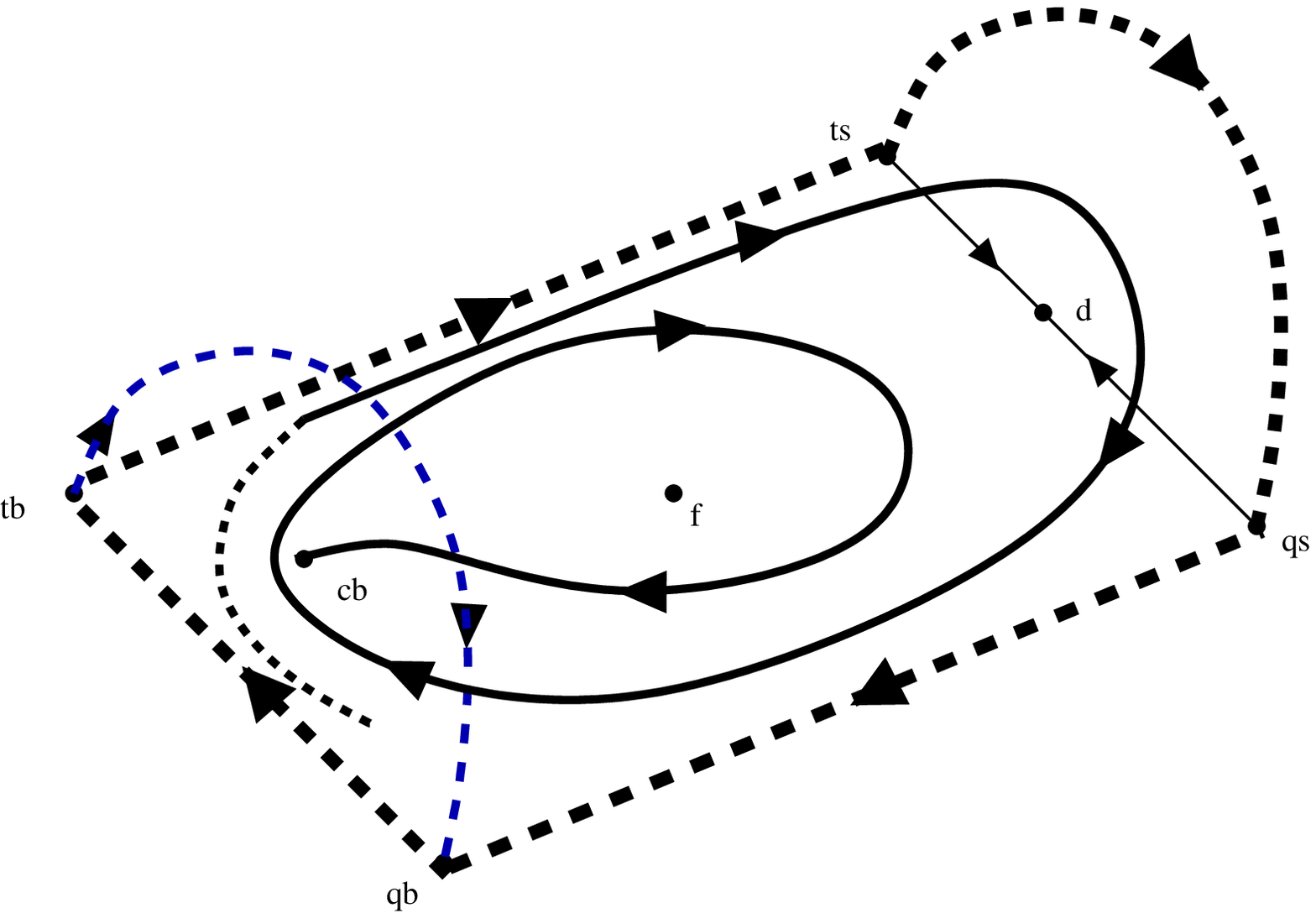}\label{VlasovII}}\qquad\quad
\psfrag{T}[cc][cc][0.7][0]{$\mathrm{T}_\flat$}
\psfrag{T*}[cc][cc][0.7][0]{$\mathrm{T}_\sharp$}
\psfrag{Q}[cc][cc][0.7][0]{$\mathrm{Q}_\flat$}
\psfrag{Q*}[cc][cc][0.7][0]{$\mathrm{Q}_\sharp$}
\psfrag{D*}[cc][cc][0.7][0]{$\mathrm{R}_\sharp$}
\psfrag{F}[cc][cc][0.7][0]{$\mathrm{F}$}
\subfigure[\Dplus]{\label{IIDplus}\includegraphics[width=0.35\textwidth]{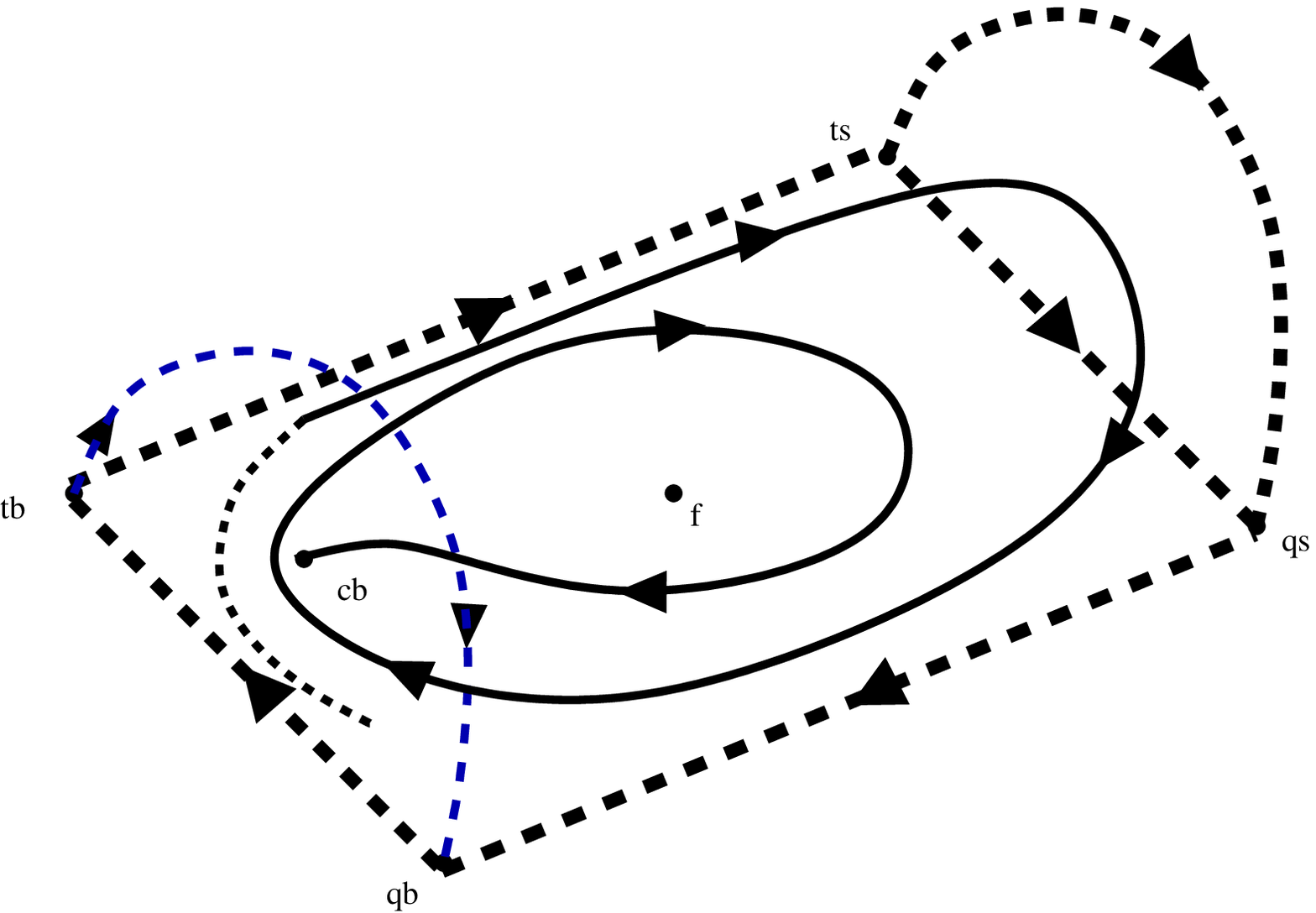}}
\end{center}
\caption{Phase portraits of Bianchi type~II orbits. 
Bold lines depict the behavior of typical orbits. 
Dashed lines are orbits which lie on the boundary of the state space 
and form heteroclinic cycles/networks. 
In the cases \Bplus, \Cplus, \Dplus, the heteroclinic cycle/network 
depicted with bold dashed lines is the past attractor of interior orbits. 
In all other cases, the past attractor is the fixed point $\mathrm{T}_\flat$. 
The future attractor is a fixed point in all cases. There exist non-generic orbits whose
behavior toward the past or toward the future is different; 
see Table~\ref{alfaII}.}
\label{BianchiIIfig}
\end{figure}

\begin{table}
\begin{center}
\begin{tabular}{|c|c|c|}
\hline  & &   \\[-2ex]
Type  & Past Attractor & Future Attractor \\ \hline  & &  \\[-2ex] 
\Bminus, \Cminus, \Dminus & $\mathrm{T}_\flat$ & $\mathrm{Q}_\flat$\\
\Aminus {\scriptsize ($\beta \leq \beta_\flat$)}  & $\mathrm{T}_\flat$ & $\mathrm{R}_\flat$\\
\Aminus {\scriptsize($\beta_\flat<\beta$)}, \Azero, \Aplus &  $\mathrm{T}_\flat$ & $\mathrm{C}_\flat$\\
\Bplus, \Cplus & Heteroclinic cycle & $\mathrm{C}_\flat$\\
\Dplus & Heteroclinic cycle/network & $\mathrm{C}_\flat$\\
\hline 
\end{tabular}
\caption{Past and future attractor of orbits in $\mathcal{X}_{\,\mathrm{II}}$, which
represent LRS Bianchi type~II cosmological models.
The exact solutions represented by the fixed points are given in Sec.~\ref{perfectfluid} and 
in Appendix~\ref{exact}.}
\label{attractorsII}
\end{center}
\end{table}

\begin{table}
\begin{center}
\begin{tabular}{|c|c|c|c|c|c|c|}
\hline 
\multirow{2}{*}{Fixed point} & \multicolumn{6}{|c|}{Dimension of unstable [{\small stable}] 
manifold intersected with $\mathcal{X}_{\,\mathrm{II}}$} 
\\[0.2ex]
& \Dminus, \Cminus, \Bminus & \Aminus {\scriptsize ($\beta \leq \beta_\flat$)} & 
\Aminus  {\scriptsize($\beta_\flat<\beta$)} & \Aplus  {\scriptsize($\beta < \beta_\sharp$)} & 
\Aplus  {\scriptsize($\beta_\sharp\leq \beta$)} & \Bplus, \Cplus, \Dplus \\ \hline 
& & & & & &  \\[-1.3ex]
$\mathrm{T}_\flat$ & \textbf{3} & \textbf{3} & \textbf{3} & \textbf{3} & \textbf{3} &  \\
$\mathrm{Q}_\flat$ & {\small [3]} & & & & & \\
$\mathrm{R}_\flat$ & & {\small [3]} & & 2 & 2 &   \\
$\mathrm{R}_\sharp$ & & & & 2 & & \\
$\mathrm{F}$ & 2 & 2 & 2 & 1 & 1 & 1  \\
$\mathrm{C}_\flat$ &  & & {\small [3]} & {\small [3]} & {\small [3]} & {\small [3]}  \\
$\mathrm{C}_\sharp$ & 1 & 1 & 1 & 1 & & \\
het.\ cycle & & & & & &  \textbf{3} \\\hline
\end{tabular}
\caption{Possible $\alpha$- and $\omega$-limit sets of LRS Bianchi type~II cosmological models.
Each Bianchi type~II solution corresponds to an orbit in $\mathcal{X}_{\,\mathrm{II}}$; hence 
the intersection of the unstable [stable] manifold of a fixed point with $\mathcal{X}_{\,\mathrm{II}}$ yields the set of type~II
solutions converging to that fixed point as $t\rightarrow 0$ [$t\rightarrow \infty$]. If 
this set is three-dimensional, the fixed point is a source [sink]; if the dimension is two,
there exists a one-parameter set of orbits converging to the fixed point; if the dimension is one,
there is only one orbit with that property. Center manifold analysis shows that 
the cases \Azeroplus\ and \Azerominus\ behave like
\Aplus\ ($0<\beta< \beta_\sharp$) and \Aminus\ ($\beta_\flat <\beta < 0$), respectively.}
\label{alfaII}
\end{center}
\end{table}

\section{Bianchi type IX: Setup}\label{B89sec}

\subsection{The reduced dynamical system for Bianchi types~IX (and~VIII)}
\label{redVIIIIX}

For the Bianchi types VIII and IX, the structure constants $\hat{n}_1$
and $\hat{n}_2 = \hat{n}_3$ are non-zero; 
according to Table~\ref{tab1} 
we have $\hat{n}_1^2 = 1$ and
$\hat{n}_1 \hat{n}_2 = \mp 1$ in type~VIII and type~IX, respectively.
Setting $\hat{n}_1^2 = 1$ and $\hat{n}_1 \hat{n}_2 = \mp 1$
in the dynamical system~\eqref{domsys} 
yields a system of five coupled equations; however, 
the constraints~\eqref{bothcons} imply that of the 
five variables in~\eqref{domsys} only three are independent.

\begin{Remark}
We note that $H_D \neq 1$ (i.e., $H \neq D$),
hence the normalization in~\eqref{normvars} is not the standard Hubble-normalization.
For Bianchi type~VIII (where $\hat{n}_1 \hat{n}_2 = {-1}$)
the constraint~\eqref{1stcon} implies that
$H_D > 1$ (when we restrict our attention to expanding cosmological models).
Bianchi type~IX (where $\hat{n}_1 \hat{n}_2 ={+1}$),
is characterized by $H_D \in (-1,1)$. 
\end{Remark}

It is not immediately evident, whether 
a three-dimensional dynamical system with suitable regularity properties
can be constructed from the constrained system~\eqref{domsys}.
To achieve this aim, 
in a first step, we closely follow~\cite{HRU}:
We use~\eqref{1stcon} to express $M_2$ in terms of $H_D$ and $M_1$, 
so that~\eqref{2ndcon} becomes
\begin{equation}\label{sofHDM1}
s = \Big( 2 + \frac{3 \hat{n}_1 \hat{n}_2 ( 1- H_D^2)}{M_1^2}\Big)^{-1} =
\Big( 2 + \frac{3 |1- H_D^2|}{M_1^2}\Big)^{-1}\:,
\end{equation}
which expresses $s$ in terms of $H_D$ and $M_1$.
The independent variables we choose
are thus $H_D$, $\Sigma_+$, and $M_1$, and the dynamical system~\eqref{domsys}
reduces to
\begin{subequations}\label{89syst}
\begin{align}
\label{HDEq2}
H_D^\prime & = -(1-H_D^2) (q - H_D \Sigma_+) \:,\\[0.5ex]
\label{Sig+Eq}
\Sigma_+^\prime & = -(2- q) H_D\Sigma_+ - (1-H_D^2) (1-\Sigma_+^2) + \textfrac{1}{3}\, \hat{n}_1^2 M_{1}^2 
+ 3\Omega \,\big(u(s) -w\big) \:,\\[0.5ex]
\label{M1Eq}
M_{1}^\prime & = M_{1} \big( q H_D - 4 \Sigma_+ + (1-H_D^2) \Sigma_+ \big)\:, 
\end{align}
\end{subequations}
where $s$ is regarded as a function of $H_D$ and $M_1$, see~\eqref{sofHDM1}.
The Hamiltonian constraint~\eqref{gaussconsimple},
\begin{equation}
\Omega = 1 - \Sigma_+^2 - \textfrac{1}{12} \, \hat{n}_1^2 M_{1}^2 \:,
\end{equation}
is used to solve for $\Omega$,
and
$q$ is given by $q = 2 \Sigma_+^2 + \textfrac{1}{2} ( 1 + 3 w) \Omega$.
Note that the system~\eqref{89syst} is identical for Bianchi type~VIII and
type~IX; however, the state spaces are different: $H_D > 1$ for
type~VIII and $H_D \in (-1,1)$ for type~IX.

Anticipating the analysis to follow,
let us note that the system~\eqref{89syst} does \textit{not}
have the regularity properties that are necessary 
to perform a satisfactory dynamical systems analysis. This is due to the fact that the variable $s$ 
and thus the function $u(s)$ in~\eqref{Sig+Eq} does 
not have a limit when $H^2_D\to 1$ and $M_1\to 0$. 
To remedy this problem, it will become inevitable to
introduce yet another system of coordinates in a neighborhood of the 
`bad' part of the boundary.

\begin{Remark}
Let us comment on the role of the boundaries.
In the case of Bianchi spacetimes where the matter is assumed to be 
vacuum or a perfect fluid, 
the Bianchi types~VIII and~IX are characterized by the maximal
number of `true degrees of freedom': The dynamical system
describing the dynamics of (diagonal) vacuum [perfect fluid] models of
Bianchi type~VIII or~IX is of dimension four [five];
in the LRS case the dimension is two [three].
In contrast, for the lower
Bianchi types, one or several degrees of freedom are `gauge',
and the dynamical systems describing the lower Bianchi types
are of lower dimension.
In particular, one finds that the state spaces and the dynamics 
of the lower Bianchi types appear on the boundary 
of the state spaces of the higher Bianchi types. This
hierarchy of invariant subsets (which is called the Lie contraction hierarchy,
see Appendix~\ref{Liecontractions})
plays a fundamental 
role in the dynamical systems analysis of cosmological models, see~\cite{HU}. 
In the case of Bianchi spacetimes where the matter is assumed to be
anisotropic, such a hierarchy exists only partially.
While the Bianchi type~I state space appears as the boundary subset of the type~II state space, see Sec.~\ref{sec:II},
the type~II state space is not directly related to a boundary subset of the type~VIII/IX
state space. This is a simple consequence of the fact that 
the two state spaces have the same dimension.
\end{Remark}

\subsection{Kantowski-Sachs and LRS Bianchi type~III models}
\label{KSIII}

In this paper we refrain from giving an analysis of 
the dynamics Kantowski-Sachs and LRS Bianchi type~III models. However,
Kantowski-Sachs and LRS Bianchi type~III play a certain role in the analysis 
of the flow on the boundary of the state spaces of Bianchi type~VIII and~IX.

As observed in Appendix~\ref{lrsexplained}, 
the spatial Ricci curvature of Kantowski-Sachs
and LRS Bianchi type~III models are obtained from the
Ricci tensor of Bianchi class~A models by formally setting
$\hat{n}_1^2$ to zero 
and $\hat{n}_1 \hat{n}_2$ to ${+1}$ or ${-1}$, respectively.
As a consequence, the same is true for every system of equations
describing these models. In particular,
the (reduced) dynamical system representing Kantowski-Sachs models and
LRS Bianchi type~III models is obtained from~\eqref{domsys} by 
formally setting $\hat{n}_1^2$ to zero 
and $\hat{n}_1 \hat{n}_2$ to ${+1}$ or ${-1}$, respectively.
In complete analogy to the considerations in the type~VIII and the type~IX case,
we are led to the dynamical system~\eqref{89syst} and the constraint~\eqref{gaussconsimple},
where 
\begin{equation*} 
\hat{n}_1^2  \equiv 0 \:.
\end{equation*}
Again, the variable $s$ is given in terms of $M_1$ and $H_D$ by~\eqref{sofHDM1}.
Note that the system~\eqref{89syst} is identical for Kantowski-Sachs and 
LRS type~III; however, the state spaces are different: 
\begin{subequations}
\begin{align}
&\mathcal{X}_{\mathrm{KS}} = \Big\{ (H_D, \Sigma_+, M_1) \:\big|\: H_D \in (-1,1)\,,\: M_1 > 0\,,\: 
\Sigma_+^2  < 1 \Big\}\:,\\
\label{typeIIIstatespace}
&\mathcal{X}_{\mathrm{III}} = \Big\{ (H_D, \Sigma_+, M_1) \:\big|\: H_D > 1\,,\: M_1 > 0\,,\: 
\Sigma_+^2 < 1 \Big\}\:.
\end{align}
\end{subequations}

\begin{Remark}
In the case where the matter is a perfect fluid, i.e., $u(s) \equiv w$,
the equation for $M_1$ decouples from~\eqref{89syst} and the
reduced dynamical system for Kantowski-Sachs models and LRS type~III
models is merely two-dimensional, the equations being~\eqref{HDEq2} and~\eqref{Sig+Eq}.
However, in the case of an anisotropic matter source, the system~\eqref{89syst} 
is a coupled system, since the variable $s$, which appears in~\eqref{Sig+Eq},
depends on $M_1$ by~\eqref{sofHDM1}.
\end{Remark}

\section{Bianchi type IX: Analysis}\label{B9sec}

The dynamics 
of LRS models of Bianchi type~IX are represented by
the reduced dynamical system~\eqref{89syst}, where $H_D \in (-1,1)$ (and $\hat{n}_1^2 = 1$).
Accordingly, 
the state space of LRS Bianchi type~IX models is
\[
\mathcal{X}_{\mathrm{IX}} = \Big\{ (H_D, \Sigma_+, M_1) \:\big|\: H_D \in (-1,1)\,,\: M_1 > 0\,,\: 
\Sigma_+^2 + \textfrac{1}{12} M_1^2 < 1 \Big\}\:,
\]
which is relatively compact.
The state space $\mathcal{X}_{\mathrm{IX}}$ is depicted in Fig.~\ref{b9state};
like $\mathcal{X}_{\,\mathrm{II}}$ it has the form of a tent; note, however,
the difference in the set of variables.

\begin{figure}[Ht]
\begin{center}
\psfrag{-1}[cc][cc][0.7][0]{$-1$}
\psfrag{1}[cc][cc][0.7][0]{$1$}
\psfrag{0}[cc][cc][0.7][0]{$-1$}
\psfrag{12}[cc][cc][0.7][0]{$1$}
\psfrag{sig}[cc][cc][0.8][0]{$\Sigma_+$}
\psfrag{m1}[cc][cc][0.8][0]{$M_1$}
\psfrag{ss}[cc][cc][1][0]{$\mathcal{S}_\sharp$}
\psfrag{v}[cc][cc][1][0]{$\mathcal{V}_\mathrm{IX}$}
\psfrag{xi-}[cc][cc][1][0]{$\mathcal{B}_{\mathrm{IX}}^-$}
\psfrag{xi+}[cc][cc][1][0]{$\mathcal{B}_{\mathrm{IX}}^+$}
\psfrag{s}[cc][cc][0.8][0]{$H_D$}
\psfrag{l}[cc][cc][1][-40]{$\mathcal{L}_\mathrm{I}$}
\psfrag{h0}[cc][cc][0.8][-45]{$\mathbf{H_D=0}$}
\includegraphics[width=0.5\textwidth]{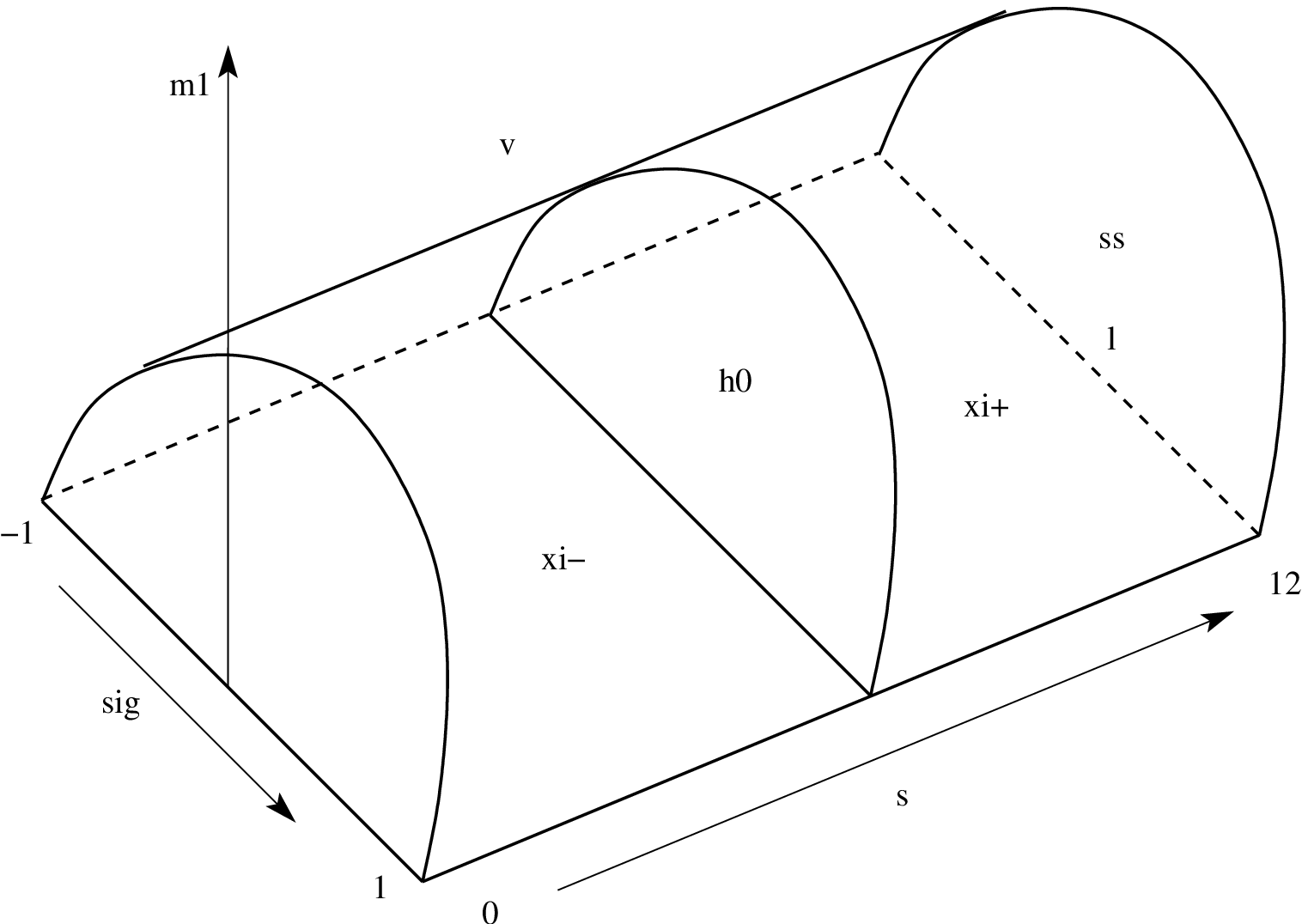}
\end{center}
\caption{The state space $\mathcal{X}_\mathrm{IX}$.}
\label{b9state}
\end{figure}

The state space $\mathcal{X}_\mathrm{IX}$ admits a natural
decomposition into two `halves' that are separated by a plane, i.e.,
\[
\mathcal{X}_{\mathrm{IX}}=\mathcal{X}_{\mathrm{IX}}^{+} \cup \mathcal{X}_{\mathrm{IX}}^{0} \cup \mathcal{X}_{\mathrm{IX}}^{-}\:,
\]
where 
\[
\mathcal{X}_{\mathrm{IX}}^{+} = \{H_D > 0 \} \subset \mathcal{X}_\mathrm{IX}\:,
\quad
\mathcal{X}_{\mathrm{IX}}^{0} = \{H_D = 0 \} \subset \mathcal{X}_\mathrm{IX}\:,
\quad
\mathcal{X}_{\mathrm{IX}}^{-} = \{H_D < 0 \} \subset \mathcal{X}_\mathrm{IX}\:.
\]
The plane $\mathcal{X}_\mathrm{IX}^{0}$ acts as a `semipermeable membrane'
for the flow of~\eqref{89syst}, since
\[
{H_D'}_{\,|H_D=0}=-\big(2\Sigma_+^2+\textfrac{1}{2}(1+3w)\Omega\big)<0\:.
\]
This 
$\mathcal{X}_\mathrm{IX}^{+}$ a subset that is 
past invariant under the flow of~\eqref{89syst}, while
$\mathcal{X}_\mathrm{IX}^{+}$ is future-invariant.
The physical interpretation is straightforward: A type~IX model
that is expanding at $t=t_0$, i.e., $H(t_0) > 0$ ($\Leftrightarrow H_D(t_0) > 0$), must have been
expanding up to that time, i.e., for $0 < t < t_0$;
conversely, a model that is contracting at $t=t_1$(i.e., $H(t_1) < 0$) 
must continue to contract $\forall t > t_1$ (which eventually leads to a big crunch);
finally, a model that satisfies $H = 0$ at some time, is a model
that starts from an initial singularity, expands to a state
of maximum expansion ($H = 0$), from which it then recontracts 
to a final singularity. (This ``closed universe recollapse'' behavior 
is the well-known behavior of vacuum and perfect fluid solutions of Bianchi type~IX,
see~\cite{W} and~\cite{HRU}.
However, there exist anisotropic matter models that deviate
from ``closed universe recollapse'', see~\cite{CHletter}
and the results presented in Sec.~\ref{B9res}.)

\begin{Remark}
The dynamical system~\eqref{89syst} is invariant under the discrete symmetry
\begin{equation}\label{symIX}
H_D\rightarrow -H_D\:, \ \Sigma_+\rightarrow -\Sigma_+\:,\ \tau\rightarrow -\tau\:;
\end{equation}
the flow in $\mathcal{X}_\mathrm{IX}^{-}$ is the image of 
the flow in $\mathcal{X}_\mathrm{IX}^{+}$ under this discrete symmetry.
In particular, it suffices to analyze the asymptotic behavior
of solutions in $\mathcal{X}_\mathrm{IX}^{+}$;
the asymptotic behavior of solutions in $\mathcal{X}_\mathrm{IX}^{-}$, 
follows by applying the discrete symmetry (where the roles of
$\alpha$- and $\omega$-limit sets are reversed).
\end{Remark}

\subsection{The boundaries of $\bm{\mathcal{X}_{\mathrm{IX}}}$}
\label{subsect:BIXbound}

The boundary of the type~IX state space 
$\mathcal{X}_{\mathrm{IX}}$ consists of the invariant sets
\begin{subequations}\label{IXboun}
\begin{align}
\label{Ssharpagain} &\mathcal{S}_\sharp = \big\{H_D=1\,;\:{-1}< \Sigma_+ < 1, \: 0< M_1 < \sqrt{12(1-\Sigma_+^2)}\:\big\}\:,   \\[0.5ex]
\label{basedef}
&\mathcal{B}_{\mathrm{IX}}=\big\{M_1=0\,;\:-1<\Sigma_+<1,\:-1<H_D<1\:\big\}\:,\\[0.5ex]
&\mathcal{V}_\mathrm{IX}=\big\{\Omega = 0\,; \:{-1}< \Sigma_+ < 1, \:-1<H_D<1,\: 
M_1 = \sqrt{12(1-\Sigma_+^2)} \:\big\}\:.
\end{align}
\end{subequations}
The symbol $\mathcal{S}_\sharp$ is used with slight abuse of notation; we refer to
the discussion of~\eqref{Sidesys} below 
(and to the analogous comment following~\eqref{IIboundarycomps}).
There exists a second copy of the set~\eqref{Ssharpagain} with $H_D=-1$; since this set 
is merely the image of $\mathcal{S}_\sharp$ under the discrete symmetry~\eqref{symIX},
we may restrict our attention to $\mathcal{S}_\sharp$. Analogously,
concerning the `base' $\mathcal{B}_{\mathrm{IX}}$, 
it is enough to restrict our attention to the 
past invariant half 
$\mathcal{B}_{\mathrm{IX}}^+=\mathcal{B}_{\mathrm{IX}}\cap\{H_D\geq 0\}$, see Fig.~\ref{b9state}. 

The dynamical system~\eqref{89syst} admits a regular extension 
to each of the invariant boundary subsets~\eqref{IXboun}. 
However, the system can \textit{not} be extended to the
entire boundary $\partial\mathcal{X}_{\mathrm{IX}}$; 
this is because the variable $s$ and thus the r.h.s.\ of~\eqref{Sig+Eq}
does not possess a well-defined limit when $(H_D,\Sigma_+,M_1)$ converges 
to a point on the closure of the line
\begin{equation}\label{badboundary}
\mathcal{L}_{\mathrm{I}}=\partial\mathcal{S}_\sharp\cap\partial\mathcal{B}_{\mathrm{IX}}\:;
\end{equation}
the use of the subscript $\mathrm{I}$ will become clear in Subsec.~\ref{blowup},
when we remedy this defect of the system~\eqref{89syst}.

The first step of our analysis is to study 
the flow that is induced on the boundaries~\eqref{IXboun} 
of $\mathcal{X}_\mathrm{IX}$, where the dynamical system~\eqref{89syst} admits a regular extension.

\textbf{The vacuum boundary} $\bm{\mathcal{V}_\mathrm{IX}}$. 
The dynamical system~\eqref{89syst} admits a smooth extension onto $\overline{\mathcal{V}}_\mathrm{IX}$ which is given by
setting $\Omega = 0$ in~\eqref{89syst}. Since $\Omega = 0$ and thus $\rho = 0$ 
on $\mathcal{V}_\mathrm{IX}$, the orbits on this set represent type~IX vacuum models.
The induced system is
\begin{subequations}\label{Roofsys}
\begin{align}
\label{RoofHD}
H_D^\prime & = (1-H_D^2) ( H_D - 2 \Sigma_+ ) \Sigma_+   \:,\\[0.5ex]
\label{RoofSig+}
\Sigma_+^\prime & = (1-\Sigma_+^2) \big[ 2 + (1 -\Sigma_+^2) + (H_D - \Sigma_+)^2 \big] \:;
\end{align}
\end{subequations}
the variable $M_1$ is given by $M_1 = \sqrt{12(1-\Sigma_+^2)}$.
The closure of the vacuum boundary is a square, 
$\overline{\mathcal{V}}_\mathrm{IX} = [-1,1] \times [-1,1]$.
Eq.~\eqref{RoofSig+} entails that $\Sigma_+$ is strictly monotonically
increasing whenever $\Sigma_+ \neq \pm 1$.
It follows straightforwardly that the vertices of the square are the only fixed points;
two of these fixed point satisfy $H_D = 1$:
\begin{itemize}
\item[$\mathrm{T}$] \quad The Taub point $\mathrm{T}$ is given by $(H_D, \Sigma_+) = (1,-1)$.
\item[$\mathrm{Q}$] \quad The non-flat LRS point $\mathrm{Q}$ is given by $(H_D, \Sigma_+) = (1,1)$.
\end{itemize}
The Taub point $\mathrm{T}$ corresponds to the Taub solution~\eqref{taub} and
$\mathrm{Q}$ to the non-flat LRS Kasner solution~\eqref{solQ}.
The reason for the lack of a subscript ${}_\sharp$ or ${}_\flat$ 
will become clear in Subsec.~\ref{blowup}.
(The fixed points that correspond to $\mathrm{T}$ and $\mathrm{Q}$ on the subset 
$H_D = -1$ are $(H_D,\Sigma_+) = (-1,1)$
and $(H_D,\Sigma_+)=({-1},{-1})$; the former is a Taub point, the latter
a non-flat LRS point. The properties of these fixed points follows from
the discrete symmetry~\eqref{symIX}.)

Since $\Sigma_+$ is a strictly monotonically increasing function 
on $[-1,1] \times (-1,1)$, the monotonicity principle applies.
In combination with the fact that $H_D^\prime < 0$ on the boundaries $H_D = \pm 1$, 
we arrive at the conclusion that the Taub point $\mathrm{T}$ 
is the $\alpha$-limit set of every orbit in $\mathcal{V}_\mathrm{IX}$
(while the Taub point $(-1,1)$ is the $\omega$-limit set); see Fig.~\ref{vacuumIXflow}.

\begin{figure}[Ht]
\begin{center}
\psfrag{Sp}[cc][cc][1][0]{$\Sigma_+$}
\psfrag{T}[cc][cc][1][0]{$\mathrm{T}$}
\psfrag{Q}[cc][cc][1][0]{$\mathrm{Q}$}
\psfrag{HD}[cc][cc][1][0]{$H_D$}
\includegraphics[width=0.4\textwidth]{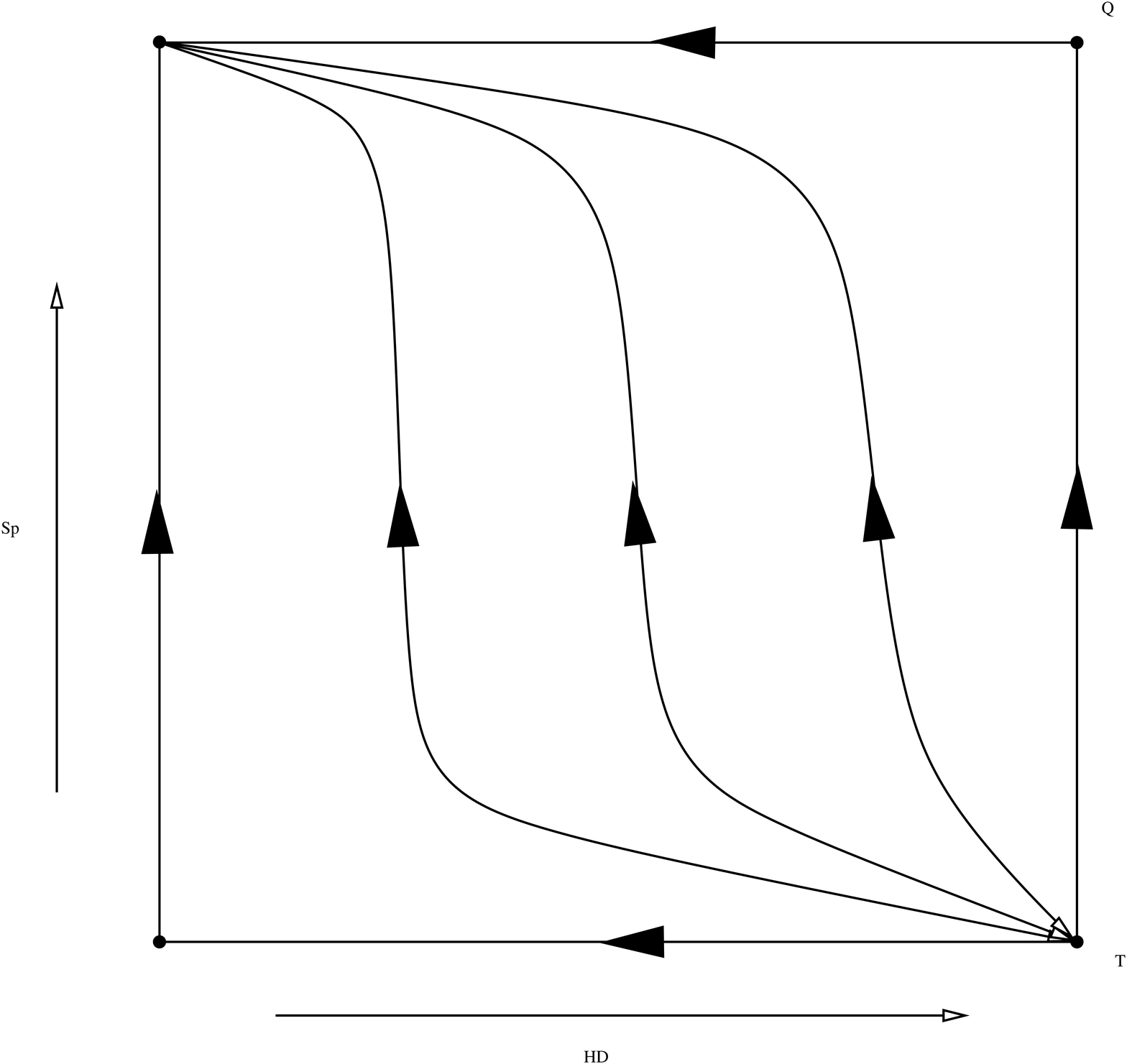}
\end{center}
\caption{The flow induced on $\mathcal{V}_{\mathrm{IX}}$.}
\label{vacuumIXflow}
\end{figure}

\textbf{The side} $\bm{\mathcal{S}_\sharp}$.
The dynamical system that is induced by~\eqref{89syst} on the side 
$\mathcal{S}_\sharp$ of the type~IX state space is identical to the system~\eqref{dynsysAsharp}: 
\begin{subequations}\label{Sidesys}
\begin{align}
\label{SideSig+}
&\Sigma_+'=\textfrac{1}{6}M_1^2(2-\Sigma_+)-\textfrac{3}{2}\Omega(1-w)(\Sigma_+-\textfrac{\beta}{2})\:,\\[0.5ex]
\label{SideM1}
& M_1'=M_1 \big[ \textfrac{3}{2} (1-w) \Sigma_+^2 - 4 \Sigma_+ + 
\textfrac{1}{2}(1 +3 w) \big(1 -\textfrac{1}{12}M_1^2\big)\big]\:,
\end{align}
\end{subequations}
where $\Omega=1-\Sigma_+^2- \textfrac{1}{12} M_1^2$. 
The system~\eqref{Sidesys} on $\mathcal{S}_\sharp$ possesses a smooth extension
to $\overline{\mathcal{S}}_\sharp$ (which does not contradict the fact
that~\eqref{89syst} cannot be extended to $\overline{\mathcal{S}}_\sharp$).

The phase portraits for the various cases are depicted in Fig.~\ref{Asharpfig}. 
Note that the fixed point $\mathrm{C}_\sharp$ is represented by a white dot 
in the figures, which means that it acts as a repellor in a direction transversal to $\mathcal{S}_\sharp$, 
i.e., into the interior of $\mathcal{X}_\mathrm{IX}$. To see this we use~\eqref{HDEq2}
and find
\begin{equation}\label{Cb9}
\left[\textfrac{d}{d\tau}\log|1-H_D|\right]_{\,|\mathrm{C}_\sharp} =
2 \big(2 \Sigma_+^2- \Sigma_+ + \textfrac{1}{2}\,(1+ 3 w) \Omega \big)\,|_{\mathrm{C}_\sharp} =  
\textfrac{6(1+3w)}{16-3\beta(1-w)}>0\:.
\end{equation}

The vertices of the boundary $\partial\mathcal{S}_\sharp$ 
coincide with the fixed points $\mathrm{T}_\sharp$ and $\mathrm{Q}_\sharp$,
see Fig.~\ref{Asharpfig}; in addition, in the cases \A, \B, \C, 
the fixed point $\mathrm{R}_\sharp$ is present.
The significance of these fixed points in the
type~IX context will become
apparent in Subsec.~\ref{blowup}.

\begin{Remark}
The solutions of~\eqref{Sidesys} on the side $\mathcal{S}_\sharp$ 
are type~II solutions associated with an anisotropic matter source 
with `rigid' rescaled principal pressures, i.e.,
$u(s) \equiv \mathrm{const}$.
For such an anisotropic matter source, eq.~\eqref{IIs} 
decouples from the system~\eqref{dynsysbianchiII};
the remaining equations~\eqref{IIsig+} and~\eqref{IIM1}
coincide with~\eqref{Sidesys}.
Analogously, the orbits of~\eqref{Basesys} on the base $\mathcal{B}_{\mathrm{IX}}$,
to be discussed next, represent solution of Kantowski-Sachs type, see~\eqref{KSmetric},
where again 
the rescaled principal pressures are `rigid'.
For such an anisotropic matter source, eq.~\eqref{M1Eq} 
decouples from the system~\eqref{89syst}. (Recall that
$\hat{n}_1^2$ is set to zero for the Kantowski-Sachs system.)
Therefore, the associated Kantowski-Sachs models are
described by the reduced system consisting of~\eqref{HDEq2} and~\eqref{Sig+Eq},
which is identical to the system~\eqref{Basesys} induced on $\mathcal{B}_{\mathrm{IX}}$.
\end{Remark}

\textbf{The base} $\bm{\mathcal{B}_{\mathrm{IX}}}$.
The dynamical system~\eqref{89syst} 
extends smoothly to the base $\mathcal{B}_{\mathrm{IX}}$ of the state space 
(but not to $\overline{\mathcal{B}}_{\mathrm{IX}}$).
The induced system is obtained by setting $M_1 = 0$ in~\eqref{89syst}:
\begin{subequations}\label{Basesys}
\begin{align}
\label{BaseHD}
H_D^\prime & = -(1-H_D^2) \Big[ 2 - \textfrac{3}{2} ( 1 -w) ( 1-\Sigma_+^2) - H_D \Sigma_+ \Big]    \:,\\[0.5ex]
\label{BaseSig+}
\Sigma_+^\prime & = -(1-\Sigma_+^2) \Big[ (1-H_D^2) + \textfrac{3}{2}(1-w) \big(H_D  \Sigma_+ +\beta \big) \Big] \:.
\end{align}
\end{subequations}
The induced system~\eqref{Basesys} on $\mathcal{B}_{\mathrm{IX}}$ possesses a smooth extension
to $\overline{\mathcal{B}}_{\mathrm{IX}}=[-1,1]\times[-1,1]$
(which does not contradict the fact
that~\eqref{89syst} cannot be extended to $\overline{\mathcal{B}}_{\mathrm{IX}}$).
 
The boundary of $\mathcal{B}_{\mathrm{IX}}$ contains several fixed points.
\begin{itemize}
\item[$\mathrm{T}_\flat$] \quad The Taub point $\mathrm{T}_\flat$ is given by $(H_D, \Sigma_+) = (1,-1)$.
\item[$\mathrm{Q}_\flat$] \quad The non-flat LRS point $\mathrm{Q}_\flat$ is given by $(H_D, \Sigma_+) = (1,1)$.
\end{itemize}

\begin{Remark}
The subscripts (i.e., ${}_\flat$ or ${}_\sharp$) are chosen with care; their
significance in the present context will become clear in Subsec.~\ref{blowup}. 
Although the points $\mathrm{T}$ on $\partial\mathcal{V}_{\mathrm{IX}}$,
$\mathrm{T}_\sharp$ on $\partial\mathcal{S}_\sharp$, and 
$\mathrm{T}_\flat$ on $\partial\mathcal{B}_{\mathrm{IX}}$
have formally the same coordinates 
(i.e., $H_D =1$, $\Sigma_+ = -1$, $M_1 = 0$),
these points cannot be identified. 
We will see in Subsec.~\ref{blowup} 
that the Taub points $\mathrm{T}_\flat$, $\mathrm{T}_\sharp$, $\mathrm{T}$  
provide different (but equivalent) representations of the Taub 
solution~\eqref{taub}.
The reason for this is that the system~\eqref{89syst} cannot be 
extended to the line $\overline{\mathcal{L}}_{\mathrm{I}}$.
While
the extension of the system~\eqref{Basesys} on $\mathcal{B}_{\mathrm{IX}}$
to the boundary subset $H_D =1$ yields 
eq.~\eqref{eqcalIflat}, 
the extension of the system~\eqref{Sidesys} on
$\mathcal{S}_\sharp$ to (the base of) $\partial\mathcal{S}_\sharp$
leads to eq.~\eqref{eqcalIsharp}. 
Unless $\beta = 0$, these two equations are different.
\end{Remark}

If and only if $|\beta| < 1$ (i.e., in the cases \Aplus\ and \Aminus), 
the boundary of $\mathcal{B}_{\mathrm{IX}}$ contains the point $\mathrm{R}_\flat$.
\begin{itemize}
\item[$\mathrm{R}_\flat$] \quad This fixed point has the coordinates $(H_D,\Sigma_+) = (1, -\beta)$.
\end{itemize}
Applying the discrete symmetry~\eqref{symIX} we obtain an analogous fixed point with $H_D = -1$. 
In Appendix~\ref{exact} we calculate the explicit solution associated with $\mathrm{R}_\flat$
and give its interpretation.

Searching for interior fixed points in $\mathcal{B}_{\mathrm{IX}}$ we find a fixed point $\mathrm{P}$ in $\mathcal{B}_{\mathrm{IX}}^+$
(and an analogous point in $\mathcal{B}_{\mathrm{IX}}^-$ generated by the discrete symmetry).
\begin{itemize}
\item[$\mathrm{P}$] \quad The coordinates of $\mathrm{P}$ are \,$H_D = \textfrac{2 + 3\beta (1-w)}{\sqrt{(1-3 w)^2 + 6 \beta (1-w)}}$,\,
$\Sigma_+ = \textfrac{1 + 3 w}{\sqrt{(1-3 w)^2 + 6 \beta (1-w)}}$\:.
\end{itemize}
There are, however, two requirements for the fixed point $\mathrm{P}$ to exist:
First, $w$ must satisfy
\begin{subequations}\label{domain}
\begin{equation}\label{wdomain}
-\textfrac{1}{3} \,<\, w\,<\, \textfrac{1-\sqrt{3}}{3} \approx -0.244 
\end{equation}
and, second, $\beta$ must be in the range
\begin{equation}\label{betadomain}
\beta_- < \beta < \beta_+  \quad\text{with}\quad 
\beta_{\pm} = \textfrac{-1\pm \sqrt{-3+(1-3 w)^2}}{3(1-w)} \:.
\end{equation}
\end{subequations}
The conditions~\eqref{domain} ensure that ${H_D}_{\,|\mathrm{P}}$ and ${\Sigma_+}_{\,|\mathrm{P}}$
are real (and positive) numbers strictly less then $1$.
The range of admissible values of $w$ and $\beta$ is depicted in Fig.~\ref{betaminmax};
since ${-}\textfrac{1}{2} < \beta_- < {-2}+\sqrt{3}$
and ${-2}+\sqrt{3} < \beta_+ < 0$, cf.~Fig.~\ref{betaminmax},
we find that the conditions~\eqref{domain} distinguish a subcase
of \Aminus, which we call the \textit{Bianchi type~IX special case} and denote by \AminusP.

An alternative representation of~\eqref{betadomain} in terms of $v_-$, see~\eqref{betadef1}, is
\begin{equation}\label{vmdomain}
\textfrac{1}{6} \left( 1 + 6 w - \sqrt{-3+(1-3 w)^2} \right) \, < \, v_- \, <\, 
\textfrac{1}{6} \left( 1 + 6 w + \sqrt{-3+(1-3 w)^2} \right)\:,
\tag{\ref{betadomain}${}^\prime$}
\end{equation}
cf.~Fig.~\ref{vpmminmax}. 
Finally, note that $\mathrm{P}$ coincides with $\mathrm{R}_\flat$ 
in the limiting cases $\beta=\beta_\pm$.  The exact solution corresponding to the fixed point 
$\mathrm{P}$ is of Kantowski-Sachs type and is derived in Appendix~\ref{exact}.

\begin{figure}[Ht]
\begin{center}
\psfrag{a}[bc][bc][0.7][0]{$-\textfrac{1}{3}$}
\psfrag{c}[bc][bc][0.7][0]{$-0.30$}
\psfrag{d}[bc][bc][0.7][0]{$-0.27$}
\psfrag{e}[bc][bc][0.7][0]{$\textfrac{1-\sqrt{3}}{3}$}
\psfrag{b}[bc][bc][1][0]{$\beta$}
\psfrag{vm}[bc][bc][1][0]{$v_-$}
\psfrag{m}[lc][lc][0.7][0]{$-0.1$}
\psfrag{n}[lc][lc][0.7][0]{$-0.268$}
\psfrag{o}[lc][lc][0.7][0]{$-0.5$}
\psfrag{w}[lc][lc][1][0]{$w$}
\psfrag{v}[lc][lc][0.7][0]{$-0.077$}
\psfrag{x}[lc][lc][0.7][0]{$-0.2$}
\psfrag{z}[lc][lc][0.7][0]{$-\textfrac{1}{3}$}
\includegraphics[width=0.65\textwidth]{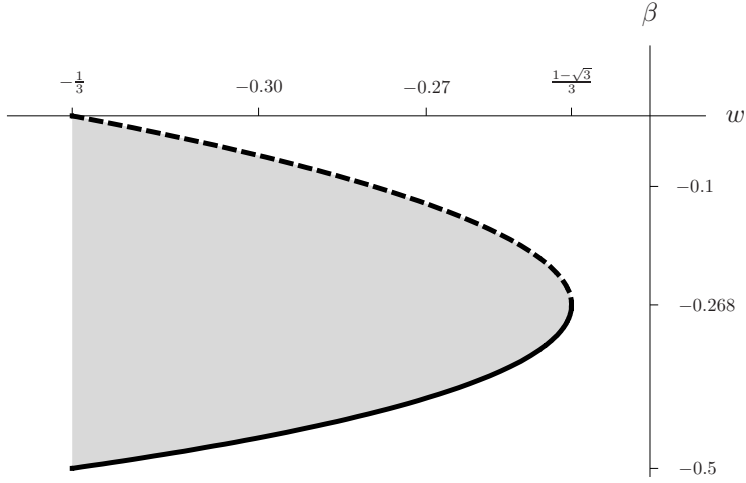}
\caption{The Bianchi type~IX special case \AminusP\ is defined by conditions on $w$ and the anisotropy parameter $\beta$.
The admissible values are depicted as
the gray region. The axes of the diagram are $\beta = 0$ and $w = -0.23$.}
\label{betaminmax}
\end{center}
\end{figure}

\begin{figure}[Ht]
\begin{center}
\psfrag{a}[bc][bc][0.7][0]{$-\textfrac{1}{3}$}
\psfrag{c}[bc][bc][0.7][0]{$-0.30$}
\psfrag{d}[bc][bc][0.7][0]{$-0.27$}
\psfrag{e}[bc][bc][0.7][0]{$\textfrac{1-\sqrt{3}}{3}$}
\psfrag{b}[bc][bc][1][0]{$\beta$}
\psfrag{vm}[bc][bc][1][0]{$v_-$}
\psfrag{m}[lc][lc][0.7][0]{$-0.1$}
\psfrag{n}[lc][lc][0.7][0]{$-0.268$}
\psfrag{o}[lc][lc][0.7][0]{$-0.5$}
\psfrag{w}[lc][lc][1][0]{$w$}
\psfrag{v}[lc][lc][0.7][0]{$-0.077$}
\psfrag{x}[lc][lc][0.7][0]{$-0.2$}
\psfrag{z}[lc][lc][0.7][0]{$-\textfrac{1}{3}$}
\psfrag{s}[lc][lc][0.7][0]{$-\textfrac{1}{3}$}
\psfrag{u}[lc][lc][0.7][0]{$-0.577$}
\psfrag{r}[lc][lc][0.7][0]{$-1$}
\psfrag{vp}[bc][bc][1][0]{$v_+$}
\subfigure{\includegraphics[width=0.45\textwidth]{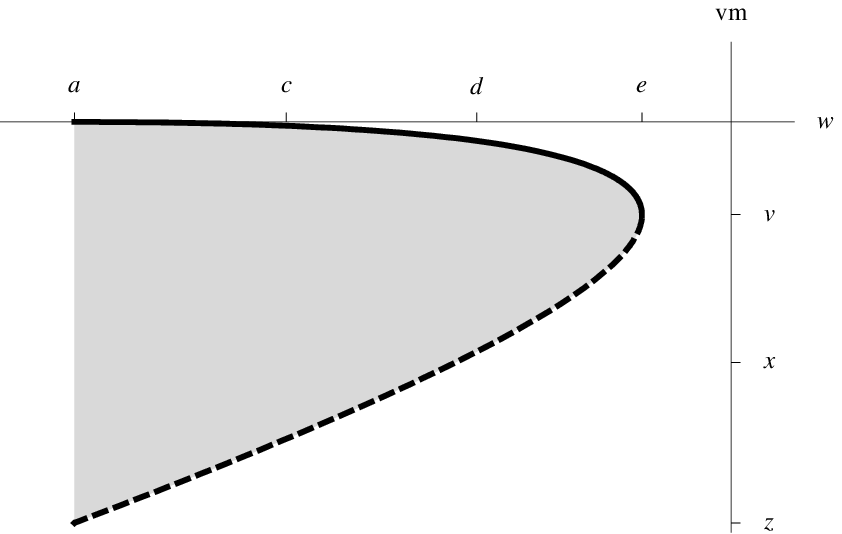}}\qquad\qquad
\subfigure{\includegraphics[width=0.45\textwidth]{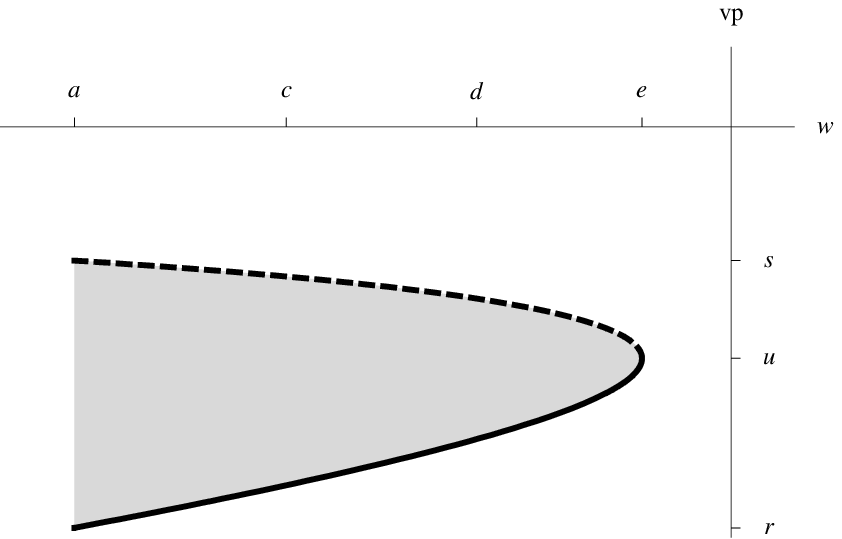}}
\caption{The admissible values of $w$ and $v_-$/$v_+$ for the Bianchi type~IX special case \AminusP.
The axes are $v_\pm = 0$ and $w = -0.23$. From this figure, in combination with
the considerations of Sec.~\ref{LRSsols},
it is evident that the dominant energy condition and the
strong energy condition are satisfied,
since $|v_\pm| <1$ and $w > -1/3$.}
\label{vpmminmax}
\end{center}
\end{figure}

The local dynamical systems analysis of the fixed points is straightforward.
The role of the fixed points $\mathrm{T}_\flat$ and $\mathrm{Q}_\flat$
is determined by the
flow on the boundaries of $\mathcal{B}_{\mathrm{IX}}$, which is obtained in a straightforward 
way; we simply refer to Fig.~\ref{Bfig}.
The fixed point $\mathrm{R}_\flat$ is an attractor on the line $H_D = 1$;
in the transversal direction we have
\begin{equation}\label{Rtransvers}
\left[\textfrac{d}{d\tau}\log\left(1-H_D\right)\right]_{\,|\mathrm{R}_\flat} = (1+ 3 w) + 2 \beta + 3(1-w) \beta^2\:,
\end{equation}
which is negative if and only if $w$ satisfies~\eqref{wdomain} and $\beta$ satisfies~\eqref{betadomain}.
Hence, in the Bianchi type~IX special case \AminusP, the fixed point $\mathrm{R}_\flat$ is
a sink in $\overline{\mathcal{B}}_{\mathrm{IX}}$; in all other cases, $\mathrm{R}_\flat$ is a saddle.
(In the marginal case $\beta = \beta_\pm$, this does not follow from the local
analysis since $\mathrm{R}_\flat$ is not hyperbolic in these cases.
However, the statement is still correct, as will be proved by using global methods.)

The fixed point $\mathrm{P}$ which exists in the Bianchi type~IX 
special case \AminusP\ is a saddle in $\mathcal{B}_{\mathrm{IX}}$.
This is a direct consequence of the 
linearization of the dynamical system~\eqref{Basesys} at $\mathrm{P}$;
in fact the determinant of the linearized system is
\[
{-}\textfrac{27(1-w)^3}{(1-3w)^2+6\beta(1-w)}
\left(\beta-\textfrac{2w}{1-w}\right)\left(\beta-\beta_-\right)\left(\beta_+ - \beta\right) < 0\:,
\]
which is negative because the three terms in brackets are positive under
the conditions~\eqref{wdomain} and~\eqref{betadomain} of case \AminusP.
Accordingly, the linearization is indefinite and the point $\mathrm{P}$
is a saddle. 

Since the fixed point $\mathrm{P}$ lies within $\mathcal{B}_{\mathrm{IX}}$ 
(and not on the boundary $H_D = 1$ like $\mathrm{T}_\flat$, $\mathrm{Q}_\flat$, $\mathrm{R}_\flat$),
the dynamical system~\eqref{89syst} on $\mathcal{X}_{\mathrm{IX}}$
extends smoothly to a neighborhood of $\mathrm{P}$ in $\overline{\mathcal{X}}_{\mathrm{IX}}$.
Using~\eqref{M1Eq} we find
\begin{equation}\label{Portho}
\left[\textfrac{d}{d\tau}\log M_1\right]_\mathrm{P}=-\textfrac{3(1+3w)}{\sqrt{(1-3w)^2+6\beta(1-w)}}<0\:.
\end{equation}
Therefore, $\mathrm{P}$ is a saddle in $\overline{\mathcal{X}}_{\mathrm{IX}}$
with a two-dimensional stable manifold and a one-dimensional unstable manifold.
We reemphasize that analogous statements for the fixed points $\mathrm{T}_\flat$, $\mathrm{Q}_\flat$, $\mathrm{R}_\flat$
are unavailable in the present dynamical systems formulation, since the system~\eqref{89syst} does not extend to the line
$\mathcal{L}_{\mathrm{I}}$, cf.~\eqref{badboundary}. 
In Subsec.~\ref{blowup} we will remedy this defect.

Let us turn to the \textbf{global dynamics on} $\bm{\mathcal{B}_{\mathrm{IX}}}$.
\begin{Lemma}
The phase portraits of the dynamical system~\eqref{Basesys} on the base $\mathcal{B}_{\mathrm{IX}}$ 
for the various anisotropy cases are as depicted
in Fig.~\ref{Bfig}. 
\end{Lemma}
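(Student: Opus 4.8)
The plan is to treat~\eqref{Basesys} as a two-dimensional flow on the square $\overline{\mathcal{B}}_{\mathrm{IX}}=[-1,1]\times[-1,1]$ and to combine the local data already assembled with a Poincar\'e--Bendixson analysis. First I would record the flow on the four invariant edges: on $H_D=\pm1$ the equation for $\Sigma_+$ collapses to the type~I equation~\eqref{eqcalIflat} (the copy on $H_D=-1$ being the image under the discrete symmetry~\eqref{symIX}), with fixed points $\mathrm{T}_\flat$, $\mathrm{Q}_\flat$ and, for $|\beta|<1$, the edge-attractor $\mathrm{R}_\flat$; on $\Sigma_+=\pm1$ one computes $H_D'=-(1-H_D^2)(2\mp H_D)<0$, so each of these edges is a single heteroclinic orbit along which $H_D$ decreases monotonically. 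Because~\eqref{symIX} maps the flow on $H_D=1$ to that on $H_D=-1$, it suffices to analyze $\mathcal{B}_{\mathrm{IX}}^+$ and to read off $\mathcal{B}_{\mathrm{IX}}^-$ by symmetry, with the roles of $\alpha$- and $\omega$-limit sets interchanged.

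Second, the semipermeable membrane $H_D=0$ (where $H_D'<0$) confines the $\alpha$-limit of every interior orbit to $\overline{\mathcal{B}}_{\mathrm{IX}}^+$ and, unless the orbit is captured by a sink of $\overline{\mathcal{B}}_{\mathrm{IX}}^+$, its $\omega$-limit to $\overline{\mathcal{B}}_{\mathrm{IX}}^-$. To upgrade this to a full description I would first exclude periodic orbits by a Poincar\'e index argument: in every case except \AminusP\ there is no interior fixed point, whereas in case \AminusP\ the only interior fixed point is the saddle $\mathrm{P}$ (index $-1$); in neither situation does any collection of interior fixed points have total index $+1$, so no closed orbit can exist. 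Poincar\'e--Bendixson then forces each interior $\alpha/\omega$-limit set to be a fixed point or a heteroclinic cycle. A cycle lying in a single half $\overline{\mathcal{B}}_{\mathrm{IX}}^\pm$ is impossible, because the only invariant boundary arcs available there are the edges $H_D=\pm1$ and the segments $\Sigma_+=\pm1$, which cannot close up across the non-invariant membrane; the same interior-equilibrium count rules out a homoclinic loop at $\mathrm{P}$, since its interior would have to contain a fixed point but contains none. Hence all interior limit sets are fixed points, and classifying them through the edge analysis, the transversal derivative~\eqref{Rtransvers} at $\mathrm{R}_\flat$, and the (negative) determinant of~\eqref{Basesys} at $\mathrm{P}$ identifies the sources and sinks in each anisotropy regime and yields the portraits of Fig.~\ref{Bfig}.

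The delicate case is \AminusP, where $\mathrm{R}_\flat$ is a sink of $\overline{\mathcal{B}}_{\mathrm{IX}}$ and the interior saddle $\mathrm{P}$ is present. Here I would track the separatrices of $\mathrm{P}$: its stable manifold partitions $\mathcal{B}_{\mathrm{IX}}^+$ into a basin whose orbits converge to $\mathrm{R}_\flat$ (the forever-expanding, non-recollapsing models) and a complementary region whose orbits cross the membrane into $\mathcal{B}_{\mathrm{IX}}^-$. Showing that these separatrices terminate at the fixed points prescribed by Fig.~\ref{Bfig}, so that the basin boundary is exactly the stable manifold of $\mathrm{P}$, is the structural heart of the argument.

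I expect the main obstacle to be precisely the global, non-hyperbolic statements that the local analysis cannot reach: that $\mathrm{R}_\flat$ still attracts interior orbits in the marginal cases $\beta=\beta_\pm$, and the separatrix configuration at $\mathrm{P}$ in \AminusP. For these I would construct a monotone function of $(H_D,\Sigma_+)$, in the spirit of the functions $Z_1$--$Z_4$ used for the lower Bianchi types, of the form $Z=(1-\Sigma_+^2)^{a}(1-H_D^2)^{b}(1-c\,\Sigma_+-d\,H_D)^{-e}$, with the denominator chosen positive on the square and the exponents tuned to the eigendata at $\mathrm{R}_\flat$ (respectively $\mathrm{P}$) so that $Z'$ is sign-definite and vanishes only at the relevant fixed point. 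The monotonicity principle of Appendix~\ref{dynsysapp} would then settle the limit sets and confirm the connections in Fig.~\ref{Bfig}; verifying the sign-definiteness of $Z'$ in closed form is the calculation I anticipate will be the most laborious part of the proof.
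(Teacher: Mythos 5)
Your outline reproduces, in its skeleton, the paper's own strategy for the hard regime $-\frac{1}{3}<w<\frac{1-\sqrt{3}}{3}$: the flow on the four edges, reduction to $\mathcal{B}_{\mathrm{IX}}^+$ via the symmetry~\eqref{symIX}, exclusion of periodic orbits and heteroclinic cycles through the interior fixed-point count (the paper phrases this as ``a periodic orbit must encircle a sink or source''; your index formulation is equivalent), Poincar\'e--Bendixson, and local analysis at $\mathrm{T}_\flat$, $\mathrm{Q}_\flat$, $\mathrm{R}_\flat$, $\mathrm{P}$. For $w\geq\frac{1-\sqrt{3}}{3}$ the paper takes a shortcut your uniform treatment forgoes: the bracket in~\eqref{HDprimeB} is then positive on the whole base, so $H_D$ itself is strictly monotone and the monotonicity principle confines all limit sets to $H_D=\pm 1$ directly; your membrane-plus-Poincar\'e--Bendixson route also works there, so this is stylistic. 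One local slip, however: the membrane $H_D=0$ does \emph{not} confine the $\alpha$-limit of every interior orbit to $\overline{\mathcal{B}}_{\mathrm{IX}}^+$. In the case \AminusP\ the mirror of $\mathrm{R}_\flat$ on $H_D=-1$ is a source, and the forever-contracting orbits emanating from it remain in $\mathcal{B}_{\mathrm{IX}}^-$ for all past time, with $\alpha$-limit on $H_D=-1$. The slip is harmless because your symmetry reduction already covers these orbits, but the sentence as stated is false.

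The genuine gap is the step you yourself call the structural heart: determining where the separatrices actually go --- the unstable orbit of the saddle $\mathrm{R}_\flat$ in the non-special \A\ cases, the two unstable orbits of $\mathrm{P}$ in case \AminusP, and the attractivity of the non-hyperbolic $\mathrm{R}_\flat$ at $\beta=\beta_\pm$. Your sole tool for all of these is a product-form monotone function $Z=(1-\Sigma_+^2)^{a}(1-H_D^2)^{b}(1-c\,\Sigma_+-d\,H_D)^{-e}$ with exponents ``tuned to the eigendata'', which you never construct; verifying sign-definiteness of $Z'$ on the square is precisely the hard part, and without it the connections of Fig.~\ref{Bfig} remain unproved. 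The paper settles these connections far more cheaply, with no monotone function at all: the linear quantity $r_\flat=\beta H_D+\Sigma_+$ satisfies $(r_\flat^\prime)_{|r_\flat=0}=-(1-H_D^2)(1+2\beta)$, cf.~\eqref{rflat}, so the half-square $\{r_\flat\leq 0\}$ is future-invariant for $\beta\geq -1/2$ (and $\{r_\flat\geq 0\}$ for $\beta\leq-1/2$); the unstable manifold of $\mathrm{R}_\flat$ --- and of $\mathrm{P}$, which satisfies $r_\flat|_{\mathrm{P}}<0$ --- enters the invariant half, where the only available attractor fixes the $\omega$-limit. For the connection $\mathrm{P}\rightarrow\mathrm{R}_\flat$ in case \AminusP\ the paper additionally exploits that $\mathrm{P}$ lies on the boundary of the region $\{H_D^\prime>0\}$ to build a future-invariant trapping set inside that region whose closure contains the sink $\mathrm{R}_\flat$ and which captures one half of the unstable subspace of $\mathrm{P}$. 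Replacing your unverified $Z$-ansatz by these two elementary invariant-region arguments is what is needed to close your proof.
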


\begin{proof}
The eq.~\eqref{BaseHD} for $H_D$ can be written as 
\begin{equation}\label{HDprimeB}
H_D^\prime = -(1-H_D^2) \,\Big[ \textfrac{1}{2} (1 + 3 w) - H_D \Sigma_+ + \textfrac{3}{2} (1 -w) \Sigma_+^2\Big] \:.
\end{equation}
For $H_D\geq 0$ and $\Sigma_+ \leq 0$, the expression in brackets is manifestly positive.
For positive $\Sigma_+$ the expression attains its (unique) minimum for the value $H_D = 1$;
in that case we obtain a second order
polynomial in $(-\Sigma_+)$, which coincides with~\eqref{Rtransvers} (modulo a factor of $2$)
and is therefore non-negative if $w > \textfrac{1-\sqrt{3}}{3}$.
We conclude that the expression in brackets is positive for all $H_D \in[0,1)$, $\Sigma_+\in (-1,1)$,
if $w \geq \textfrac{1-\sqrt{3}}{3}$.

This implies that, in the case $\bm{w \geq \textfrac{1-\sqrt{3}}{3}}\,$,
$H_D$ is strictly monotonically decreasing on the subset $H_D \geq 0$
of $\mathcal{B}_{\mathrm{IX}}$, i.e., on $\mathcal{B}_{\mathrm{IX}}^+$. Applying the symmetry~\eqref{symIX} we see that
$H_D$ is then strictly monotonically decreasing on the entire state space $\mathcal{B}_{\mathrm{IX}}$.
Taking into account the simple structure of the flow on $\partial\mathcal{B}_{\mathrm{IX}}$,
we conclude that the $\alpha$-limit set of an orbit in $\mathcal{B}_{\mathrm{IX}}$ must
be one of the fixed points with $H_D = 1$: $\mathrm{T}_\flat$, $\mathrm{Q}_\flat$, or
$\mathrm{R}_\flat$ (which requires $|\beta| <1$, i.e., one of the \A\ cases); 
the former two possibilities represent the past asymptotic behavior 
of typical orbits, which is because $\mathrm{R}_\flat$ is a saddle
when $w \geq \textfrac{1-\sqrt{3}}{3}$
(except in the case $w = \textfrac{1-\sqrt{3}}{3}$, $\beta = \beta_\pm$,
where it is a center saddle, as follows from a center manifold analysis).
By applying the symmetry~\eqref{symIX} we obtain an analogous statement for
the possible $\omega$-limit sets.
In the \A\ cases, when the fixed point $\mathrm{R}_\flat$ exists,
there is one special orbit emanating from $\mathrm{R}_\flat$
that acts as a separatrix in the state space $\mathcal{B}_{\mathrm{IX}}$.
To obtain the $\omega$-limit point of this special orbit we consider
the function $r_\flat = \beta H_D + \Sigma_+$ on $\mathcal{B}_{\mathrm{IX}}$;
the set $r_\flat = 0$ represents a straight line in $\overline{\mathcal{B}}_{\mathrm{IX}}$ that
connects the fixed point $\mathrm{R}_\flat$ with its counterpart on $H_D = -1$.
A straightforward computation shows that
\begin{equation}\label{rflat}
(r_\flat^\prime)_{|r_\flat = 0} = -(1-H_D^2) (1 + 2 \beta) \:,
\end{equation}
which is negative for $\beta > -1/2$, zero for $\beta = -1/2$, and positive for $\beta < -1/2$.
Therefore, the subset $r_\flat \leq 0$ of $\mathcal{B}_{\mathrm{IX}}$ 
is future-invariant if $\beta \geq -1/2$,
while $r_\flat \geq 0$ is future-invariant if $\beta \leq -1/2$.
(The set $r_\flat = 0$ is an actual orbit in the case $\beta = -1/2$).
The local analysis of the system~\eqref{Basesys} at the fixed point $\mathrm{R}_\flat$
shows that the unstable manifold of $\mathrm{R}_\flat$ and thus
the separatrix orbit emanating from $\mathrm{R}_\flat$ is contained 
in the future-invariant subset. 
From this observation it is straightforward to determine the $\omega$-limit
point of this orbit.
We conclude that the structure of the flow 
is indeed as depicted in Fig.~\ref{Bfig}.

In the case $\bm{-\textfrac{1}{3} < w < \textfrac{1-\sqrt{3}}{3}}\,$.
the monotonicity of $H_D$ on $\mathcal{B}_{\mathrm{IX}}$ is violated; however,
the property $H_D^\prime > 0$ is restricted to a rather small region of $\mathcal{B}_{\mathrm{IX}}$,
see Fig.~\ref{BHDposneg}.
We distinguish two subcases: $\beta \not\in (\beta_-, \beta_+)$ and $\beta \in (\beta_-, \beta_+)$, 
where the latter coincides with the special case \AminusP.
First, consider $\beta \not\in (\beta_-, \beta_+)$.
In this case, there does not exist a fixed point
in the interior of $\mathcal{B}_{\mathrm{IX}}$. 
This fact excludes the presence of periodic 
orbits and heteroclinic cycles in $\mathcal{B}_{\mathrm{IX}}$.
(For a two-dimensional system, a periodic orbit necessarily encircles
a fixed point which is a sink or a source.)
The absence of these structures in $\mathcal{B}_{\mathrm{IX}}$ and the 
Poincar\'e-Bendixson theorem (see Appendix~\ref{dynsysapp}) imply
that an orbit in $\mathcal{B}_{\mathrm{IX}}$ must leave every 
compact subset of $\mathcal{B}_{\mathrm{IX}}$; in other words, the $\alpha$-
and $\omega$-limit set must intersect the boundary.
Taking into account the flow on the boundary it follows
that the possible $\alpha$-limit sets are the fixed points on $H_D = 1$;
the possible $\omega$-limits sets are the fixed points on $H_D = {-1}$.
Using again~\eqref{rflat} we conclude that Fig.~\ref{Bfig} depicts the correct qualitative behavior
of the flow on $\mathcal{B}_{\mathrm{IX}}$ (where, however, $H_D$ is not necessarily monotonically
decreasing).

Second, consider $\beta \in (\beta_-, \beta_+)$, i.e., the Bianchi type~IX special case \AminusP,
in which the presence of the fixed point $\mathrm{P}$ slightly complicates matters.
However, the fixed point $\mathrm{P}$ (as well as its counterpart in
the subset $H_D < 0$) is a saddle; therefore,
there do not exist any periodic orbits in $\mathcal{B}_{\mathrm{IX}}$
and the following argument excludes the existence of heteroclinic cycles as well.
Since $\mathrm{P}$ lies on 
the boundary of the region $H_D^\prime > 0$,
it is not difficult to prove that one of the two
orbits emanating from $\mathrm{P}$ must converge to $\mathrm{R}_\flat$.
(The first step is to construct a future-invariant subset of the set $H_D^\prime > 0$,
whose closure contains the sink $\mathrm{R}_\flat$. 
Taking into account the flow on the boundary $H_D = 1$, we find
that $\mathrm{R}_\flat$ is the attractor for every orbit in this
future-invariant set; see Fig.~\ref{BHDposneg}.
The second step is to show that one half of the unstable subspace of $\mathrm{P}$
is contained in this future-invariant set. 
We conclude that one orbit emanating from $\mathrm{P}$
is contained in the future-invariant set and thus converges to $\mathrm{R}_\flat$, which
proves the claims.)
The $\omega$-limit set of the second orbit emanating from $\mathrm{P}$ must
be contained in the future-invariant half $H_D < 0$ of the state space $\mathcal{B}_{\mathrm{IX}}$.
More specifically, this orbit converges to the fixed point $({-1},{-1})$ (which
represents a non-flat LRS point) on $H_D = {-1}$.
(To prove this claim we note that ${r_\flat}_{|\mathrm{P}} < 0$, i.e.,
$\mathrm{P}$ is contained in the subset $r_\flat < 0$
of the state space $\overline{\mathcal{B}}_{\mathrm{IX}}$; this set
is future-invariant since $\beta > \beta_- > -1/2$.
The only possible attractor in the future-invariant set $\{r_\flat < 0\}$
set is the fixed point $({-1},{-1})$.)
Collecting the results we have proved that Fig.~\ref{yesspec} 
depicts the correct qualitative behavior of the flow on $\mathcal{B}_{\mathrm{IX}}$
in the special case \AminusP.
This complete the proof of the lemma.
\end{proof}

\begin{figure}[Ht]
\begin{center}
\psfrag{R}[cc][cc][1][0]{$\mathrm{R}_\flat$}
\psfrag{T}[cc][cc][1][0]{$\mathrm{T}_\flat$}
\psfrag{HD}[lc][lc][1.1][0]{$H_D^\prime > 0$}
\psfrag{P}[cc][cc][1][0]{$\mathrm{P}$}
\psfrag{Q}[cc][cc][1][0]{$\mathrm{Q}_\flat$}
\subfigure{\includegraphics[width=0.3\textwidth]{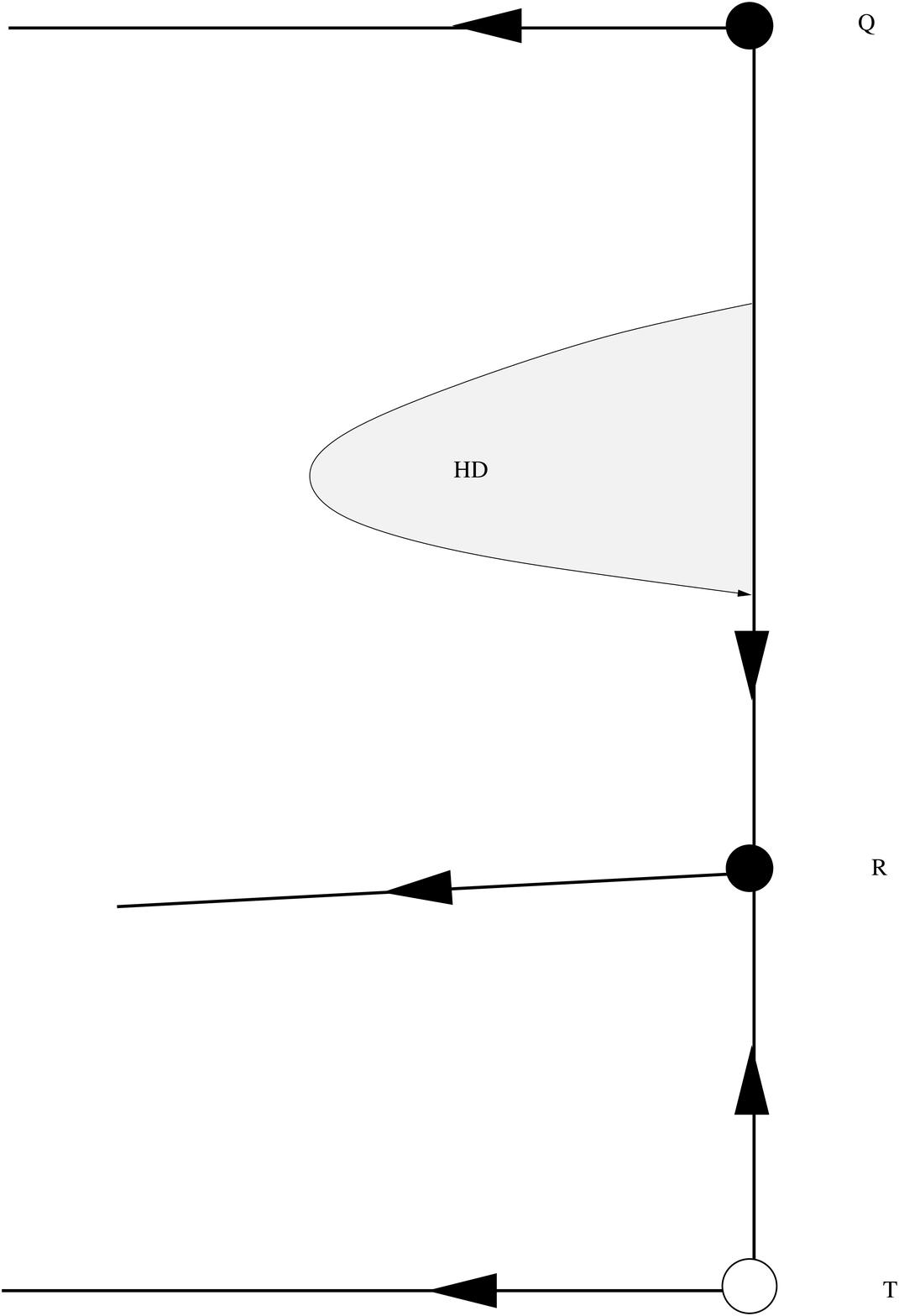}}\qquad\qquad\qquad\qquad
\subfigure{\includegraphics[width=0.3\textwidth]{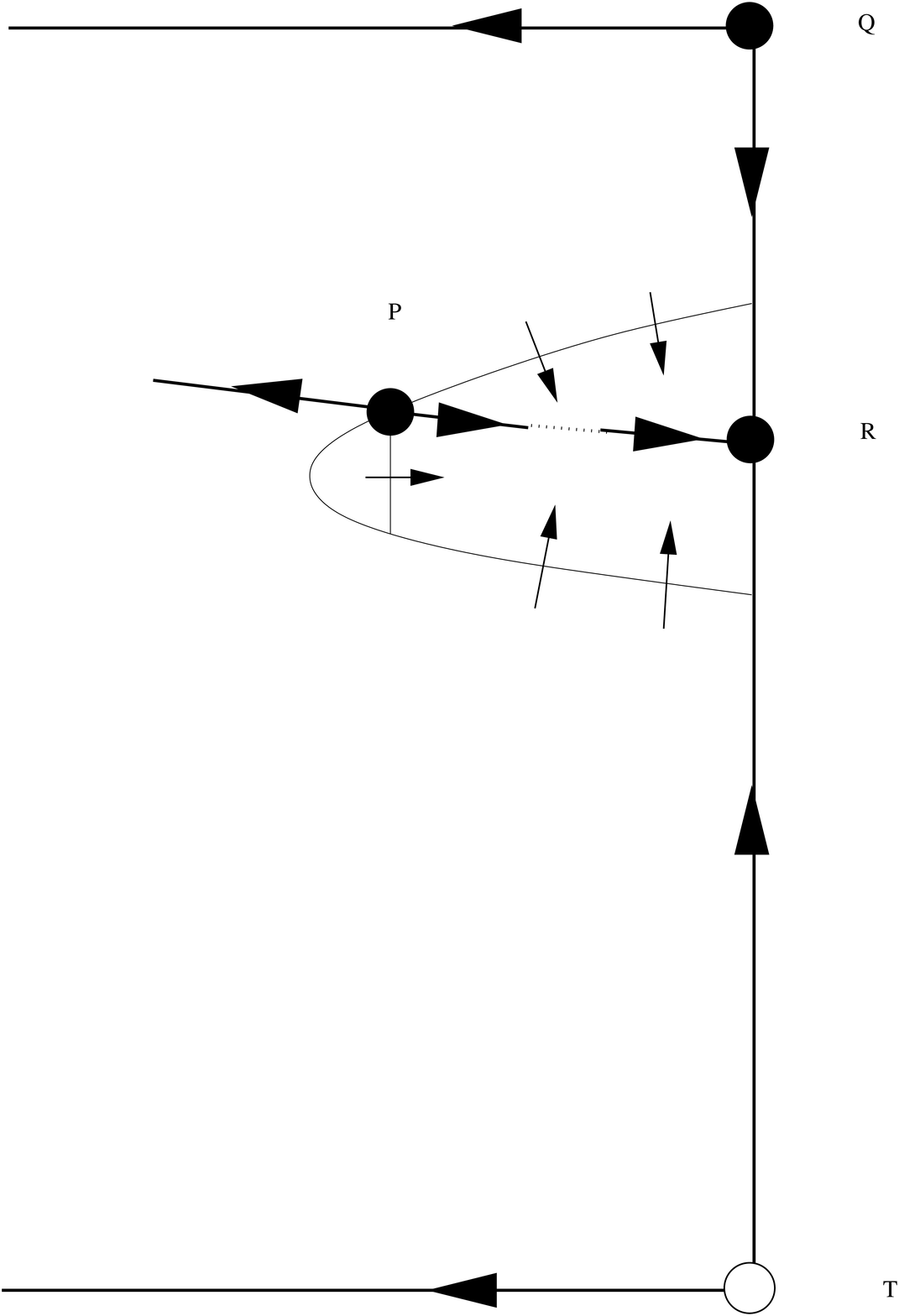}}
\caption{These figures depict a neighborhood of the boundary $H_D =1$ on $\overline{\mathcal{B}}_{\mathrm{IX}}$.
On $H_D =1$ (which is the vertical line) there are the fixed points $\mathrm{T}_\flat$, $\mathrm{Q}_\flat$,
and, in the \A\ cases, the fixed point $\mathrm{R}_\flat$. If $-\textfrac{1}{3} < w < \textfrac{1-\sqrt{3}}{3}$,
there exists a small region in $\mathcal{B}_{\mathrm{IX}}$, where $H_D^\prime > 0$.
If $\beta \not\in (\beta_-,\beta_+)$, which is the case for the figure on the l.h.\ side,
this does not have any influence on the qualitative global dynamics on $\mathcal{B}_{\mathrm{IX}}$.
However, if $\beta \in (\beta_-,\beta_+)$, which corresponds to the special case \AminusP, 
then there exists a fixed point, $\mathrm{P}$, in $\mathcal{B}_{\mathrm{IX}}$, and $\mathrm{R}_\flat$
is a sink. It is not difficult to construct a future-invariant set that is a neighborhood
of $\mathrm{R}_\flat$ and contains an orbit $\mathrm{P}\rightarrow \mathrm{R}_\flat$;
see the figure on the r.h.\ side.}
\label{BHDposneg}
\end{center}
\end{figure}

\begin{figure}[Ht!]
\begin{center}
\psfrag{+1}[cc][cc][0.7][0]{$+1$}
\psfrag{12}[cc][cc][0.7][0]{$\textfrac{1}{2}$}
\psfrag{-1}[cc][cc][0.7][0]{$-1$}
\psfrag{0}[cc][cc][0.7][0]{$0$}
\psfrag{rc}[cc][cc][1][0]{$r$}
\psfrag{h}[cc][cc][1][0]{$H_D$}
\psfrag{h0}[cc][cc][0.7][90]{$H_D=0$}
\psfrag{sig}[cc][cc][1][0]{$\Sigma_+$}
\psfrag{r}[cc][cc][0.7][0]{$\mathrm{R}_\flat$}
\psfrag{t}[cc][cc][0.7][0]{$\mathrm{T}_\flat$}
\psfrag{q}[c][cc][0.7][0]{$\mathrm{Q}_\flat$}
\psfrag{R}[cc][cc][0.7][0]{$\mathrm{R}_\flat$}
\psfrag{T}[cc][cc][0.7][0]{$\mathrm{T}_\flat$}
\psfrag{Q}[cc][cc][0.7][0]{$\mathrm{Q}_\flat$}
\psfrag{B}[cc][cc][1][0]{$\mathcal{B}_{\mathrm{IX}}$}
\subfigure[\Bminus, \Cminus, \Dminus]{\includegraphics[width=0.37\textwidth]{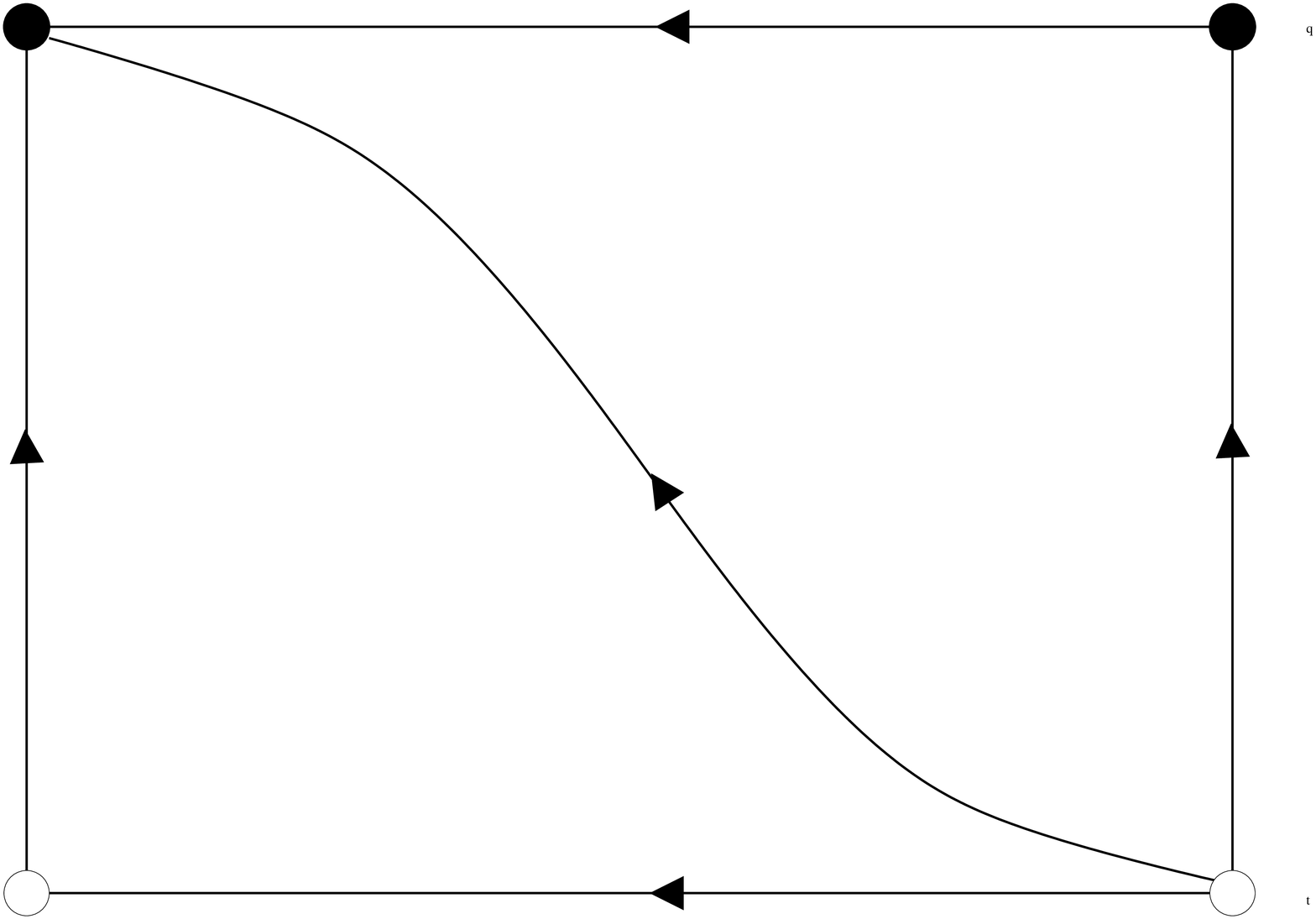}\label{Bfigmag}}\qquad\qquad
\subfigure[\Aminus\ {(\scriptsize $\beta<-1/2$)}]{\includegraphics[width=0.37\textwidth]{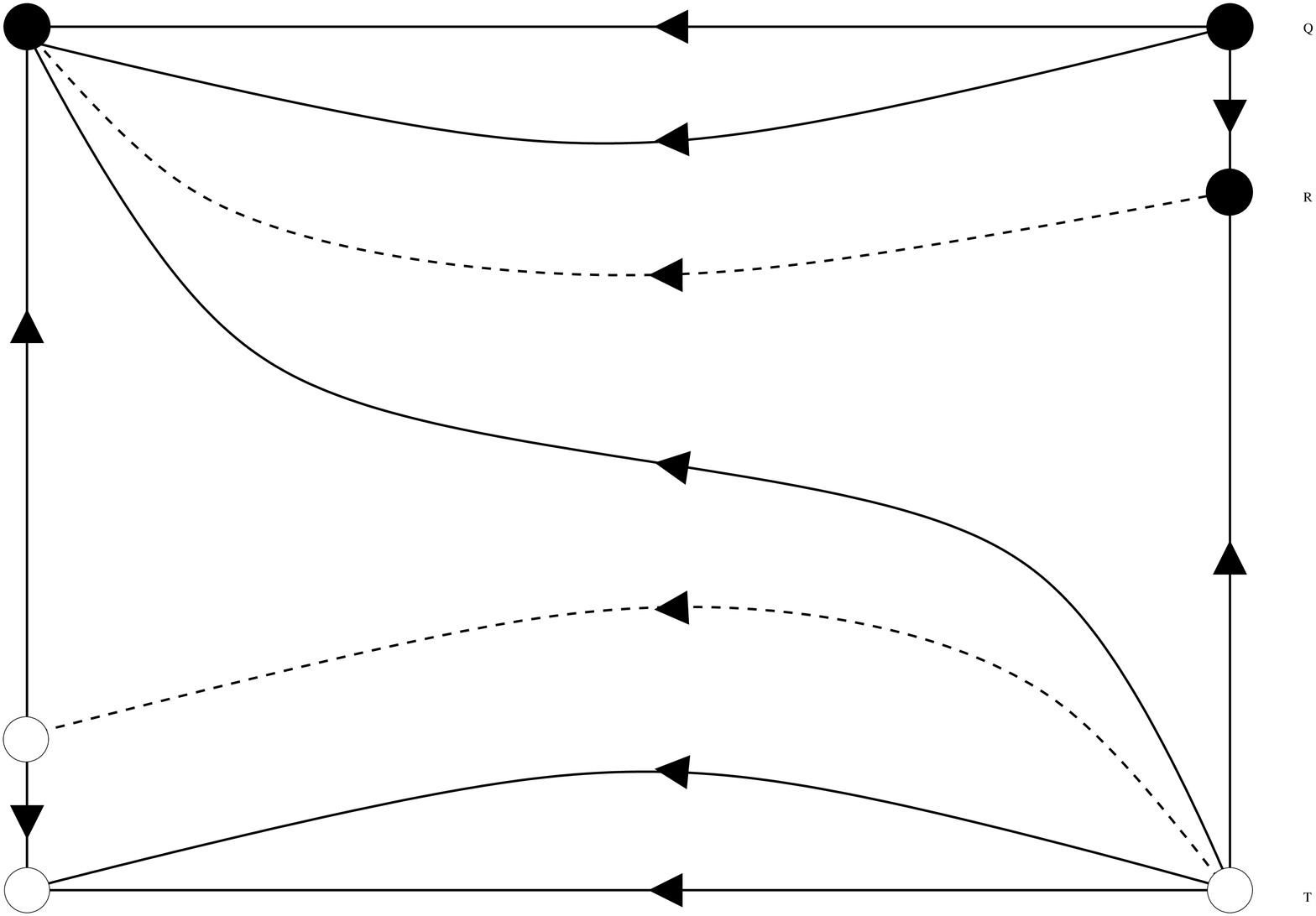}}\\
\subfigure[\Aminus\ {(\scriptsize $\beta=-1/2$)}]{\includegraphics[width=0.37\textwidth]{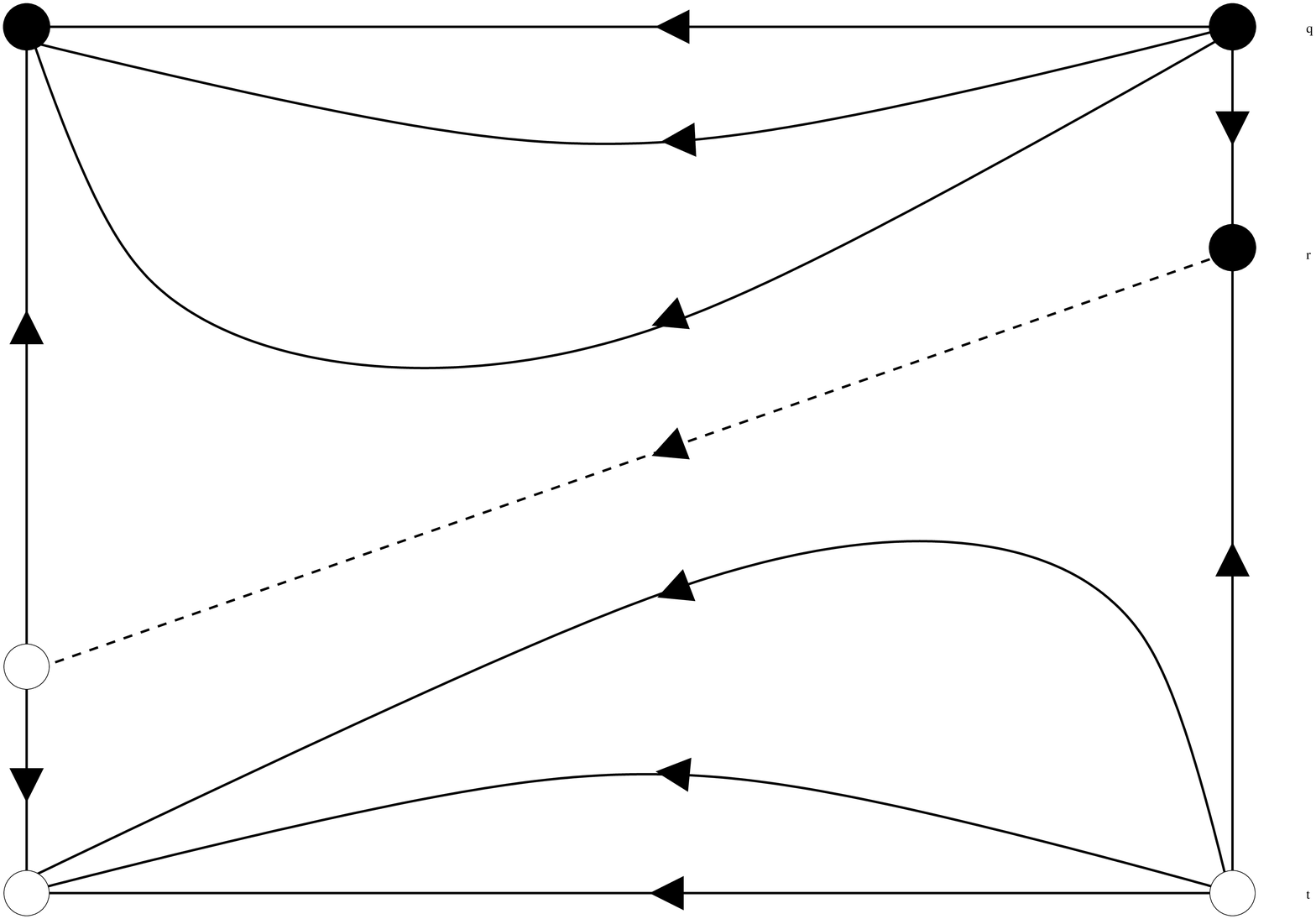}}\qquad\qquad
\subfigure[\Azerominus, \Aminus\ {(\scriptsize $\beta>-1/2$)}, not \AminusP]{\label{nospec}\includegraphics[width=0.37\textwidth]{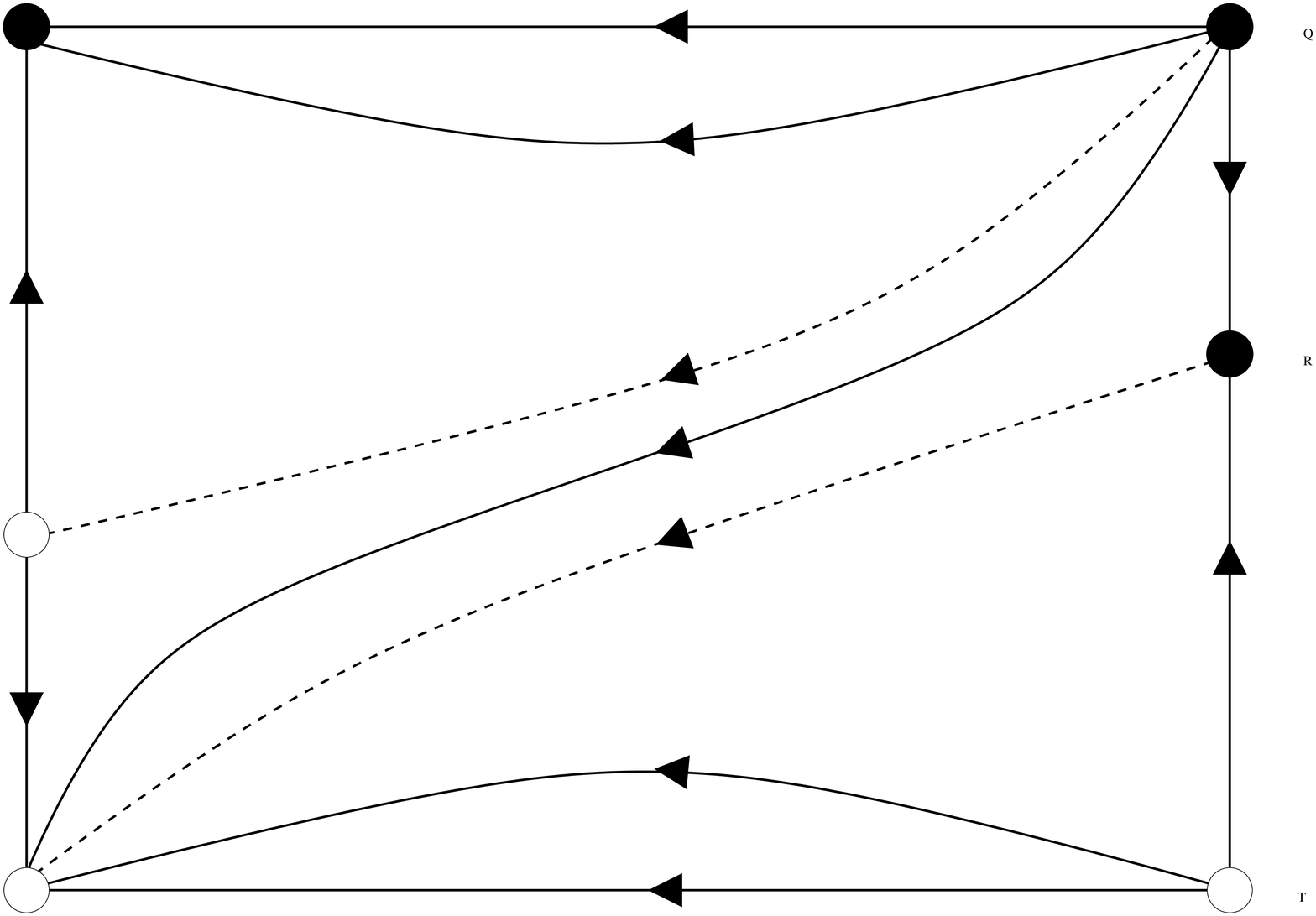}}\\
\subfigure[\AminusP]{\label{yesspec}\includegraphics[width=0.37\textwidth]{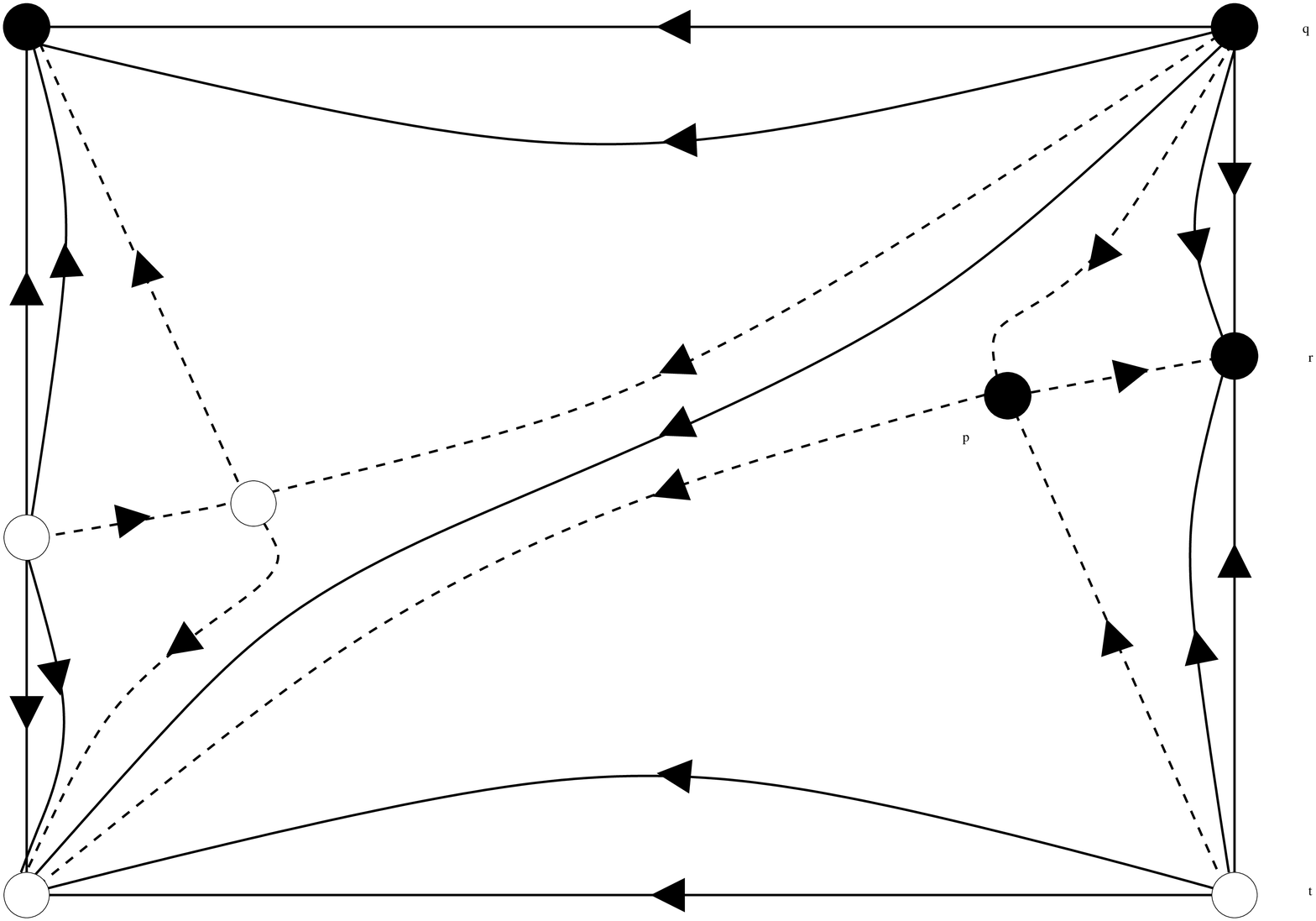}}\qquad\qquad 
\subfigure[\Azeroplus, \Aplus]{\includegraphics[width=0.37\textwidth]{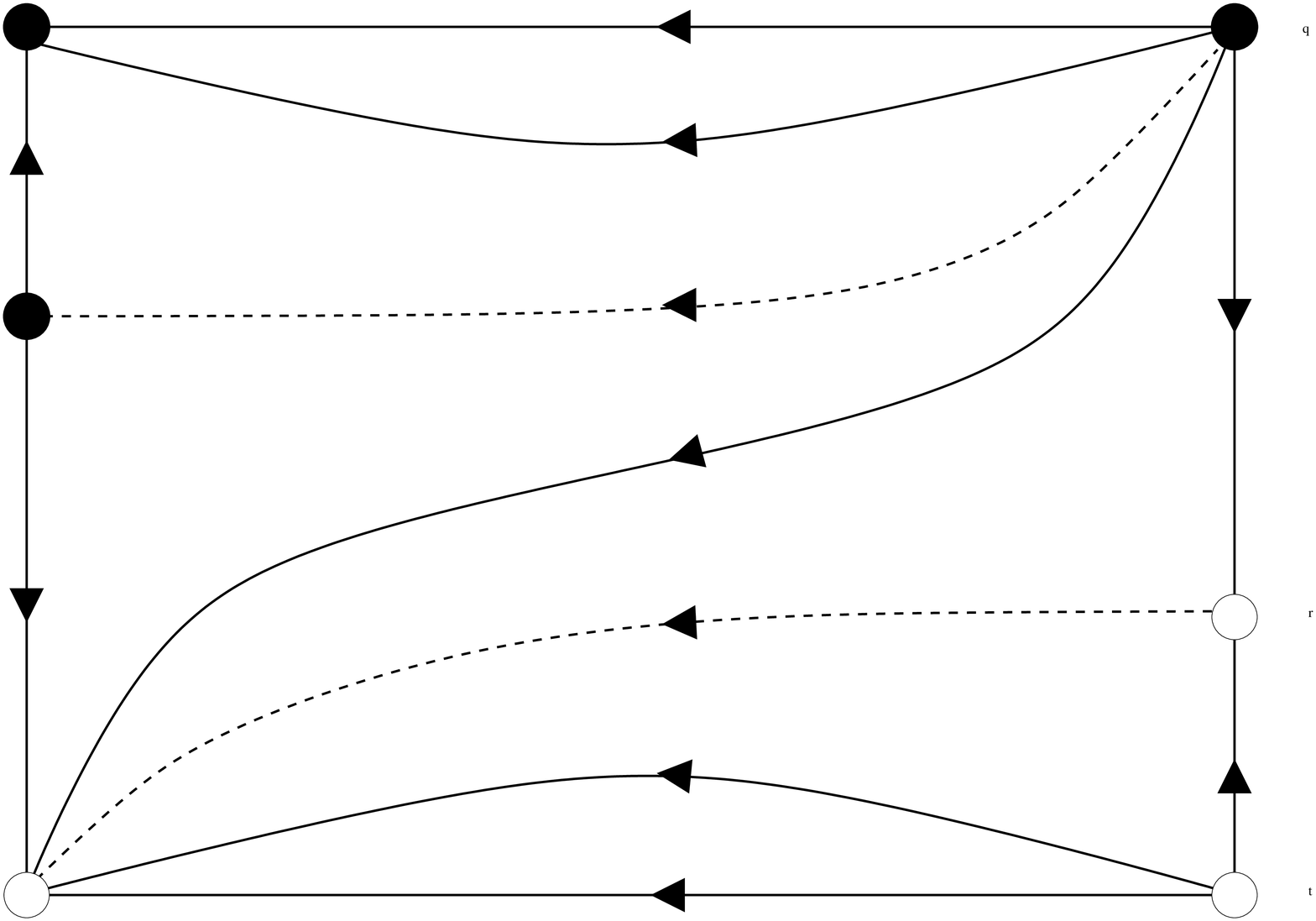}}\\
\subfigure[\Bplus, \Cplus, \Dplus]{\includegraphics[width=0.37\textwidth]{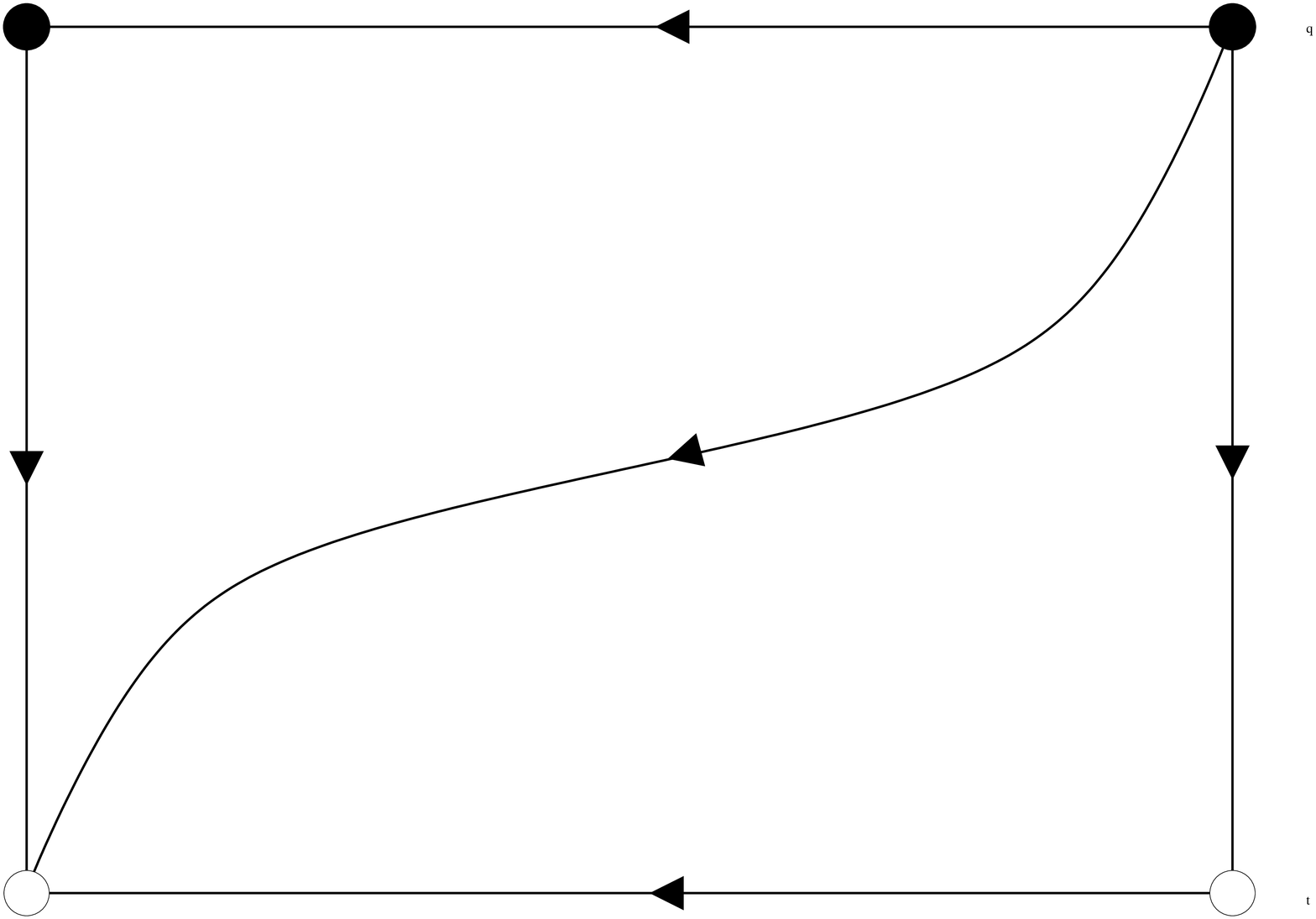}}\qquad\qquad
\subfigure[Coordinates]{\label{Bcoords}\includegraphics[width=0.37\textwidth]{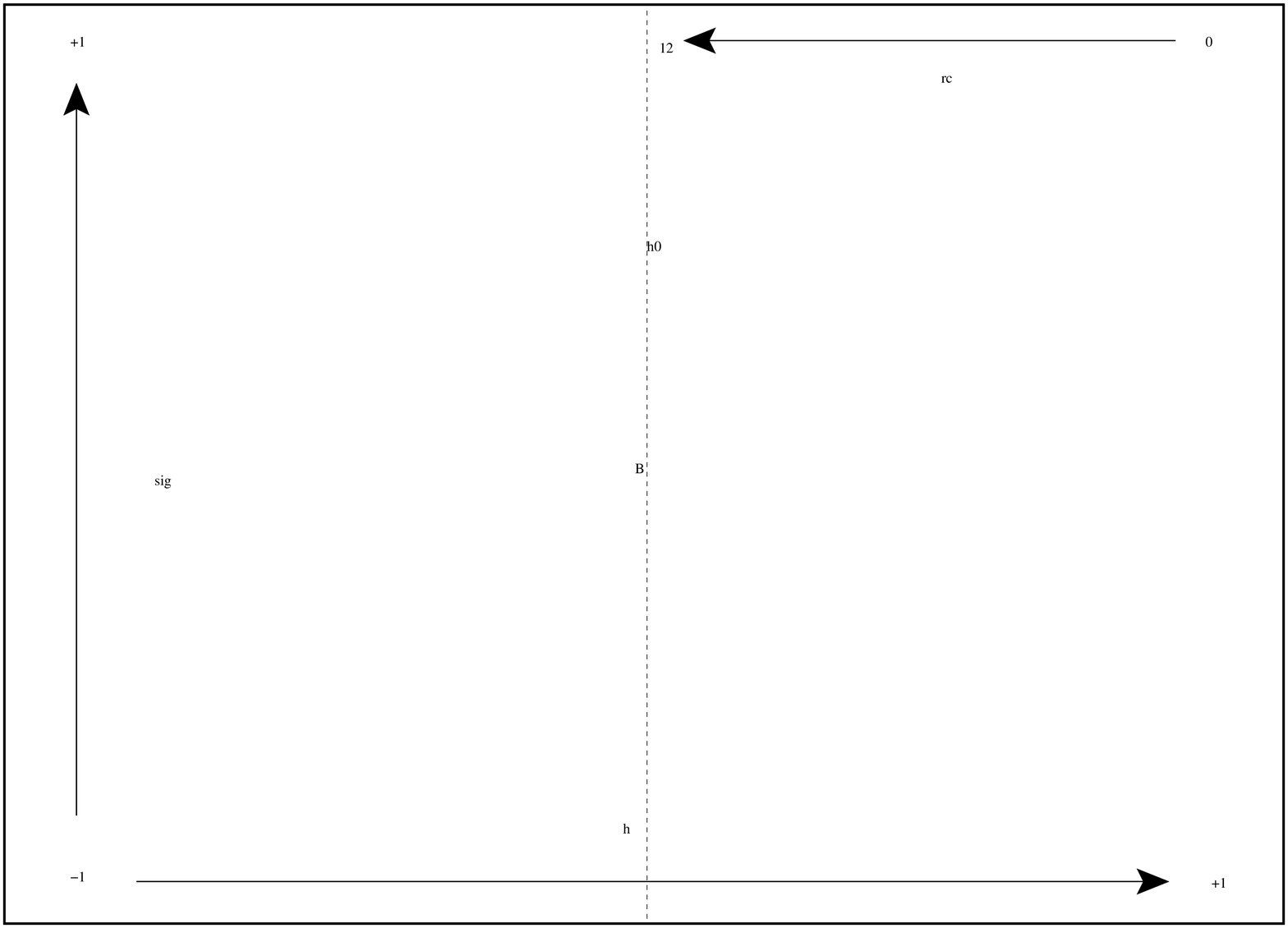}}
\caption{Phase portraits of the dynamical system~\eqref{Basesys} on $\mathcal{B}_{\mathrm{IX}}$.
Typical orbits are represented by continuous lines, non-generic orbits by dashed lines.
The color-coding of the fixed points $\mathrm{T}_\flat$, $\mathrm{Q}_\flat$, $\mathrm{R}_\flat$ makes
sense only in the context of the dynamical systems formulation of Sec.~\ref{blowup};
the coordinate $r$ of Subfig.~(h) is used in that section.}
\label{Bfig}
\end{center}
\end{figure}

\subsection{The Bianchi type~I boundary}\label{blowup}

In this subsection we introduce the tools that are needed to study the behavior 
of solutions of the dynamical system~\eqref{89syst} 
in a neighborhood of the line $\mathcal{L}_\mathrm{I} = \partial\mathcal{S}_\sharp\cap\partial\mathcal{B}_{\mathrm{IX}}$, 
cf.~\eqref{badboundary}.
A careful analysis is necessary since the system~\eqref{89syst} does not
admit a smooth extension to that part of the boundary of the state space
$\mathcal{X}_{\mathrm{IX}}$.

To remedy this defect of the system~\eqref{89syst} 
we introduce a set of `polar coordinates' centered on the line $\mathcal{L}_\mathrm{I}$. 
Let 
\begin{subequations}\label{polartransf}
\begin{align}
M_1^2 & =  3 \,r \sin \vartheta\;, \\[0.2ex]
\label{1minusHD2}
1 - H_D^2 & = 2 \,r \cos\vartheta\;, \\[0.2ex]
\Sigma_+ & =\text{unchanged}\:.
\end{align}
\end{subequations}
The transformation of variables from $(H_D, M_1,\Sigma_+)$ to $(r, \vartheta,\Sigma_+)$, where $r>0$ and
$0<\vartheta< \textfrac{\pi}{2}$,
is a diffeomorphism on the domain $(H_D, M_1,\Sigma_+) \in\mathcal{X}_\mathrm{IX}^{+}$.
We define the domain $\mathcal{Y}_{\mathrm{IX}}^+$ of the variables $(r,\vartheta,\Sigma_+)$ to be 
the preimage of the state space $\mathcal{X}_\mathrm{IX}^{+}$ under the transformation~\eqref{polartransf}.
We obtain $r \cos\vartheta < \textfrac{1}{2}$ from~\eqref{1minusHD2}; the constraint
$1 - \Sigma_+^2 - \textfrac{1}{12}\,M_1^2 = \Omega > 0$ implies
$r\sin\vartheta < 4 (1 -\Sigma_+^2)$. Therefore, $\mathcal{Y}_{\mathrm{IX}}^+$ can be written
as 
\begin{equation}\label{statespaceIXpolar}
\mathcal{Y}_{\mathrm{IX}}^+ = \Big\{ r>0, \vartheta \in(0,\textfrac{\pi}{2}), \Sigma_+ \in (-1,1)\:\big|\:
r < \min\left[\textfrac{1}{2 \cos\vartheta},\textfrac{4(1-\Sigma_+^2)}{\sin\vartheta}\right] \,\Big\}\:,
\end{equation}
see Fig.~\ref{Yspace}.
By construction, the state space $\mathcal{Y}_{\mathrm{IX}}^+$ is past invariant under the flow
of~\eqref{dynsyspolar} (because $\mathcal{X}_\mathrm{IX}^{+}$ is past invariant).

\begin{figure}[Ht]\label{Y}
\begin{center}
\psfrag{sigma}[cc][cc][0.7][0]{$\Sigma_+$}
\psfrag{r}[cc][cc][0.7][0]{$r$}
\psfrag{theta}[cc][cc][0.7][0]{$\vartheta$}
\psfrag{theta}[cc][cc][0.7][0]{$\vartheta$}
\psfrag{L}[cc][cc][1][20]{$\mathcal{X}_\mathrm{I}$}
\psfrag{B}[cc][cc][1][0]{$\mathcal{B}_{\mathrm{IX}}^+$}
\psfrag{S}[cc][cc][1][0]{$\mathcal{S}_\sharp$}
\psfrag{h0}[cl][cl][1][-45]{$H_D=0$}
\subfigure[The space $\mathcal{Y}_{\mathrm{IX}}^+$]{\includegraphics[width=0.4\textwidth]{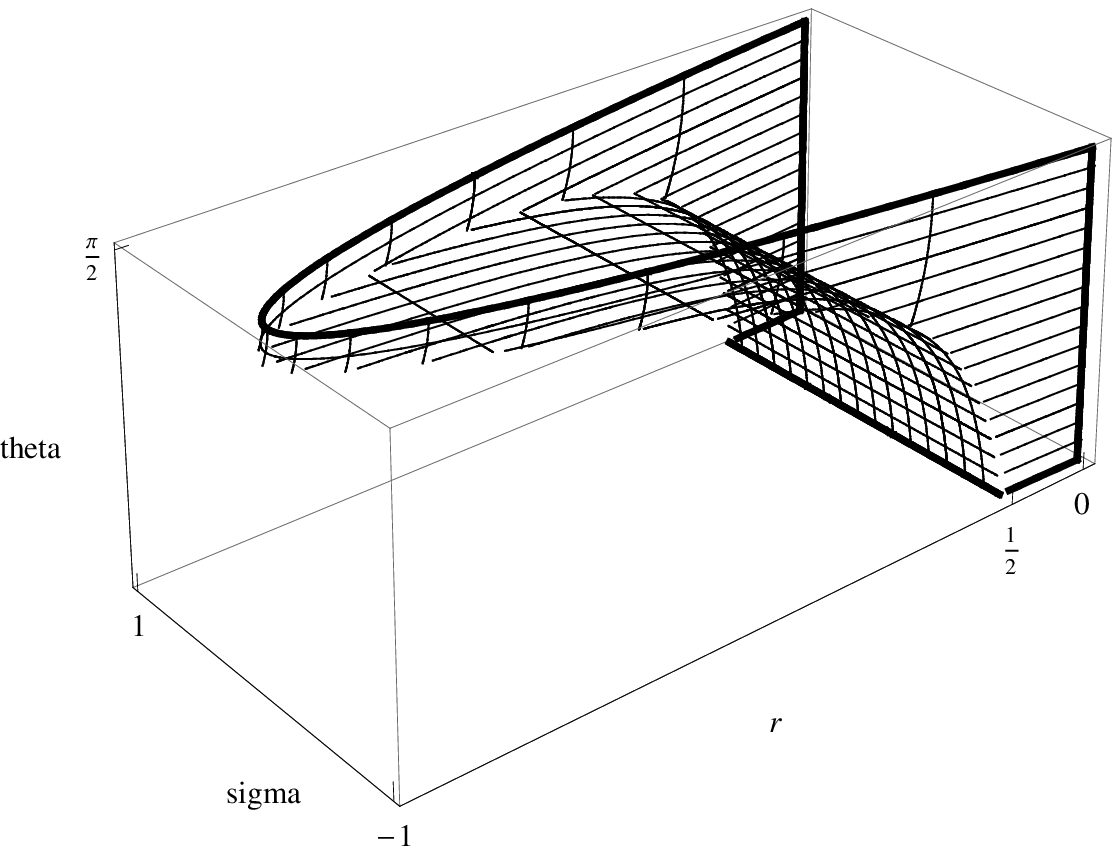}\label{Yspace}}\qquad\qquad
\subfigure[A schematic depiction of $\partial\mathcal{Y}_{\mathrm{IX}}^+$]{\includegraphics[width=0.3\textwidth]{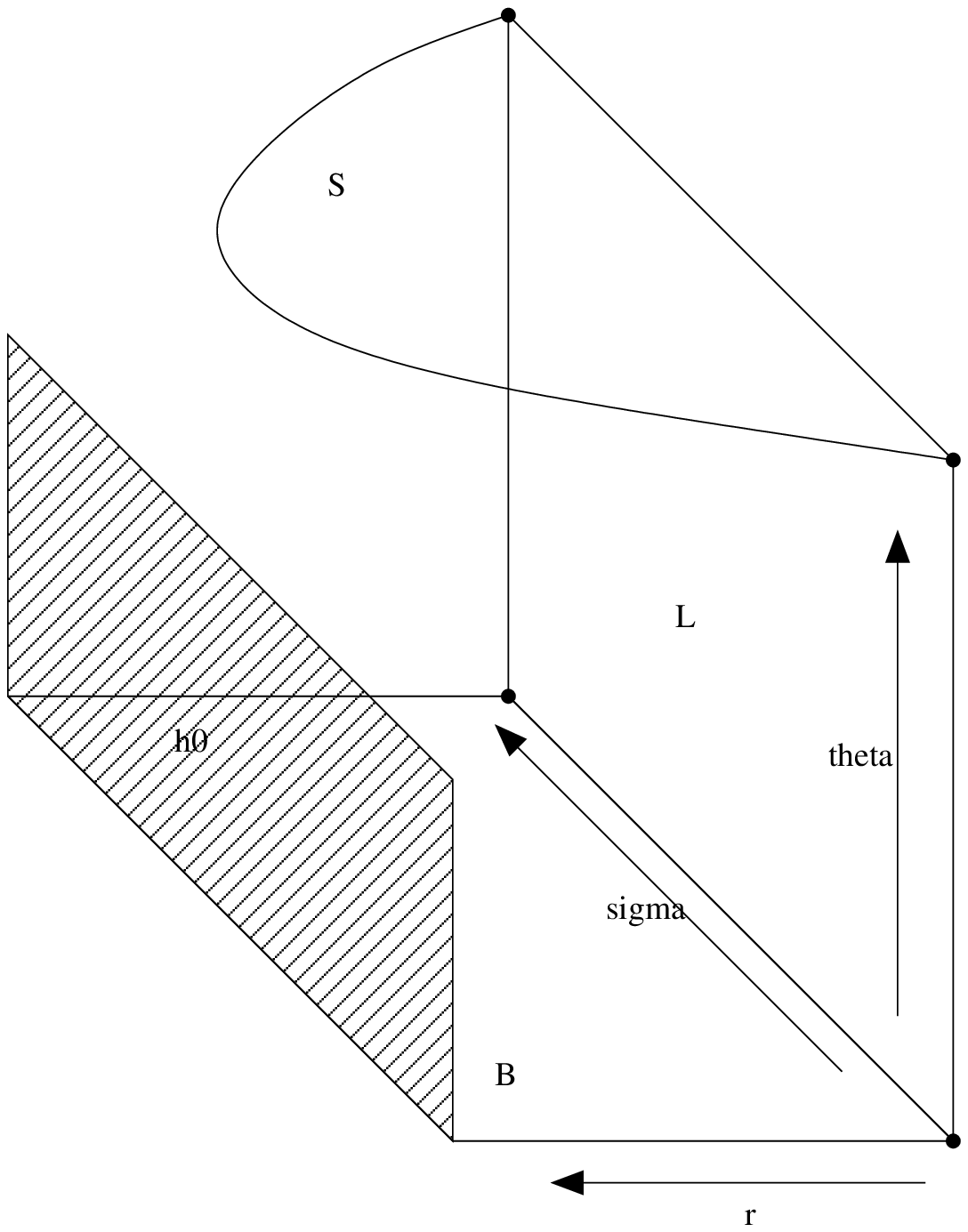}}
\caption{The space $\mathcal{Y}_{\mathrm{IX}}^+$ and its boundary.}
\end{center}
\end{figure}

In the new variables, the dynamical system~\eqref{89syst} takes the form
\begin{subequations}\label{dynsyspolar}
\begin{align}
r^\prime & = 2\, r \left( H_D (q - H_D \Sigma_+) - 3 \Sigma_+ \sin^2 \vartheta \right), \\[0.5ex]
\vartheta^\prime & = - 3 \Sigma_+ \sin (2\vartheta)\:, \\[0.5ex]
\label{polarSigma+}
\Sigma_+^\prime & = r \sin\vartheta -1 + (H_D -\Sigma_+)^2 + H_D \Sigma_+ (q- H_D\Sigma_+) + 3 \Omega\, (u(s)-w)\:,
\end{align}
\end{subequations}
where $q = 2 \Sigma_+^2 + \textfrac{1}{2} (1+ 3 w) \Omega$ and 
$\Omega =  1 - \Sigma_+^2 - \textfrac{1}{4}\, r \sin\vartheta$.
In addition, $H_D$ and $s$ are regarded as functions of $r$ and $\vartheta$ in~\eqref{dynsyspolar},
\begin{equation}\label{ho}
H_D  = \sqrt{1 -2 r \cos\vartheta} \;,\qquad
s=\frac{1}{2}\:\frac{\tan \vartheta}{1 + \tan \vartheta} \:.
\end{equation}
Note that these are smooth functions of $r$ and $\vartheta$ 
on the state space $\mathcal{Y}_{\mathrm{IX}}^+$. In particular,
\begin{equation}\label{sjprime}
\frac{\partial s}{\partial \vartheta} = \frac{1}{\sin(2 \vartheta)} \,2 s (1- 2s)  = 
\frac{1}{2} \frac{1}{1+ \sin (2\vartheta)} \:,
\qquad\text{hence}\quad
\frac{1}{4} \leq \frac{\partial s}{\partial \vartheta} \leq \frac{1}{2} 
\quad\forall \vartheta \in [0,\textfrac{\pi}{2}]
\end{equation}
and 
$\partial s/\partial \vartheta = \textfrac{1}{2}$ at $\vartheta = 0$ and $\vartheta = \textfrac{\pi}{2}$.

In contrast to the system~\eqref{89syst}, 
the dynamical system~\eqref{dynsyspolar} on the state space $\mathcal{Y}_{\mathrm{IX}}^+$
\textit{admits a regular extension to the boundaries} (where we assume $H_D \geq \mathrm{const} > 0$).
The reason for this is that $s$ and thus the function $u(s)$ in~\eqref{polarSigma+} 
is well-defined for each point on the boundary of the state space.

\textbf{The side} $\bm{\mathcal{S}_\sharp}$.
In terms of the variables~\eqref{polartransf}, the set $\mathcal{S}_\sharp$, cf.~\eqref{Ssharpagain},
corresponds to 
\[
\mathcal{S}_\sharp = \big\{ \vartheta = \frac{\pi}{2}\,; \:{-1}< \Sigma_+ < 1, \: 0< r <  4(1-\Sigma_+^2)\:\big\}\:.
\]
(We use the symbol $\mathcal{S}_\sharp$ with slight abuse of notation.) 
The induced dynamical system reduces to the system~\eqref{Sidesys}, where $M_1$ is replaced by $\sqrt{3 r}$.
The phase portraits for the various cases are depicted in Fig.~\ref{Asharpfig}. 

\textbf{The base} $\bm{\mathcal{B}_{\mathrm{IX}}}$.
In terms of the variables~\eqref{polartransf}, the (past invariant half of the) set $\mathcal{B}_{\mathrm{IX}}$, cf.~\eqref{basedef},
corresponds to 
\[
\mathcal{B}^+_\mathrm{IX} = \big\{ \vartheta = 0\,; \:{-1}< \Sigma_+ < 1, \: 0< r <  \textfrac{1}{2}\:\big\}\:.
\]
It is an invariant subset of the system~\eqref{dynsyspolar} on $\overline{\mathcal{Y}}_{\mathrm{IX}}$ as well.
The induced system on $\mathcal{B}^+_\mathrm{IX}$ is equivalent to the system~\eqref{Basesys} on 
the past invariant half of $\mathcal{B}_{\mathrm{IX}}$,
when we set $H_D = \sqrt{1 - 2 r}$.
The phase portraits for the various cases are depicted in Fig.~\ref{Bfig}, where
we restrict our attention to the past invariant (i.e., the right) half.

\textbf{The vacuum boundary}.
Setting $\Omega = 0$ in~\eqref{dynsyspolar} yields 
the (past invariant half of the) vacuum subset.
The vacuum subset of $\overline{\mathcal{Y}}_{\mathrm{IX}}^+$ is different
from $\overline{\mathcal{V}}_{\mathrm{IX}}^+$, which is depicted in Fig.~\ref{vacuumIXflow}. 
The main difference concerns the fixed points: While there are
two fixed points, $\mathrm{T}$ and $\mathrm{Q}$, on the past invariant half of the 
vacuum subset $\overline{\mathcal{V}}_{\mathrm{IX}}^+$ of $\overline{\mathcal{X}}_{\mathrm{IX}}^+$,
there are four fixed points on the vacuum subset of $\overline{\mathcal{Y}}_{\mathrm{IX}}^+$:
\begin{itemize}
\item[$\mathrm{T}_\flat$] \quad The Taub point $\mathrm{T}_\flat$ is given by $(r,\vartheta, \Sigma_+) = (0,0,-1)$.
\item[$\mathrm{T}_\sharp$] \quad The Taub point $\mathrm{T}_\sharp$ is given by $(r,\vartheta, \Sigma_+) = (0,\textfrac{\pi}{2},-1)$.
\item[$\mathrm{Q}_\flat$] \quad The non-flat LRS point $\mathrm{Q}_\flat$ is given by $(r,\vartheta, \Sigma_+) = (0,0,1)$.
\item[$\mathrm{Q}_\sharp$] \quad The non-flat LRS point $\mathrm{Q}_\sharp$ is given by $(r,\vartheta, \Sigma_+) = (0,\textfrac{\pi}{2},1)$.
\end{itemize}
The transformation~\eqref{polartransf} duplicates the fixed points $\mathrm{T}$ and $\mathrm{Q}$;
more specifically, there exist two orbits on the vacuum subset of $\overline{\mathcal{Y}}_{\mathrm{IX}}^+$,
an orbit $\mathrm{T}_\flat \rightarrow \mathrm{T}_\sharp$ and an orbit $\mathrm{Q}_\flat \leftarrow \mathrm{Q}_\sharp$,
that are collapsed to the fixed points $\mathrm{T}$ and $\mathrm{Q}$ on $\overline{\mathcal{V}}_{\mathrm{IX}}^+$.
This becomes apparent when we consider the fourth boundary subset of $\mathcal{Y}_{\mathrm{IX}}^+$,
which does not have a direct correspondence on $\overline{\mathcal{X}}_{\mathrm{IX}}^+$.

\textbf{The Bianchi type~I boundary} $\bm{\mathcal{X}_{\,\mathrm{I}}}$.
There exists an additional two-dimensional boundary subset of $\mathcal{Y}_{\mathrm{IX}}^+$:
The set $r=0$.
\[
\mathcal{X}_{\,\mathrm{I}} = \big\{ r = 0\,; \:{-1}< \Sigma_+ < 1, \: 0< \vartheta <  \textfrac{\pi}{2}\:\big\}\:.
\]
This boundary subset can be regarded as the preimage of the line $\mathcal{L}_\mathrm{I}$ under the
map~\eqref{polartransf}.
Since the map~\eqref{polartransf} is not injective for $r=0$ 
(because $(r,\vartheta)=(0,\vartheta) \mapsto (M_1,H_D) = (0,1)$, $\forall \vartheta$),
the set $\mathcal{X}_{\,\mathrm{I}}$ is two-dimensional;
the introduction of the coordinates $(r,\vartheta)$ generates a blow-up of 
the line $\mathcal{L}_\mathrm{I}$.
The dynamical system~\eqref{dynsyspolar}
\textit{extends regularly} to the subset $\overline{\mathcal{X}}_{\,\mathrm{I}}$; the extension reads
\begin{subequations}\label{Ibou}
\begin{align}
\vartheta^\prime & =- 3 \Sigma_+ \sin(2\vartheta) \\
\Sigma_+^\prime & =-3(1-\Sigma_+^2)\left[\textfrac{1}{2}(1-w)\Sigma_+-\left(u(s)-w\right)\right]\:.
\end{align}
\end{subequations}
The variable $s$ is given in terms of $\vartheta$ by~\eqref{ho};
from~\eqref{sjprime} it is immediate that
the system~\eqref{Ibou} coincides with the Bianchi type~I system~\eqref{dynsysbianchiI};
therefore, we call $\mathcal{X}_{\,\mathrm{I}}$ the Bianchi type~I subset.
The phase portraits for the various cases are depicted in Fig.~\ref{BianchiIfig2};
the difference between Fig.~\ref{BianchiIfig} and Fig.~\ref{BianchiIfig2} 
is in the color-coding of 
the fixed points $\mathrm{Q}_\flat$ and $\mathrm{R}_\flat$. 
Note in particular that $\mathrm{R}_\flat$ attracts orbits 
from the orthogonal direction in the Bianchi type~IX special case \AminusP, 
see Fig.~\ref{AminusspecI}; this follows by computing
\[
\textfrac{d}{d\tau}\left[\log r\right]_{|\mathrm{R}_\flat}=3(1-w)\beta^2+2\beta+(1+3w)\:,
\]
which is negative in the case \AminusP\ and thus consistent with Fig.~\ref{yesspec}. 

\begin{figure}[Ht!]
\begin{center}
\psfrag{T}[cc][cc][0.7][0]{$\mathrm{T}_\flat$}
\psfrag{T*}[cc][cc][0.7][0]{$\mathrm{T}_\sharp$}
\psfrag{Q}[cc][cc][0.7][0]{$\mathrm{Q}_\flat$}
\psfrag{Q*}[cc][cc][0.7][0]{$\mathrm{Q}_\sharp$}
\psfrag{F}[cc][cc][0.7][0]{$\mathrm{F}$}
\subfigure[\Dminus]{\label{biDmin2}\includegraphics[width=0.3\textwidth]{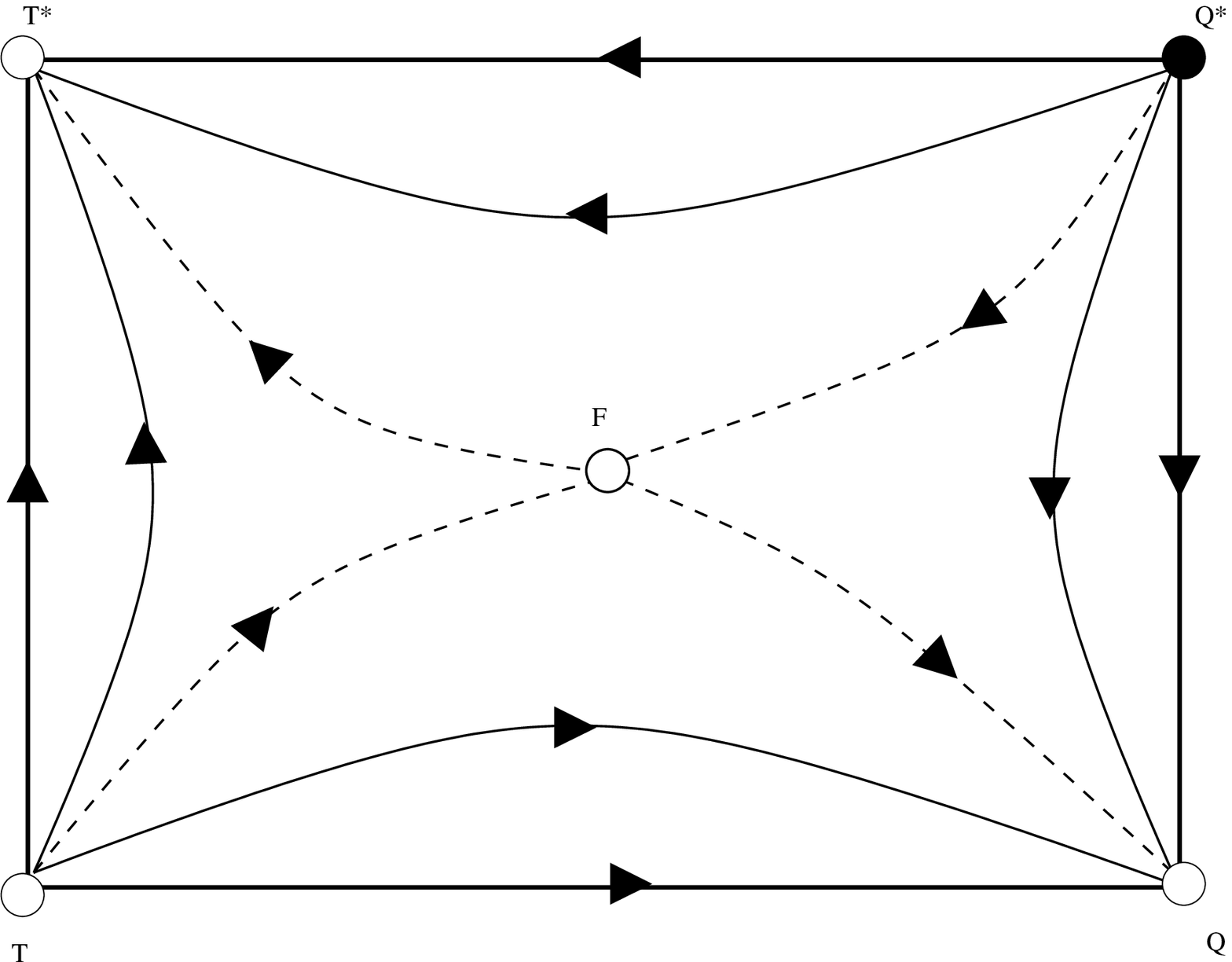}}\quad
\psfrag{T}[cc][cc][0.7][0]{$\mathrm{T}_\flat$}
\psfrag{T*}[cc][cc][0.7][0]{$\mathrm{T}_\sharp$}
\psfrag{Q}[cc][cc][0.7][0]{$\mathrm{Q}_\flat$}
\psfrag{Q*}[cc][cc][0.7][0]{$\mathrm{Q}_\sharp$}
\psfrag{D*}[cc][cr][0.7][0]{$\mathrm{R}_\sharp$}
\psfrag{F}[cc][cc][0.7][0]{$\mathrm{F}$}
\subfigure[\Bminus, \Cminus]{\includegraphics[width=0.3\textwidth]{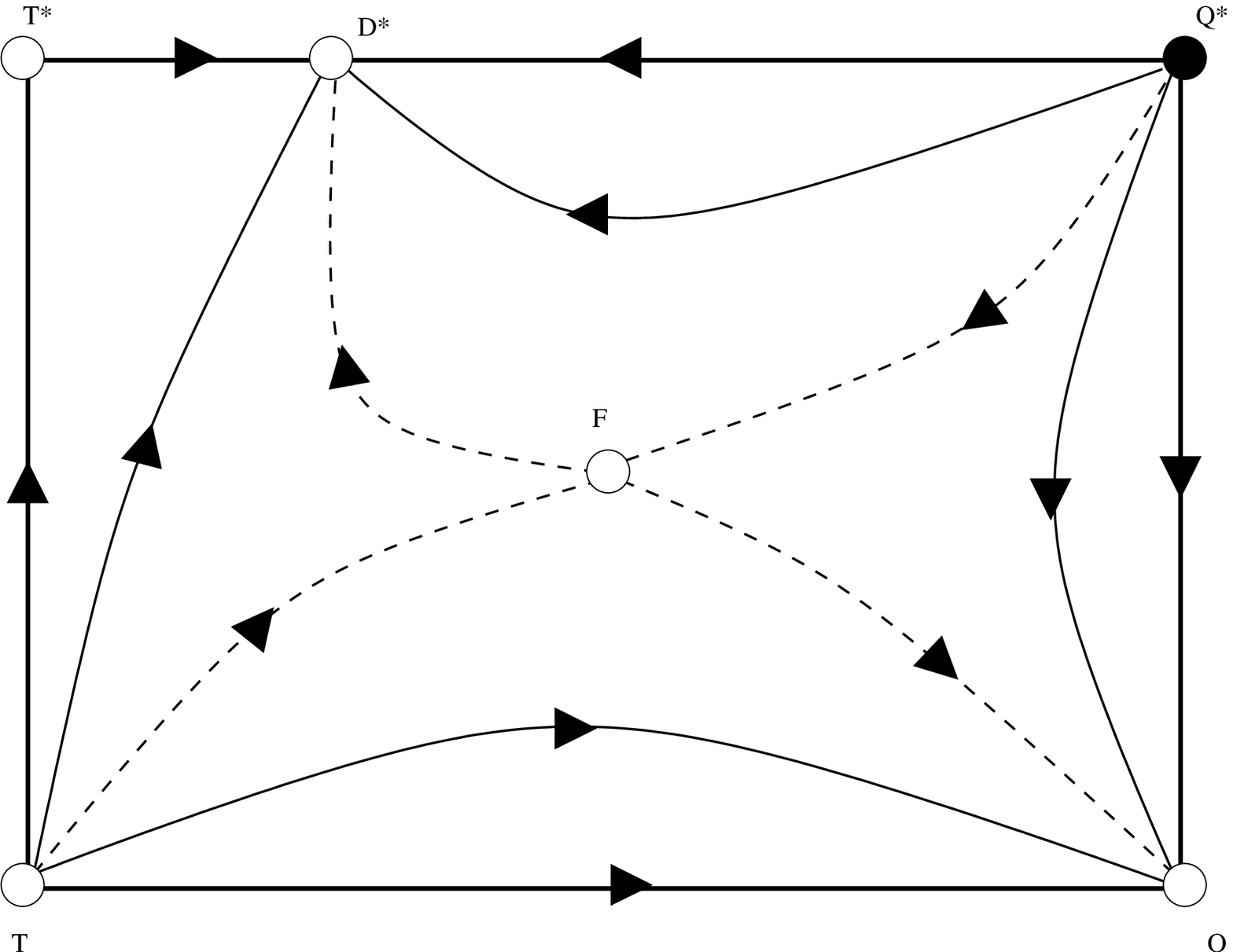}}\quad
\psfrag{T}[cc][cc][0.7][0]{$\mathrm{T}_\flat$}
\psfrag{T*}[cc][cc][0.7][0]{$\mathrm{T}_\sharp$}
\psfrag{Q}[cc][cc][0.7][0]{$\mathrm{Q}_\flat$}
\psfrag{Q*}[cc][cc][0.7][0]{$\mathrm{Q}_\sharp$}
\psfrag{R}[rl][cc][0.7][0]{$\mathrm{R}_\flat$}
\psfrag{D*}[cc][cr][0.7][0]{$\mathrm{R}_\sharp$}
\psfrag{F}[cc][cc][0.7][0]{$\mathrm{F}$}
\subfigure[\AminusP]{\label{AminusspecI}\includegraphics[width=0.3\textwidth]{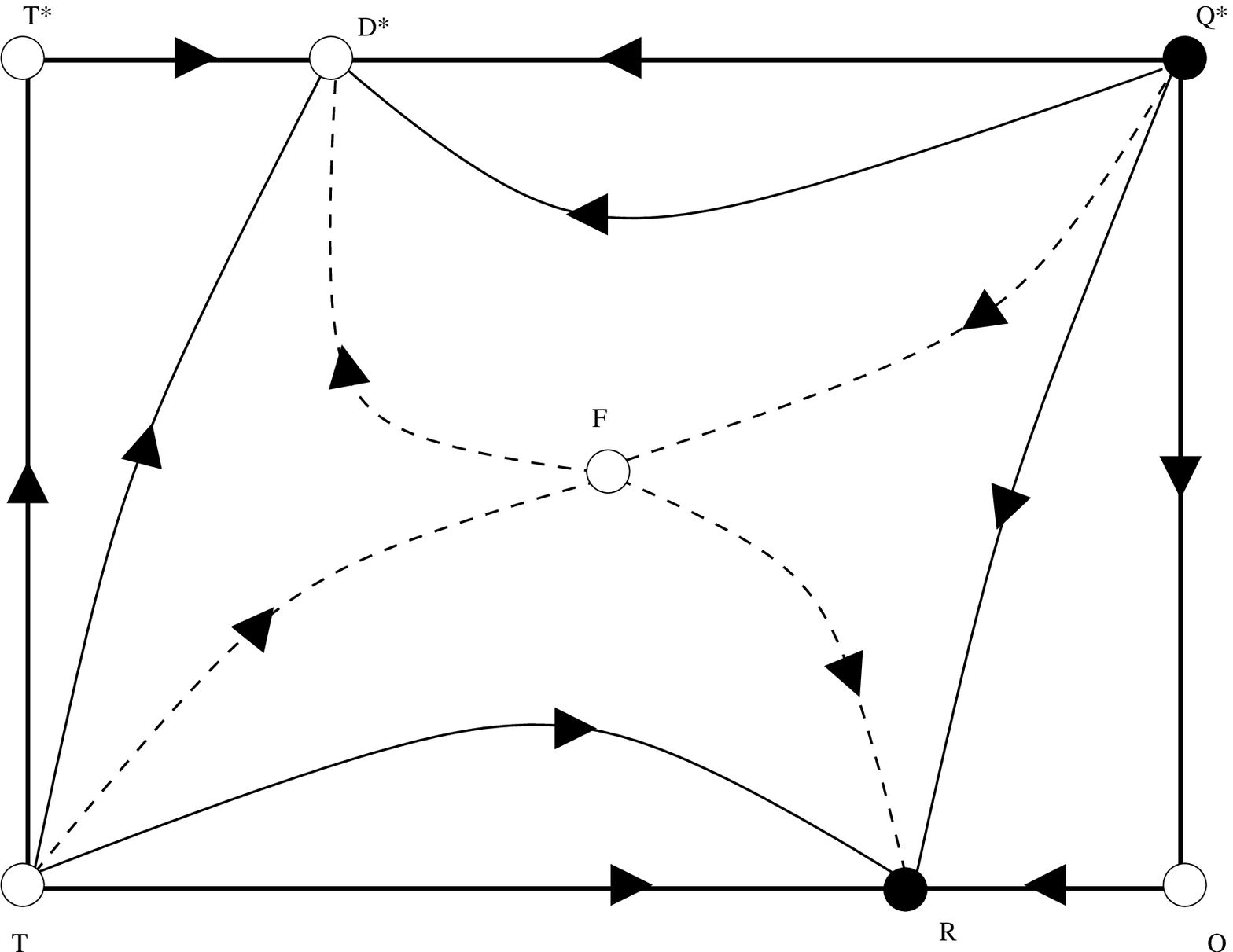}}\\
\psfrag{T}[cc][cc][0.7][0]{$\mathrm{T}_\flat$}
\psfrag{T*}[cc][cc][0.7][0]{$\mathrm{T}_\sharp$}
\psfrag{Q}[cc][cc][0.7][0]{$\mathrm{Q}_\flat$}
\psfrag{Q*}[cc][cc][0.7][0]{$\mathrm{Q}_\sharp$}
\psfrag{R}[rl][cc][0.7][0]{$\mathrm{R}_\flat$}
\psfrag{D*}[cc][cr][0.7][0]{$\mathrm{R}_\sharp$}
\psfrag{F}[cc][cc][0.7][0]{$\mathrm{F}$}
\subfigure[\Aminus$\backslash$\AminusP, \Azerominus]{\includegraphics[width=0.3\textwidth]{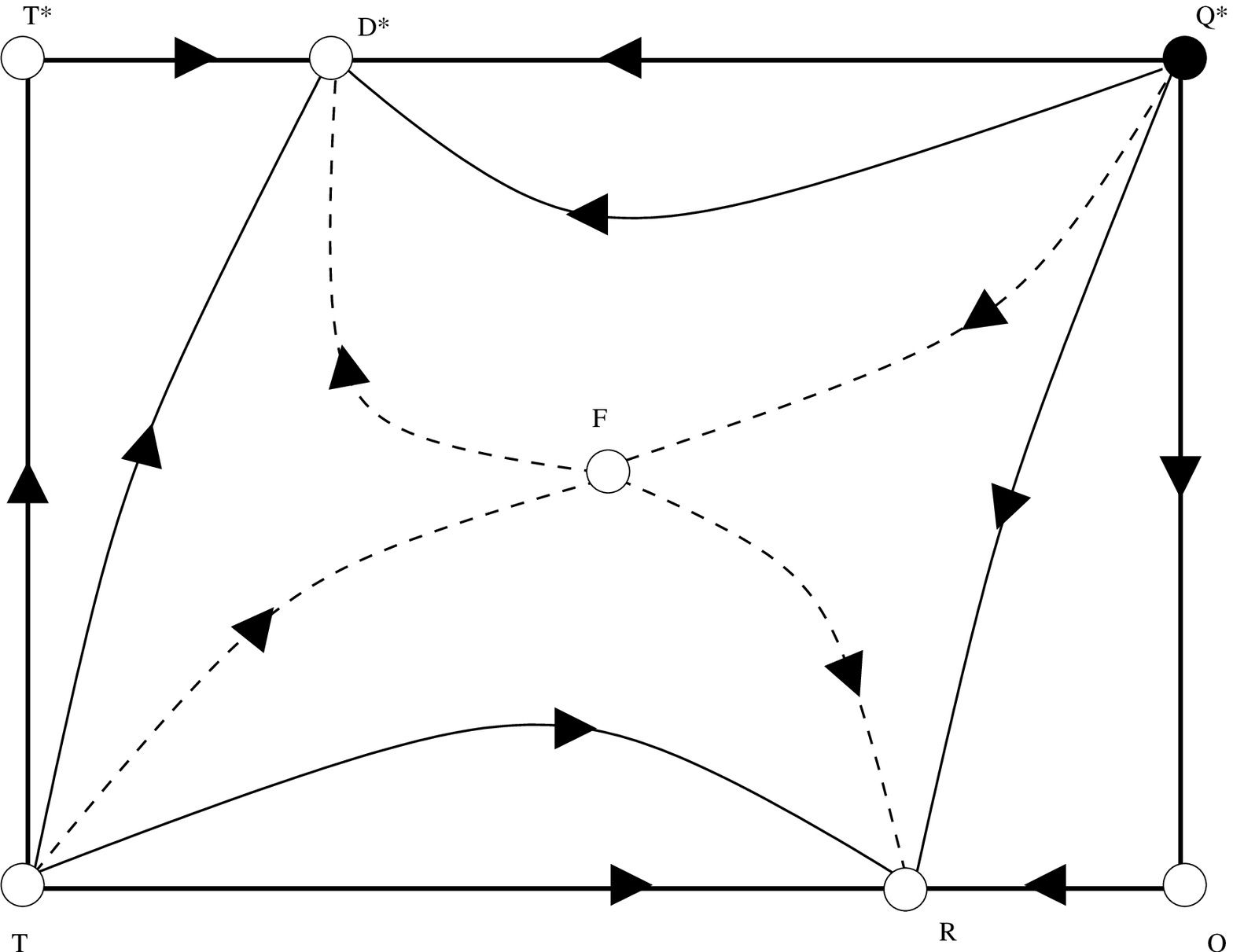}}\quad
\psfrag{T}[cc][cc][0.7][0]{$\mathrm{T}_\flat$}
\psfrag{T*}[cc][cc][0.7][0]{$\mathrm{T}_\sharp$}
\psfrag{Q}[cc][cc][0.7][0]{$\mathrm{Q}_\flat$}
\psfrag{Q*}[cc][cc][0.7][0]{$\mathrm{Q}_\sharp$}
\psfrag{R}[rl][cc][0.7][0]{$\mathrm{R}_\flat$}
\psfrag{D*}[cc][cc][0.7][0]{$\mathrm{R}_\sharp$}
\psfrag{F}[cc][cc][0.7][0]{$\mathrm{F}$}
\subfigure[\Azeroplus, \Aplus~$(0<\beta<\beta_\sharp)$]{\includegraphics[width=0.3\textwidth]{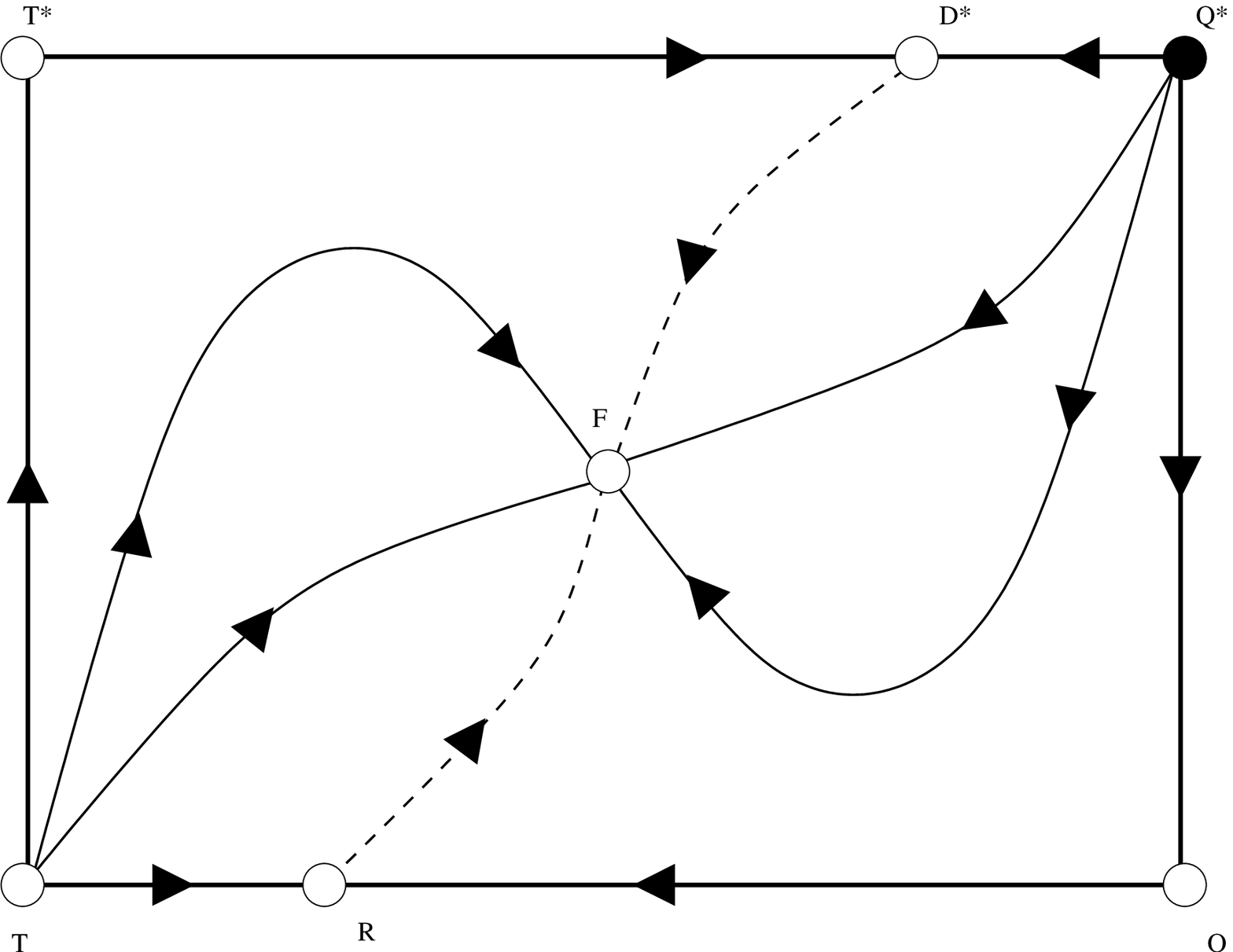}}\quad
\psfrag{T}[cc][cc][0.7][0]{$\mathrm{T}_\flat$}
\psfrag{T*}[cc][cc][0.7][0]{$\mathrm{T}_\sharp$}
\psfrag{Q}[cc][cc][0.7][0]{$\mathrm{Q}_\flat$}
\psfrag{Q*}[cc][cc][0.7][0]{$\mathrm{Q}_\sharp$}
\psfrag{R}[rl][cc][0.7][0]{$\mathrm{R}_\flat$}
\psfrag{D*}[cc][cr][0.7][0]{$\mathrm{R}_\sharp$}
\psfrag{F}[cc][cc][0.7][0]{$\mathrm{F}$}
\subfigure[\Aplus~$(\beta_\sharp\leq\beta<1)$]{\label{biAplu2}\includegraphics[width=0.3\textwidth]{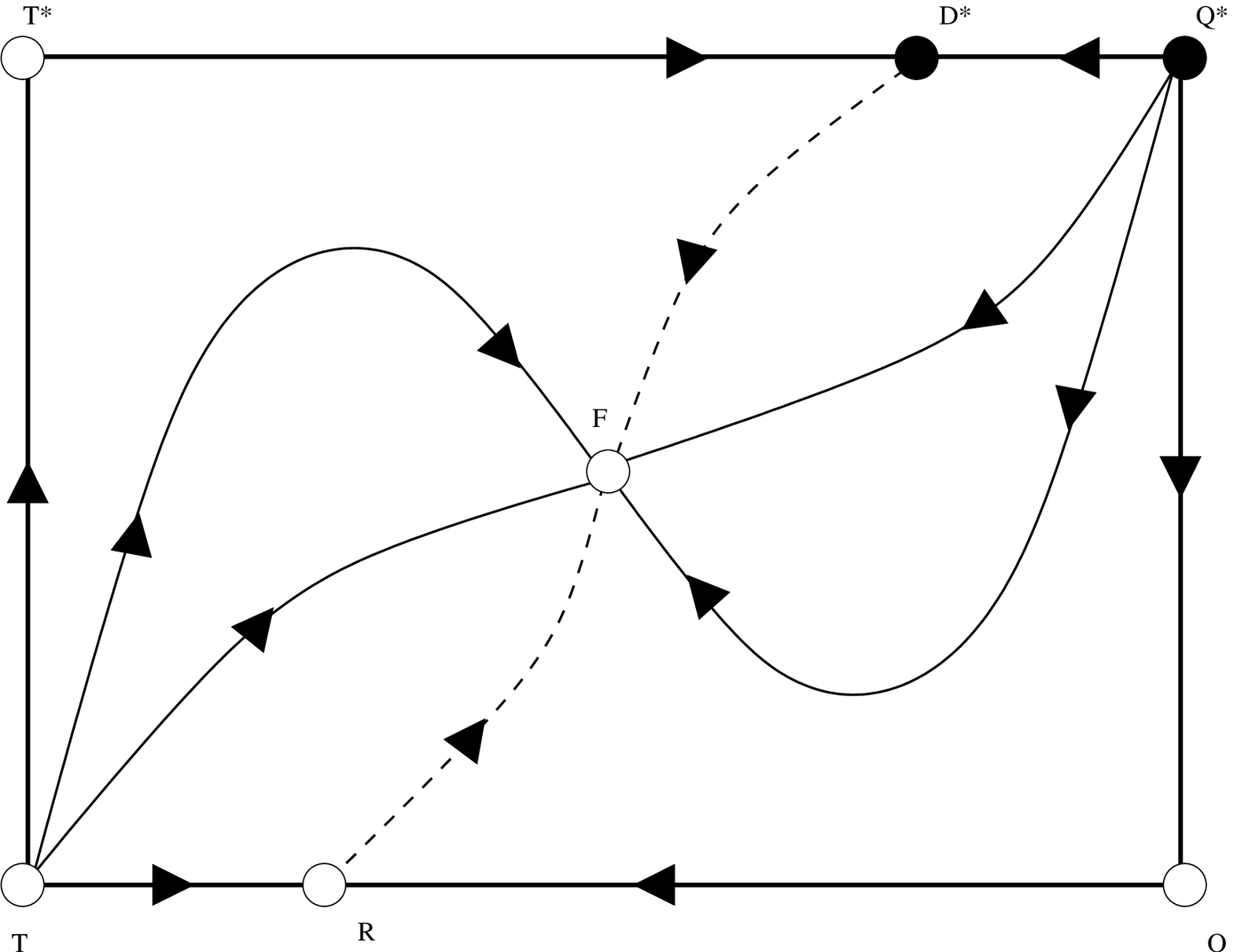}}\\
\psfrag{T}[cc][cc][0.7][0]{$\mathrm{T}_\flat$}
\psfrag{T*}[cc][cc][0.7][0]{$\mathrm{T}_\sharp$}
\psfrag{Q}[cc][cc][0.7][0]{$\mathrm{Q}_\flat$}
\psfrag{Q*}[cc][cc][0.7][0]{$\mathrm{Q}_\sharp$}
\psfrag{D*}[cc][cr][0.7][0]{$\mathrm{R}_\sharp$}
\psfrag{F}[cc][cc][0.7][0]{$\mathrm{F}$}
\subfigure[\Bplus, \Cplus]{\includegraphics[width=0.3\textwidth]{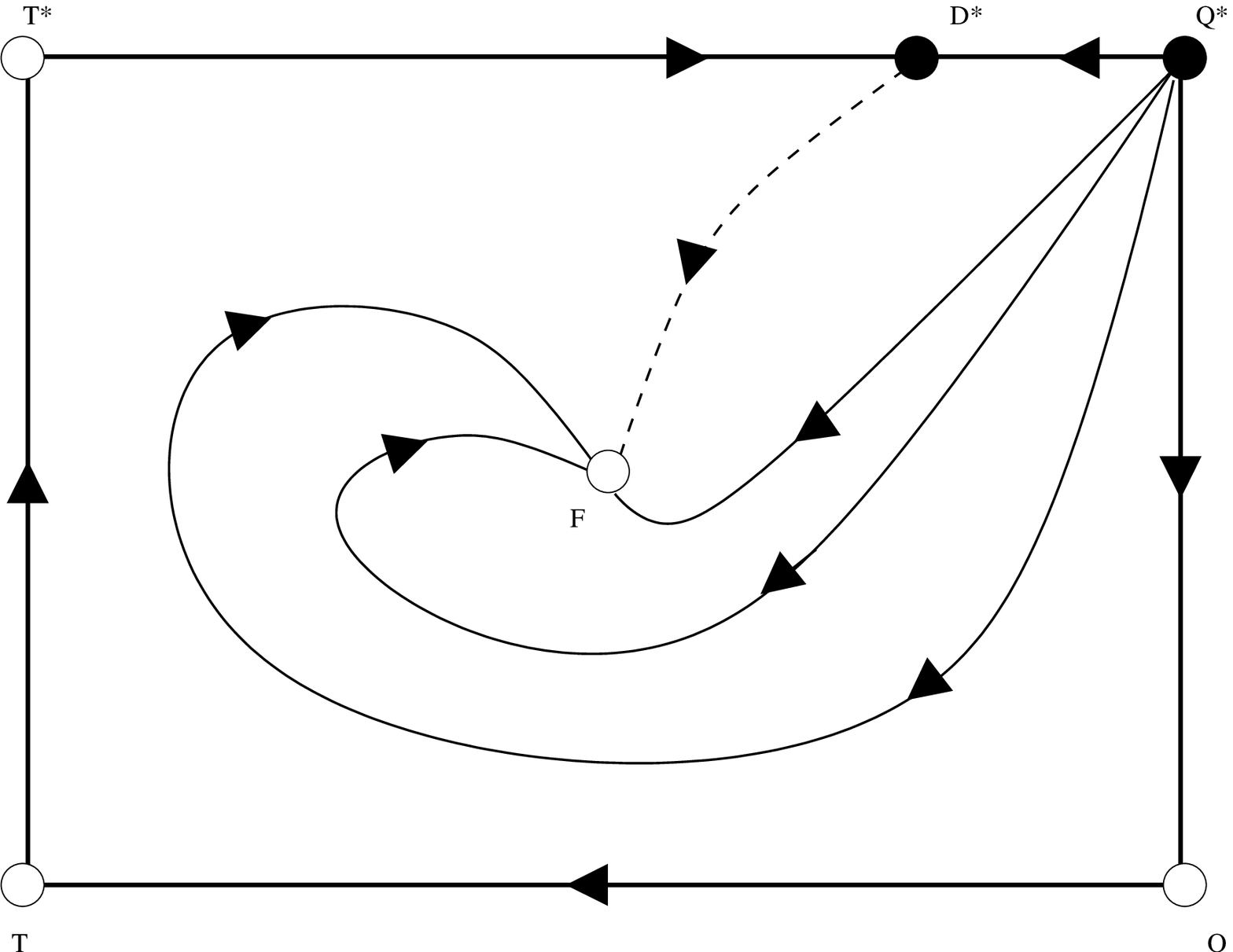}}\quad
\psfrag{T}[cc][cc][0.7][0]{$\mathrm{T}_\flat$}
\psfrag{T*}[cc][cc][0.7][0]{$\mathrm{T}_\sharp$}
\psfrag{Q}[cc][cc][0.7][0]{$\mathrm{Q}_\flat$}
\psfrag{Q*}[cc][cc][0.7][0]{$\mathrm{Q}_\sharp$}
\psfrag{D*}[cc][cr][0.7][0]{$\mathrm{R}_\sharp$}
\psfrag{F}[cc][cc][0.7][0]{$\mathrm{F}$}
\subfigure[\Dplus]{\includegraphics[width=0.3\textwidth]{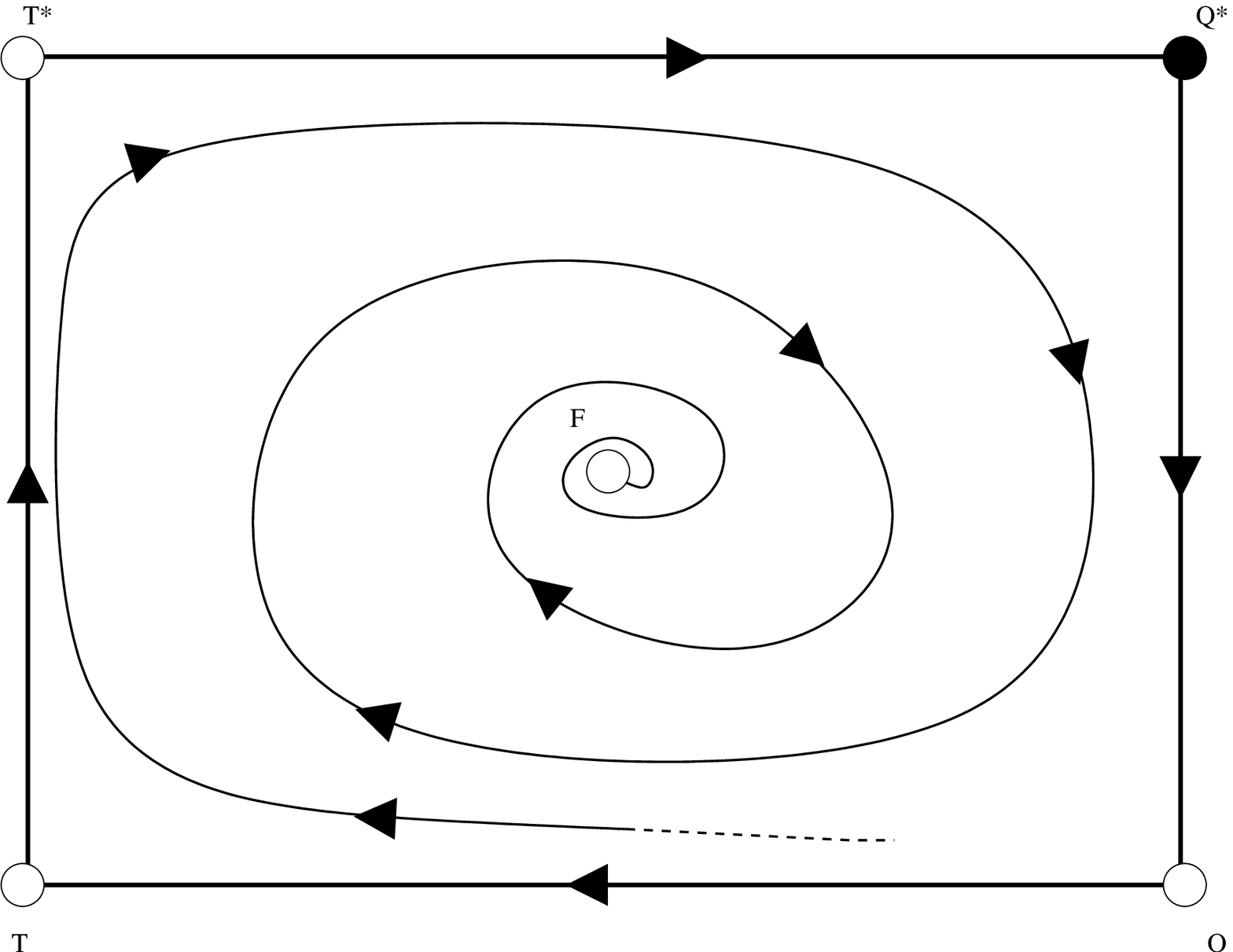}}\quad
\psfrag{s}[cc][cc][0.7][0]{$\vartheta$}
\psfrag{sigma}[cc][cc][0.7][0]{$\Sigma_+$}
\psfrag{t}[cc][cc][0.7][0]{$\vartheta$}
\psfrag{0}[cc][cc][0.7][0]{$0$}
\psfrag{1/2}[cc][cc][0.7][0]{$\textfrac{\pi}{2}$}
\psfrag{-1}[cc][cc][0.7][0]{$-1$}
\psfrag{+1}[cc][cc][0.7][0]{$+1$}
\psfrag{pi2}[cc][cc][0.7][0]{$\textfrac{\pi}{2}$}
\psfrag{X}[cc][cc][1.2][0]{$\mathcal{X}_{\,\mathrm{I}}$}
\subfigure[Coordinates]{\label{Icoo2}\includegraphics[width=0.3\textwidth]{coordinatesI.eps}}
\end{center}
\caption{Phase portraits of orbits on the Bianchi type~I subset $\mathcal{X}_{\,\mathrm{I}}$
of $\overline{\mathcal{Y}}_{\mathrm{IX}}^+$.
The fixed points are color-coded: 
A black fixed point is an attractor, a white fixed point is a repellor in the orthogonal direction.
This figure differs from Fig.~\ref{BianchiIfig} in the color-coding of the fixed points $\mathrm{Q}_\flat$ and $\mathrm{R}_\flat$.}
\label{BianchiIfig2}
\end{figure}

\section{Bianchi type IX: Results}\label{B9res}

In this section we use the building blocks of Sec.~\ref{B9sec} to derive
the main results on the \textit{global dynamics of LRS Bianchi type~IX models}.
The physical interpretation of these results is stated explicitly.

Let $\gamma$ denote an orbit of the dynamical system~\eqref{89syst}
on $\mathcal{X}_{\mathrm{IX}}$, which represents an LRS Bianchi type~IX model. 
There exist two fundamentally different cases in the context of
Bianchi type~IX dynamics.
\begin{itemize}
\item The recollapsing case: $\gamma \cap \mathcal{X}_{\mathrm{IX}}^{0} \neq \emptyset$ 
($\Rightarrow$ $\gamma \not\subset \mathcal{X}_{\mathrm{IX}}^{+}$). 
\item The (forever) expanding case:  $\gamma \subset \mathcal{X}_{\mathrm{IX}}^{+}$ ($\gamma \cap \mathcal{X}_{\mathrm{IX}}^{0} = \emptyset$).
\end{itemize}

In the \textbf{recollapsing case} there exists $\tau_0 > 0$ such that
$\gamma(\tau) \in \mathcal{X}_{\mathrm{IX}}^{+}$ (i.e., $H_D > 0$) $\forall\tau \in({-\infty}, \tau_0)$, 
$\gamma(\tau_0) \in \mathcal{X}_{\mathrm{IX}}^{0}$ (i.e., $H_D(\tau_0) = 0$), and
$\gamma(\tau) \in \mathcal{X}_{\mathrm{IX}}^{-}$ (i.e., $H_D < 0$) $\forall\, \tau_0 < \tau < \infty$. 
The cosmological model expands from an initial singularity, reaches a point
of maximum size, and then recollapses to finish in a big crunch.

In the \textbf{expanding case} (non-recollapsing case) we have $\gamma \subset \mathcal{X}_{\mathrm{IX}}^+$,
i.e., $\gamma(\tau) \in \mathcal{X}_{\mathrm{IX}}^+$ (and thus $H_D > 0$) $\forall \tau\in(-\infty,\infty)$.
In this case the cosmological model expands forever.

A third case, the collapsing case, arises from the expanding case by the discrete symmetry~\eqref{symIX}
and corresponds to forever contracting models. 

\begin{Lemma}\label{nointerlimitpoints}
Let $\gamma\subset\mathcal{X}_\mathrm{IX}$ be a recollapsing or expanding 
orbit of the dynamical system~\eqref{89syst}.
Then
\[
\alpha(\gamma) \subseteq \overline{\mathcal{S}}_\sharp \:.
\]
In the recollapsing case, $\omega(\gamma)$ is contained on the image of $\overline{\mathcal{S}}_\sharp$
under the discrete symmetry~\eqref{symIX}, in particular, $H_D \rightarrow -1$ as $\tau\rightarrow \infty$.
In the expanding case, $\omega(\gamma) \subseteq \overline{\mathcal{B}}_{\mathrm{IX}}$.
\end{Lemma}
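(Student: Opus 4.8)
\textit{(Proof plan.)} The strategy is to pin down the limit sets by combining the semipermeable-membrane property of $\mathcal{X}_{\mathrm{IX}}^{0}$ with the monotonicity principle, the latter applied in the blow-up chart of Subsec.~\ref{blowup}, where the system is regular up to the boundary. I would first treat the $\alpha$-limit, which is common to both the recollapsing and the expanding case. By construction such an orbit lies in $\mathcal{X}_{\mathrm{IX}}^{+}$ for all sufficiently negative $\tau$, so $\alpha(\gamma)\subseteq\overline{\mathcal{X}}_{\mathrm{IX}}^{+}$. Because $H_D^{\prime}<0$ holds strictly on $\{H_D=0\}$ and there are no fixed points there, the compact invariant set $\alpha(\gamma)$ cannot meet $\{H_D=0\}$: any such point would be pushed into $\{H_D<0\}$ by the forward flow, contradicting invariance. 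Hence $H_D\geq c>0$ on $\alpha(\gamma)$, and I may switch to the polar coordinates $(r,\vartheta,\Sigma_+)$ of \eqref{polartransf}; on $\{H_D\geq c\}$ the blow-up system \eqref{dynsyspolar} extends smoothly to all of $\partial\mathcal{Y}_{\mathrm{IX}}^{+}$, the defect along $\mathcal{L}_{\mathrm{I}}$, cf.~\eqref{badboundary}, having been removed.

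The decisive step is a monotonicity argument on $\mathcal{Y}_{\mathrm{IX}}^{+}\cap\{H_D\geq c\}$. The aim is a function $Z$ that is strictly monotone along the flow on the interior, with a zero-derivative set that is a lower-dimensional invariant subset; the natural candidates are three-dimensional analogues of the functions $Z_2$, $Z_3$ (see~\eqref{z3}) and $Z_4$ used for the lower Bianchi types, i.e.\ products of powers of $M_1$, $\Omega$ and $(1-H_D^2)$ divided by a squared affine expression in $\Sigma_+$, tuned so that $Z^{\prime}/Z$ becomes a non-negative multiple of a perfect square. Granting such a $Z$, the monotonicity principle (Appendix~\ref{dynsysapp}) forces $\alpha(\gamma)$ onto $\partial\mathcal{Y}_{\mathrm{IX}}^{+}$. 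A case inspection of the induced flows---on the side $\mathcal{S}_\sharp$ (Fig.~\ref{Asharpfig}), the base $\mathcal{B}_{\mathrm{IX}}^{+}$ (Fig.~\ref{Bfig}), the vacuum subset (Fig.~\ref{vacuumIXflow}) and the type~I subset $\mathcal{X}_{\,\mathrm{I}}$ (Fig.~\ref{BianchiIfig2})---together with the fact that $\mathcal{X}_{\mathrm{IX}}^{+}$ has no interior fixed points (if $\vartheta^{\prime}=0$ then $\Sigma_+=0$, whence $q>0$ and $r^{\prime}>0$), rules out every boundary piece except $\overline{\mathcal{S}}_\sharp$, giving $\alpha(\gamma)\subseteq\overline{\mathcal{S}}_\sharp$.

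The two $\omega$-statements then follow. In the recollapsing case the orbit terminates in $\mathcal{X}_{\mathrm{IX}}^{-}$; applying the discrete symmetry \eqref{symIX} maps $\gamma$ to a second recollapsing orbit $\tilde\gamma$ whose $\alpha$-limit set is the image of $\omega(\gamma)$, so the $\alpha$-result just proved, applied to $\tilde\gamma$, shows that $\omega(\gamma)$ lies on the image of $\overline{\mathcal{S}}_\sharp$ under \eqref{symIX}, in particular $H_D\to-1$. In the expanding case $\gamma\subset\mathcal{X}_{\mathrm{IX}}^{+}$ for all $\tau$, so $\omega(\gamma)\subseteq\overline{\mathcal{X}}_{\mathrm{IX}}^{+}$; the same monotone function $Z$, now evaluated at its opposite extreme, places $\omega(\gamma)$ on the remaining face $\overline{\mathcal{B}}_{\mathrm{IX}}$, the faces $\mathcal{S}_\sharp$ and the vacuum subset being excluded because their interior orbits are repelled into $\mathcal{X}_{\mathrm{IX}}$, cf.~\eqref{Cb9}.

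The main obstacle is unmistakably the construction and verification of $Z$. Unlike the two-dimensional reductions, where Poincar\'e--Bendixson and the explicit $Z_2$, $Z_3$, $Z_4$ sufficed, here the state space is genuinely three-dimensional; moreover $H_D$ itself fails to be monotone (near the vacuum boundary with $\Sigma_+\approx 0$ one checks that $q-H_D\Sigma_+$ can change sign), so no elementary choice works. One must therefore exhibit a composite function with the correct sign of $Z^{\prime}$ on the whole interior and with an exceptional set thin enough for the monotonicity principle to apply; this is the step that carries the entire weight of the argument.
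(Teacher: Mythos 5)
There is a genuine gap, and you have identified it yourself: the monotone function $Z$ that ``carries the entire weight of the argument'' is never constructed, only postulated, so the proposal is a strategy outline rather than a proof. Moreover, your guessed ansatz --- products of powers of $M_1$, $\Omega$, $(1-H_D^2)$ divided by a squared affine expression in $\Sigma_+$, with $Z'/Z$ a multiple of a perfect square --- points in the wrong direction. The function the paper uses, cf.~\eqref{delta}, is
\[
Z_5 \;=\; H_D\,\frac{M_1^{1/3}}{|1-H_D^2|^{2/3}}\:,
\qquad
Z_5' \;=\; -\,\frac{M_1^{1/3}}{|1-H_D^2|^{2/3}}\,
\Big(2\Sigma_+^2+\textfrac{1}{2}(1+3w)\,\Omega\Big)\:,
\]
i.e.\ $Z_5'\propto -q$, with strict monotonicity secured by a third-derivative check at $(\Sigma_+,\Omega)=(0,0)$. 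Its three essential features are absent from your ansatz family: it contains $H_D$ as an \emph{odd} prefactor, so that $\operatorname{sign} Z_5=\operatorname{sign} H_D$; it vanishes precisely on the base $\mathcal{B}_{\mathrm{IX}}$ (where $M_1=0$); and it diverges on $\mathcal{S}_\sharp$ (where $H_D^2=1$). These are exactly what let the paper finish in three lines: $Z_5$ decreasing plus $Z_5>0$ in the past excludes $\alpha(\gamma)$ from $\mathcal{B}_{\mathrm{IX}}$ and from the $H_D=-1$ copy of $\mathcal{S}_\sharp$; $Z_5<0$ eventually (recollapsing case) does the mirror-image job for $\omega(\gamma)$; and $Z_5>0$ bounded and decreasing (expanding case) excludes $\omega(\gamma)$ from $\mathcal{S}_\sharp$, leaving $\overline{\mathcal{B}}_{\mathrm{IX}}$. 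An even-in-$H_D$ function of your proposed shape cannot distinguish the two sides of $\mathcal{X}_{\mathrm{IX}}^0$ and so cannot separate the $\alpha$- from the $\omega$-statement, which is the actual content of the lemma.

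Two secondary points would also fail as written. First, you apply the monotonicity principle on $\mathcal{Y}_{\mathrm{IX}}^+\cap\{H_D\geq c\}$, but that set is not invariant ($H_D$ is not monotone, as you yourself note), whereas the principle of Appendix~\ref{dynsysapp} requires an invariant subset; the paper sidesteps this by working with $Z_5$ directly on $\mathcal{X}_{\mathrm{IX}}\cup\mathcal{V}_{\mathrm{IX}}$ for the system~\eqref{89syst}, with no blow-up and no restriction to $H_D\geq c$ needed at this stage (the chart of Subsec.~\ref{blowup} enters only afterwards, in Lemma~\ref{nointerlimitpoints}${}^\prime$). Second, your exclusion of $\mathcal{S}_\sharp$ from $\omega(\gamma)$ in the expanding case ``because interior orbits are repelled into $\mathcal{X}_{\mathrm{IX}}$, cf.~\eqref{Cb9}'' is not a valid inference: \eqref{Cb9} concerns only the transversal behavior at the single fixed point $\mathrm{C}_\sharp$, and transversal repulsion at isolated points does not prevent a face from containing $\omega$-limit points; the correct exclusion is again via $Z_5\to+\infty$ on $\mathcal{S}_\sharp$ against $Z_5$ decreasing along the orbit. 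Your reduction of the recollapsing $\omega$-statement to the $\alpha$-statement via the symmetry~\eqref{symIX}, and the semipermeable-membrane observation, are both sound and consistent with the paper, but also note that the lemma requires no case inspection of the boundary flows at all --- that finer analysis belongs to Theorems~\ref{equalfootingthm} and~\ref{BianchiIXtheo}.
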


\begin{Remark}
Note that both $\overline{\mathcal{S}}_\sharp$ and $\overline{\mathcal{B}}_{\mathrm{IX}}$ include
the type~I set $\overline{\mathcal{L}}_{\mathrm{I}}$.
\end{Remark}

\begin{proof}
Consider the $\mathcal{C}^1$ function $Z_5:\mathcal{X}_\mathrm{IX}\cup\mathcal{V}_\mathrm{IX}\to\R$ given by
\begin{equation}\label{delta}
Z_5=H_D\textfrac{M_1^{1/3}}{|1-H_D^2|^{2/3}}\:.
\end{equation}
Since $Z_5$ satisfies 
\[
Z_5'=-\textfrac{M_1^{1/3}}{|1-H_D^2|^{2/3}}\,\big(2\Sigma_+^2+\textfrac{1}{2}(1+3w)\Omega\big)\:
\]
and
\[
Z'''_5=-\textfrac{4M_1^{1/3}}{|1-H^2_D|^{2/3}} (\Sigma_+')^2\quad\text{if }\, (\Sigma_+,\Omega)=(0,0)\:,
\]
this function is strictly monotonically decreasing on $\mathcal{X}_\mathrm{IX}\cup\mathcal{V}_\mathrm{IX}$.
The monotonicity principle implies that
the $\alpha$- and the $\omega$-limit set of every orbit
in $\mathcal{X}_{\mathrm{IX}}$ must be contained 
on $\partial\mathcal{X}_\mathrm{IX}\backslash\mathcal{V}_\mathrm{IX}$
(which is the union of $\overline{\mathcal{B}}_{\mathrm{IX}}$, $\overline{\mathcal{S}}_\sharp$,
and the counterpart of $\overline{\mathcal{S}}_\sharp$ with $H_D = -1$). 

For a recollapsing or an expanding orbit we have $Z_5 > 0$ for some time $\tau_-$.
Because $Z_5$ is zero on $\mathcal{B}_{\mathrm{IX}}$, we infer that $\alpha(\gamma) \cap \mathcal{B}_{\mathrm{IX}} = \emptyset$, 
The first claim of the lemma ensues.

For a recollapsing orbit we have $Z_5 < 0$ for some time $\tau_+$.
Therefore, $\omega(\gamma) \cap \mathcal{B}_{\mathrm{IX}} = \emptyset$, because $Z_5$ is zero on $\mathcal{B}_{\mathrm{IX}}$.
For an expanding orbit, $Z_5$ is positive for all times; 
this leaves $\omega(\gamma) \subseteq \overline{\mathcal{B}}_{\mathrm{IX}}$ as the only possible asymptotic behavior.
This concludes the proof of the lemma.
\end{proof}

To proceed we make use of the dynamical systems formulation of Subsec.~\ref{blowup}.
An orbit $\gamma$ of the dynamical system~\eqref{dynsyspolar} on $\mathcal{Y}_{\mathrm{IX}}^+$ 
represents the expanding phase of an LRS Bianchi type~IX model; this is because
the past invariant subset $\mathcal{X}_\mathrm{IX}^{+}$ of $\mathcal{X}_\mathrm{IX}$ (which is defined by $H_D > 0$) 
and $\mathcal{Y}_{\mathrm{IX}}^+$ are diffeomorphic.
(In the expanding case, there is but the expanding phase; hence, in that case, $\gamma$
represents the entire type~IX solution.)
Since the dynamical system~\eqref{dynsyspolar} on $\mathcal{Y}_{\mathrm{IX}}^+$ possesses a regular extension to
the boundaries of the state space, we are able to apply standard methods
from the theory of dynamical systems.

\textbf{Lemma \ref{nointerlimitpoints}${}^\prime$.} 
\itshape In the context of the formulation of Subsec.~\ref{blowup},
the results of Lemma~\ref{nointerlimitpoints} imply that 
$\alpha(\gamma) \subseteq \overline{\mathcal{S}}_\sharp \cup \overline{\mathcal{X}}_{\mathrm{\,I}}$.
Furthermore, $\omega(\gamma)$ is either contained in $\overline{\mathcal{B}}_{\mathrm{IX}}\cup \overline{\mathcal{X}}_{\mathrm{\,I}}$
(expanding models) or there is a big crunch (recollapsing models).
\upshape

\begin{Theorem}\label{equalfootingthm}
In the Bianchi type~IX special case {\textnormal \AminusP}\ defined by~\eqref{domain}, 
recollapsing solutions and forever expanding solutions
occur on an equal footing.
The qualitative behavior of typical solutions
is the one sketched in Fig.~\ref{special}.
In all other anisotropy cases, every model is a recollapsing model.
\end{Theorem}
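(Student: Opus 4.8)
The plan is to reduce everything to the behaviour of the flow near the $r=0$ boundary $\overline{\mathcal{X}}_{\,\mathrm{I}}$ in the blown-up formulation~\eqref{dynsyspolar}, using the dichotomy already supplied by Lemma~\ref{nointerlimitpoints}. By the semipermeable-membrane property of $\mathcal{X}_{\mathrm{IX}}^{0}$ and the discrete symmetry~\eqref{symIX}, every orbit is of exactly one of three types---forever expanding ($\gamma\subset\mathcal{X}_{\mathrm{IX}}^{+}$), recollapsing (one crossing of $\mathcal{X}_{\mathrm{IX}}^{0}$), or forever collapsing (the image of an expanding orbit under~\eqref{symIX})---so both assertions reduce to controlling the forever-expanding orbits. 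By Lemma~\ref{nointerlimitpoints}${}^\prime$ such an orbit satisfies $\omega(\gamma)\subseteq\overline{\mathcal{B}}_{\mathrm{IX}}\cup\overline{\mathcal{X}}_{\,\mathrm{I}}$; since the flow on $\mathcal{B}_{\mathrm{IX}}$ drives $H_D$ strictly downward except in the small region isolated in the base Lemma (Fig.~\ref{Bfig}), the limit set of a forever-expanding orbit must in fact lie on the face $H_D=1$, i.e.\ on the blown-up type~I set $\overline{\mathcal{X}}_{\,\mathrm{I}}$ where $r=0$. The whole question thus becomes: can $r\to0$ hold along an interior orbit?

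First I would settle the special case \AminusP. I would show that $\mathrm{R}_\flat$ is a sink of~\eqref{dynsyspolar} on $\overline{\mathcal{Y}}_{\mathrm{IX}}^{+}$: within the base it is a sink by~\eqref{Rtransvers}, and transverse to the base the rate $[\tfrac{d}{d\tau}\log r]_{\mathrm{R}_\flat}=3(1-w)\beta^2+2\beta+(1+3w)$ is negative precisely under the conditions~\eqref{domain}. An open basin of $\mathrm{R}_\flat$ then consists of orbits with $H_D\to1$, which never leave $\mathcal{X}_{\mathrm{IX}}^{+}$ and are therefore forever expanding. For the complementary family I would use the interior saddle $\mathrm{P}$: by~\eqref{Portho} it carries a two-dimensional stable manifold reaching into $\mathcal{X}_{\mathrm{IX}}^{+}$, which acts as a separatrix. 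Since the only interior $\omega$-limit available on the base is $\mathrm{R}_\flat$ (the points $\mathrm{T}_\flat,\mathrm{Q}_\flat$ being $r$-repelling, see below), orbits on the side of this separatrix away from the basin of $\mathrm{R}_\flat$ have no admissible limit set with $H_D\ge0$ and hence, by Lemma~\ref{nointerlimitpoints}, must meet $\mathcal{X}_{\mathrm{IX}}^{0}$. Openness and nonemptiness of the recollapsing set are automatic, because every point of $\mathcal{X}_{\mathrm{IX}}^{0}$ lies on a recollapsing orbit and the crossing is transversal ($H_D'<0$). This produces two open, nonempty sets separated by the stable manifolds of the saddles---the ``equal footing'' depicted in Fig.~\ref{special}.

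For all remaining anisotropy cases I would argue by contradiction: a forever-expanding orbit $\gamma$ would force $r\to0$ with $\omega(\gamma)$ a connected invariant subset of the type~I flow on $\overline{\mathcal{X}}_{\,\mathrm{I}}$ (Fig.~\ref{BianchiIfig2}); as that flow admits the strict monotone function $Z_1$, the set $\omega(\gamma)$ is a fixed point or a chain of fixed points. The key quantity is the transverse rate $[\tfrac{d}{d\tau}\log r]_{\,|r=0}=2\big(q-\Sigma_+-3\Sigma_+\sin^2\vartheta\big)$, which I would evaluate at every candidate vertex: it is strictly positive at $\mathrm{T}_\flat$, $\mathrm{T}_\sharp$, $\mathrm{Q}_\flat$ and $\mathrm{F}$, and at $\mathrm{R}_\flat$ it reduces to the expression above, positive outside \AminusP. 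The only possibly $r$-attracting boundary points are the $s=\tfrac12$ corners $\mathrm{Q}_\sharp$ and $\mathrm{R}_\sharp$; these I would exclude by the complementary equation $\vartheta^\prime=-3\Sigma_+\sin(2\vartheta)$, which repels $\vartheta$ from $\tfrac{\pi}{2}$ exactly when $\Sigma_+>0$, so that the $r$-attracting corner is $\vartheta$-repelling (hence unreachable from the interior), and conversely. Since no vertex on $\overline{\mathcal{X}}_{\,\mathrm{I}}$---and therefore no chain of them---can carry a negative time-averaged $r$-rate, $r\to0$ is impossible, contradicting the assumption. Thus no forever-expanding (equivalently, by~\eqref{symIX}, no forever-collapsing) orbit exists, and every model recollapses.

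The main obstacle I expect is precisely this last step: the $s=\tfrac12$ corner points are genuinely $r$-attracting, so their exclusion needs the transverse $\vartheta$-analysis rather than a single monotone function; moreover one must treat the degenerate thresholds $\beta=\beta_\pm$ (where $\mathrm{R}_\flat$ is non-hyperbolic) through the center-manifold argument already invoked after~\eqref{Rtransvers}, and make rigorous the statement that a heteroclinic chain on $\overline{\mathcal{X}}_{\,\mathrm{I}}$ cannot yield $r\to0$ when the pointwise $r$-rate is nonnegative at each of its vertices.
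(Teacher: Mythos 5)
Your proposal follows the same architecture as the paper's proof: the trichotomy via the semipermeable membrane $\mathcal{X}_{\mathrm{IX}}^{0}$ and the symmetry~\eqref{symIX}, the reduction of $\omega$-limit sets of expanding orbits via Lemma~\ref{nointerlimitpoints}${}^\prime$, the transverse rate $\left[\textfrac{d}{d\tau}\log r\right]_{|r=0}=2\big(q-\Sigma_+-3\Sigma_+\sin^2\vartheta\big)$ (which is exactly the content of the paper's color-coding in Figs.~\ref{Asharpfig},~\ref{Bfig},~\ref{BianchiIfig2} and of the computations at $\mathrm{R}_\flat$), and, in the case \AminusP, the sink $\mathrm{R}_\flat$ versus saddle $\mathrm{P}$ dichotomy with the open basin and transversal recollapse argument. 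That part is correct and matches the paper. The one place where you depart is the exclusion of the heteroclinic cycle $\partial\mathcal{X}_{\,\mathrm{I}}$ in case \Dplus\ as a possible $\omega$-limit set of an expanding orbit, and there your argument has a genuine gap.

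Your $\vartheta$-repulsion argument rules out $\mathrm{Q}_\sharp$ (and $\mathrm{R}_\sharp$) only as \emph{pointwise} $\omega$-limits; it does not remove $\mathrm{Q}_\sharp$ as a vertex of the cycle $\mathrm{T}_\flat\rightarrow\mathrm{T}_\sharp\rightarrow\mathrm{Q}_\sharp\rightarrow\mathrm{Q}_\flat\rightarrow\mathrm{T}_\flat$. At $\mathrm{Q}_\sharp$ the transverse rate is $2(q-\Sigma_+-3\Sigma_+)=2(2-1-3)=-4<0$, i.e.\ strictly $r$-attracting, and an orbit shadowing the cycle re-enters every neighborhood of $\mathrm{Q}_\sharp$ infinitely often: during the passage near $\mathrm{T}_\sharp$ one has $\Sigma_+\approx{-1}$, so $\vartheta^\prime=-3\Sigma_+\sin(2\vartheta)>0$ refocuses $\vartheta$ toward $\textfrac{\pi}{2}$, defeating the repulsion you invoke. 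Consequently your inference ``no vertex---and therefore no chain of them---can carry a negative time-averaged $r$-rate'' does not follow from the pointwise signs: with rates $+6$, $+12$, $-4$, $+2$ at the four corners, the sign of the average is decided by the asymptotic ratios of the dwell times near the corners, which grow geometrically and must be estimated from the eigenvalues---precisely the kind of $\Delta\tau_n$ analysis the paper carries out for the \Dminus\ cycle in the proof of Theorem~\ref{BianchiIItheo}. The paper sidesteps this entirely in Theorem~\ref{equalfootingthm} by using the function $Z_1=(1-\Sigma_+^2)^{-1}\psi(s)$ of~\eqref{Z1def}: if $\omega(\gamma)=\partial\mathcal{X}_{\,\mathrm{I}}$ then $Z_1\rightarrow\infty$ along $\gamma$, yet $Z_1$ is decreasing near the corner fixed points, a contradiction. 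You should either import that $Z_1$ argument or actually perform the dwell-time estimate; as written, the \Dplus\ branch of ``every model recollapses'' is unproved. Two smaller points: your intermediate claim that the limit set of an expanding orbit must lie on the face $H_D=1$ momentarily overlooks the interior base point $\mathrm{P}$ (which you reinstate later), and the theorem's reference to Fig.~\ref{special} also covers the past attractor $\mathrm{T}_\flat$, which your proposal never addresses; the paper obtains it by gluing the boundary phase portraits as in Theorem~\ref{BianchiIItheo}.
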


\begin{proof}
Assume an anisotropy case different from \AminusP.
If there were an expanding model $\gamma$, then 
$\omega(\gamma) \subseteq \overline{\mathcal{B}}_{\mathrm{IX}}\cup \overline{\mathcal{X}}_{\mathrm{\,I}}$ by 
Lemma~\ref{nointerlimitpoints}${}^\prime$.
Inspection of the flow on $\overline{\mathcal{B}}_{\mathrm{IX}}$ reveals that there do not exist any potential $\omega$-limit sets
on $\overline{\mathcal{B}}_{\mathrm{IX}}$; in particular, the stable manifold of every fixed point
has an empty intersection with the state space $\mathcal{Y}_{\mathrm{IX}}^+$; see Fig.~\ref{Bfig}.
The same is true for $\overline{\mathcal{X}}_{\mathrm{\,I}}$, see Fig.~\ref{BianchiIfig2}.
The only slightly non-trivial case is the case \Dplus, where $\partial\mathcal{X}_{\mathrm{\,I}}$ is a heteroclinc
cycle. However, we may use the function $Z_1$, see~\eqref{Z1def}: If $\omega(\gamma) = \partial\mathcal{X}_{\mathrm{\,I}}$,
then $Z_1 \rightarrow \infty$ along $\gamma$ as $\tau\rightarrow \infty$. 
However, a straightforward computation shows that
$Z_1$ is decreasing in the neighborhood of the fixed points of $\partial\mathcal{X}_{\mathrm{\,I}}$ 
which contradicts the divergence of $Z_1$.

Second, assume the Bianchi type~IX special case \AminusP. 
Figs.~\ref{yesspec} and~\ref{AminusspecI} imply that there are two potential $\omega$-limit
sets for orbits $\gamma \subset \mathcal{Y}_{\mathrm{IX}}^+$: First, the fixed point $\mathrm{R}_\flat$,
and second, the fixed point $\mathrm{P}$.
The latter is a saddle whose stable manifold (intersected with $\mathcal{Y}_{\mathrm{IX}}^+$)
is two-dimensional; hence, there exists a one-parameter family of orbits that
converge to $\mathrm{P}$ as $\tau\rightarrow \infty$.
The fixed point $\mathrm{R}_\flat$, however, is a sink, 
which means that there exists a two-parameter family
of orbits that converge to $\mathrm{R}_\flat$ as $\tau\rightarrow \infty$.
Accordingly, in the case \AminusP, there exists an open set of forever expanding cosmological models;
the asymptotics of these models is governed by the approach to  $\mathrm{R}_\flat$; see Fig.~\ref{special}.
(The existence of recollapsing solutions is evident from the properties of the flow
on $\mathcal{X}_{\mathrm{IX}}^0$.)
To find the past attractor of orbits in $\mathcal{Y}_{\mathrm{IX}}^+$
we follow the well-established principles that have 
already been successfully applied in the proof of
Theorem~\ref{BianchiIItheo}: Analyzing 
Figs.~\ref{Asharpfig},~\ref{Bfig}, and~\ref{BianchiIfig2},
it is immediate that the only possible $\alpha$-limit set of
typical orbits is the source $\mathrm{T}_\flat$, cf.~Table~\ref{alfaIX}.
This completes the proof of the lemma.
\end{proof}

{\it Interpretation of Theorem~\ref{equalfootingthm}}. 
Cosmological models of Bianchi type~IX with anisotropic matter
satisfy the closed-universe-recollapse 
conjecture~\cite{Barrow/Galloway/Tipler:1986,CHletter,Lin/Wald:1989},
i.e., these models recollapse.
However, there is one restriction: The anisotropic matter model
is required to be different from \AminusP, see~\eqref{domain} and
Figs.~\ref{mattermodelsfig},~\ref{betaminmax} and~\ref{vpmminmax}. (Recall that
the energy conditions are satisfied for the \AminusP\ matter model.)
In the case of anisotropic matter of that kind, 
type~IX solutions need not necessarily recollapse;
there exists a set of typical solutions (corresponding
to an open set of initial data) that are forever expanding.
We conclude that ``anisotropic matter of the type \AminusP\ matters'' (a lot).
Let us thus discuss the anisotropy case \AminusP\ in detail: 
Toward the initial singularity, every typical solution with anisotropic matter of
the type \AminusP\ is asymptotic to the Taub solution~\eqref{taub};
see Fig.~\ref{special}. However, there exist non-generic solutions
that exhibit different kinds of behavior: We observe
solutions with isotropic singularities, cf.~\eqref{FRWsol}, and solutions that approach~\eqref{cssol} 
as $t\rightarrow 0$. 
Toward the future, recollapsing solutions and forever expanding
solutions exist on an equal footing. Recollapsing solution 
exhibit a final singularity of the Taub type.
Expanding solutions, on the other hand, are geodesically complete toward the future;
the future asymptotics are characterized by an approach to the
asymptotic metric~\eqref{rfsol}.

\begin{Theorem}\label{BianchiIXtheo}
The qualitative behavior of typical orbits of the dynamical system~\eqref{dynsyspolar}
that represent the expanding phase of LRS Bianchi type~IX models (in the various
anisotropy cases) is the one sketched in Figs.~\ref{BianchiIXfig} and Fig.~\ref{special}.
The past attractors for the various cases and the possible limit $\alpha$-limit sets for
non-generic solutions are given in Table~\ref{alfaIX}.
\end{Theorem}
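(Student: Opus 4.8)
The plan is to work entirely within the blow-up formulation of Subsec.~\ref{blowup}, i.e., with the dynamical system~\eqref{dynsyspolar} on the relatively compact state space $\mathcal{Y}_{\mathrm{IX}}^+$, since this system---unlike~\eqref{89syst}---extends regularly to the entire boundary $\partial\mathcal{Y}_{\mathrm{IX}}^+$, which is assembled from $\overline{\mathcal{S}}_\sharp$, $\overline{\mathcal{B}}_{\mathrm{IX}}^+$, the vacuum subset, and the Bianchi type~I subset $\overline{\mathcal{X}}_{\mathrm{I}}$. An orbit $\gamma$ in $\mathcal{Y}_{\mathrm{IX}}^+$ represents the expanding phase of a type~IX model. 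By Lemma~\ref{nointerlimitpoints}${}^\prime$, which follows from the monotonicity of $Z_5$ established in Lemma~\ref{nointerlimitpoints} together with the monotonicity principle, the possible $\alpha$-limit set of every such orbit is confined to $\overline{\mathcal{S}}_\sharp\cup\overline{\mathcal{X}}_{\mathrm{I}}$ and (in the expanding case) the possible $\omega$-limit set to $\overline{\mathcal{B}}_{\mathrm{IX}}^+\cup\overline{\mathcal{X}}_{\mathrm{I}}$. The task thus reduces to deciding which invariant structures on these boundary faces actually serve as limit sets of interior orbits.

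First I would assemble the flow on the relevant faces, all of which is already known: the portraits on $\mathcal{S}_\sharp$ from Fig.~\ref{Asharpfig}, those on $\mathcal{X}_{\mathrm{I}}$ from Fig.~\ref{BianchiIfig2}, and those on $\mathcal{B}_{\mathrm{IX}}^+$ from Fig.~\ref{Bfig} (the global analysis on the base having been completed in the preceding lemma). Since no periodic orbits occur on any two-dimensional face, the Poincar\'e--Bendixson theorem leaves only fixed points and heteroclinic cycles/networks as candidate limit sets. The $\alpha$-limit analysis then proceeds exactly as in the proof of Theorem~\ref{BianchiIItheo}: reading off the color-coded transversal eigenvalues in Figs.~\ref{Asharpfig} and~\ref{BianchiIfig2}, one identifies $\mathrm{T}_\flat$ as the generic source in the $\boldsymbol{-}$ cases and in case~\Aplus, whereas in the cases~\Bplus, \Cplus, \Dplus\ the boundary of $\overline{\mathcal{S}}_\sharp\cup\overline{\mathcal{X}}_{\mathrm{I}}$ carries a heteroclinic cycle of type~\eqref{heteroI} (a network in~\Dplus), which becomes the past attractor and yields an oscillatory approach to the initial singularity.

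The $\omega$-limit sets in the expanding case hinge on the flow on $\mathcal{B}_{\mathrm{IX}}^+$ and $\mathcal{X}_{\mathrm{I}}$. As already used in the proof of Theorem~\ref{equalfootingthm}, in every anisotropy case other than~\AminusP\ there is no stable structure on $\overline{\mathcal{B}}_{\mathrm{IX}}^+\cup\overline{\mathcal{X}}_{\mathrm{I}}$ able to attract interior orbits, so forever-expanding models do not occur; in case~\AminusP\ the sink $\mathrm{R}_\flat$ and the saddle $\mathrm{P}$ furnish, respectively, an open set and a one-parameter family of expanding orbits. The remaining work is to exclude spurious limit sets. For the type~I heteroclinic structure in case~\Dplus\ I would reuse the monotone function $Z_1$ of~\eqref{Z1def}: if $\omega(\gamma)=\partial\mathcal{X}_{\mathrm{I}}$ then $Z_1\to\infty$ along $\gamma$, which contradicts the fact that $Z_1$ decreases near the vertex fixed points; the analogous obstruction settles the atypical cases and fixes the entries of Table~\ref{alfaIX}.

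The hard part will be the rigorous exclusion of orbits that accumulate on a boundary heteroclinic cycle when that cycle is \emph{not} their true limit set---precisely the timescale estimate that was needed in the proof of Theorem~\ref{BianchiIItheo}. Near $\mathcal{L}_{\mathrm{I}}$ the $\vartheta$-equation of~\eqref{dynsyspolar} (the blow-up analogue of~\eqref{IIs}) decouples in the limit $\tau\to-\infty$, and I expect it to again produce a geometrically growing sequence of sojourn times $\Delta\tau_n>c\,\Delta\tau_{n-1}$ with some $c>1$ independent of the neighborhood size, forcing $\vartheta$ (equivalently $s$) to drift monotonically away from the value a genuine limit set would require; this contradiction pins down the actual past attractor. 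I anticipate that this shadowing argument, combined with the bookkeeping of transversal stability of each boundary fixed point across the full list of anisotropy subcases, is the only non-routine ingredient, the rest being an assembly of the building blocks of Sec.~\ref{B9sec} through the monotonicity principle and the discrete symmetry~\eqref{symIX}.
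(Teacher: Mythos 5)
Your proposal is correct and follows essentially the same route as the paper's proof: confinement of the limit sets to the boundary via Lemma~\ref{nointerlimitpoints}${}^\prime$, gluing of the boundary phase portraits of Figs.~\ref{Asharpfig},~\ref{Bfig} and~\ref{BianchiIfig2}, exclusion of periodic orbits so that only fixed points and heteroclinic cycles/networks remain, local transversal bookkeeping of the fixed points to fill Table~\ref{alfaIX}, identification of the heteroclinic cycle (network) as the past attractor in the cases \Bplus, \Cplus\ (\Dplus), and deferral of the expanding/recollapsing dichotomy to Theorem~\ref{equalfootingthm}. You also correctly anticipated the one non-routine ingredient, namely the transcription of the $\Delta\tau_n$ sojourn-time estimate from the proof of Theorem~\ref{BianchiIItheo} to the decoupling $\vartheta$-equation near $\mathcal{L}_{\mathrm{I}}$, which is exactly how the paper excludes the cycle $\partial\mathcal{S}_\sharp$ as an $\alpha$-limit set in case \Dminus.
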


\begin{Remark}
Theorem~\ref{equalfootingthm} states that, for every anisotropy case except \AminusP, 
all models are recollapsing models. Theorem~\ref{BianchiIXtheo} discloses
the possible past asymptotic behavior;
the possible future asymptotic behavior of these recollapsing models
arises from the past asymptotic behavior by applying the discrete
symmetry~\eqref{symIX}.
\end{Remark}

\begin{proof}
The proof of Theorem~\ref{BianchiIXtheo} follows the 
well-established principles that have already been successfully applied in the proof of
Theorem~\ref{BianchiIItheo}.
Lemma~\ref{nointerlimitpoints}${}^\prime$ restricts the possible $\alpha$-limit sets 
to a subset of the boundary of the state space $\mathcal{Y}_{\mathrm{IX}}^+$.
Since by the analysis of Sec.~\ref{B9sec}, the dynamics on these boundary subsets
is fully understood, the proof of Theorem~\ref{BianchiIXtheo} essentially amounts 
to gluing together the phase portraits of
Figs.~\ref{Asharpfig},~\ref{Bfig}, and~\ref{BianchiIfig2}. 

The invariant structures on the boundary subsets~\eqref{IIwherealom} 
that are potential 
$\alpha$-limit sets
are fixed points, periodic orbits, and heteroclinic cycles/networks.
Figs.~\ref{Asharpfig},~\ref{Bfig}, and~\ref{BianchiIfig2} show that
periodic orbits do not occur, so that the
$\alpha$/$\omega$-limit sets of orbits in $\mathcal{Y}_{\mathrm{IX}}^+$
must be fixed points or heteroclinic cycles/networks (if present).
The investigation of the fixed points is straightforward. The procedure
is completely analogous to the one in the proof of Theorem~\ref{BianchiIItheo}.
The results of the local analysis of the fixed points is summarized
in Table~\ref{alfaIX}.
Let us thus restrict ourselves to discussing the role of the
heteroclinic cycles/networks.

Heteroclinic cycles are present
only in the cases \Dminus, \Bplus, \Cplus, and \Dplus, see Fig.~\ref{BianchiIXfig}. 
In the cases \Bplus\ and \Cplus, inspection of Figs.~\ref{Asharpfig} and~\ref{BianchiIfig2}
reveals that there does not exist any fixed point on $\overline{\mathcal{S}}_\sharp \cup \overline{\mathcal{X}}_{\mathrm{\,I}}$
(cf.~Lemma~\ref{nointerlimitpoints}${}^\prime$) that acts as a source; we simply use the color-coding
of these figures. This leaves only one 
structure on $\overline{\mathcal{S}}_\sharp \cup \overline{\mathcal{X}}_{\mathrm{\,I}}$
that can be the $\alpha$-limit set of typical orbits: the heteroclinic cycle depicted
in Fig.~\ref{IXBCplus}.
Case~\Dplus\ is analogous; however, instead of a heteroclinic cycle there is a heteroclinic network,
see Fig.~\ref{IXDplus}. It is this network (or a subset thereof) that it
the past attractor. 
Finally, consider the case \Dminus. In analogy with the arguments presented
in the proof of Theorem~\ref{BianchiIItheo} we find that the heteroclinic cycle
$\partial\mathcal{S}_\sharp$, see Fig.~\ref{IXDminus}, is not a possible
$\alpha$-limit set of orbits of $\mathcal{Y}_{\mathrm{IX}}^+$.
(The same is true for the cycles involving an orbit connecting
$\mathrm{T}_\sharp$ with $\mathrm{Q}_\sharp$ on $\mathcal{X}_{\mathrm{\,I}}$.)
This concludes the proof of the theorem.
\end{proof}

{\it Interpretation of Theorem~\ref{BianchiIXtheo}}. 
Anisotropic matter of the types \Dminus, \Cminus, \Bminus, \Aminus$\backslash$\AminusP, and \Aplus\
``does not matter''. The asymptotic behavior toward the initial singularity
of typical Bianchi type~IX solutions (associated with anisotropic matter of one of these types)
is the same as for typical perfect fluid solutions, see Sec.~\ref{perfectfluid}.
We observe that the typical solutions behave like the Taub solution~\eqref{taub} as
$t\rightarrow 0$. However, there exist non-generic solutions that exhibit
different kinds of asymptotic behavior: There are
solutions with isotropic singularities, cf.~\eqref{FRWsol}, 
solutions that approach~\eqref{cssol}, and solutions that approach~\eqref{rssol} or~\eqref{rfsol}
as $t\rightarrow 0$; see Table~\ref{alfaIX} for details.
Toward the future we observe recollapse; the asymptotics toward the final singularity
is analogous to the asymptotics toward the initial singularity.

In contrast, anisotropic matter of the types \Bplus, \Cplus, and \Dplus\ ``matters''.
Generically, the approach of the associated LRS type~IX models 
toward the (initial and final) singularity is \textit{oscillatory}:
The solutions oscillate between
the Taub family~\eqref{taub} and the non-flat LRS family~\eqref{solQ}.
This kind of asymptotic behavior differs considerably 
from that of (LRS) vacuum and perfect fluid solutions
(as discussed in Sec.~\ref{perfectfluid}).

\begin{figure}[Ht!]
\begin{center}
\psfrag{tf}[cc][cc][0.7][0]{$\mathrm{T}_\flat$}
\psfrag{ts}[cc][cc][0.7][0]{$\mathrm{T}_\sharp$}
\psfrag{qf}[cc][cc][0.7][0]{$\mathrm{Q}_\flat$}
\psfrag{qs}[cc][cc][0.7][0]{$\mathrm{Q}_\sharp$}
\psfrag{f}[cc][cc][0.7][0]{$\mathrm{F}$}
\psfrag{c}[cc][cc][0.7][0]{$\mathrm{C}_\sharp$}
\psfrag{d}[cc][cc][0.7][0]{$\mathrm{R}_\sharp$}
\psfrag{r}[cc][cc][0.7][0]{$\mathrm{R}_\flat$}
\psfrag{h0}[cc][cc][0.7][-45]{$H_D=0$}
\subfigure[\Dminus]{\label{IXDminus}\includegraphics[width=0.25\textwidth]{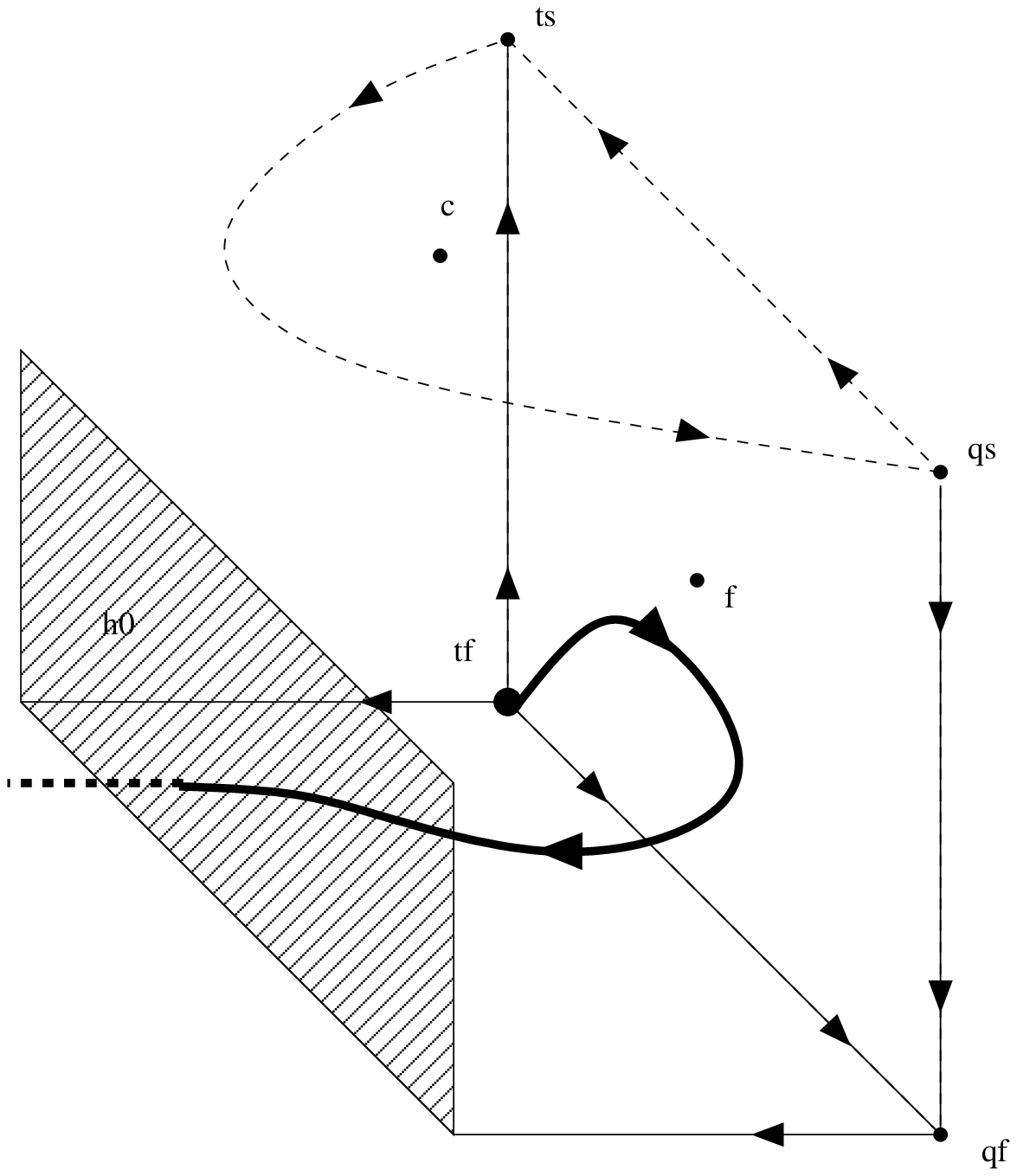}}\qquad
\subfigure[\Bminus,\Cminus]{\includegraphics[width=0.25\textwidth]{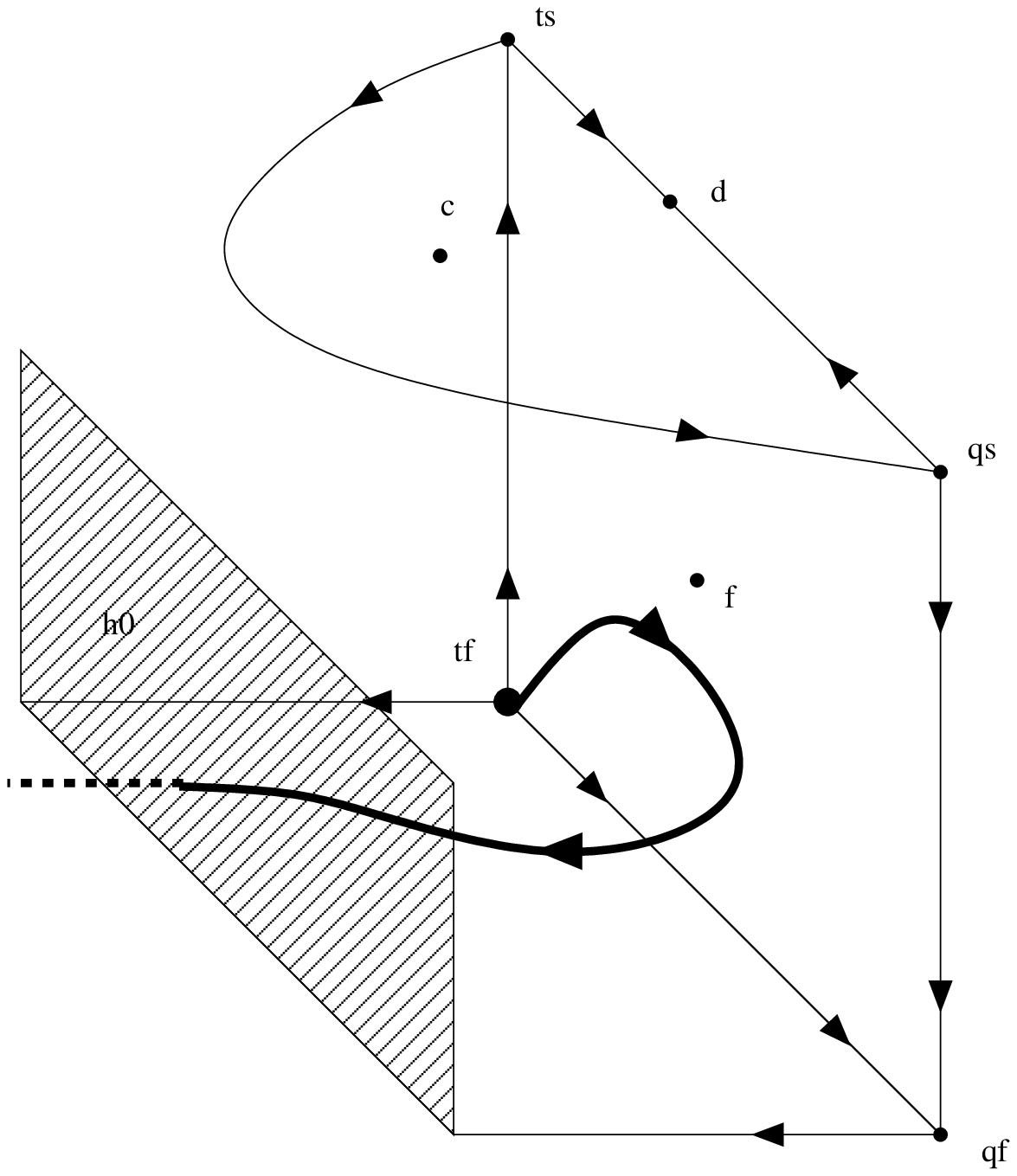}}\qquad
\subfigure[\Aminus\!\!$\setminus$\AminusP,~\Aplus$(0<\beta<\beta_\sharp)$]%
{\includegraphics[width=0.25\textwidth]{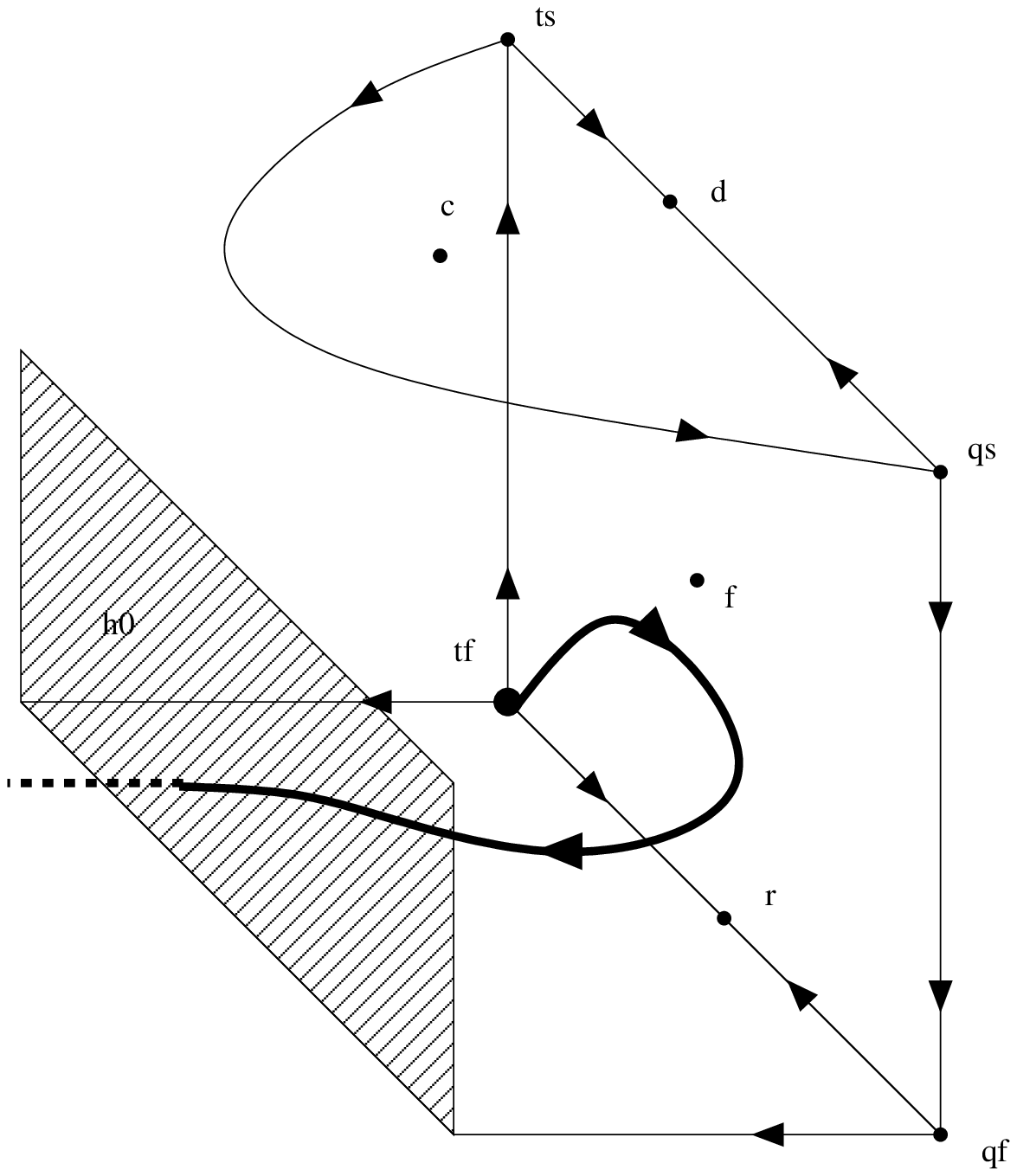}}\\
\subfigure[\Aplus~$(\beta_\sharp\leq\beta<1)$]{\includegraphics[width=0.25\textwidth]{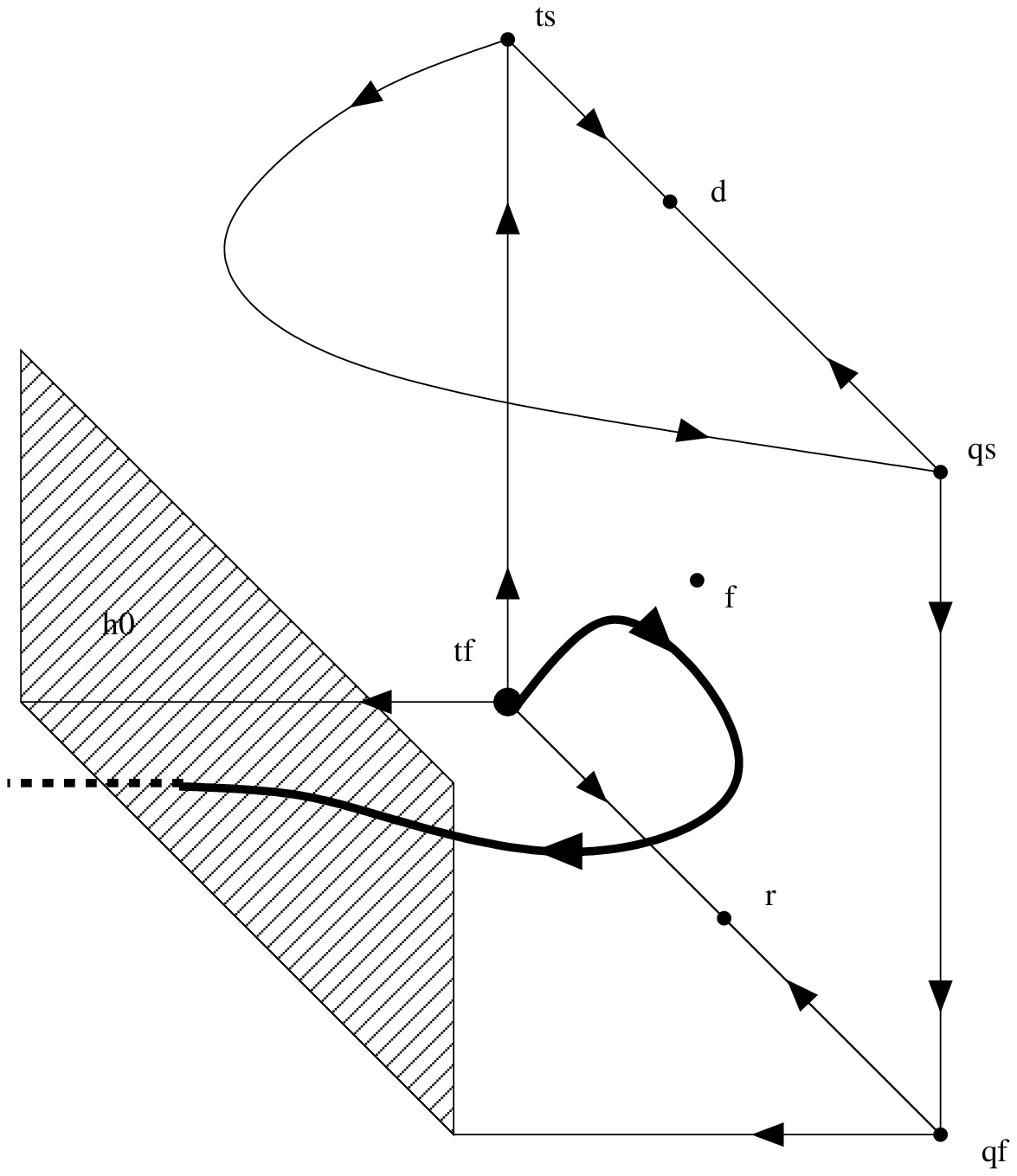}}\qquad
\subfigure[\Bplus, \Cplus]{\label{IXBCplus}\includegraphics[width=0.25\textwidth]{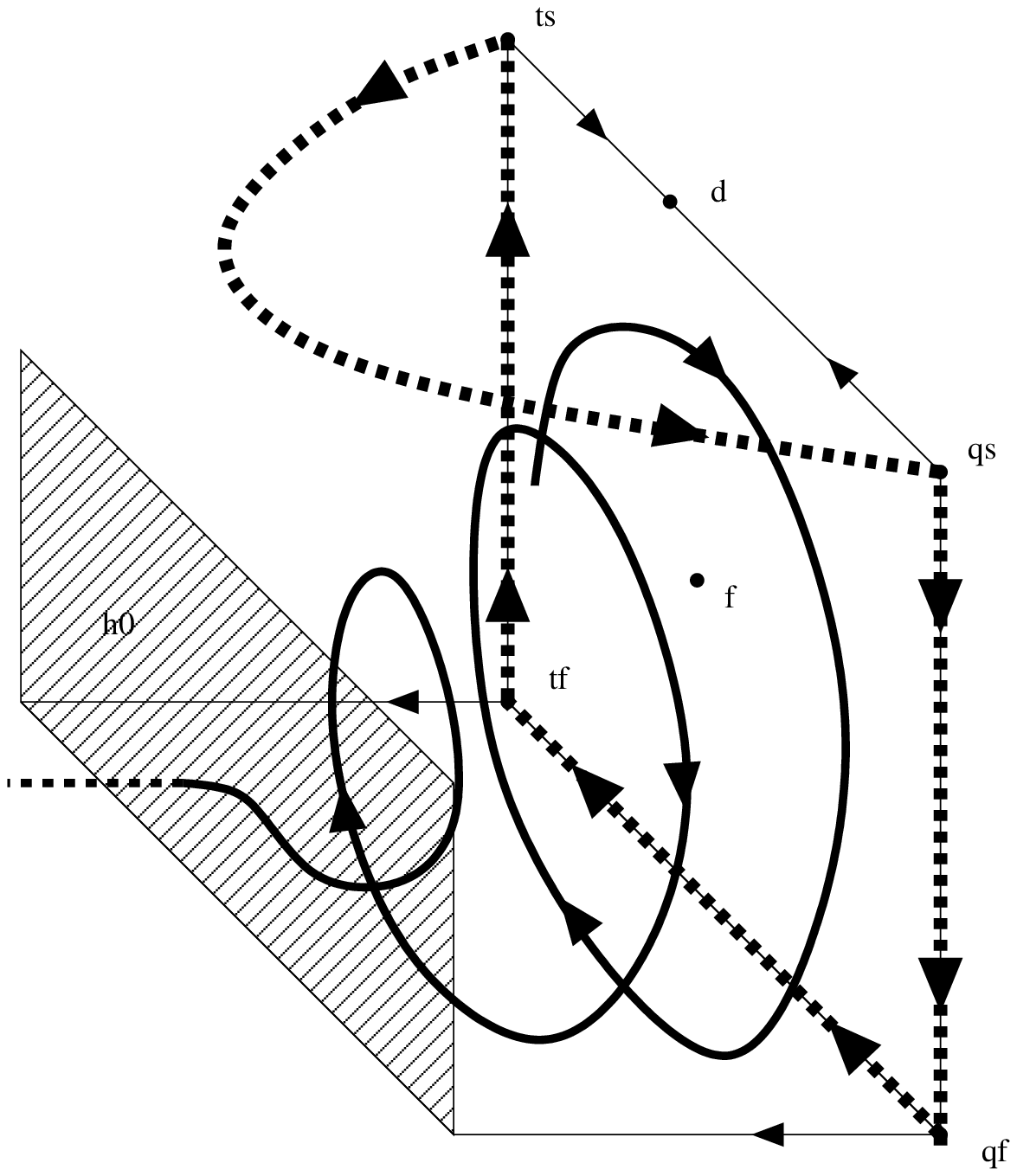}\label{VlasovIX}}\qquad
\subfigure[\Dplus]{\label{IXDplus}\includegraphics[width=0.25\textwidth]{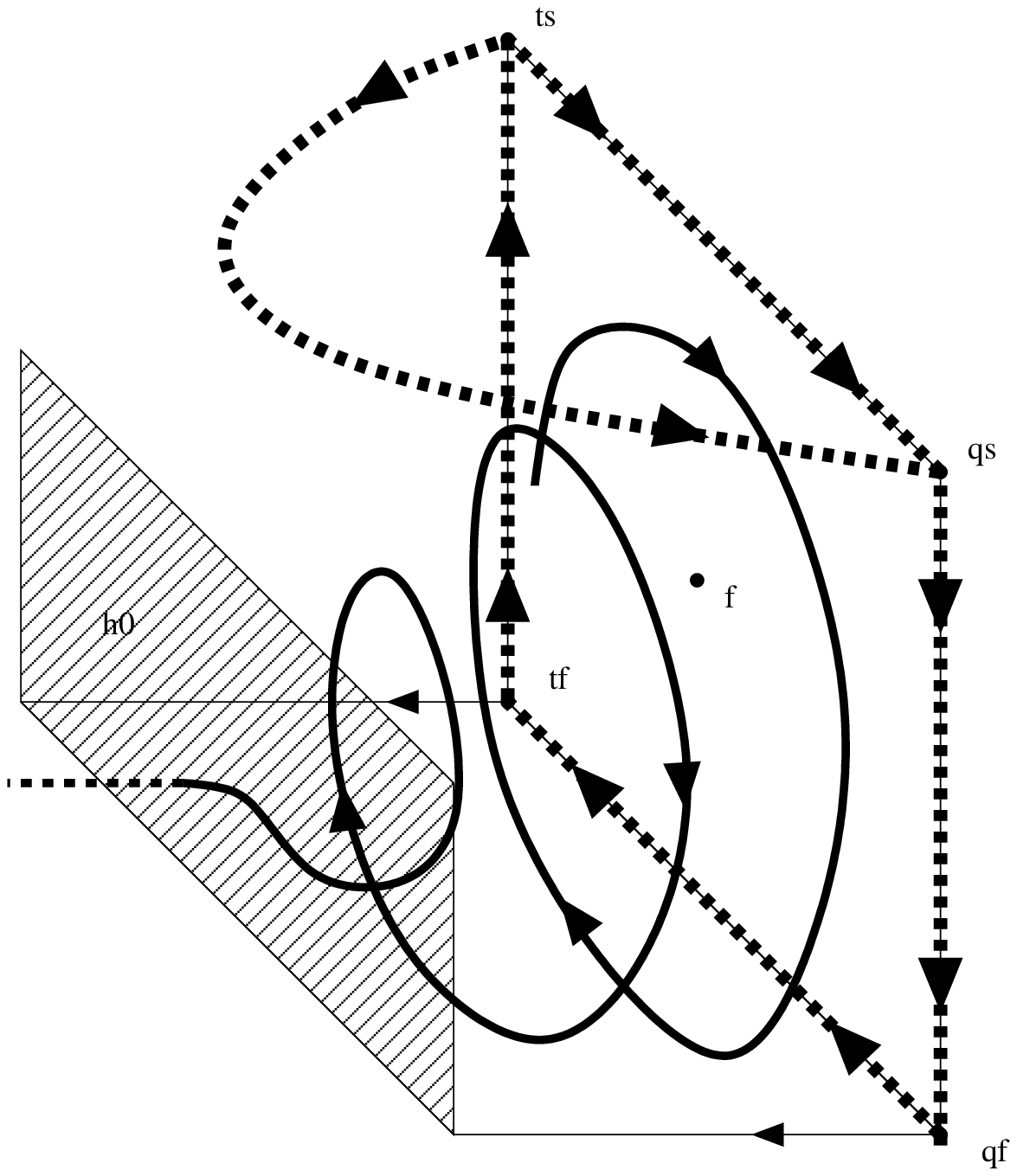}}\end{center}
\caption{Phase portraits of Bianchi type~IX orbits in the various anisotropy cases---the 
special case \AminusP\ is depicted in Fig.~\ref{special}. 
Bold lines are typical (in fact generic) orbits; dashed lines 
are orbits that lie on the boundary of the state space 
and connect to form heteroclinic cycles/networks.
In the cases \Bplus\ and \Cplus,
the past attractor is the heteroclinic cycle;
in case \Dplus, the past attractor is the heteroclinic network (or a subset thereof).
In the remaining cases, the past attractor is the fixed point $\mathrm{T}_\flat$. 
There exist non-generic orbits that display different asymptotic behavior toward the past, see Table~\ref{alfaIX}.
The phase portraits in the cases \Azeropm\ are the same as for \Aminus.}
\label{BianchiIXfig}
\end{figure}

\begin{figure}[Ht!]
\begin{center}
\psfrag{tf}[cc][cc][0.7][0]{$\mathrm{T}_\flat$}
\psfrag{ts}[cc][cc][0.7][0]{$\mathrm{T}_\sharp$}
\psfrag{qf}[cc][cc][0.7][0]{$\mathrm{Q}_\flat$}
\psfrag{qs}[cc][cc][0.7][0]{$\mathrm{Q}_\sharp$}
\psfrag{f}[cc][cc][0.7][0]{$\mathrm{F}$}
\psfrag{c}[cc][cc][0.7][0]{$\mathrm{C}_\sharp$}
\psfrag{d}[cc][cc][0.7][0]{$\mathrm{R}_\sharp$}
\psfrag{r}[cc][cc][0.7][0]{$\mathrm{R}_\flat$}
\psfrag{p}[cc][cc][0.7][0]{$\mathrm{P}$}
\psfrag{h0}[cc][cc][0.7][-45]{$H_D=0$}
\includegraphics[width=0.5\textwidth]{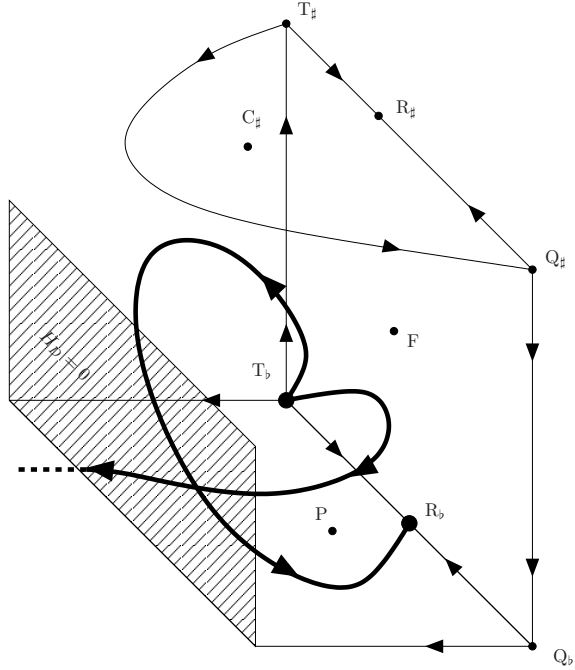}
\caption{Phase portrait of typical LRS Bianchi type~IX orbits in the special case \AminusP\ 
defined by~\eqref{domain}. 
This is the only anisotropy case where expanding (i.e., non-recollapsing) solutions exist;
these solutions satisfy $H_D > 0$ for all times.
The future attractor of these solutions is 
the fixed point $\mathrm{R}_\flat$. Note that this type of behavior is typical.
A second class of typical orbits corresponds to recollapsing models.
Finally, there 
exist non-generic solutions whose $\omega$-limit set is the point $\mathrm{P}$; these
orbits are not depicted; instead we refer to Table~\ref{alfaIX}.}\label{special}
\end{center}
\end{figure}

\begin{table}
\begin{center}
\begin{tabular}{|c|c|c|c|c|c|c|c|c|c|c|}
\hline 
\multirow{2}{*}{Fixed point} & \multicolumn{10}{|c|}{Dimension of unstable [{\small stable}] 
manifold intersected with $\mathcal{Y}_{\mathrm{IX}}^+$} 
\\[0.2ex]
& \Dminus & \Cminus & \Bminus & \Aminus\ {\footnotesize (not \AminusP)} & 
\AminusP & \Aplus  {\scriptsize($\beta < \beta_\sharp$)} & 
\Aplus  {\scriptsize($\beta_\sharp\leq \beta$)} & \Bplus & \Cplus & \Dplus \\ \hline 
& & & & & & & & & & \\[-1.3ex]
$\mathrm{T}_\flat$ & \textbf{3} & \textbf{3} & \textbf{3} & \textbf{3} & \textbf{3} & \textbf{3} & \textbf{3} & & & \\
$\mathrm{R}_\flat$ & & & &  & [{\small 3}] & 2 & 2 & & &  \\
$\mathrm{R}_\sharp$ & & & & & & 2 & & & & \\
$\mathrm{F}$ & 2 & 2 & 2 & 2 & 2 & 1 & 1 & 1 & 1 & 1 \\
$\mathrm{C}_\sharp$ & 1 & 1 & 1 & 1 & 1 & 1 & & & & \\
het.\ cycle & & & & & & & & \textbf{3}  & \textbf{3} & \textbf{3} \\
$\mathrm{P}$ &  &  &  &  & [{\small 2}] &  & & & & \\\hline
\end{tabular}
\caption{Possible $\alpha$-limit sets of LRS Bianchi type~IX models
and $\omega$-limit sets of forever expanding LRS Bianchi type~IX models.
Each orbit in $\mathcal{Y}_{\mathrm{IX}}^+$ corresponds to 
(the expanding phase of) a Bianchi type~IX solution (and conversely); hence
the intersection of the unstable [stable] manifold of a fixed point with $\mathcal{Y}_{\mathrm{IX}}^+$ yields the set of type~IX
solutions converging to the solution represented by that 
fixed point as $t\rightarrow 0$ [$t\rightarrow \infty$]. If 
this set is three-dimensional, the fixed point is a source [sink]; if the dimension is two,
there exists a one-parameter set of orbits converging to the fixed point; if the dimension is one,
there is only one orbit with that property. Center manifold analysis shows that 
the cases \Azeroplus\ and \Azerominus\ behave like
\Aplus\ ($0<\beta< \beta_\sharp$) and \Aminus\ (not \AminusP), respectively.
The exact solutions represented by the fixed points are given in Sec.~\ref{perfectfluid} and 
in Appendix~\ref{exact}.}
\label{alfaIX}
\end{center}
\end{table}

\section{Examples of matter models}
\label{mattermodelssec}

In this section we present in detail 
two examples of matter models to which one can apply the main results of this paper: 
Ensembles of collisionless massless particles, described by the Vlasov equation, 
and magnetic fields. (For a discussion of a third matter model, elastic matter,
we refer to~\cite{CH}.)
Collisionless matter with massless particles satisfies our assumptions and the results of this paper  
apply directly. Magnetic fields violate Assumption~\ref{asswi}, but we show below that our analysis 
extends straightforwardly to this matter model as well.  In fact we believe that, up to minor changes, our 
method should work for an even larger class of matter models than the one considered in this paper. 
For a general introduction to collisionless matter and the Vlasov-Einstein system 
we refer to~\cite{A}; the Bianchi type~I case is discussed in detail in~\cite{HU}. For a 
generalization of the results presented here to Vlasov matter with massive particles we refer to~\cite{letter}.  

\subsection{Collisionless matter}\label{vlasov}

Consider an ensemble of
particles with mass $m$ in the spacetime $(M,{}^4\mathbf{g})$. 
This ensemble of particles is represented by a distribution function (`phase space density') 
$f\geq 0$, which is defined on the mass shell, i.e., 
on the subset of the tangent bundle given by ${}^4\mathbf{g}(v,v) = -m^2$,
where $v$ denotes the (future-directed) four momentum.
If $(t,x^i)$ is a system of coordinates on $M$ such that $\partial_t$ is timelike and $\partial_{x^i}$ is spacelike, 
then the spatial coordinates
$v^i$ of the four-momentum are coordinates on the mass shell, and
we can regard $f$ as a function $f = f(t,x^i, v^j)$, $i,j =1,2,3$. The energy-momentum tensor is defined as
\begin{equation}\label{Tvlasov}
T^{\mu\nu}=\int f v^\mu v^\nu \sqrt{|\det{}^4\mathbf{g}|}\:|v_0|^{-1} \: dv^1 dv^2 dv^3\:.
\end{equation}
The equation satisfied by the function $f$ depends on the kind of interaction
between the particles. If the interaction is through binary collisions, 
the resulting equation is the Boltzmann equation~\cite{Eh}. Here we assume 
that the particles are freely falling, i.e., they interact only through gravity. 
By the equivalence principle, the particles trajectories must coincide with the 
geodesics of the spacetime. Accordingly, $f$ has to be constant along the particles trajectories
and thus solves the Vlasov equation
\begin{equation}\label{vlasoveq}
\partial_t f +\frac{v^j}{v^0}\partial_{x^j}f-\frac{1}{v^0}\Gamma^j_{\mu\nu}v^\mu v^\nu\partial_{v^j}f=0\:,
\end{equation}
where $\Gamma^\sigma_{\mu\nu}$ are the Christoffel symbols of the 
metric and $v^0>0$ is determined in terms of the metric ${}^4\mathbf{g}_{\mu\nu}$ and the spatial coordinates $v^i$ 
of the momentum via the mass shell relation ${}^4\mathbf{g}_{\mu\nu} v^\mu v^\nu = -m^2$. (The characteristics of the Vlasov 
equation---which correspond to the particles trajectories and along which $f$ is constant---coincide with the lift of the 
geodesic flow on the tangent bundle, in accordance with the geometric interpretation of the free fall motion of the particles.)

Consider now a spacetime of Bianchi class~A with LRS Bianchi symmetry and define a time independent orthogonal frame 
such that the metric takes the form
\begin{equation}\label{metricLRS}
^4\mathbf{g}=-dt^2+g_{11}(t)\:\hat{\omega}_1\otimes\hat{\omega}_1+
g_{22}(t)(\hat{\omega}_2\otimes\hat{\omega}_2+\hat{\omega}_3\otimes\hat{\omega}_3)\:.
\end{equation} 
As proved in~\cite{MM}, the general solution of the Vlasov equation~\eqref{vlasoveq} on a background spacetime 
with the metric~\eqref{metricLRS} can be expressed as
\begin{equation}\label{generalf}
f=f_0(v_1,v_2^{\,2}+v_3^{\,2})\:,
\end{equation} 
where $f_0:\R\times\R_+\to\R_+$ is an arbitrary, sufficiently smooth function. The function $f_0$ can be 
interpreted as the `initial data' for $f$ at some time $t=t_0$. 

However, for the pair~\eqref{metricLRS} and~\eqref{generalf} to be a candidate for
a solution of the Einstein-Vlasov system, the energy momentum tensor must be compatible
with the structure of~\eqref{metricLRS}, i.e., it must be diagonal and satisfy $T^2_{\ 2}=T^3_{\ 3}$. 
This can be achieved by restricting to distribution functions~\eqref{generalf} that are invariant under the transformation $v_1\to -v_1$. 
These distribution functions are called {\it reflection symmetric}, see~\cite{RT}. 
We assume further that the support of $f_0$ 
does not intersect any of the axes ({\it split support} assumption), which ensures that $\rho>0$.
It is now clear that, with the hypotheses on $f_0$, 
Assumptions~\ref{assumptionT}--\ref{assumptiondiagonal} are satisfied. 

Assumption~\ref{assumptionwi} is only satisfied
in the case of massless particles. For $m=0$ we have 
\begin{subequations}\label{rhopi}
\begin{align}\label{rho}
\rho&=T^0_{\ 0}=\int f_0\: \big(g^{11}v_1^{\,2}+g^{22}(v_2^{\,2}+v_3^{\,2})\big)^{1/2}\:\frac{dv_1dv_2dv_3}{\sqrt{\det g}}\:,\\
\label{pi}
p_i&=T^i_{\ i}=\int f_0\:g^{ii}\:v_i^{\,2}\:\big(g^{11}v_1^{\,2}+g^{22}(v_2^{\,2}+v_3^{\,2})\big)^{-1/2}\:\frac{dv_1dv_2dv_3}{\sqrt{\det g}}\:;
\end{align}
\end{subequations}
note that $p_2=p_3$, because $f$ is invariant under the exchange of $v_2$ and $v_3$.  
It follows that 
\[ 
p =\textfrac{1}{3}\,\big(p_1+p_2+p_3\big)  =\textfrac{1}{3}\: \rho\:, 
\]
i.e., $w = \textfrac{1}{3}$. (For $m>0$, the equation of state 
between the isotropic pressure and the energy density is non-linear.)
The rescaled principal pressures are given by
\begin{subequations}\label{wiVlasov}
\begin{align}
w_1(s)&=(1-2s)\,
\frac{\int f_0\left[(1-2s)v_1^{\,2}+s(v_2^{\,2}+v_3^{\,2})\right]^{-1/2}v_1^{\,2}\,dv_1dv_2dv_3}
{\int f_0(v)\left[(1-2s)v_1^{\,2}+s(v_2^{\,2}+v_3^{\,2})\right]^{1/2}dv_1dv_2dv_3}\:,\\
u(s) = w_2(s) = w_3(s) &=s\,
\frac{\int f_0\left[(1-2s)v_1^{\,2}+s(v_2^{\,2}+v_3^{\,2})\right]^{-1/2}v_2^{\,2}\,dv_1dv_2dv_3}
{\int f_0(v)\left[(1-2s)v_1^{\,2}+s(v_2^{\,2}+v_3^{\,2})\right]^{1/2}dv_1dv_2dv_3}\:.
\end{align}
\end{subequations}
Therefore, using the terminology of Secs.~\ref{Sec:matter} and~\ref{LRSsols},
we obtain $u(0) = 0$ and $u(\textfrac{1}{2}) = \textfrac{1}{2}$; hence
Assumption~\ref{asswi} is satisfied with $v_- = 0$ and $v_+ = 1$.
Moreover, a direct 
calculation shows that $u(s)$ is monotonically increasing, see~\cite{RT}, 
so that Assumption~\ref{asspsi} is satisfied as well (i.e., there exists a unique
$\bar{s}$ such that $u(\bar{s}) = \textfrac{1}{3}$, whence $w_1 = w_2 = w_3 = \textfrac{1}{3}$
at $\bar{s}$).

In conclusion, Vlasov matter with massless particles falls into the class of 
matter models considered in this paper. 
Since $w=\textfrac{1}{3}$ and $v_-=0$, we have $\beta=1$, see~\eqref{betadef}, 
and therefore, Vlasov matter with massless particles is of type~\Bplus.  
The qualitative dynamics of solutions of the (Bianchi class~A LRS) Einstein-Vlasov equations 
for the various Bianchi types is the one depicted in Figs.~\ref{VlasovI},~\ref{VlasovII},~\ref{VlasovIX}.

\begin{Remark}
If we consider ensembles of collisionless particles with positive mass, $m >0$,
$w$ and the rescaled pressures $w_i$ are functions of $(s_1,s_2,s_3)$ and an additional scale variable, which
can be taken to be $z=m^2/(m^2+g^{11}+2g^{22})$. In~\cite{letter} we will show that the analysis of this paper
carries over to this more general situation.
\end{Remark}

\subsection{Magnetic fields}\label{magnetic}

For an electromagnetic field represented by the antisymmetric electromagnetic
field tensor $F_{\mu\nu}$,
the energy-momentum tensor is given by
\[
T^\mu_{\ \nu}=-\frac{1}{4\pi}\left(F^\mu_{\ \alpha}F^\alpha_{\ \nu}-\frac{1}{4}\delta^\mu_{\ \nu}F^\beta_{\ \alpha}F^\alpha_{\ \beta}\right)\:.
\]
The equations for the field are the Maxwell equations
\[
\nabla_\mu F^{\mu\nu}=0\:,\qquad \nabla_{[\sigma}F_{\mu\nu]}=0\:.
\] 
In an LRS Bianchi class~A spacetime with metric~\eqref{metricLRS}, consider specifically a purely magnetic field
that is aligned along the axis perpendicular to the plane of rotational symmetry, i.e., 
\[
F_{\mu\nu}=\left(\begin{matrix}0 & 0 & 0 & 0\\ 0 & 0 & 0 & 0\\ 0 & 0 & 0 & K\\ 0 & 0 & -K & 0 \end{matrix}\right);
\] 
the magnetic field is determined by $K$ through $B^1 = K (g^{11})^{1/2}g^{22}$, $B^2= B^3 = 0$.
The Maxwell equations imply that $K$ is a constant; this implies that the energy 
density $\rho$,
\[
\rho = \frac{1}{8 \pi} \: \big(g^{22}\big)^2\: K^2\;,
\]
is a function of the metric (which depends on the initial data for the magnetic field).
Furthermore, $T^\mu_{\ \nu}$ is diagonal and
\begin{equation}\label{Tmagnetic}
T^1_{\ 1}={-\rho}\:,\qquad T^2_{\ 2}=T^3_{\ 3}=\rho\:.
\end{equation}
Accordingly, Assumptions~\ref{assumptionT}--\ref{assumptiondiagonal}
are satisfied and diagonal models exist.
It follows from~\eqref{Tmagnetic} that $w_1 = -1$ and $w_2=w_3 = 1$, so that $w = \textfrac{1}{3}$. 
Thus Assumption~\ref{assumptionwi} is satisfied, but Assumptions~\ref{asswi} and~\ref{asspsi} are \textit{violated};
in particular, there exists no isotropic state of the matter. 
Despite this fact, the analysis for magnetic fields cosmologies 
can be carried out as straightforwardly as the analysis for other matter fields by 
using the `building blocks' discussed in this paper: 

{\it Bianchi type~I}. The reduced dynamical system for LRS Bianchi type~I aligned magnetic fields cosmologies is given by
setting $u =w_2 = w_3 \equiv 1$ and $w = \textfrac{1}{3}$ in~\eqref{dynsysbianchiI}, i.e.,
\begin{equation}\label{BianchiImag}
\Sigma_+'=(1-\Sigma_+^2)(2-\Sigma_+)\:,\qquad s'=-6s(1-2s)\Sigma_+\:.
\end{equation}
Due to the decoupling of these equations, the original state space can be reduced to the
one-dimensional state space $\Sigma_+ \in [-1,1]$.
The equation for $\Sigma_+$ is identical to~\eqref{eqcalIflat} with $w= \textfrac{1}{3}$ and $\beta = {-2}$
(or, equivalently, to~\eqref{eqcalIsharp} with $w= \textfrac{1}{3}$ and $\beta = 4$).
Since $\Sigma_+^\prime > 0$, we conclude that the Taub point
is the $\alpha$-limit and the non-flat LRS point is the $\omega$-limit of all orbits.

{\it Bianchi type~II}.  The reduced dynamical system for LRS Bianchi type~II aligned magnetic fields cosmologies is derived
from~\eqref{dynsysbianchiII} by
setting $u =w_2 = w_3 \equiv 1$ and $w = \textfrac{1}{3}$, i.e.,
\begin{subequations}
\begin{align}
\Sigma_+'&=(\textfrac{1}{6}M_1^2+\Omega)(2-\Sigma_+)\:,\qquad \qquad s'=-6s(1-2s)\Sigma_+\:,\\
M_1'&=M_1[2(1-2\Sigma_+)-\textfrac{1}{6}M_1^2-\Omega]\:,
\end{align}
\end{subequations}
where $\Omega=1-\Sigma_+^2-\textfrac{1}{12}M_1^2$. 
The equation for $s$ decouples which implies that we 
may restrict ourselves to the reduced state space 
$\big\{{-1}\leq \Sigma_+ \leq 1, \: 0\leq M_1 \leq [12(1-\Sigma_+^2)]^{1/2}\:\big\}$.
The system of equations on this state space coincides with 
the system~\eqref{dynsysAflat} on the set $\mathcal{S}_\flat$  when we set 
$w= \textfrac{1}{3}$ and $\beta = {-2}$. The phase portrait
thus corresponds to the phase portrait of type \Dminus\ on  $\mathcal{S}_\flat$,
see Fig.~\ref{magneticAflat}.
(The system~\eqref{dynsysAsharp} on the set $\mathcal{S}_\sharp$ with
$w= \textfrac{1}{3}$ and $\beta = 4$ is identical;
see the phase portrait of type \Dplus\ on $\mathcal{S}_\sharp$,
Fig.~\ref{magneticAsharp}.)
Accordingly, the Taub point 
is the $\alpha$-limit and the non-flat LRS point the $\omega$-limit of every LRS Bianchi type~II orbit.

{\it Bianchi type~IX}.  The reduced dynamical system for LRS Bianchi type~IX aligned magnetic fields cosmologies is 
obtained from~\eqref{dynsyspolar} on $\mathcal{Y}_{\mathrm{IX}}^+$ by
setting $u =w_2 = w_3 \equiv 1$ and $w = \textfrac{1}{3}$, i.e.,
\begin{subequations}\label{dynsyspolarmag}
\begin{align}
r^\prime & = 2\, r \left( H_D (q - H_D \Sigma_+) - 3 \Sigma_+ \sin^2 \vartheta \right), \\[0.5ex]
\vartheta^\prime & = - 3 \Sigma_+ \sin (2\vartheta)\:, \\[0.5ex]
\Sigma_+^\prime & = r \sin\vartheta -1 + (H_D -\Sigma_+)^2 + H_D \Sigma_+ (q- H_D\Sigma_+) + 2\Omega\:,
\end{align}
\end{subequations}
where 
$\Omega =  1 - \Sigma_+^2 - \textfrac{1}{4}\, r \sin\vartheta$, $q=2\Sigma_+^2+\Omega$ 
and $H_D=\sqrt{1-2r\cos\vartheta}$. 
The state space is $\mathcal{Y}_{\mathrm{IX}}^+$, see~\eqref{statespaceIXpolar}.
(By using the formulation of Subsec.~\ref{blowup} 
we restrict ourselves to the dynamics of solutions in their expanding phase $H_D > 0$;
by applying the discrete symmetries~\eqref{symIX} the entire dynamics is obtained.)
The analysis of the system~\eqref{dynsyspolarmag} is completely analogously to
the analysis of Secs.~\ref{B9sec} and~\ref{B9res}.
First, we analyze the flow induced on the four boundary subsets.
On the side $\mathcal{S}_\sharp$, the system induced by~\eqref{dynsyspolarmag}
corresponds to~\eqref{Sidesys} with $w = \textfrac{1}{3}$ and $\beta = 4$, i.e., to the case 
\Dplus\ of Fig.~\ref{magneticAsharp}.
The dynamical system induced on the base $\mathcal{B}_{\mathrm{IX}}$
coincides with~\eqref{Basesys} for $w = \textfrac{1}{3}$ and $\beta=-2$ (which is of type~\Dminus);
see Fig.~\ref{Bfigmag}.
The flow on the vacuum boundary is independent of the matter model.
On the Bianchi type~I boundary $\mathcal{X}_{\,\mathrm{I}}$
the flow is equivalent to~\eqref{BianchiImag}. 
Second, in analogy to the procedure of Sec.~\ref{B9res} we use
the monotone function~\eqref{delta} and glue together the pieces.
The final conclusion is that the fixed point 
$\mathrm{T}_\sharp$ is the $\alpha$-limit of every LRS Bianchi type~IX orbit.
Furthermore, every model is recollapsing and 
the $\omega$-limit set is the Taub point arising from $\mathrm{T}_\sharp$ by applying
the discrete symmetry.

\begin{Remark}
The qualitative behavior of magnetic field cosmologies considered in this section is 
exactly the same as that of vacuum models, see Sec.~\ref{perfectfluid}. 
\end{Remark}

\begin{Remark}
The conclusions are identical if we add an electric field parallel to the magnetic 
field. The Maxwell equations then 
show that $E^1 = L (g^{11})^{1/2}g^{22}$ 
and $E^2=E^3=0$, where $L = \mathrm{const}$.
The energy density becomes $8 \pi \rho = (g^{22})^2 (K^2 + L^2)$,
so that the energy-momentum tensor is a functional of
the metric; accordingly,~\eqref{Tmagnetic} and its consequences remain valid.
\end{Remark}

\begin{Remark}[Erratum]
The statement in~\cite{CH2} that magnetic fields are a matter model of 
type~\Dminus\ in the 
context of Bianchi type~I is erroneous.
The content of this subsection corrects this claim:
Magnetic fields do not directly fit into the classification of Table~\ref{tab2} and Fig.~\ref{mattermodelsfig},
but can be treated in an analogous manner.
\end{Remark}

\section{Discussion and open problems}\label{discussion}

In this paper we have studied in detail 
spatially homogeneous locally rotationally symmetric
solutions of the Einstein-matter equations.
Rather than explicitly specifying a particular matter source, 
we have imposed a set of assumptions on the matter that
characterize a large class of matter models
including classical examples like elastic matter (associated
with a large variety of equations of state), collisionless matter
(Vlasov matter), and magnetic fields;
in addition, the perfect fluid matter model with a linear equation of state, which 
is the most commonly used matter model in cosmology and astrophysics, 
is naturally embedded in the class of matter models we consider.
The main aim of our analysis has been to study the influence 
of the matter source on
the dynamics of the associated (SH LRS) cosmological models.
A natural focus has been to ask how robust
the qualitative dynamics of perfect fluid solutions is 
under changes of the matter model. Does matter `matter'?

The main result of our analysis is that, indeed, anisotropies
of the matter model `matter'. Two facts are of particular interest:
(i) There are anisotropic matter models that satisfy the
strong (and the dominant) energy condition such that the
associated cosmological solutions of Bianchi type~IX (i.e., closed
cosmological models) do not recollapse but expand forever;
this is in stark contrast to perfect fluid models and other anisotropic matter models;
see Theorem~\ref{equalfootingthm} and its interpretation in Sec.~\ref{B9res}.
(ii) For a class of matter models including collisionless matter
the asymptotic behavior of solutions toward the initial singularity
is completely different from that of vacuum and perfect fluid solutions
in that the approach to the singularity is not governed by the Taub solution
but oscillatory; see Theorem~\ref{BianchiIXtheo} and its interpretation in Sec.~\ref{B9res}.

There are several interesting open problems.
We have restricted our attention to three Bianchi types of class~A:
Bianchi type~I,~II and~IX. 
However, as already indicated in Secs.~\ref{reducedsystem} and~\ref{B89sec},
it is clear that the methods we have developed 
can be extended and/or modified to analyze other cosmological models of interest:
On the one hand, there are the Kantowski-Sachs models, which are interesting because
their spatial topology is $S^1 \times S^2$ and thus closed.
(Do models necessarily recollapse?)
On the other hand, there are the Bianchi type~III models (which are of class~B)
and the type~VIII models, the latter being as fundamental as
type~IX models for our understanding of generic spacelike singularities.
Note that in the analysis of type~III and type~VIII models an additional
challenge might arise, since in each case the state space for the reduced 
dynamical system describing these models is probably unbounded. 
We leave the analysis of the 
Kantowski-Sachs models and the LRS Bianchi type~III and type~VIII models as interesting open problems.   

We expect 
that there exists a number of ways to extend the analysis and the 
results presented in this paper.
Let us begin with Assumption~\ref{assumptionT}.
There exists an obvious generalization of this assumption, which is 
to require that the stress-energy tensor is a function
not only of the spatial metric but of the second fundamental form $k_{ij}$ as well. 
This new assumption would lead to a generalization of eq.~\eqref{Tij} and, in the LRS case, 
give rise to rescaled principal pressures that depend not only on $s$ but also on $\Sigma_+$. 

Let us next turn our attention to Assumption~\ref{assumptionwi}.
This assumption imposes a linear equation of state between the isotropic 
pressure and the energy density, which implies that $\rho$ depends on $n=1/\sqrt{\det g}$ 
as indicated in~\eqref{rhons}. However, based on the results derived in this paper,
it is relatively simple to treat nonlinear equations of state.
Let us elaborate. Assume that $\rho$ is given by a more general
function $\rho = \rho(n,s_1,s_2,s_3)$ instead of~\eqref{rhons}. An interesting subcase is 
$\rho(n,s_1,s_2,s_3) = \varphi(n) \psi(s_1,s_2,s_3)$. In the latter case we obtain that $w_i = w_i(n, s_1,s_2,s_3)$, $i=1,2,3$,
satisfy~\eqref{wis} where $w = w(n) = (\partial \log \varphi/\partial n) -1$. 
In this case, the Einstein evolution equations, 
written in terms of the dynamical systems variables of Sec.~\ref{reducedsystem},
decouple into an equation for $D$ and a reduced system of equations for the remaining variables,
which is given by~\eqref{domsys}, supplemented by an additional equation for $n$, i.e., $n^\prime = -3 n$.
Assume we have fixed the Bianchi type and let $\mathcal{X}$ 
denote the state space for the corresponding reduced dynamical system.
When we choose to compactify the variable $n$, i.e., when we replace $n$ by $N = n/(1+n)$,
the state space of this dynamical system is $\mathcal{X}\times (0,1)$.
If we assume an equation of state such
that $w_i(N,s_1,s_2,s_3)$ possess well-defined limits as $N\rightarrow 0$
and $N\rightarrow 1$, we can extend the dynamical system to the boundaries 
$\mathcal{X}_0 = \mathcal{X} \times \{0\}$ and
$\mathcal{X}_1 = \mathcal{X} \times \{1\}$ of the state space.
The dynamical system on each of these boundary subsets coincides with the system~\eqref{domsys}
that we have discussed so extensively in this paper. In particular, we may define a parameter $\beta_0$ 
using the rescaled principal pressures $w_i(0,s_1,s_2,s_3)$ in~\eqref{betadef1} and a parameter $\beta_1$ 
using $w_i(1,s_1,s_2,s_3)$ and apply our classification of matter models for $N=0$ and $N=1$ separately. 
Of course, in general, the matter model on the boundaries $N= 0$ and $N= 1$ will not be of the same type.
Since the variable $N$ is strictly monotone, the asymptotic dynamics of solutions of
the dynamical system is associated
with the limits $N\rightarrow 0$ and $N\rightarrow 1$.
Accordingly, asymptotically, the flow of the boundary subsets 
$\mathcal{X}_0$ and $\mathcal{X}_1$ (and thus the results of the present paper)
constitute the key to an understanding of 
the dynamics of the more general problem with nonlinear equations of state.

Finally, consider generalizations of Assumptions~\ref{asswi} and~\ref{asspsi}.
Concerning the latter, it is possible to permit the existence of more than one isotropic 
state or the case that isotropic states do not exist at all. 
The latter is satisfied by magnetic fields, see Subsec.~\ref{magnetic},
which thus seem more interesting from the physical point of view.
A natural way to 
generalize Assumption~\ref{asswi} is to require the existence of three constants $c_1$, $c_2$, $c_3$, such 
that $w_i(s_1,s_2,s_3)=c_i$ when $s_i=0$. A simple subcase is $w_i(s_1,s_2,s_3) \equiv c_i$, i.e., 
the rescaled pressures are constants, which holds for magnetic fields, see Subsec.~\ref{magnetic}. 
If we generalize Assumption~\ref{asswi} as indicated, there would appear 
three parameters, instead of $\beta$ alone, in the reduced dynamical system. In this case the number 
of possible dynamics and bifurcations would increase considerably, but no conceptual new difficulty would 
arise. Whether new phenomena occur under these relaxed assumptions is an open question.


\begin{appendix}

\section{Exact solutions}
\label{exact}

In the course of our analysis of the system of equations~\eqref{domsys} 
we have discovered a number of fixed points, see Table~\ref{newfixed}.
In this appendix we derive the exact solutions that correspond to the 
these fixed points.

\begin{table}[Ht!]
\begin{center}
\begin{tabular}{|c|c|c|c|c|c|c|}
\hline  & & & &  & & \\[-2ex]
Fixed  & \multirow{2}{*}{$s$} & \multirow{2}{*}{$\Sigma_+$}  &  \multirow{2}{*}{$M_1^2$} &   
\multirow{2}{*}{$H_D$} & \multirow{2}{*}{Solution} & Bianchi\\[-0.5ex]
point & & & & & & type \\ 
\hline  & & & & &  & \\[-2ex] 
$\mathrm{T}_\flat$ & $0$ & ${-1}$ & $0$ & $1$ & \eqref{taub} & I \\
$\mathrm{T}_\sharp$ & $\textfrac{1}{2}$ &  ${-1}$ & $0$ & $1$ & \eqref{taub} & I \\
$\mathrm{Q}_\flat$ & $0$ & $1$ & $0$ & $1$ & \eqref{solQ} & I \\
$\mathrm{Q}_\sharp$ & $\textfrac{1}{2}$ &  $1$ & $0$ & $1$ & \eqref{solQ} & I \\
$\mathrm{F}$ & $\bar{s}$ & $0$ & $0$ & $1$ & \eqref{FRWsol} & I \\
$\mathrm{R}_\flat$  &  $0$ & ${-\beta}$  & $0$ & $1$ &\eqref{rfsol} & I\\
$\mathrm{R}_\sharp$  &  $\textfrac{1}{2}$ & $\textfrac{\beta}{2}$ & $0$ & $1$ & \eqref{rssol}& I \\
$\mathrm{C}_\flat$ &  $0$ & $\textfrac{1+3w}{8+3\beta(1-w)}$  & 
$\textfrac{36(1-w)[3\beta^2(1-w)+8\beta+(1+3w)]}{[8+3\beta(1-w)]^2}$ & $1$ &\eqref{cfsol} & II\\
$\mathrm{C}_\sharp$  &  $\textfrac{1}{2}$ & $\textfrac{1 + 3 w}{16 -3 \beta (1-w)}$  & 
$\textfrac{36(1-w)[3\beta^2(1-w)-16\beta+4(1+3w)]}{[16-3\beta(1-w)]^2}$ & $1$ & \eqref{cssol} & II\\
$\mathrm{P}$  & $0$ & $\textfrac{1+3w}{\sqrt{(1-3 w)^2 + 6 \beta (1-w)}}$  & $0$ & 
$\textfrac{2 + 3\beta (1-w)}{\sqrt{(1-3 w)^2 + 6 \beta (1-w)}}$ & \eqref{Psol} & III/KS\\
\hline 
\end{tabular}
\caption{Table of the fixed points of the reduced dynamical system.
Each of the points $\mathrm{R}_\flat$, $\mathrm{R}_\sharp$, 
$\mathrm{C}_\flat$, $\mathrm{C}_\sharp$, and $\mathrm{P}$
exists only under certain restrictions on the anisotropy parameter $\beta$.}
\label{newfixed}
\end{center}
\end{table}

Consider an arbitrary fixed point of Table~\ref{newfixed} and
let $(s,\Sigma_+,M_1, H_D)$ be the coordinates of this fixed point;
the associated deceleration parameter is
\[
q =2 \Sigma_+^2+\textfrac{1}{2}(1+3w)\big(1-\Sigma_+^2-\textfrac{1}{12}\,\hat{n}_1^2 M_1^2 \big)\:,
\]
see~\eqref{qD} and~\eqref{gaussconsimple}.
Expressed in $t$, eq.~\eqref{Ddecoupled} reads
\begin{equation}\label{Hzeq}
D^{-2}\,\partial_tD=-[H_D (1+q)+ \Sigma_+ (1-H_D^2)]\:,
\end{equation}
which can be solved to obtain 
\begin{equation}\label{Hz}
D(t)=\big[H_D (1+q)+\Sigma_+(1- H_D^2)\big]^{-1}\:t^{-1}\:.
\end{equation}
where we have shifted the origin of time so that $D$ diverges as $t\rightarrow 0$.
Using~\eqref{Omegarho} and~\eqref{gaussconsimple} this leads to 
\begin{equation}\label{rhoz}
\rho(t)=\frac{3\big(1-\Sigma_+^2-\textfrac{1}{12}\, \hat{n}_1^2 M_1^2\big)}{\big[H_D (1+q)+\Sigma_+ (1-H_D^2)\big]^2}\;t^{-2}\:.
\end{equation}
The principal pressures $p_1(t)$, $p_2(t)$ are then given by
\begin{equation}\label{principalpressures}
p_1(t)=w_1\rho(t) = \big(3w-2u(s)\big)\rho\:,\qquad 
p_2(t)=w_2\rho=u(s)\rho\:,
\end{equation}
see~\eqref{udef}, where the constant $u(s)$ is known
(since $s \in \{0, \bar{s}, \textfrac{1}{2}\}$), see Sec.~\ref{LRSsols}.

To obtain the metric we note that
\begin{equation}\label{metricevolnew}
\partial_t g_{11}={-2} g_{11}D(t)\, (2 \Sigma_+-H_D )\:,\qquad
\partial_t g_{22}= 2g_{22} D(t) \,(\Sigma_+ +H_D )\:,
\end{equation}
which follows from~\eqref{evol1diag} in combination with~\eqref{Hubble} and~\eqref{normvars}.
Inserting~\eqref{Hz} into~\eqref{metricevolnew} and solving we obtain
\begin{subequations}\label{exactsol}
\begin{equation}
g_{11}=a\,t^{2\gamma_1}\:,\qquad g_{22}=b\,t^{2\gamma_2}\:,
\end{equation}
where $a$, $b$ are positive constants and 
\begin{equation}
\gamma_1=\frac{H_D-2 \Sigma_+}{H_D (1+q)+\Sigma_+ (1-H_D^2)}\:,\qquad
\gamma_2=\frac{H_D+ \Sigma_+}{H_D (1+q)+\Sigma_+(1-H_D^2)}\:.
\end{equation}
\end{subequations}
We remark that the constants $a$, $b$ are in general not arbitrary: 
Imposing the Hamiltonian constraint may give rise to a restriction on their values. 

Applying the 
above algorithm to 
the fixed points of Table~\ref{newfixed}
yields the exact solutions represented by these points.
For $\mathrm{T}_\flat$, $\mathrm{T}_\sharp$ we obtain
the Taub solution~\eqref{taub};
for $\mathrm{Q}_\flat$, $\mathrm{Q}_\sharp$ we obtain
the non-flat LRS Kasner solution~\eqref{solQ};
for $\mathrm{F}$ we get the Robertson-Walker metric~\eqref{FRWsol}.

The fixed points $\mathrm{R}_\sharp$ and $\mathrm{R}_\flat$
are of Bianchi type~I, i.e., $\hat{n}_1 = 0$ and
$\hat{n}_2 = \hat{n}_3 = 0$.

\textit{Fixed point $\mathrm{R}_\sharp$}: 
\begin{subequations}\label{rssol}
\begin{alignat}{2}
\gamma_1&=\frac{8(1-\beta)}{3[\beta^2(1-w)+4(1+w)]}\,,& \qquad \gamma_2 & =\frac{4(2+\beta)}{3[\beta^2(1-w)+4(1+w)]}\:,\\
\rho &=\frac{16(4-\beta^2)t^{-2}}{3[\beta^2(1-w)+4(1+w)]^2}\,,& \qquad 
p_1 & =\textfrac{1}{2}[w(2+\beta)-\beta]\rho\:,
\qquad p_2=\textfrac{1}{4}[w(4-\beta)+\beta]\rho\:.
\end{alignat}
\end{subequations}

\textit{Fixed point $\mathrm{R}_\flat$}: 
\begin{subequations}\label{rfsol}
\begin{alignat}{2}
\gamma_1&=\frac{2(1+2\beta)}{3[\beta^2(1-w)+1+w]}\,,& \qquad\gamma_2 & =\frac{2(1-\beta)}{3[\beta^2(1-w)+1+w]}\:,\\
\rho &=\frac{4(1-\beta^2)t^{-2}}{3[\beta^2(1-w)+1+w]^2}\:, & \qquad 
p_1& = [w(1-\beta)+\beta]\rho\:,
\qquad p_2 =\textfrac{1}{2}[w(2+\beta)-\beta]\rho\:.
\end{alignat}
\end{subequations}

Let us analyze the solution~\eqref{rfsol} associated with $\mathrm{R}_\flat$.
The values of $\gamma_1$ and $\gamma_2$ vary considerably depending on $w$ and on
the anisotropy parameter $\beta$ 
(where the existence of $\mathrm{R}_\flat$ requires ${-1}< \beta < 1$). 
In Fig.~\ref{gamma12} we depict $\gamma_1$ and $\gamma_2$ as functions
of $\beta$ where $w$ takes a fixed value. The topmost curves corresponds
to the value $w = -\textfrac{1}{3}$; 
the lowermost curves correspond
to $w = 1$.
An interesting observation is the possible occurrence of partial (directional) 
accelerated expansion for the metric given by~\eqref{rfsol}. 
It is straightforward to see that $\gamma_1 > 1$ is possible for some range of $\beta$
provided that $w < \textfrac{1}{3}$. (The range is $\beta \in (0,1)$ for $w = {-\textfrac{1}{3}}$
and decreases to $\beta \in (1-\epsilon,1)$ when $w$ approaches $w = \textfrac{1}{3}$.)
The maximal rate of acceleration is obtained by maximizing
$\gamma_1$ over the admissible domain of $\beta$ and $w$;
this yields the maximal value of $\gamma_1 = \textfrac{1}{2} (1+\sqrt{3}) \approx 1.366$,
which can only be attained in the limit $w \rightarrow -1/3$ and 
$\beta \rightarrow \textfrac{1}{2} ({-1}+\sqrt{3}) \approx 0.366$.

Analogously, a straightforward calculation reveals
that $\gamma_2 > 1$ is possible, which means that lengths in the plane
of local rotational symmetry expand at an accelerated rate, see Fig.~\ref{gamma12}.
The condition for $\gamma_2 > 1$ is $w < \frac{1}{3} (1-\sqrt{3})$ and
$\beta \in (\beta_-,\beta_+)$, cf.~\eqref{betadomain}, i.e.,
$\gamma_2 > 1$ in the anisotropy case \AminusP.
The maximal rate of acceleration is obtained by maximizing
$\gamma_2$ over the admissible domain~\eqref{domain}; this yields
the maximal value of $\gamma_2 \approx 1.112$.
This value can only be 
attained for $w$ close to $-1/3$; for
larger values of $w$ the maximal acceleration is lower
(e.g., the maximum of $\gamma_2$ is $1.067$ 
for $w=-0.3$ and $1.007$ for $w=-0.25$).
While expansion of areas might be accelerating, 
the standard volume expansion is decelerating; 
we have $\sqrt{\det g} \propto t^{3/(1+q)}$
hence the length scale behaves like $t^{1/(1+q)}$.

\begin{figure}[Ht]
\begin{center}
\psfrag{b}[lc][lc][1][0]{$\beta$}
\psfrag{u}[tc][bc][0.7][0]{${-1}$}
\psfrag{v}[tc][bc][0.7][0]{${-0.5}$}
\psfrag{0.5}[tc][bc][0.7][0]{$0.5$}
\psfrag{1}[tc][bc][0.7][0]{$1$}
{\psfrag{g}[bc][bc][1][0]{$\gamma_1$}
\psfrag{a}[rc][cc][0.7][0]{$1$}
\psfrag{c}[rc][cc][0.7][0]{${-1/3}$}
\subfigure{\includegraphics[width=0.45\textwidth]{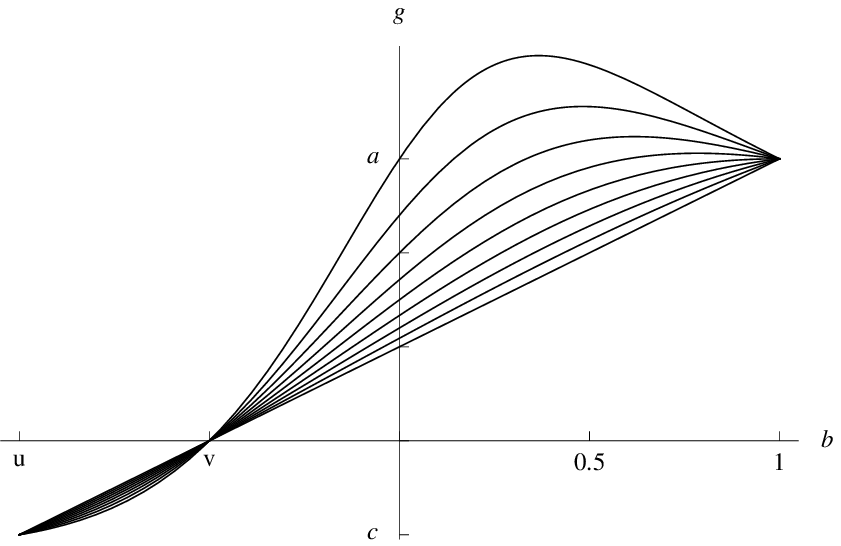}}}\qquad
{\psfrag{g}[bc][bc][1][0]{$\gamma_2$}
\psfrag{a}[lc][cc][0.7][0]{$1$}
\subfigure{\includegraphics[width=0.45\textwidth]{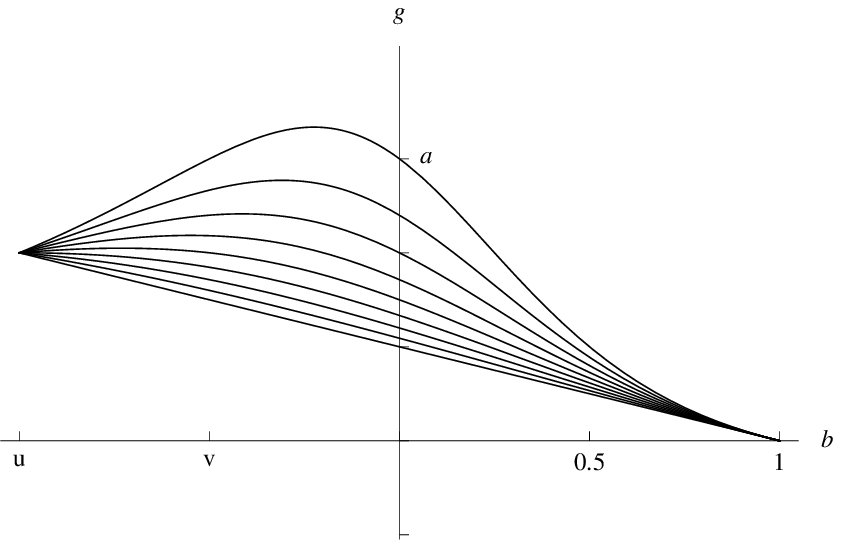}}}
\caption{The values of $\gamma_1$ and $\gamma_2$ for the exact solution~\eqref{rfsol} corresponding to $\mathrm{R}_\flat$.
Each curve corresponds to a given value of $w$, where the topmost curves correspond to $w = {-\textfrac{1}{3}}$
and the lowermost curves to $w = 1$.}
\label{gamma12}
\end{center}
\end{figure}

The fixed points $\mathrm{C}_\sharp$ and $\mathrm{C}_\flat$
are of Bianchi type~II, i.e., $\hat{n}_1 = 1$ and
$\hat{n}_2 = \hat{n}_3 = 0$.
These fixed points correspond to solutions that naturally
generalize the Collins-Stewart solution~\eqref{CSsol}.

\textit{Fixed point $\mathrm{C}_\sharp$}: 
\begin{subequations}\label{cssol}
\begin{alignat}{2}
\gamma_1& =\frac{(4-\beta)(1-w)}{8(1+w)-\beta(1-w)}\,& ,\qquad\gamma_2 & =\frac{2\beta(1+w)+6-\beta}{8w(1+\beta)+8-\beta}\:,\\
\rho &=\frac{16(5-w-\beta(1-w))t^{-2}}{[8(1+w)-\beta(1-w)]^2}\,,&\qquad 
p_1 &=\textfrac{1}{2}[w(2+\beta)-\beta]\rho\:,\qquad p_2=\textfrac{1}{4}[w(4-\beta)+\beta]\rho\:.
\end{alignat}
Furthermore, the constants $a$, $b$ of~\eqref{exactsol} are related 
by 
\begin{equation}
\frac{\sqrt{a}}{2\,b}=\frac{\sqrt{(1-w)[3\beta^2(1-w)-16\beta+4(1+3w)]}}{8-\beta+8w(1+\beta)}\:,
\end{equation}
\end{subequations}
where the function under the square root is positive for $\beta<\beta_\sharp$ (which is
the prerequisite for the existence of $\mathrm{C}_\sharp$).

\textit{Fixed point $\mathrm{C}_\flat$}: 
\begin{subequations}\label{cfsol}
\begin{alignat}{2}
\gamma_1 &=\frac{(2+\beta)(1-w)}{4(1+w)+\beta(1-w)}\,, & \qquad \gamma_2 & =\frac{3+\beta-2\beta(1+w)}{4w(1-2\beta)+\beta+4}\:,\\
\rho & =\frac{8(5-w+2\beta(1-w))t^{-2}}{[4(1+w)+\beta(1-w)]^2}\,,& \qquad 
p_1 & =[w(1-\beta)+\beta]\rho\:,\qquad p_2=\textfrac{1}{2}[w(2+\beta)-\beta]\rho\:.
\end{alignat}
Furthermore, $a$ and $b$ are related by 
\begin{equation}
\frac{\sqrt{a}}{2\,b}=\frac{\sqrt{(1-w)[3\beta^2(1-w)+8\beta+(1+3w)]}}{4+\beta+4w(1-2\beta)}\:,
\end{equation}
\end{subequations}
where the function under the square root is positive for $\beta>\beta_\flat$
(which is
the prerequisite for the existence of $\mathrm{C}_\flat$).
Note that passing from $\mathrm{R}_\sharp$ to $\mathrm{R}_\flat$ 
and from $\mathrm{C}_\sharp$ to $\mathrm{C}_\flat$ corresponds to
replacing $\beta$ by ${-2\beta}$.

\textit{Fixed point $\mathrm{P}$}:
\begin{subequations}\label{Psol}
\begin{alignat}{2}
\label{gamma1val}
\gamma_1 & =\frac{\beta(1-w)-2w}{1+w+\beta(1-w)}\,,& \qquad\gamma_2& =1\:,\\
\rho &=\frac{2\beta (1-w)-4w}{[1+w+\beta(1-w)]^2}\:t^{-2}\,,& \qquad 
p_1& =[w(1-\beta)+\beta]\rho\:,\qquad 
p_2 =\textfrac{1}{2}[w(2+\beta)-\beta]\rho\:.
\end{alignat}

The fixed point $\mathrm{P}$ is a fixed point on the Kantowski-Sachs
subset $\mathcal{B}_{\mathrm{IX}}$ of $\overline{\mathcal{X}}_{\mathrm{IX}}$;
its existence is restricted to the case \AminusP, which is defined through~\eqref{domain}.
The point $\mathrm{P}$ represents a solution of Kantowski-Sachs type, cf.~\eqref{KSmetric};
see~Sec.~\ref{KSIII} and the remark in Subsec.~\ref{subsect:BIXbound} for details.
The constant $b$ of~\eqref{exactsol} is given by 
\begin{equation}
b={-}\frac{(1-w)\big[3\beta^2(1-w)+2\beta+(1+3w)\big]}{\big(1+w+\beta(1-w)\big)^2} \:;
\end{equation}
since the numerator is negative, cf.~\eqref{Rtransvers}, $b$ is positive as required by~\eqref{exactsol}.
\end{subequations}

The value of $\gamma_1$, see~\eqref{gamma1val}, depends on $w$ and the anisotropic parameter $\beta$
(which are subject to~\eqref{domain}, since the anisotropy case is \AminusP).
We find that $\gamma_1 < 1$ irrespective of $w$ and $\beta$;
the possible values of $\gamma_1$ are depicted in Fig.~\ref{gammaminmax}.
Accordingly, the $S^1$ component of the Kantowski-Sachs metric~\eqref{KSmetric}
expands at a lesser rate than the $S^2$ component.

\begin{figure}[Ht]
\begin{center}
\psfrag{a}[tc][bc][0.7][0]{$-\textfrac{1}{3}$}
\psfrag{c}[tc][bc][0.7][0]{$-0.30$}
\psfrag{d}[tc][bc][0.7][0]{$-0.27$}
\psfrag{e}[tc][bc][0.7][0]{$\textfrac{1-\sqrt{3}}{3}$}
\psfrag{g}[bc][bc][1][0]{$\gamma_1$}
\psfrag{m}[lc][lc][0.7][0]{$0.366$}
\psfrag{n}[lc][lc][0.7][0]{$0.7$}
\psfrag{o}[lc][lc][0.7][0]{$1.0$}
\psfrag{w}[lc][lc][1][0]{$w$}
\includegraphics[width=0.55\textwidth]{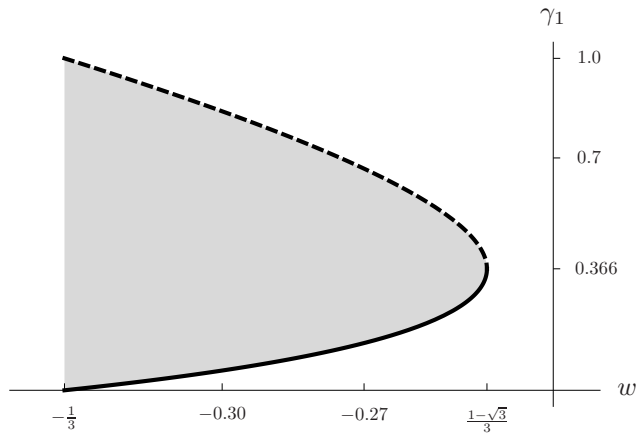}
\caption{The value of $\gamma_1$ for the exact solution~\eqref{Psol} corresponding to $\mathrm{P}$.
The anisotropy case is the Bianchi type~IX special case \AminusP, which is defined by~\eqref{domain}; see Fig.~\ref{betaminmax}.
For a given value of $w$, the value of $\gamma_1$ depends on $\beta$;
if $\beta = \beta_-$, then $\gamma_1$ takes the value on the solid black line;
if $\beta = \beta_+$, then $\gamma_1$ takes the value on the dashed black line;
if $\beta_- < \beta < \beta_+$, then the value of $\gamma_1$ is in
the gray region. The axes of the diagram are $\gamma_1 = 0$ and $w = -0.23$.}
\label{gammaminmax}
\end{center}
\end{figure}

\begin{Remark}
There exists a different range of the parameters $w$ and $\beta$
(different from \AminusP) 
such that $\mathrm{P}$ is located on the (type~III) boundary
of the Bianchi type~VIII state space.
In this case, the fixed point $\mathrm{P}$ represents a type~III solution
and the constant $b$ of~\eqref{exactsol} is given by 
\begin{equation*}
b=\frac{(1-w)\big[3\beta^2(1-w)+2\beta+(1+3w)\big]}{\big(1+w+\beta(1-w)\big)^2} \:,
\end{equation*}
which is then positive.
\end{Remark}

\section{Local rotational symmetry}
\label{lrsexplained}

In 
this appendix we 
discuss in detail the concept of locally rotational symmetry (LRS). We analyze the LRS Bianchi class~A models,
as well as the Kantowski-Sachs models and the Bianchi type~III models.

Spatially homogeneous cosmological models that are locally rotationally symmetric (LRS) 
are spacetimes that admit a four-dimensional isometry group 
whose orbits are three-dimensional spacelike hypersurfaces.
We distinguish two cases: For the LRS Bianchi models the four-dimensional isometry group 
admits a three-dimensional subgroup $\mathcal{G}$
that acts simply transitively
on the three-dimensional orbits. For the Kantowski-Sachs models there does not
exist any three-dimensional subgroup with that property.

\subsection{LRS Bianchi class A models}
\label{LRSclassA}

For a Bianchi model of class~A 
there exists an adapted left-invariant frame
$\hx_i$ on $\mathcal{G}$ such that
\begin{equation*}
[\hx_i,\hx_j] = \varepsilon_{ijk} \hat{n}_k \hx_k
\end{equation*}
for any triple $(ijk)$, see Sec.~\ref{SH}. 
(Throughout this appendix we employ the Einstein summation convention.)
The structure constants are thus represented by
a triple $(\hat{n}_1,\hat{n}_2,\hat{n}_3)$ 
with $\hat{n}_i \in \{0,{-1},{+1}\}$ $\forall i$, 
see Table~\ref{tab1} of Sec.~\ref{SH}.
The spatial metric of a Bianchi model 
reads $g_{i j} \:\hat{\omega}^i \otimes \hat{\omega}^j$,
where the components $g_{i j}$ are
(spatial) constants (i.e., functions of the time variable alone)
and $\hat{\omega}^i$ is dual to $\hx_i$.

For an 
LRS Bianchi model the isometry group of the spatial metric is four-dimensional.
Therefore, there exists a one-dimensional isotropy group 
corresponding to a pointwise (left-invariant) symmetry
of the metric. Visualizing this symmetry as  
a symmetry of the `unit sphere' $\{ X^i \,|\, g_{i j} X^i X^j = 1\}$
we see that merely one continuous isometry comes into question: Axial symmetry.
An axially symmetric `unit sphere' is characterized by two degrees of freedom
(represented by the axes of the ellipsoid);
we conclude that the assumption of a four-dimensional isometry group will reduce
the number of degrees of freedom of the spatial metric to two.

In this section 
we analyze under which conditions LRS Bianchi models exist---note that there are Bianchi types
that are incompatible with the assumption of LRS symmetry---and 
we discuss their main properties.
We begin
with the simplest (and most illustrative) example.

The simplest of Bianchi models are those of type I.
Since 
\begin{equation}\label{BIalg}
[\hx_1,\hx_2] = 0\:,\qquad
[\hx_2,\hx_3] = 0\:,\qquad
[\hx_3,\hx_1] = 0\:,
\end{equation}
the symmetry-adapted frame is in fact a coordinate frame; 
we denote the coordinates by $x^i$. 
In the LRS case there exists a one-parameter isotropy group corresponding to a rotational symmetry---w.l.o.g.\
we assume that the axis is $\hx_1 = \partial_{x^1}$; 
we thus have $\hx_1 \mapsto \hx_1$ and $\hx_I \mapsto O^K_{\weg I} \hx_K$ 
where $O \in \mathrm{SO}(2)$; Latin capitals $I, J, \ldots$ run over $2$ and $3$ (and the Einstein summation convention
is understood).
The generator of this rotation in the $2$-$3$-plane is
\begin{equation*}
\heta = {-x^3}\, \frac{\partial}{\partial x^2} + x^2 \,\frac{\partial}{\partial x^3} = {-x^3}\,\hx_2 + x^2 \,\hx_3\:.
\end{equation*}
Accordingly, the four-dimensional Lie group that represents the isometry group of an LRS type~I
Bianchi model is generated by
the Lie algebra
\begin{equation}\label{BILRSalg}
[\hx_1,\hx_2] = 0\:,\quad
[\hx_2,\hx_3] = 0\:,\quad
[\hx_3,\hx_1] = 0\:,\quad
[\heta,\hx_1] = 0\:,\quad
[\heta,\hx_2] = {-\hx}_3\:,\quad
[\heta,\hx_3] = \hx_2\:.
\end{equation}
Any metric that is invariant under the action of the Lie group generated by~\eqref{BILRSalg} 
must satisfy 
\mbox{$g_{IK} \propto \delta_{IK}$},
i.e., 
\begin{equation*}
g_{22} = g_{33}\:.
\end{equation*}

\begin{Remark}
There exist several four-dimensional Lie algebras that contain~\eqref{BIalg} as a subalgebra, see, e.g.,~\cite{Petrov};
note, however, that~\eqref{BILRSalg} is the only Lie algebra that generates a Lie group that can act
as an isometry group on a Bianchi type~I spacetime.
\end{Remark}

Let $\epsilon_{I J}$ denote the standard permutation symbol (for the indices $I, J = 2,3$).
For an arbitrary Bianchi type (of class A) 
we may write the commutator relations as
\begin{equation}\label{commax}
[\hx_I, \hx_J] = \epsilon_{IJ}\: \hat{n}_1 \,\hx_1\:,\qquad 
[\hx_1, \hx_K] = B^I_{\weg K} \hx_I\:,  \qquad\text{where}\quad B = \begin{pmatrix} 0 & -\hat{n}_3 \\ \hat{n}_2 & 0 \end{pmatrix}\:.
\end{equation}
Consider 
a rotation in $\hx_2$, $\hx_3$ with axis $\hx_1$, i.e.,
let $O\in \mathrm{SO}(2)$, 
and define $\hx_1^\prime = \hx_1$ and $\hx_I^\prime = O^K_{\weg I} \hx_K$.
The tensor $\epsilon_{I J}$ is invariant under this transformation, 
because $O^K_{\weg I} O^L_{\weg J} \epsilon_{K L} = \epsilon_{I J}$.
Hence the commutators~\eqref{commax} result in 
\begin{equation}\label{commaxprime}
[\hx_I^\prime, \hx_J^\prime] =  \epsilon_{IJ}\: \hat{n}_1 \,\hx_1^\prime\:,\qquad
[\hx_1^\prime, \hx_K^\prime] = (O^{-1} B \,O)^I_{\weg K} \,\hx_I^\prime\:.
\tag{\ref{commax}${}^\prime$}
\end{equation}
It is straightforward to see that $B$ is invariant, i.e., $O^{-1} B\, O = B$ for all $O \in \mathrm{SO}(2)$, 
if and only if
\begin{subequations}\label{lrsconds}
\begin{equation}
\hat{n}_2 = \hat{n}_3 \:.
\end{equation}
The assumption 
\begin{equation}
g_{22} = g_{33}
\end{equation}
\end{subequations}
then yields an invariant metric, i.e., a LRS Bianchi class A model.

Therefore, a spacetime of Bianchi class A admits a four-dimensional isometry group, if and only 
two structure constants are equal (which we assume to be $\hat{n}_2 = \hat{n}_3$) 
and if the corresponding metric components are equal as well (i.e., $g_{22} = g_{33}$).
The four-dimensional isometry group is generated by the Lie algebra
\begin{subequations}\label{lrslie}
\begin{alignat}{3}
\label{lrslie1}
[\hx_1,\hx_2] & = \hat{n}_3 \hx_3 \:,\qquad
& [\hx_2,\hx_3] & = \hat{n}_1 \hx_1 \:,\qquad
& [\hx_3,\hx_1] & = \hat{n}_2 \hx_2 \:, \\
\label{lrslie2}
[\heta,\hx_1] & = 0 \:,\qquad
& [\heta,\hx_2] & = -\hat{\epsilon}\, \hx_3 \:,\qquad
& [\heta,\hx_3] & = \hat{\epsilon}\, \hx_2 \:,
\end{alignat}
\end{subequations}
where $\hat{n}_2 = \hat{n}_3$ and 
$\hat{\epsilon} = 1$.
Let us discuss the Bianchi class~A models type by type.

In the Bianchi type~I case there exist three different 
but equivalent representations of LRS solutions; we make the choice
$g_{22} = g_{33}$.
In the Bianchi type~II case, for a fixed triple $(\hat{n}_1,\hat{n}_2,\hat{n}_3)= (1,0,0)$,
there exists a unique class of LRS models, which is characterized by $g_{22} = g_{33}$
(since $\hat{n}_2 = \hat{n}_3 = 0$).

The Bianchi type~$\mathrm{VI}_0$ case is exceptional: There do not exist
any LRS type~$\mathrm{VI}_0$ models.
The reason is that 
the structure constants are pairwise different,
which entails that the rotations $O \in \mathrm{SO}(2)$ do not act as isometries;
note that this is irrespective of the assumption $g_{22} = g_{33}$.
Alternatively, we note that~\eqref{lrslie} does not define a Lie algebra
in the type~$\mathrm{VI}_0$ case; 
setting $(\hn_1,\hn_2,\hn_3) = (0,1,{-1})$ in~\eqref{lrslie}
we find 
$[\heta,[\hx_1,\hx_2]]+ [\hx_1,[\hx_2,\heta]]+ [\hx_2,[\heta,\hx_1]] = 
-\hat{\epsilon} [\heta,\hx_3] + \hat{\epsilon} [\hx_1,\hx_3] = -2 \hat{\epsilon}\,\hx_2 \neq 0$.
However, while Bianchi type~$\mathrm{VI}_0$ is incompatible with a four-dimensional
isometry group, instead of a continuous symmetry there exists
a discrete symmetry. The reflection
\begin{equation}\label{reflection}
O = \begin{pmatrix} 0 & 1 \\ 1 & 0 \end{pmatrix} 
\end{equation}
leaves the matrix $B$ in~\eqref{commax} invariant, i.e., $O^{-1} B\, O = B$. 
The tensor $\epsilon_{I J}$ is not invariant, but $\hat{n}_1 = 0$; therefore,~\eqref{commax}
is invariant under~\eqref{reflection}.
The transformation~\eqref{reflection} is a discrete isometry if
\begin{equation*}
g_{22} = g_{33} \:.
\end{equation*}
Therefore, among Bianchi type~$\mathrm{VI}_0$ models there do not exist any
LRS models; however, there exist models that admit a discrete isometry
in addition to the three-dimensional group of isometries.

\begin{Remark}
In general, if $\hat{n}_2 = {-\hat{n}_3}$, we find that
the transformations $O \in \mathrm{SO}(1,1)$ define a continuous isometry
for a metric with $g_{22} = {-g_{33}}$.
Hence, if the spatial metric were not a Riemannian but a Lorentzian
metric, the models of Bianchi type~$\mathrm{VI}_0$ would admit
a natural subclass of models whose isometry group is four-dimensional;
one might call these models ``locally boost symmetric''.
For the Lie algebra of this isometry group we would have 
$[\heta,\hx_2] = \hat{\epsilon}\, \hx_3$ and $[\heta,\hx_3] = \hat{\epsilon}\, \hx_2$
in~\eqref{lrslie2}.
\end{Remark}

LRS models of Bianchi type~$\mathrm{VII}_0$ admit an isometry group generated by
\begin{equation}\label{VIIalg}
[\hx_1,\hx_2]  = \hx_3 \:,\quad
[\hx_2,\hx_3]  = 0 \:,\quad
[\hx_3,\hx_1] = \hx_2 \:, \quad
[\heta,\hx_1] = 0 \:,\quad
[\heta,\hx_2] = {-\hx_3} \:,\quad
[\heta,\hx_3]  = \hx_2 \:.
\end{equation}
Define
\begin{equation*}
\hx^\prime_1 = \textfrac{1}{2}\:\big(\hx_1 + \heta\big)\:,\quad
\hx^\prime_2 = \hx_2\:,\quad
\hx^\prime_3 = \hx_3\:,\quad
\heta^\prime = \textfrac{1}{2}\:\big({-\hx}_1 + \heta \big)\:.
\end{equation*}
We find that
\begin{equation}
[\hx^\prime_1,\hx^\prime_2]  = 0 \:,\quad
[\hx^\prime_2,\hx^\prime_3]  = 0 \:,\quad
[\hx^\prime_3,\hx^\prime_1]  = 0 \:, \quad
[\heta^\prime,\hx^\prime_1]  = 0 \:,\quad
[\heta^\prime,\hx^\prime_2]  = {-\hx^\prime_3} \:,\quad
[\heta^\prime,\hx^\prime_3]  = \hx^\prime_2 \:,
\tag{\ref{VIIalg}${}^\prime$}
\end{equation}
hence the Lie algebra~\eqref{VIIalg} coincides with the Lie algebra generating the
isometry group of LRS type~I models.
The action of the subgroup generated by $\{\hx^\prime_1,\hx^\prime_2,\hx^\prime_3\}$
must be simply transitive, see~\cite[Appendix B]{Collins}.
Therefore, LRS Bianchi type~$\mathrm{VII}_0$ models are in fact of type~I.

Finally, there exist LRS models of Bianchi type~VIII and~IX;
in the latter case there exist three different 
but equivalent representations of LRS solutions since $\hat{n}_1 = \hat{n}_2 = \hat{n}_3$; 
we make the choice $g_{22} = g_{33}$.

Under the assumption of local rotational symmetry, where we make the choice~\eqref{lrsconds},
i.e., $\hat{n}_2 = \hat{n}_3$, cf.~Table~\ref{tab1}, and $g_{22} = g_{33}$, we find
that the spatial Ricci curvature~\eqref{riccidiagonal} becomes
\begin{subequations}\label{lrsricciA}
\begin{equation}
R^1_{\weg 1} = \textfrac{1}{2}\: \hat{n}_1^2 \,m_1^2\:, \qquad
R^2_{\weg 2} = \hat{n}_1 \hat{n}_2 \, m_1 m_2 - \textfrac{1}{2}\: \hat{n}_1^2 \,m_1^2\:,
\end{equation}
where 
\begin{equation}\label{m1m2def}
m_1 = \frac{g^{22}}{\sqrt{g^{11}}} \,,\quad
m_2 = \sqrt{g^{11}} \qquad\Leftrightarrow\qquad
g^{11} = m_2^2 \:,\quad g^{22} = m_1 m_2 \:.
\end{equation}
\end{subequations}
The (spatial) curvature scalar~\eqref{curvscal} reads
\begin{equation}
R = {-\textfrac{1}{2}}\: \hat{n}_1^2 m_1^2 + 2 \,\hat{n}_1 \hat{n}_2 \,m_1 m_2 \:. 
\end{equation}

\subsection{Kantowski-Sachs models}
\label{KSmodels}

For Kantowski-Sachs models, by assumption, the orbits of 
each three-dimensional subgroup $\mathcal{G}_3$ of the isometry group $\mathcal{G}_4$
are merely two-dimensional. (Note that 
a four-parameter Lie group necessarily admits a three-dimensional 
subgroup, see, e.g.,~\cite[Appendix A]{Collins}.)
A two-dimensional manifold admitting a three-parameter isometry group 
is necessarily a manifold of constant curvature; this entails that
the generators $\hx_i$ of the group $\mathcal{G}_3$
form the Lie algebra
\begin{equation}\label{G3sub}
[\hx_1,\hx_2] = \hx_3\:,\qquad [\hx_2,\hx_3] = k \hx_1\:,\qquad [\hx_3,\hx_1] = \hx_2 \:,
\end{equation}
where $k \in \{0,{-1},{+1}\}$ denotes the normalized curvature of the constant curvature orbits.
(These Lie algebras are of Bianchi type $\mathrm{VII}_0$, $\mathrm{VIII}$, and $\mathrm{IX}$, respectively.)
However, in the cases $k=0$ and $k={-1}$, the existence of a three-dimensional
subgroup $\mathcal{G}_3$ according to~\eqref{G3sub} implies that the Lie group 
$\mathcal{G}_4$ possesses additional three-dimensional subgroups, whose orbits are in fact
three-dimensional, see~\cite[Appendix B]{Collins} and~\cite{Kantowski}; 
accordingly, $k=0$ and $k={-1}$ are LRS Bianchi models.

On the other hand, in the case $k={+1}$,
if~\eqref{G3sub} is a subalgebra of a four-dimensional Lie algebra, then 
that Lie algebra is necessarily represented by
\begin{subequations}\label{G4alg}
\begin{alignat}{3}
\label{G3sub2}
[\hx_1,\hx_2] & = \hx_3 \:,\qquad
& [\hx_2,\hx_3] & = \hx_1 \:,\qquad
& [\hx_3,\hx_1] & = \hx_2 \:, \\
[\heta,\hx_1] & = 0 \:,\qquad
& [\heta,\hx_2] & = 0\:,\qquad
& [\heta,\hx_3] & = 0 \:,
\end{alignat}
\end{subequations}
see~\cite[Appendix B]{Collins}.
The Lie group $\mathcal{G}_4$ generated by~\eqref{G4alg}
possesses a unique three-dimensional subgroup, the group $\mathcal{G}_3$ that
is generated by~\eqref{G3sub2}, cf.~\cite{Collins}.
Therefore, it is possible to assume that $\mathcal{G}_4$ 
acts as an isometry group of the spacetime 
in such a way that the orbits of $\mathcal{G}_3$ are two-dimensional 
spaces of positive constant curvature, i.e., $2$-spheres.
As a consequence, the metric reads
\begin{equation}\label{KSmetric}
{}^4\mathbf{g} = - d t^2 + g_{11}(t)\, d r^2 + g_{22}(t)\: \,{}^2\mathbf{g}_{[S^2]}\:,
\end{equation}
where ${}^2\mathbf{g}_{[S^2]}$ is the standard metric on the $2$-sphere.
These are the Kantowski-Sachs models.
Assuming that the coordinate $r$ ranges in $S^1$, the spatial topology is
$S^1 \times S^2$ and thus compact.

The Ricci curvature $R_{i j}$ of the spatial part of the Kantowski-Sachs metric~\eqref{KSmetric} 
has a rather simple structure. 
As suggested by~\eqref{KSmetric} let $x^1 = r$ and $x^I$, $I=2,3$, be coordinates on the 2-sphere.
Then $R_{11} = 0$, $R_{I K} = {}^2\mathbf{g}_{I K}$ (where $I,K=2,3$), and the remaining components of the Ricci tensor
vanish.
Accordingly, 
\begin{equation}\label{KSRic}
R^1_{\weg 1} = 0\:, \qquad
R^2_{\weg 2} = g^{2 2} = m_1 m_2 \:,
\end{equation}
where we resort to the quantities $m_1$, $m_2$ of~\eqref{m1m2def}.
The component $R^3_{\weg 3}$ is identical to $R^2_{\weg 2}$; the remaining components are zero.
The curvature scalar $R$ is simply $R = 2 m_1 m_2$.

\begin{Remark}
The Ricci tensor~\eqref{KSRic} of Kantowski-Sachs models is obtained from the Ricci tensor~\eqref{lrsricciA} 
of LRS Bianchi class~A models 
by formally setting $\hat{n}_1^2=0$ and $\hat{n}_1\hat{n}_2=1$. 
\end{Remark}

\subsection{Lie contractions}
\label{Liecontractions}

Contractions of a Lie algebra are obtained by considering 
sequences of basis transformations whose limit is singular 
`in a controlled way', i.e., in the limit 
one observes convergence of the structure constants~\cite{Saletan}.
In~\cite{Jantzen} the notion of Lie algebra contractions is applied
to the Bianchi models: For class~A models 
there exists a hierarchy of Lie algebra contractions, which 
corresponds to successively setting the structure constants
$\hat{n}_1$, $\hat{n}_2$, $\hat{n}_3$ to zero.
In this way, the type~IX algebra generates the algebras associated with
the `lower' Bianchi types~$\mathrm{VII}_0$,~II, and~I, and the type~VIII algebra generates
the algebras of the `lower' types~$\mathrm{VII}_0$,~$\mathrm{VI}_0$,~II, and~I.
This hierarchy of the Bianchi types is of fundamental importance in
the analysis of the dynamics of the `higher' Bianchi types --- the asymptotic
dynamics of `higher' Bianchi types are directly related to the dynamics
of the `lower' types, see~\cite{Jantzen} or, e.g.,~\cite{WE}.

Consider the Lie algebra~\eqref{lrslie} of the four-dimensional
isometry group of LRS type~IX models, i.e., $\hat{n}_1 = 1$, $\hat{n}_2 = \hat{n}_3 = 1$
and $\hat{\epsilon} = 1$.
There are three possible contractions: We may set $\hat{n}_1 =0$, or $\hat{n}_2 = \hat{n}_3 = 0$,
or $\hat{\epsilon} = 0$.
Setting $\hat{n}_1 = 0$ (while leaving the remaining constants unchanged)
we obtain the Lie algebra of LRS type~$\mathrm{VII}_0$ models;
this algebra coincides with the LRS type~I algebra, see Sec.~\ref{LRSclassA}.
Setting $\hat{n}_2 = \hat{n}_3 = 0$ we obtain 
the Lie algebra of LRS type~$\mathrm{II}$ models.
Finally, setting $\hat{\epsilon} = 0$ we obtain the Lie algebra~\eqref{G4alg}
representing the isometries of Kantowski-Sachs models.
For a detailed discussion of the contraction of~\eqref{lrslie} see~\cite{Saletan}.

\begin{Remark}
The type~IX Lie algebra~\eqref{lrslie}, where $\hat{n}_1 = 1$, $\hat{n}_2 = \hat{n}_3 = 1$
and $\hat{\epsilon} = 1$, is isomorphic to the Lie algebra~\eqref{G4alg}.
In other words, the Lie algebra contraction obtained by setting $\hat{\epsilon}$ to
zero does not yield anything new on the level of the Lie algebras (or the associated Lie groups).
However, when~\eqref{lrslie} is regarded as an isometry group (on an LRS type~IX space),
the Lie algebra contraction process involves a contraction of the action of the group. 
In the singular limit that corresponds to the Lie algebra contraction one obtains 
a different action of~\eqref{lrslie}; this action corresponds to the action of~\eqref{G4alg}
on Kantowski-Sachs spaces.
\end{Remark}

\subsection{LRS Bianchi type III}
\label{typeIIIsubsec}

The Lie algebra of the four-dimensional
isometry group of LRS type~VIII models is given by~\eqref{lrslie} with
$\hat{n}_1 = {-1}$, $\hat{n}_2 = \hat{n}_3 = 1$
and $\hat{\epsilon} = 1$.
There are three possible contractions: We may set $\hat{n}_1 =0$, or $\hat{n}_2 = \hat{n}_3 = 0$,
or $\hat{\epsilon} = 0$.
Setting $\hat{n}_1 = 0$ (while leaving the remaining constants unchanged)
we obtain the Lie algebra of LRS type~$\mathrm{VII}_0$ models;
this algebra coincides with the LRS type~I algebra, see Sec.~\ref{LRSclassA}.
Setting $\hat{n}_2 = \hat{n}_3 = 0$ we obtain 
the Lie algebra of LRS type~$\mathrm{II}$ models (in a different
representation where the sign of $\hat{n}_1$ is negative instead of positive). 
Finally, setting $\hat{\epsilon} = 0$ we obtain the Lie algebra~\eqref{G4alg}
\begin{subequations}\label{G4alg2}
\begin{alignat}{3}
\label{G3sub22}
[\hx_1,\hx_2] & = \hx_3 \:,\qquad
& [\hx_2,\hx_3] & = {-\hx_1} \:,\qquad
& [\hx_3,\hx_1] & = \hx_2 \:, \\
[\heta,\hx_1] & = 0 \:,\qquad
& [\heta,\hx_2] & = 0\:,\qquad
& [\heta,\hx_3] & = 0 \:.
\end{alignat}
\end{subequations}
The remarks of Sec.~\ref{Liecontractions} apply analogously.
In~\cite[Appendix B]{Collins} it is shown that this Lie algebra contains
not only the subalgebra~\eqref{G3sub22} but also 
a subalgebra of the type
\begin{equation}
[\hx^\prime_1,\hx^\prime_2] = 0 \:,\qquad
[\hx^\prime_2,\hx^\prime_3] = \hx^\prime_2 \:,\qquad
[\hx^\prime_3,\hx^\prime_1] = 0 \:.
\end{equation}
This is a Lie algebra of Bianchi type~III, see, e.g.,~\cite{WE}.

The action of the Lie group generated by~\eqref{G4alg2}
resembles the action of~\eqref{G4alg} on the Kantowski-Sachs models.
Consider the metric
\begin{equation}\label{LRSIII}
{}^4\mathbf{g} = - d t^2 + g_{11}(t)\, d r^2 + g_{22}(t)\: \,{}^2\mathbf{g}_{[H^2]}\:,
\end{equation}
where ${}^2\mathbf{g}_{[H^2]}$ is the metric of constant negative curvature on
hyperbolic space.
This space is an orbit of the subgroup generated by~\eqref{G3sub22}, i.e.,~\eqref{G3sub22} 
acts multiply transitively on ${}^2\mathbf{g}_{[H^2]}$.

In analogy to~\eqref{KSRic}, the Ricci curvature $R_{i j}$ of the spatial part of the LRS type~III metric~\eqref{LRSIII} 
has a rather simple structure. 
We obtain
\begin{equation}\label{typeIIIricci}
R^1_{\weg 1} = 0\:, \qquad
R^2_{\weg 2} = {-g^{2 2}} = -m_1 m_2 \:,
\end{equation}
where we resort to the quantities $m_1$, $m_2$ of~\eqref{m1m2def}.
The component $R^3_{\weg 3}$ is identical to $R^2_{\weg 2}$; the remaining components are zero.
The curvature scalar $R$ is simply $R = 2 m_1 m_2$.

\begin{Remark}
The Ricci tensor~\eqref{typeIIIricci} of LRS Bianchi type~III models 
is obtained from the Ricci tensor~\eqref{lrsricciA} 
of LRS Bianchi class~A models 
by formally setting $\hat{n}_1^2=0$ and $\hat{n}_1\hat{n}_2={-1}$. 
\end{Remark}

\section{Dynamical systems toolkit}
\label{dynsysapp}

In this appendix we briefly present some concepts from dynamical systems theory that 
are used in this paper. For a thorough development of these concepts we refer to the 
book~\cite{Perko}; for a discussion of applications
in cosmology we refer to~\cite{Coley,WE}. 

Let $f:\R^n\to\R^n$ be a $\mathcal{C}^1$ vector field and consider the autonomous dynamical system
\begin{equation}\label{dynamicalsystem}
\dot{x}=f(x)
\end{equation}
for the function $x=x(t)\in\R^n$. We assume that for every initial data point $x_0$, 
the system~\eqref{dynamicalsystem} 
admits a unique global solution $x\in\mathcal{C}^1(\R)$, $x(0) = x_0$.
A point $x_*\in\R^n$ is said to be an $\alpha$-limit point of the solution $x(t)$ if 
there exists a sequence of times $t_n\to-\infty$ such that $x(t_n)\to x_*$ as $n\to\infty$. 
The concept of $\omega$-limit is defined analogously by considering a sequence of 
times $t_n\to +\infty$. The $\alpha$-limit set of a solution is the set of all its $\alpha$-limit points. 
By the autonomy property, an orbit $\gamma$ in the state space is 
associated with a one-parameter set of solutions; if the orbit reduces to a point, the 
associated solution of~\eqref{dynamicalsystem} is a stationary solution. Since 
solutions with the same orbit $\gamma$ have the same limit sets, we may speak of
$\alpha$- and $\omega$-limit sets of orbits and we shall denote these sets 
as $\alpha(\gamma)$ and $\omega(\gamma)$. 
The following lemma collects some properties of $\alpha(\gamma)$ that are used in the paper; analogous 
properties hold for $\omega(\gamma)$. 
\begin{Lemma}
Let $\gamma$ be an orbit of the dynamical system~\eqref{dynamicalsystem}. The following holds:
\begin{itemize}
\item[(i)] $\alpha(\gamma)$ is closed.
\item[(ii)] If the exists a compact set $E\subset\R^n$ and $t_0\in(-\infty,+\infty]$ 
such that $\gamma(t)\in E$, for all $t<t_0$, then $\alpha(\gamma)$ is a non-empty, 
connected, compact subset of $E$.
\item[(iii)] If\/ $\mathrm{P}\in\alpha(\gamma)$, then the whole orbit through the point\/ $\mathrm{P}$ 
belongs to $\alpha(\gamma)$. In particular, the $\alpha$-limit set is flow invariant.
\item[(iv)] If $\alpha(\gamma)$ consists of a point\/ $\mathrm{P}$ only, then\/ $\mathrm{P}$ is a fixed point. 
\item[(v)] A source $\mathrm{P}$ is 
the $\alpha$-limit of all orbits in a neighborhood of\/ $\mathrm{P}$. If\/ $\mathrm{P}\in\alpha(\gamma)$, 
then $\alpha(\gamma)=\{\mathrm{P}\}$.
\end{itemize}
\end{Lemma}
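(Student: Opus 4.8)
The plan is to establish the five assertions in turn, relying throughout on the standing hypothesis that~\eqref{dynamicalsystem} admits unique global solutions; this lets me write $\phi_t(x_0)$ for the unique solution with $\phi_0(x_0) = x_0$ and to invoke continuous dependence on initial data freely. The unifying observation is the representation $\alpha(\gamma) = \bigcap_{T\in\R} \overline{\gamma((-\infty,T))}$, where the bar denotes closure. From this, part (i) is immediate, since an arbitrary intersection of closed sets is closed.

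For (ii) I would first dispatch non-emptiness and compactness, then connectedness. Given any sequence $t_n \to -\infty$, the points $\gamma(t_n)$ all lie in the compact set $E$, so a subsequence converges to some $x_*\in E$, which is by definition an $\alpha$-limit point; hence $\alpha(\gamma)\subseteq E$ is a non-empty closed (by (i)) subset of a compact set, and therefore compact. For connectedness I would argue by contradiction: if $\alpha(\gamma) = A_1\cup A_2$ with $A_1,A_2$ non-empty, closed, and disjoint, then $d:=\mathrm{dist}(A_1,A_2)>0$; since the orbit accumulates at both pieces as $t\to -\infty$, continuity of $t\mapsto \gamma(t)$ forces $\gamma$ to meet the shell $\{x : \mathrm{dist}(x,A_1)=d/2\}$ at a sequence of times tending to $-\infty$, and a convergent subsequence of these crossing points produces an $\alpha$-limit point lying in neither $A_1$ nor $A_2$, a contradiction.

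Assertions (iii) and (iv) are soft consequences of continuous dependence on data. If $\mathrm{P}\in\alpha(\gamma)$, pick $t_n\to -\infty$ with $\gamma(t_n)\to \mathrm{P}$; then for each fixed $s$ we have $\gamma(t_n+s) = \phi_s(\gamma(t_n)) \to \phi_s(\mathrm{P})$ by continuity of the time-$s$ map, and since $t_n+s\to -\infty$ this shows $\phi_s(\mathrm{P})\in\alpha(\gamma)$. As $s$ is arbitrary, the whole orbit through $\mathrm{P}$ lies in $\alpha(\gamma)$, which is the claimed flow invariance. Part (iv) then follows in one line: if $\alpha(\gamma)=\{\mathrm{P}\}$, invariance forces $\phi_s(\mathrm{P})=\mathrm{P}$ for all $s$, so $\mathrm{P}$ is a fixed point.

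The only part demanding genuine local analysis, and hence the main obstacle, is (v). Here I would use the defining property of a source—a hyperbolic fixed point whose linearization has only eigenvalues with positive real part—to produce, via the Hartman--Grobman theorem or, more elementarily, a local Lyapunov function $V$ with $\dot V<0$ on a punctured neighborhood, a neighborhood $U$ of $\mathrm{P}$ that is negatively invariant and from which every orbit converges to $\mathrm{P}$ as $t\to -\infty$; this yields the first claim. For the second, if $\mathrm{P}\in\alpha(\gamma)$ then $\gamma(t_n)\to \mathrm{P}$ along some $t_n\to -\infty$, so $\gamma(t_n)\in U$ for large $n$, and negative invariance of $U$ traps $\gamma(t)$ in $U$ for all $t\le t_n$, whence $\gamma(t)\to \mathrm{P}$ and $\alpha(\gamma)=\{\mathrm{P}\}$. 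The delicate point is precisely the construction of the negatively invariant repelling neighborhood $U$, since this is where one must appeal to local normal-form/linearization theory rather than to the purely topological limit-set arguments that suffice for (i)--(iv).
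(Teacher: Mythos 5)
Your proof is correct. Note that the paper itself contains no proof of this lemma: it appears in Appendix~\ref{dynsysapp} as a summary of standard facts, with the reader referred to~\cite{Perko}, and your argument is exactly the classical one found there --- the representation $\alpha(\gamma)=\bigcap_{T\in\R}\overline{\gamma((-\infty,T))}$ yielding (i) and, together with compactness of $E$ and the shell-crossing argument, (ii); continuity of the time-$s$ maps $\phi_s$ (available globally by the paper's standing assumption of global solutions) yielding (iii) and hence (iv); and a Lyapunov/linearization construction of a negatively invariant repelling neighborhood for (v). The only cosmetic slip is the sign convention in (v): for a source, a positive-definite local Lyapunov function $V$ satisfies $\dot V>0$ along the forward flow (equivalently $\dot V<0$ for the time-reversed system), which is evidently what you intend, since you use it to trap backward trajectories in sublevel sets and conclude $\gamma(t)\to\mathrm{P}$ as $t\to-\infty$.
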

Let us recall the definition of a heteroclinic cycle. A connecting orbit $\gamma$ is an 
orbit for which both the $\alpha$- and $\omega$-limit sets consist of 
a point (which is then necessarily a fixed point). If $\alpha(\gamma)=\omega(\gamma)=\{\mathrm{P}\}$, 
the set $\{\gamma\}\cup\{\mathrm{P}\}$ is called a homoclinic cycle. If $\gamma_1$, $\gamma_2$ are 
two connecting orbits such that $\{\mathrm{P}_1\}=\omega(\gamma_1)=\alpha(\gamma_2)$ and 
$\mathrm{P}_2=\omega(\gamma_2)=\alpha(\gamma_1)$, then the set 
$\{\gamma_1\}\cup\{\gamma_2\}\cup\{\mathrm{P}_1\}\cup\{\mathrm{P}_2\}$ is called 
a heteroclinic cycle. Likewise one can define heteroclinic cycles consisting of an arbitrary number of 
connecting orbits. 

In the paper we often make use of the monotonicity principle.
It can be stated as follows (for an extension see~\cite{FHU}):

\begin{Theorem}\label{monopri}
Let $S\subseteq \R^n$ be an invariant subset of the dynamical system~\eqref{dynamicalsystem} and $Z:S\to\R$ be 
strictly monotonically decreasing along the flow; set $a=\inf\{Z(y),y\in S\}$ and \mbox{$b=\sup\{Z(y), y\in S\}$}. 
Let $\gamma$ be any orbit in $S$. Then
\begin{align*}
\alpha(\gamma)\subseteq\{s\in\partial S:\lim_{y\to s}Z(y)\neq a\}\:,\quad
\omega(\gamma)\subseteq\{s\in\partial S:\lim_{y\to s}Z(y)\neq b\}\:.
\end{align*}
\end{Theorem}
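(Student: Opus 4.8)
The plan is to prove the two inclusions separately and to deduce the $\alpha$-statement from the $\omega$-statement by reversing time, so I focus on $\omega(\gamma)$. Throughout I use that $S$ is flow-invariant, so an orbit $\gamma\subseteq S$ satisfies $\gamma(t)\in S$ for all $t$, whence $Z(\gamma(t))$ is defined and, by hypothesis, strictly decreasing in $t$. Consequently $Z(\gamma(t))$ converges as $t\to+\infty$ to a value $L^+:=\inf_t Z(\gamma(t))\in[-\infty,b]$, and strict monotonicity gives the strict bound $Z(\gamma(t))>L^+$ for every $t$. These two elementary facts are the ingredients I will play against each other.

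First I would establish the \emph{interior-exclusion} $\omega(\gamma)\cap S=\emptyset$, which forces $\omega(\gamma)\subseteq\overline{S}\setminus S\subseteq\partial S$ (note $\omega(\gamma)\subseteq\overline{S}$, since $\omega$-limit points are limits of points $\gamma(t_n)\in S$). Suppose instead that some $p\in\omega(\gamma)\cap S$. Taking $t_n\to\infty$ with $\gamma(t_n)\to p$ and using continuity of $Z$ on $S$ gives $Z(p)=L^+$; the same holds at every point of $\omega(\gamma)\cap S$, so $Z\equiv L^+$ there. But by flow-invariance of the $\omega$-limit set (property (iii) of the preceding Lemma, whose $\omega$-analogue is noted there) together with invariance of $S$, the whole forward orbit $\phi_t(p)$ remains in $\omega(\gamma)\cap S$, giving $Z(\phi_t(p))\equiv L^+$ and contradicting that $Z$ is \emph{strictly} decreasing along the flow. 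Hence $\omega(\gamma)\cap S=\emptyset$.

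Next I would rule out boundary points at which $Z$ limits to the supremum, i.e. show $\omega(\gamma)\cap\{s\in\partial S:\lim_{y\to s}Z(y)=b\}=\emptyset$. If $s\in\omega(\gamma)$ with $\gamma(t_n)\to s$ and $\lim_{y\to s}Z(y)$ exists, then evaluating that limit along the sequence $\gamma(t_n)\in S$ forces it to equal $L^+$. So it suffices to show $L^+\neq b$: since $Z(\gamma(t))>L^+$ for all $t$ while $\gamma(t)\in S$ forces $Z(\gamma(t))\le\sup_S Z=b$, the equality $L^+=b$ would yield $b<Z(\gamma(t))\le b$, a contradiction. Thus at any $s\in\omega(\gamma)$ either $\lim_{y\to s}Z(y)$ fails to exist or it equals $L^+\neq b$; in both cases $\lim_{y\to s}Z(y)\neq b$. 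Combined with the previous paragraph this yields $\omega(\gamma)\subseteq\{s\in\partial S:\lim_{y\to s}Z(y)\neq b\}$.

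Finally, the $\alpha$-statement follows by applying the $\omega$-result to the time-reversed system $\dot x=-f(x)$ with $Z$ replaced by $-Z$: along the reversed flow $-Z$ is strictly decreasing, the $\alpha$-limit set of $\gamma$ becomes the $\omega$-limit set of the reversed orbit, and $\sup_S(-Z)=-a$, so the excluded condition $\lim_{y\to s}(-Z)=-a$ translates precisely into $\lim_{y\to s}Z=a$. I expect no serious obstacle here; the only points requiring care are bookkeeping ones—allowing the possibilities $a=-\infty$ or $b=+\infty$, and the possibility that $\lim_{y\to s}Z(y)$ fails to exist—together with fixing the convention that a non-existent limit counts as ``$\neq b$'' (equivalently, the complement of the target set consists of boundary points where the limit \emph{exists and} equals $b$). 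All analytic content reduces to the contrast between strict decrease along orbits and the two-sided bound $a\le Z\le b$ on $S$.
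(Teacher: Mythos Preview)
The paper does not prove this theorem; it is stated in Appendix~\ref{dynsysapp} as a known tool from dynamical systems theory, with only a reference to~\cite{FHU} for an extension. Your argument is the standard proof of the monotonicity principle and is correct: the interior-exclusion via flow-invariance of the $\omega$-limit set combined with strict monotonicity, the observation that the limiting value $L^+$ along the orbit must lie strictly below $b$, and the time-reversal to obtain the $\alpha$-statement are exactly the expected steps. The only implicit assumption you make beyond the bare statement is continuity of $Z$ on $S$ (needed to pass the limit $Z(\gamma(t_n))\to Z(p)$), which is the usual hypothesis under which the principle is applied in this paper.
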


For a dynamical system over a two-dimensional state space, there exist further results that help studying 
the asymptotic behavior of orbits without the need for a monotone function (which is often hard to find). 
The most useful result is the Poincar\'e-Bendixson theorem, which states the following:

\begin{Theorem}
Consider the dynamical system~\eqref{dynamicalsystem} in the plane, i.e., $n=2$, and assume that there are at 
most a finite number of equilibrium points. Then, for any orbit, the $\alpha$-limit set ($\omega$-limit set)
can only be one of the following: A fixed point, a periodic orbit, or a heteroclinic cycle 
(or a heteroclinic network, i.e., the union of heteroclinic cycles). 
\end{Theorem}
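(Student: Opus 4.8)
The statement is the (generalized) Poincar\'e-Bendixson theorem, so the plan is to follow the classical transversal-based argument, whose one genuinely planar ingredient is the Jordan curve theorem. I would phrase everything for $\omega$-limit sets; the corresponding statements for $\alpha$-limit sets follow by reversing time ($t \mapsto -t$). Throughout I would assume, as is implicit in our setting where the relevant state spaces are relatively compact, that the orbit $\gamma$ eventually remains in a compact set, so that by the properties collected in the preceding Lemma the set $\omega(\gamma)$ is non-empty, compact, connected, and flow-invariant.

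First I would set up the notion of a \emph{transversal}: at any point that is not a fixed point, the flow-box theorem provides a short line segment $\Sigma$ to which $f$ is everywhere transverse, and every orbit crosses $\Sigma$ in one and the same direction. The key lemma---and the heart of the proof---is the \emph{monotonicity of successive crossings}: if an orbit meets a transversal $\Sigma$ at points $p_1, p_2, p_3, \dots$ in order of increasing time, then these points are monotonically ordered along $\Sigma$. This is where planarity is essential: the orbit arc from $p_1$ to $p_2$ together with the subsegment of $\Sigma$ between $p_1$ and $p_2$ bounds a Jordan region that the flow cannot re-enter, which forces $p_3$ to lie on the far side of $p_2$. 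From this I would deduce the decisive corollary that $\omega(\gamma)$ meets any transversal in at most one point: if $p \in \omega(\gamma)$ is not a fixed point, a transversal through $p$ is crossed by $\gamma$ at a monotone sequence converging to $p$, leaving no room for a second intersection with the limit set.

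With these tools the two alternatives separate cleanly. If $\omega(\gamma)$ contains no fixed point, I would pick $q \in \omega(\gamma)$, use invariance to keep its entire orbit inside $\omega(\gamma)$, and let $p$ be an $\omega$-limit point of that orbit (again not a fixed point). A transversal through $p$ meets $\omega(\gamma)$ only at $p$, yet the orbit of $q$ meets it in points tending to $p$; hence those points all coincide with $p$ and the orbit of $q$ is periodic. A further transversal argument then shows that $\omega(\gamma)$ equals this single periodic orbit. If, on the other hand, $\omega(\gamma)$ does contain a fixed point, then, since the fixed points are finite in number, I would argue that $\omega(\gamma)$ cannot simultaneously contain a periodic orbit (a periodic orbit lying in an $\omega$-limit set must exhaust it). Consequently every non-stationary orbit contained in $\omega(\gamma)$ has both its $\alpha$- and $\omega$-limit sets consisting entirely of fixed points; by the one-point-per-transversal corollary together with finiteness, each of these limit sets reduces to a single fixed point. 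Thus $\omega(\gamma)$ is a union of finitely many fixed points together with connecting orbits joining them---that is, a single fixed point, a heteroclinic cycle, or a heteroclinic network.

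The main obstacle is the monotonicity lemma of the second step: it is the only place where the dimension $n=2$ enters essentially, via the Jordan curve theorem, and getting the orientation bookkeeping right---verifying that the flow genuinely cannot cross back through the constructed Jordan curve---is the delicate part. The remaining work is organizational: assembling the fixed points and connecting orbits in the last case into the stated cycle/network structure while respecting the connectedness of $\omega(\gamma)$.
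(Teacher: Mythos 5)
Your proposal is correct: it is the classical transversal/Jordan-curve proof of the (generalized) Poincar\'e--Bendixson theorem, and this is precisely the argument found in Perko's book, which the paper cites in lieu of giving its own proof. You also rightly make explicit the compactness hypothesis (the orbit eventually remaining in a compact set) needed for the limit set to be non-empty, which the paper's statement leaves implicit.
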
  
A proof of this theorem can be found in~\cite{Perko}.

\end{appendix}

\end{document}